\documentclass[a4paper,USenglish,cleveref, autoref, thm-restate]{lipics-v2021} 

\hideLIPIcs
\usepackage{alphalph}

\usepackage{mathtools}
\usepackage{cite}
\usepackage{amsmath, hyperref, url}
\usepackage{graphicx}
\usepackage{macros}
\usepackage{todonotes}
\usepackage{xspace}

\newcommand\upe{\textup{\textsc{Upward Planar Drawing Extension}}\xspace}
\newcommand\upef{\textup{\textsc{Upward Planar Drawing Extension in a Face}}\xspace}
\newcommand\upeshort{\textsc{\textup{UPE}}\xspace}
\newcommand\upefshort{\textsc{\textup{UPEF}}\xspace}
\newcommand\upefshorto{\textsc{\textup{UPEF-Outer}}\xspace}

\nolinenumbers
\hideLIPIcs

\renewenvironment{claimproof}{\begin{proof}[{\upshape{\noindent\underline{Proof of the Claim}}}]}{\end{proof}}

\newcommand{\bigoh}{\mathcal O}
\newcommand{\III}{\mathcal I}
\usepackage{amssymb, complexity}
\usepackage{boxedminipage}

\title{A Fixed-Parameter Algorithm for Extending Upward Planar Drawings 
}

\EventEditors{Maarten L\"{o}ffler and Silvia Miksch}
\EventNoEds{2}
\EventLongTitle{34th International Symposium on Graph Drawing and Network Visualization (GD 2026)}
\EventShortTitle{GD 2026}
\EventAcronym{GD}
\EventYear{2026}
\EventDate{August 19--21, 2026}
\EventLocation{Ontario, Canada}
\EventLogo{}
\SeriesVolume{396}
\ArticleNo{4}

\titlerunning{A Fixed-Parameter Algorithm for Extending Upward Planar Drawings}

\author{Vera Chekan}{Humboldt-Universität zu Berlin, Germany}{vera.chekan@hu-berlin.de}{https://orcid.org/0000-0002-6165-1566}{}

\author{Robert Ganian}{Algorithms and Complexity Group, TU Wien, Austria}{rganian@ac.tuwien.ac.at}{https://orcid.org/0000-0002-7762-8045}{Austrian Science Fund (FWF Project 10.55776/Y1329) and Vienna Science and Technology Fund (WWTF Project 10.47379/ICT22029)}

\author{Viktoriia Korchemna}{University of California, Santa Barbara}{vkorchemna@ucsb.edu
}{https://orcid.org/0000-0001-8038-905X}{Austrian Science Fund (FWF, projects 10.55776/Y1329 and 
10.55776/J4880) and US National Science Foundation (NSF Grant CCF-2505099)}

 \authorrunning{V.\ Chekan, R.\ Ganian, V.\ Korchemna} 
 \Copyright{Vera Chekan, Robert Ganian, Viktoriia Korchemna}

\keywords{upward planar drawings, extension problems, parameterized complexity}

\ccsdesc[500]{Theory of computation~Computational geometry}
\ccsdesc[500]{Mathematics of computing~Graph algorithms} 

\category{Track 1, Long Paper}

\relatedversion{Accepted at Graph Drawing and Network Visualization (GD 2026).}

\begin{document}
\maketitle

\begin{abstract}
An upward planar drawing of a directed acyclic graph is a planar drawing where every edge is pointed upward from its tail to head. 
Upward planar drawings are among the most natural drawing styles of directed graphs and have been researched in a variety of different settings, recently including that of drawing extension. In the drawing extension setting, one asks: given a graph $G$ and a (typically connected) subgraph $H$ of $G$ with a drawing $\Gamma(H)$, can we complete $\Gamma(H)$ to a drawing of $G$? 

Drawing extension problems have been studied for numerous drawing styles; the vast majority of these are \NP-hard and a typical approach aimed at circumventing their intractability is to design parameterized algorithms where the parameter measures ``how much'' of $G$ is still missing from the pre-drawn graph $H$. Most algorithms obtained within this framework require only a small number of edges to be missing from $H$ in order to remain efficient. In this article, we present a fixed-parameter algorithm for extending upward planar drawings which overcomes this drawback by using the \emph{vertex+edge deletion distance} as the parameter, thus achieving tractability even for instances with many missing edges.
A key ingredient towards our result is a novel characterization of ``canonical'' sets of missing edges which cross a horizontal line segment in the drawing.

\subparagraph{Generative AI Declaration}
Generative AI was not used for the preparation of this manuscript.
\end{abstract}

\section{Introduction}
\label{sec:intro}
The notion of graph planarity is one of the most fundamental concepts in graph theory. However, in some circumstances it is crucial to find a drawing of the input graph which is not only crossing-free, but also satisfies further constraints. In this article we study the notion of \emph{upward planarity}, which is among the most natural and established constrained planar drawing styles~\cite{DBLP:books/ws/NishizekiR04,DBLP:reference/crc/2013gd}.
An upward planar drawing of a directed acyclic graph $G$ is a crossing-free drawing where all edges are drawn as polylines (or, equivalently, curves) that monotonically increase in a common direction (see \cref{fig:upward-planar-drawing}).
Upward planar drawings are used, among others, in the context of visualization of hierarchical network structures and have become the focus of extensive theoretical research to date~\cite{HuttonL91,BertolazziBMT98,Mchedlidze14,LozzoBF20,ChaplickGFGRS22,JungeblutMU23}.

\begin{figure}[ht]
\centering
\includegraphics[scale = 0.8]{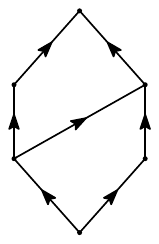}
\caption{An upward planar drawing of a directed graph.}
\label{fig:upward-planar-drawing}
\end{figure}

The problem of computing an upward planar drawing of a given graph $G$ is known to be \NP-hard~\cite{GargT01}, and in fact remains intractable even on tree-like graphs~\cite{BMPJansenGD}. Given these lower bounds, the problem has also been investigated from the perspective of parameterized complexity, and is by now known to be fixed-parameter tractable when parameterized by the number of sources~\cite{ChaplickGFGRS22}. However, in many settings one is not tasked with computing a drawing for $G$ from ``scratch''---instead, it may well happen that most of $G$ is already drawn, and we merely need to complete the missing parts. This gives rise to the by now well-established line of research into \emph{drawing extension problems}. In such problems, we are given a graph $G$ and a subgraph $H$ of $G$ with a drawing $\Gamma(H)$, which is called a \emph{partial drawing} of $G$; the task is then to complete $\Gamma(H)$ to a drawing of~$G$~\cite{Patrignani06,KlavikKKW12,adfjkp-tppeg-15,ChaplickFK19,EibenGHKN20, mfcs/EibenGHKN20,GanianHKPV21,HammH22,BhoreGKMN23,KlemzS24,DepianFFGN25,DFGN25}.

A fundamental result in this line of research is the work of Angelini et al.~\cite{adfjkp-tppeg-15}, who showed that for planar graphs with a given partial planar drawing, the extension problem can be solved in linear time, thus matching the time complexity of unconstrained planarity testing.
In fact, there is also a corresponding combinatorial characterization of planar graphs with extensible partial planar drawings via forbidden substructures~\cite{jkr-ktppeg-13}.
In contrast to the above results, drawing extension is \NP-hard for many other drawing styles, including 1-planar drawings~\cite{EibenGHKN20, mfcs/EibenGHKN20}, straight-line planar drawings~\cite{Patrignani06}, level-planar drawings~\cite{KlemzS24}, crossing-optimal simple drawings~\cite{GanianHKPV21, HammH22}, stack- and queue-layout drawings~\cite{Bookthicknessextension,DFGN25} and bend-minimal orthogonal drawings~\cite{BhoreGKMN23}. The study of upward planar drawing extension was initiated by Da Lozzo, Di Battista and Frati~\cite{LozzoBF20}, who showed that the problem of computing such an extension is \NP-hard but can be solved efficiently on digraphs with a unique source and sink.

Given the inherent intractability of most drawing extension problems, a natural question tackled in this line of research is whether one can obtain a \emph{fixed-parameter algorithm}~\cite{DowneyF13,CyganFKLMPPS15} when parameterized by a measure of how much of the input graph is still missing from the drawing. 
This line of attack is often pursued under the assumption that the provided partial drawing is connected~\cite{EibenGHKN20,GanianHKPV21,BhoreGKMN23}, as connectivity is both well-motivated in dynamic drawing settings, and it prevents the occurrence of many pairwise nested and intertwined connected components.
While several such algorithms were obtained via a parameterization bounding the number of edges missing from the provided drawing, no suitable fixed-parameter algorithms were known for extending any of the drawing styles mentioned in the previous paragraph when the task is to add many missing edges to $\Gamma(H)$~\cite{EibenGHKN20, mfcs/EibenGHKN20,GanianHKPV21, HammH22,Bookthicknessextension,BhoreGKMN23,DFGN25}. In this article, we present such an algorithm for upward planar drawing extension.

\subparagraph{Problem Statement and Related Work.}
A planar drawing $\Gamma(G)$ of a digraph $G$ is called \emph{upward} if each edge $ab$ is represented as a polyline whose $y$-coordinate monotonically increases from its tail $a$ to its head $b$.
We formalize our problem of interest below.

\pbDef{\textsc{Upward Planar Drawing Extension} (\textsc{UPE})}
{A connected acyclic digraph $G$, a connected subdigraph $H$ of $G$, and an upward planar drawing $\Gamma(H)$ of $H$.}
{Does there exist an extension $\Gamma(G)$ of $\Gamma(H)$, i.e., an upward planar drawing of $G$ such that its restriction to $H$ coincides with $\Gamma(H)$?}

The \NP-hardness of \upe\ easily follows from the fact that the case where $H$ is empty coincides with the \NP-hard problem of testing upward planarity~\cite{GargT01}.
There are two natural parameterizations that have been employed in the study of drawing extension problems. The first is the \emph{edge deletion distance}, which simply measures the cardinality of $E(G)\setminus E(H)$. The extension of 1-planar~\cite{EibenGHKN20, mfcs/EibenGHKN20}, crossing-optimal simple~\cite{GanianHKPV21} and bend-minimal orthogonal~\cite{BhoreGKMN23} drawings is known to be fixed-parameter tractable when parameterized by the edge deletion distance; however, the drawback of all these algorithmic upper bounds is that they only target instances with a small number of missing edges.\footnote{We call the edges (vertices) in $E(G)\setminus E(H)$ ($V(G)\setminus V(H)$) \emph{missing}, as they are missing in $\Gamma(H)$.} A more desirable parameterization is the \emph{vertex}+\emph{edge deletion distance}~\cite{EibenGHKN20, mfcs/EibenGHKN20,GanianMNZ21,Bookthicknessextension,DepianFFGN25,DFGN25}, i.e., the minimum number of vertex and edge deletion operations required to obtain $H$ from~$G$, whereas a single vertex deletion operation naturally also removes all of the incident edges.
Formally, this parameter is given by 
$|V(G)\setminus V(H)|+|E(G[V(H)])\setminus E(H)|$ where $G[V(H)]$ is the subgraph of $G$ induced on $V(H)$. 
Since removing one vertex may delete $\Theta(n)$ edges, this parameterization is strictly more general than the edge deletion distance.

The two parameterizations coincide on bounded-degree graphs, but in the general setting it seems much more desirable---and challenging---to obtain algorithms utilizing the vertex+edge deletion distance as the parameter. In fact, the fixed-parameter tractability of extending 1-planar drawings (which is where these parameterizations were first formalized) under the vertex+edge deletion distance arguably remains the most prominent open question in the parameterized study of extension problems~\cite{EibenGHKN20, mfcs/EibenGHKN20,GanianMNZ21}. The problem is known to become fixed-parameter tractable if one places further restrictions on the sought-after drawing~\cite{EibenGHKN20}, but for 1-planarity itself even \XP-tractability is non-trivial and was only established in a follow-up manuscript dedicated to that problem~\cite{mfcs/EibenGHKN20}. On the other hand, extending both stack and queue layouts is \NP-hard even if the vertex+edge deletion distance is $2$~\cite{Bookthicknessextension,DFGN25}.

\subparagraph{Contribution and Overview.} \label{subsec:contribution-and-overview}
In this work, we establish the fixed-parameter tractability of \textsc{Upward Planar Drawing Extension} when parameterized by the vertex+edge deletion distance. The obtained algorithm is deterministic and constructive: it can also output an extension of $\Gamma(H)$ as a witness for YES-instances. Before proceeding towards a description of our approach, we highlight two high-level challenges that had to be overcome via novel techniques and insights into the combinatorial properties of upward planar drawings.

\begin{enumerate}
\item The instances we need to work with are neither purely combinatorial, nor purely geometric, they contain elements from both worlds: while the set of missing vertices and edges is a combinatorial object, the drawing $\Gamma(H)$ imposes geometric constraints on their placement. 
\item The parameterization requires us to deal with a possibly large set of missing edges without using branching or other exponential techniques per edge: recall that even~$n$-vertex instances with a parameter value of one may contain $\Theta(n)$ missing edges.
\end{enumerate}

\subparagraph*{The Initial Setup (Section~\ref{sec:setup}).}
We begin by applying a set of individually straightforward preprocessing steps on the given instance of \textsc{UPE}. 
First, to simplify our future considerations, we subdivide each edge $e$ in $E(G)\setminus E(H)$ with both endpoints in $E(H)$. This results in an equivalent instance of our problem (as, in a hypothetical solution for the original instance, we may place the newly created vertex anywhere on the upward curve of $e$), but allows us to assume w.l.o.g.\ that each missing edge has at least one endpoint in $V(G)\setminus V(H)$ (i.e., that $H$ is induced on $V(H)$) and that the parameter is simply the number $|V(G)\setminus V(H)|$ of missing vertices. 

Second, we apply an exhaustive branching subroutine to determine a partition of the missing vertices into groups which will be placed in the same face of $\Gamma(H)$. This yields a reduction of the original problem to the special case where we only need to deal with the vertices placed in one specific face of the provided drawing. 
Third, we replace each bend point on each polyline in the drawing $\Gamma(H)$ with a degree-$2$ vertex, allowing us to assume that $\Gamma(H)$ is straight-line. And finally, we apply a simple perturbation argument to assume that no two vertices of $\Gamma(H)$ have the same $y$-coordinate (i.e., that the drawing is in general position as far as the vertical coordinate is concerned).
Modulo a single caveat---specifically, the case where we need to place some group of missing vertices into the outer face of the provided drawing (which we will return to later)---the above considerations allow us to concentrate our efforts on solving the subproblem stated below.

\vspace{0.3cm}

	\noindent\fbox{
		\begin{minipage}{0.96\linewidth}
			\begin{tabular*}{\linewidth}{@{\extracolsep{\fill}}lr} \textsc{Upward Planar Drawing Extension in a face $f$} (\textsc{UPEF}) & \\ \end{tabular*}
			{\bf{Input:}} A connected acyclic digraph $G$, a connected induced subdigraph $H$ of $G$, an upward straight-line planar drawing $\Gamma(H)$ of $H$ whose vertices each have a unique $y$-coordinate, 
            and an inner face $f$ of this drawing. \\
   {\bf{Parameter:}} $k=|V(G)\setminus V(H)|$\\ 
			{\bf{Question:}} Does there exist an extension $\Gamma(G)$ of $\Gamma(H)$ such that all vertices of $V(G)\setminus V(H)$ are placed in $f$?
		\end{minipage}
	}

\vspace{0.3cm}

\subparagraph*{Strips and Trees (Section~\ref{sec:strips}).} 
While \textsc{UPEF} may seem simpler than \textsc{UPE}, the true source of the problem's difficulty is finding a placement for $k$ missing vertices, each of which may be incident with a large number of edges, in a possibly highly non-convex face of $\Gamma(H)$. 
As our first step towards dealing with this task, we rely on a combinatorial representation of the faces in an upward planar drawing that is conceptually related to the notion of \emph{vertical decompositions}~\cite{BergCKO08}.
In particular, we partition the face $f$ into trapezoidal regions called \emph{strips} and show that these form a (rooted) tree-like structure---the \emph{strip tree} $T_f$ (see \cref{fig:strip-graph}).

\begin{figure}[t]
\centering
\includegraphics[scale = 1.0]{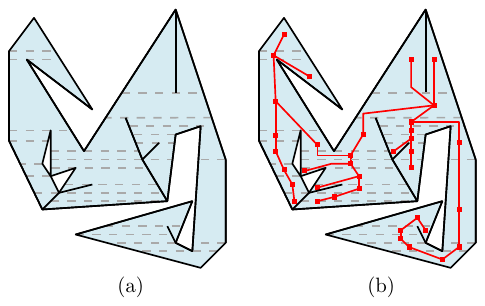}
\caption{An inner face $f$ of $\Gamma(H)$; all bends in $\Gamma(H)$ are vertices. (a) A partitioning of $f$ into strips; the gray horizontal boundaries between strips are called \emph{gates}. (b) The strip tree of $f$ is red: for each strip there is a vertex and for every gate, there is an edge between the two strips sharing it.}
\label{fig:strip-graph}
\end{figure}

\subparagraph*{Dynamic Programming Setup (Section~\ref{sec:dynprogsetup}).}
We now proceed to the core of our result, which is determining whether $k$ missing vertices may be placed inside an inner face $f$. 
For this we will perform a leaf-to-root dynamic programming procedure along $T_f$. Towards obtaining our algorithm, we show that the edges that cross the boundary of a strip (e.g., when passing towards the root of the strip tree) may be grouped into $\bigoh(k^3)$ many \emph{types} in the so-called \emph{gate word}. And each such type can essentially be treated as a single object. 
At first glance, one would expect that this already suffices to solve the problem via leaf-to-root dynamic programming along the strip tree. However, there is still a crucial obstacle that needs to be overcome: when processing a strip, it is not at all obvious which of the missing edges actually leave the region processed so far. In particular, while one may think that whether an edge $Av$ incident with precisely one missing vertex $A$ passes through a strip's horizontal boundary (also called a \emph{gate}) depends solely on whether $A$ had already been placed in the processed part of the strip tree, this is not the case if $v$ is a cut-vertex of $H$---see Figure~\ref{fig:routing-depends}.

\begin{figure}[t]
  \centering \includegraphics[width=0.4 \textwidth]{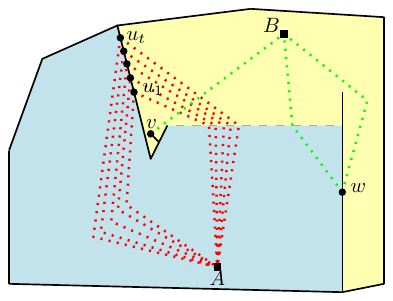} 
\caption{A ``processed'' subtree of $T_f$ (in blue). 
The missing vertex $A$ is placed in the blue region. We do not know whether any of the red dashed edges will be crossing the gate between the blue and the yellow (depicting the ``unprocessed part'') regions. If at least one such edge does so, it will be impossible to draw an edge between the missing vertex $B$ and the ``unprocessed'' vertex $v$ later on. Hence, when processing the strip tree in a bottom-up fashion one needs to keep track of which of the (possibly numerous) missing edges cross the gate separating the processed part from the future. 
}
\label{fig:routing-depends} 
\end{figure}

The key towards overcoming this obstacle lies in a non-trivial structural result which establishes a bound on the number of ``canonical'' subsets of edges that could cross the gate (Theorem~\ref{thm:relevant-records}). Essentially, the proof of the result partitions the pre-drawn vertices into \emph{bundles} 
such that for each bundle one can employ a careful iterative redrawing argument to obtain a solution which either ``draws'' all edges incident with this bundle so that they cross the gate, or ``avoids'' this gate for the whole bundle. 
We also show that the number of bundles is bounded by a function of $k$ so that the number of relevant partial solutions also is. 
One specific complication that we overcome in the proof of the correctness of the algorithm is that the bundles change over time, i.e., when the algorithm moves higher up in the strip tree.

\subparagraph*{Dynamic Programming (\cref{sec:dynprog-new}) and Handling the Outer Face (Section~\ref{sec:outer}).}
With the above results in hand, we can finally proceed to the fixed-parameter algorithm that solves \textsc{UPEF} by leaf-to-root dynamic programming. 
For this, we describe 
how one computes the records for a strip $s$ based on the records for its at most two children, which---while the details and the proof of correctness are technical and non-trivial---essentially boils down to a careful brute-force enumeration of all choices of records for each of the children and a bounded number of choices of how to place vertices and route the edges inside $s$ itself. 

The final piece of the puzzle is the case where, when constructing our reduction from \textsc{UPE} to \textsc{UPEF}, some missing vertices are placed in the outer face $f'$ of $\Gamma(H)$. 
To deal with the outer face, 
we prove a sequence of lemmas which yield a non-trivial Turing-reduction from this case to \textsc{UPEF}. 
This is achieved by establishing the existence of a bounded number of potential ``clean cuts'' in $f'$: each such cut consists of a bounding box around $\Gamma(H)$ and a subdivided straight-line edge $e$ connecting the bounding box to $\Gamma(H)$. While this construction transforms $f'$ into an inner face, the issue is that a hypothetical solution may need to insert a large number of missing edges, each crossing $e$. To account for this, in our construction we combine branching with the \emph{vertex splitting} operation~\cite{EppsteinKKLLMMV18,NollenburgSTVWW22} to carefully ``split'' a selected subset of missing vertices into two copies each; these copies are then pre-assigned to the correct sides of~$e$. An illustration of the general idea is provided in Figure~\ref{fig:outer-face}.

\section{Preliminaries}
For a non-negative integer $k$, we define the set $[k] = \{ i \mid 1 \leq i \leq k\}$. 
Similarly, we let $[k]_0 = [k] \cup \{0\}$.
We assume basic familiarity with the parameterized complexity paradigm, notably with the notion of \emph{fixed-parameter algorithms}~\cite{DowneyF13, CyganFKLMPPS15}. 
When expressing the running times, we use the~$\bigoh^*$ notation to suppress polynomial factors of the input size. 
In this work, we primarily deal with simple directed acyclic graphs, i.e., graphs with no loops and no directed cycles. 
For brevity, we refer to these as \emph{graphs} most of the time, and understand connectivity in such graphs to mean weak connectivity. 
In the few cases where we deal with undirected graphs (notably in the context of the strip tree), we explicitly refer to these as either undirected graphs or as trees. Let $G$ be a graph.
For vertices $u$ and $v$ of $G$, $uv$ denotes the edge from $u$ to $v$, i.e., having its \emph{tail} in $u$ and \emph{head} in $v$.
We use $\{u,v\}$ to refer to an edge whose two endpoints are $u$ and $v$ (without specifying its orientation). 

A (\emph{polyline}) \emph{drawing} $\Gamma(G)$ of a graph $G$ maps each vertex to a point in the plane and each edge to a polyline between the endpoints of the edge. $\Gamma(G)$ is \emph{planar} if no two edges intersect, except at common endpoints. A planar drawing partitions the plane into regions, called \emph{faces}. The bounded faces are called \emph{internal}, while the unbounded face is the \emph{outer face}. $\Gamma(G)$ is \emph{upward} if every edge of $G$ is represented as a polyline which monotonically increases in the $y$-direction from its tail to its head. 
A drawing $\Gamma(G)$ of $G$ is called an \emph{extension} of a drawing $\Gamma(H)$ of a subgraph $H$ of $G$ if the restriction of $\Gamma(G)$ to the vertices and edges of~$H$ coincides with~$\Gamma(H)$. 

We formalize our problem of interest below:
\pbDefp{\textsc{Upward Planar Drawing Extension} (\textsc{UPE})}
{A connected acyclic digraph $G$, a connected subdigraph $H$ of $G$, and an upward planar drawing $\Gamma(H)$ of $H$.}
{The vertex+edge deletion distance $k$ from $G$ to $H$, i.e., $|V(G)\setminus V(H)|+|\{E(G[V(H)])\setminus E(H)\}|$.}
{Does there exist an upward planar extension $\Gamma(G)$ of $\Gamma(H)$?}
We call a drawing $\Gamma(G)$ satisfying the above property a \emph{solution} (or \emph{extension}) for the given instance of \textsc{UPE}. 
The elements of $V_M = V(G) \setminus V(H)$ and $E_M = E(G) \setminus E(H)$ are called \emph{missing} vertices resp.\ edges.
The vertices and edges of $H$ are called \emph{pre-drawn}. Throughout the paper, we often capitalize missing vertices to distinguish them from pre-drawn.
We define the input size~$n$ of \textsc{UPE} as~$|G|$ plus the number of bends and vertices in $\Gamma(H)$. 
In line with past work on drawing extension problems~\cite{adfjkp-tppeg-15,EibenGHKN20, mfcs/EibenGHKN20,GanianHKPV21,HammH22,BhoreGKMN23}, we 
do not include the encoding length of the coordinates in the input size.

\subparagraph{Boundary Curves and Walks.}
In some of our later arguments, we will make use of curves and walks which ``follow'' the boundary of a face $f$.

Let $v\in V(H)$ be an arbitrary vertex on the boundary of a face $f$. 
We say that a \emph{boundary curve} $C(f)$ of $f$ is a closed curve which starts at a point $p$ which lies at a distance of~$\varepsilon$ from $v$ inside $f$ (for some choice of $\varepsilon$ that is sufficiently smaller than the lowest distance between any two points inside $\drawing$), and then follows the boundary of $f$ while maintaining a distance of~$\varepsilon$ from the boundary of $f$. We say that the boundary curve $C(f)$ \emph{visits} a vertex $w$ and/or edge $e$ incident with $f$ when it is at a distance of $\varepsilon$ from $w$ or $e$, respectively. 

Observe that $C(f)$ visits each edge on the boundary of $f$ either once or twice (precisely once for each side of that edge which lies in $f$), but may visit each vertex on the boundary of $f$ multiple times. More precisely:
\begin{itemize}
\item $C(f)$ visits each edge incident with one strip of $f$ once, and each edge incident with two strips of $f$ twice (once from each strip), and
\item For each vertex $v$ on the boundary of $f$ and each strip $s$ which touches $v$, $C(f)$ visits $v$ while being inside $s$ precisely once.
\end{itemize}

A second combinatorial structure we will need is the \emph{boundary walk} $W(f)$, which is the circular non-repeating sequence of vertices and edges that are visited when following $C(f)$ in a counterclockwise manner. We explicitly remark that 
$W(f)$ may visit each vertex and edge multiple times. An illustration of these two concepts is provided in \cref{fig:boundary-walk}.

\begin{figure}[ht]
\centering
\includegraphics[scale = 1.3]{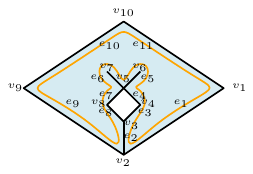}
\caption{The face $f$ is blue. The curve $C(f)$ is orange and the boundary walk $W(f)$ is given by $v_1$, $e_1$, $v_2$, $e_2$, $v_3$, $e_3$, $v_4$, $e_4$, $v_5$, $e_5$, $v_6$, $e_5$, $v_5$, $e_6$, $v_7$, $e_6$, $v_5$, $e_7$, $v_8$, $e_8$, $v_3$, $e_2$, $v_2$, $e_9$, $v_9$, $e_{10}$, $v_{10}$, $e_{11}$.}
\label{fig:boundary-walk}
\end{figure}

\section{Initial Setup}
\label{sec:setup}
The aim of this section is to ``peel off'' some of the simpler layers of \upe\ to facilitate our attack at the core difficulties of the problem. 
\subparagraph*{Step 1.}
Given an instance $\III'=(G',H',\Gamma(H'))$ of \upe, we begin by constructing the instance $\III_1=(G_1,H_1,\Gamma(H_1))$ as follows. $\Gamma(H_1)$ is  obtained by replacing each bend in $\Gamma(H')$ with a new vertex of degree $2$, and $G'$ and $H'$ are then adapted to $G_1$ and $H_1$ by accordingly applying the appropriate subdivisions to edges of~$E(H)$. 
Clearly, the above construction can be carried out in polynomial time and preserves the existence of a solution to the instance as well as the value of the parameter while ensuring that each edge of $H_1$ forms a straight-line segment in $\Gamma(H_1)$.

\subparagraph*{Step 2.} Next, we construct the instance $\III_2=(G_2,H_2,\Gamma(H_2))$ from $\III_1$ as follows. We subdivide each edge $ab\in E(G_1)\setminus E(H_1)$ such that $a,b\in V(H_1)$---i.e., each missing edge with two pre-drawn endpoints---to form two new edges $ac$ and $cb$, and add both of these edges as well as the vertex $c$ to $G_2$ but not to $H_2$. 
Clearly, this construction can be carried out in polynomial time and preserves the existence of a solution: every extension $\Gamma(G_1)$ can be transformed into an extension $\Gamma(G_2)$ by placing each new vertex $c$ anywhere on the upward polyline connecting $a$ to $b$, and vice-versa removing the new vertices from $\Gamma(G_2)$ yields an extension $\Gamma(G_1)$ for $\III_1$. Crucially, the parameter $k$ in $\III_1$ is precisely the same as in $\III_2$, as each choice of $ab$ in $\III_1$ contributes $1$ to $k$ and in $\III_2$ the whole set $\{ac,bc,c\}$ contributes to~$1$ to the parameter as well. We observe that at this point, the graph $H_2$ forms an induced subgraph of $G_2$.

\subparagraph*{Step 3.} We construct the instance $\III_3=(G_3,H_3,\Gamma(H_3))$ from $\III_2$ as follows. For each two distinct vertices  $a,b\in V(H_2)$ which have the same $y$-coordinate in $\Gamma(H_2)$, we test whether~$G_2$ contains a directed $a$-$b$ path and immediately reject if this is the case for any such pair. 
Otherwise, for each such $a,b\in V(H_2)$, we perturb the $y$-coordinate of $a$ by a sufficiently small value while preserving the upward planarity of the drawing. This results in a new drawing~$\Gamma(H_3)$ such that each pair of distinct vertices $a,b\in V(H)$ receive distinct $y$-coordinates, whereas the instance $\III_3$ is defined by setting $G_3=G_2$ and $H_3=H_2$.

 \begin{observation}
 $\III_3$ is a positive instance if and only if so is $\III_2$.
 \end{observation}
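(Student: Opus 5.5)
We prove both directions separately. Throughout, let $a,b\in V(H_2)$ be two distinct vertices with the same $y$-coordinate in $\Gamma(H_2)$.

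\emph{Rejection is correct.} Suppose $G_2$ contains a directed $a$-$b$ path $P$. In any upward planar drawing every edge strictly increases in $y$ from tail to head. Following $P$, the $y$-coordinate of $b$ must therefore be strictly larger than that of $a$. An extension of $\Gamma(H_2)$ keeps both coordinates fixed and equal, so no extension exists and rejecting is correct.

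\emph{From $\III_2$ to $\III_3$.} Assume no such path exists for any equal-height pair, and let $\Gamma(G_2)$ be an extension of $\Gamma(H_2)$. We first normalize it. By a standard perturbation, and since edges are polylines, we may assume every edge of $\Gamma(G_2)$ is a polyline with finitely many bends and no horizontal segments. Let $\delta>0$ be smaller than every nonzero vertical gap between vertices and bends of $\Gamma(G_2)$.

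Perturbing the pre-drawn vertex $a$ by a tiny $\eta\ll\delta$ is realized by a local homeomorphism of the plane. Take a small disk $D$ around $a$ that meets only edges incident to $a$. Inside $D$, apply a map that moves $a$ vertically by $\eta$, is the identity on $\partial D$, and preserves $y$-monotonicity of the curves through $D$. A concrete choice: in a thin horizontal slab around $a$, shift points vertically by an amount tapering to $0$, chosen so that each vertical line is mapped monotonically onto itself.

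This map preserves planarity and upwardness of all curves inside $D$. It also maps the pre-drawn straight segments incident to $a$ onto the corresponding straight segments of the perturbed drawing $\Gamma(H_3)$, where needed after composing with a straightening isotopy inside $D$. Edges of $H$ at $a$ point strictly up or down, so tiny moves keep them upward. Outside $D$ nothing changes.

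The step Step 3 must guarantee is that the perturbed $\Gamma(H_3)$ stays upward planar. That holds for sufficiently small moves, and the same smallness suffices above. Repeating this for each pair, processed sequentially with shrinking perturbations, yields an extension of $\Gamma(H_3)$.

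\emph{From $\III_3$ to $\III_2$.} This direction is symmetric: apply the same local homeomorphism argument in reverse, moving $a$ back by $-\eta$ within $D$ around the new position. The perturbation is small enough that the disk argument applies identically.

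\emph{Main obstacle.} The delicate step is the local redrawing near $a$. A missing edge may pass arbitrarily close to $a$, and the required vertical monotonicity must be preserved throughout. The way around this is to take $D$ small enough to avoid all non-incident edges; such a $D$ exists since the drawing is finite. Note that the path test is not needed for this direction. Its role is only to ensure that rejecting is correct, and therefore that the perturbation step is reached only when the instance may still be feasible.
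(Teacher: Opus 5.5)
Your rejection argument is fine, but the perturbation argument has a quantifier gap. The shift $\eta$ in Step~3 is fixed by the reduction from $\Gamma(H_2)$ alone, before any solution is known. Your disk $D$, and the condition under which the tapering map keeps curves $y$-monotone, depend on the particular solution. A missing edge may pass within distance $\eta$ of $a$. A missing edge incident to $a$ may also leave $a$ almost horizontally; on a disk of radius $\rho$ the shift $\eta\psi$ has horizontal gradient of order $\eta/\rho$, which destroys the monotonicity of any incident edge whose slope at $a$ is below that order. So ``the same smallness suffices above'' is not justified. Moreover, $\Gamma(H_3)$ is straight-line, so moving $a$ changes the entire pre-drawn segments at $a$. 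A map that is the identity outside $D$ therefore does not produce an extension of $\Gamma(H_3)$ at all. For $\III_2\Rightarrow\III_3$ this is repairable: first redraw the solution so that its missing parts stay far (relative to $\eta$) from $a$ and from the pre-drawn edges at $a$. That is what the paper's one-line ``shift the missing parts away from the pre-drawn vertices'' does, but you need to supply that step.

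For $\III_3\Rightarrow\III_2$ the local argument genuinely fails, and your closing remark that the path test plays no role in the equivalence is wrong. Suppose $a$ was raised to height $y(b)+\eta$, and a solution of $\III_3$ places a missing in-neighbour $C$ of $a$ strictly between heights $y(b)$ and $y(b)+\eta$. Then the drawing of $Ca$ meets $\partial D$ at a point strictly above $y(b)$. No map fixing $\partial D$ can bring $a$ down to $y(b)$ while keeping $Ca$ upward. In fact, if $G_2$ contained the path $b\to C\to a$, then $\III_2$ would be a NO-instance (it needs $y(b)<y(C)<y(a)=y(b)$), while $\III_3$ could be a YES-instance. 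So any correct proof of this direction must use the test, and yours does not.

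The missing idea, which is how the paper argues, is this: since $G_2$ has no directed path between $a$ and $b$, no missing vertex is forced into the slab between the two heights. A missing vertex in that slab that reaches $a$ can be moved below $y(b)$, and one reachable from $b$ can be moved above it. No vertex is both, since that would give a $b$--$a$ path. So one first clears that slab and only then undoes the perturbation.
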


 \begin{proof}
Given a solution $\Gamma(G_2)$ for the latter instance, we can construct a solution for the former by merely slightly shifting the $y$-coordinates of the missing vertices in $V(G_2)\setminus V(H_2)$ and of the drawing of the missing edges to avoid being unnecessarily close to the vertices of $H_2$. Given a solution $\Gamma(G_3)$ for the former instance, we first observe that since we have already excluded the presence of directed paths between vertices that had the same $y$-coordinates in~$\Gamma(G_2)$, no vertex of $V(G_3)\setminus V(H_3)$ needs to be ``sandwiched'' between the $y$-coordinates of any two vertices which received the same $y$-coordinate in $\Gamma(G_2)$. Hence, we may once again  slightly shift the $y$-coordinates of the missing vertices in $V(G_3)\setminus V(H_3)$ and of the drawing of the missing edges to avoid being unnecessarily close to the vertices of $H_3$---thus allowing us to also construct a solution for the latter instance.
 \end{proof}

 \subparagraph*{Step 4.}
For the final---and only technically involved---step of this section, we apply exhaustive branching to reduce $\III_3$ to the task of solving the special case where we merely need to place the missing vertices in a single face of $\Gamma(H_3)$. Before proceeding, let us formally define the ``target'' problem we will focus on in the subsequent Sections~\ref{sec:strips}-\ref{sec:dynprog-new}: 

\vspace{0.3cm}
	\noindent\fbox{
		\begin{minipage}{0.96\linewidth}
			\begin{tabular*}{\linewidth}{@{\extracolsep{\fill}}lr} \textsc{Upward Planar Drawing Extension in a face $f$} (\textsc{UPEF}) & \\ \end{tabular*}
			{\bf{Input:}} A connected acyclic digraph $G$, a connected induced subdigraph $H$ of $G$, an upward straight-line planar drawing $\Gamma(H)$ of $H$ whose vertices each have a unique $y$-coordinate, 
            and an inner face $f$ of this drawing. \\
   {\bf{Parameter:}} $k=|V(G)\setminus V(H)|$\\ 
			{\bf{Question:}} Does there exist an extension $\Gamma(G)$ of $\Gamma(H)$ such that all vertices of $V(G)\setminus V(H)$ are placed in $f$?
		\end{minipage}
	} 
\vspace{0.3cm}

We call an extension $\Gamma(G)$ satisfying the conditions stipulated above a \emph{baseline solution} of the instance. For technical reasons, we will also allow baseline solutions to route missing edges multiple times, i.e., allow $\Gamma(G)$ to include multiple distinct (and pairwise non-intersecting) curves connecting the endpoints of each missing edge.
Let \textsc{UPEF-Outer} be defined analogously to \textsc{UPEF} (cf.\ Section~\ref{sec:intro}) but where all vertices of $V(G)\setminus V(H)$ must be placed in the \emph{outer face} of $\Gamma(H)$. 
An algorithm for \textsc{UPEF} will be obtained in Section~\ref{sec:dynprog-new} and \textsc{UPEF-Outer} will be handled in Section~\ref{sec:outer}. 
Here, we complete the transition from \upe\ to these two problems by proving:

\begin{lemma}\label{lem:time1}
Let $\mathbb{A}$, $\mathbb{B}$ be fixed-parameter algorithms solving \textsc{\textup{UPEF}} and \textsc{\textup{UPEF-Outer}} in time at most $\bigoh^*(f_1(k))$, $\bigoh^*(f_2(k))$, respectively. Then there is an algorithm which solves every instance $\III_3=(G_3,H_3,\Gamma(H_3))$ in time at most $\bigoh^*(k^k\cdot \max(f_1(k),f_2(k)))$.
\end{lemma}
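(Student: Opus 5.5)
We branch over all ways to assign the missing vertices to faces of $\Gamma(H_3)$. Since $G_3$ is connected and $H_3$ is an induced subgraph (Step 2), every missing edge has at least one missing endpoint. In any solution, the missing vertices in one connected component $C$ of $G_3-V(H_3)$ must all lie in a single face of $\Gamma(H_3)$: two adjacent missing vertices are joined by a curve avoiding $\Gamma(H_3)$. Moreover, each pre-drawn neighbour of $C$ must lie on the boundary of that face, since the connecting edge enters the face.

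The naive approach, guessing a face for each missing vertex independently, gives $|F|^k$ branches, which is not FPT. We therefore guess the \emph{partition} instead. We guess an equivalence relation on $V_M$ (at most $k^k$ choices, e.g.\ via a map $V_M\to[k]$) describing which missing vertices share a face. For each class we then determine its face deterministically, or reduce to a bounded number of candidates.

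\textbf{Determining the face of a class.} Fix a class $X$ that contains a vertex $A$ with at least one pre-drawn neighbour. Because $G_3$ is connected, every union of components touching $H_3$ has such a neighbour; for the classes it is enough to require this of the components, which we may restrict to. The face containing $X$ must then be incident with all pre-drawn neighbours of $X$.

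If $X$ has two pre-drawn neighbours that share two or more common faces, this is a potential issue. We would argue that ties arise only in degenerate situations, such as a single neighbour vertex or all neighbours lying on a path, i.e.\ a cut structure. To handle them, instead of picking a face we guess, for a class with neighbour set $N$, one face among those incident to all of $N$.

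To keep the branching bounded, the cleanest route is to guess per class only two things: the representative edge $Av$ of some fixed missing vertex $A$, and which face around $v$ it enters, as a gap in the rotation at $v$. That is still unbounded. So instead we accept $k^k$ branches that guess, for each missing vertex $A$, an index in $[k]$ naming its class, and resolve the face of each class in polynomial time. Each candidate face $f$ is then tested with $\mathbb{A}$, or with $\mathbb{B}$ if $f$ is the outer face, on the subinstance consisting of $H_3$ plus the class $X$ and its incident edges.

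\textbf{Soundness of splitting into independent subinstances.} Different classes lie in different faces, so their drawings do not interact. Hence the original instance is a yes-instance if and only if, for some partition, each class is solvable in its own face. Solutions for different faces combine by taking their union, since faces are disjoint open regions. Each subinstance has $H$ connected and induced, but $G$ restricted to $H_3\cup X$ must be connected; this holds because each component of $G_3-V(H_3)$ touches $H_3$.

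\textbf{Main obstacle.} The hard step is bounding the number of candidate faces per class by something polynomial, so that we can iterate over them without multiplying the branching. We would try the following. For a class $X$ with pre-drawn neighbour set $N$, the candidate faces are those incident with all of $N$. Each candidate is tested independently for that class, and the class is accepted if any test succeeds. Since the classes are independent, we do not need a product over classes, only a sum. This yields a total of $k^k$ partitions times a polynomial number of calls, each taking $\bigoh^*(\max(f_1(k),f_2(k)))$ time, which matches the claimed bound.
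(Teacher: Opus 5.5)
\textbf{Gap: distinct classes are never forced into distinct faces.} In the last step you let each class choose its face independently (``accept the class if any test succeeds'', ``only a sum, not a product''). Nothing in the procedure requires different classes to receive different faces. Your soundness paragraph, however, relies on exactly that (``Different classes lie in different faces''). Compatibility of each class with a face does not imply that several classes can share that face.

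Take two classes $P_1=\{A\}$ and $P_2=\{B\}$, each a component of $G_3-V(H_3)$, whose only compatible face is the same face $f$. Suppose $P_1\cup P_2$ cannot be drawn in $f$, for instance because the pre-drawn neighbours of $A$ and $B$ interleave along the boundary of $f$, so that their edges must cross. Every solution would place both $A$ and $B$ in $f$, so the instance is a no-instance. Your algorithm nevertheless accepts the partition $\{P_1,P_2\}$. Hence the direction ``the algorithm answers yes $\Rightarrow$ a solution exists'' fails.

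\textbf{Missing idea: the per-class choices are coupled by injectivity.} This is what the paper's proof handles. For a fixed partition, build the bipartite graph between parts and faces of $\Gamma(H_3)$ whose edges are the compatible pairs, determined with $\mathbb{A}$ for inner faces and $\mathbb{B}$ for the outer face. Accept if and only if some partition admits a matching saturating all parts. Matching is polynomial, so the bound $\bigoh^*(k^k\cdot\max(f_1(k),f_2(k)))$ is preserved. Completeness holds because grouping the missing vertices of a solution by face yields such a partition together with an injective compatible assignment. Soundness holds because solutions inside pairwise distinct faces can simply be united.

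\textbf{Two smaller points.}
\begin{itemize}
\item You should enumerate only partitions with no edge of $G_3$ between distinct parts. Otherwise the subinstance ``$H_3$ plus $X$ and its incident edges'' is not well defined, and such parts could not lie in different faces anyway.
\item The ``main obstacle'' you discuss is not one. $\Gamma(H_3)$ has $\bigoh(n)$ faces, so testing every (part, face) pair costs only a polynomial factor, and no analysis of common faces of neighbours is needed.
\end{itemize}
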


 \begin{proof}
Recall that the missing part of the drawing consists of precisely $k$ vertices and their incident edges. We begin by exhaustively branching over all of the at most $k^k$ many possible partitions $\mathcal{P}$ of $V(G_3)\setminus V(H_3)$ such that $G_3$ does not contain edges between vertices from distinct $P_1, P_2 \in \mathcal{P}$.  For each such partition $\mathcal{P}$, each $P\in \mathcal{P}$ and each inner face $f$ of $\Gamma(H_3)$, we invoke $\mathbb{A}$ to determine whether $P$ can be placed in $f$; if yes, we say that the pair $\{P,f\}$ is \emph{compatible}. Moreover, for each such $P\in \mathcal{P}$ we invoke $\mathbb{B}$ to determine whether $P$ can be placed in the outer face $f'$ of $\Gamma(H_3)$, and similarly say that the pair $\{P,f'\}$ is compatible if the answer is positive. 

Once these steps are complete for a choice of $\mathcal{P}$, we construct an auxiliary instance of \textsc{Bipartite Matching} where one side of the bipartition contains all the faces of $\Gamma(H_3)$, the other side contains all the sets in the partition $\mathcal{P}$, and the edges are the compatible pairs. We then solve each such constructed instance of \textsc{Bipartite Matching} in polynomial time using one of the standard algorithms, and output ``Yes'' if and only if at least one of these instances contains a matching of size $|\mathcal{P}|$.

For correctness, assume that $\III_3$ admits some baseline solution $\Gamma(G_3)$. Then there exists a partition $\mathcal{P}$ of $V(G_3)\setminus V(H_3)$ into sets of vertices which are placed by $\Gamma(G_3)$ in the same face of $\Gamma(H_3)$. Moreover, there must also exist a matching between the faces $\Gamma(G_3)$ used to place these missing vertices and the sets in $\mathcal{P}$, and our algorithm would thus be guaranteed to output ``Yes''. On the other hand, if our algorithm outputs ``Yes'', then $\III_3$ must admit a baseline solution as we may partition the vertices of $V(G_3)\setminus V(H_3)$ according to the choice of $\mathcal{P}$ for which a matching was found, and refer to the correctness of $\mathbb{A}$ and $\mathbb{B}$ for a guarantee that there exists a baseline solution to $\III_3$ that can be obtained by placing each set in $\mathcal{P}$ in a distinct face of $\Gamma(H_3)$.
\end{proof}

Putting the four steps together, we conclude:

\begin{lemma}
\label{lem:setup}
Let $\mathbb{A}$, $\mathbb{B}$ be fixed-parameter algorithms solving \textsc{\textup{UPEF}} and \textsc{\textup{UPEF-Outer}} in time at most $\bigoh^*(f_1(k))$ and $\bigoh^*(f_2(k))$, respectively. Then \textsc{\textup{Upward Planar Drawing Extension}} can be solved in time at most $\bigoh^*(k^k\cdot \max(f_1(k),f_2(k)))$.
\end{lemma}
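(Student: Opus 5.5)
The plan is to compose the four preprocessing steps of this section into a single polynomial-time reduction and then apply Lemma~\ref{lem:time1}. Starting from an arbitrary instance $\III'=(G',H',\Gamma(H'))$ of \upe\ with parameter $k$, we apply Step~1 to obtain $\III_1$. It is equivalent to $\III'$, has the same parameter value, is produced in polynomial time, and has a straight-line drawing $\Gamma(H_1)$. Its size stays polynomial in $n$, because the number of bends is counted in $n$.

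Next we apply Step~2 to obtain $\III_2$. This is again equivalent and computable in polynomial time. Now $H_2$ is an induced subgraph of $G_2$, and the vertex+edge deletion distance equals $|V(G_2)\setminus V(H_2)|$. This number is at most the original $k$: each missing edge between pre-drawn vertices is replaced by one new missing vertex, and the missing vertices of $G_1$ remain. Connectivity of $G_2$ and $H_2$ is preserved by subdivision.

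Then we apply Step~3. This either rejects correctly, since a directed path between two vertices with equal $y$-coordinates cannot be drawn upward, or produces $\III_3$. By the Observation above, $\III_3$ is equivalent to $\III_2$, and all vertices of $\Gamma(H_3)$ have distinct $y$-coordinates. The reachability tests and the perturbation both take polynomial time.

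The instance $\III_3$ now satisfies every input condition of \textsc{UPEF} and \textsc{UPEF-Outer}, except for the choice of a face and the restriction to a part of the missing vertices. Lemma~\ref{lem:time1} handles exactly this. One point needs checking here. The sub-instances passed to $\mathbb{A}$ and $\mathbb{B}$ consist of $H_3$ together with one part $P$ and its incident edges, and these must be legal inputs: $G$ connected and $H$ connected and induced. This holds because $P$ is a union of weak components of $G_3-V(H_3)$, each of which attaches to the connected graph $H_3$.

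For the running time, Lemma~\ref{lem:time1} gives $\bigoh^*(k^k\cdot\max(f_1(k),f_2(k)))$. The parameter never increased, and the preprocessing adds only polynomial overhead in $n$, since the sizes of $\III_1,\III_2,\III_3$ are polynomial in the original input size. Combining these yields the stated bound for \upe.

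I expect the main obstacle to be a bookkeeping issue rather than anything conceptual: verifying that the parameter $k=|V(G_3)\setminus V(H_3)|$ of $\III_3$ is bounded by the original vertex+edge deletion distance, and that the functions $f_1,f_2$ are nondecreasing, so that a smaller parameter gives no larger running time. If $f_1$ and $f_2$ are not monotone, we replace them by their monotone upper envelopes.
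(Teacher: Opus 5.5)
Your proposal is correct and follows the paper's argument: the paper also chains Steps~1--3, each of which is polynomial-time, equivalence-preserving and parameter-preserving, and then applies Lemma~\ref{lem:time1}. Your extra checks are details the paper leaves implicit: that each sub-instance given to $\mathbb{A}$ or $\mathbb{B}$ has a connected $G$ and a connected induced $H$, and that $f_1,f_2$ may be taken nondecreasing. Also, after Step~2 the parameter equals the original $k$ exactly, not merely at most $k$.
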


Given Lemma~\ref{lem:setup}, in the next three sections we focus on solving a given instance $\III=(G,H,\Gamma(H),f)$ of \upefshort.
The sets $V_M$ and $E_M$ denote the sets $V(G) \setminus V(H)$ and $E(G) \setminus E(H)$ of \emph{missing} vertices and edges, respectively.
We partition $E_M$ into the set $E_M^1$ of missing edges incident with precisely one vertex of $V_M$ and the set $E_M^2\subseteq (V_M)^2$ with both endpoints in missing vertices.
Conversely, we call the vertices and edges of $H$ \emph{pre-drawn}.

To improve the clarity of our exposition we will conventionally use uppercase letters (e.g., $A$, $B$ etc.) for missing vertices and lowercase letters (e.g., $u$, $v$ etc.) for pre-drawn vertices.

\section{Gates, Strips and Trees}
\label{sec:strips}
In this section, we formalize the combinatorial representation of a face we will employ for the pre-drawn part of $\III$ and use it to introduce a further round of simplifications for \upefshort.

\subparagraph{Gates and Strips.} 
Let $f$ be an inner face of the drawing $\drawing$ and let $v$ be a vertex on its boundary such that a horizontal ray leaving the vertex $v$ to its left enters the face $f$ first.
Let~$p$ be the first intersection of the ray with the boundary of $f$ other than $v$ (it exists since $f$ is an inner face). 
Since the drawing $\Gamma(H)$ is in general position with respect to $y$-coordinates (i.e., no two vertices are placed on the same $y$-coordinate), $p$ is a point on some straight-line segment and, in particular, no vertex is drawn at $p$.

Note that since an upward planar drawing does not contain horizontal segments, the point $p$ is well-defined.
We call the straight-line segment between $v$ and $p$ the \emph{left gate} at $v$.
The \emph{right gates} are defined analogously.
A \emph{gate} of an inner face $f$ is a left or right gate at some vertex $v$ such that this gate is contained in the face $f$ (see \cref{fig:strip-graph}~(a)).
The general position of vertices in $\Gamma(H)$ ensures that for every gate $g$, there is exactly one vertex of $H$ that is an endpoint of $g$.
Note that the internal points of a gate are contained inside the face $f$ and, in particular, no edge of $H$ crosses the gate in its non-endpoints and no vertex of~$H$ lies on the gate, apart from exactly one of its endpoints.
The gates of an inner face $f$ partition it into connected regions called \emph{strips}~(see \cref{fig:strip-graph}~(a)); in particular, the boundary of each strip consists of a set of its gates and line segments and points of $\Gamma(H)$. 
Each gate $g$ occurs in precisely two strips which \emph{share} that gate: one lies above the gate ($g$ is one of its \emph{bottom-gates}) and the other below it ($g$ is one of its \emph{top-gates}).
\subparagraph{Strip Trees.} 

Let $T_f$ be the graph whose vertices are the strips of $f$ and where there is an edge between two strips if and only if they share a gate (see \cref{fig:strip-graph}~(b)).
Then for two strips adjacent in $T_f$ one of them lies above their common gate and the other lies below it.
Thus for any pair of adjacent vertices in $T_f$, one of them lies below the other one.
In view of the following 
theorem, 
we will refer to $T_f$ as the \emph{strip tree} (of $f$). 
\begin{proposition}
\label{pro:striptree}
    The graph $T_f$ of an inner face $f$ of $\Gamma(H)$ is a tree 
    that has $\mathcal{O}(n)$ nodes and in which every node has at most two neighbors lying below it and at most two neighbors above it.
    Thus, $T_f$ has maximum degree at most $4$.
\end{proposition}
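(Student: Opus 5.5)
The plan has three parts: count the strips, argue that $T_f$ is a tree, and analyse the gates on the top and bottom of a single strip to get the degree bounds.

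\textbf{Size.} Each vertex $v$ of $H$ contributes at most two gates, its left gate and its right gate. So $f$ has at most $2|V(H)|=\bigoh(n)$ gates. The interior of $f$ is an open, connected, bounded region. Cutting it along $m$ pairwise disjoint segments, each touching the boundary, produces at most $m+1$ regions, since each cut splits at most one region into two. Hence there are $\bigoh(n)$ strips, and I take this as the node count.

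\textbf{Acyclicity and connectivity.} Connectivity of $T_f$ follows because $f$ is connected. Any two points of $f$ are joined by a curve in $f$, which we may assume crosses gates transversally and finitely often. The sequence of strips it visits is then a walk in $T_f$.

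For acyclicity, I would first show that every gate $g$ disconnects $f$. The segment $g$ joins the boundary vertex $v$ to the boundary point $p$ and otherwise lies inside $f$. Because $f$ is an inner face of a connected drawing, its boundary is connected: $H$ is connected, and the outer boundary of a face of a connected plane graph is a single closed walk. A chord of a face whose boundary is connected separates the face, by the Jordan curve theorem applied to $g$ together with a boundary path from $v$ to $p$. So removing $g$ splits $f$ into two components.

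Each gate is the unique connection between the two strips sharing it. Therefore removing the corresponding edge of $T_f$ disconnects $T_f$. A connected graph in which every edge is a bridge is a tree, which gives the first claim.

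I expect this step to be the main obstacle. The care needed is to argue rigorously that the boundary of an inner face of a connected plane graph is connected, so that there are no holes. A hole would allow a cycle of strips going around it. With the boundary connected, the chord argument goes through.

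\textbf{Degree bound.} Fix a strip $s$. Since every vertex of $\Gamma(H)$ has a distinct $y$-coordinate and no edge is horizontal, the boundary of $s$ consists of:
\begin{itemize}
\item a left chain and a right chain of edges, each monotone in $y$;
\item a top part made of top-gates;
\item a bottom part made of bottom-gates.
\end{itemize}

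The key point is that all top-gates of $s$ lie at the same height. To see this, suppose $s$ had two top-gates at different heights. The lower one ends at a vertex $w$, and that vertex lies in the closure of $s$ and on the boundary of $f$. Then the horizontal ray from $w$ into $s$ would itself be a gate splitting $s$, which contradicts $s$ being a strip.

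So the top of $s$ is a horizontal segment through a single vertex $v$. Only the left and right gates at $v$ can lie on it. Hence $s$ has at most two top-gates, namely at most two neighbours above it. The symmetric argument for bottom-gates gives at most two neighbours below.

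Combining these gives maximum degree at most $4$. The remaining detail is to check that a strip's top boundary is either a single vertex where its two chains meet (zero gates) or the gate(s) at one vertex $v$. This follows from the strip being a trapezoid-like region in the vertical-decomposition sense, as cited earlier.
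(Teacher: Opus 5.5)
The size bound and the proof that $T_f$ is a tree are correct. The tree part takes a different route from the paper, and the degree part contains one step that fails as written.

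\textbf{Tree and size.} The paper turns each gate into an edge to get a plane graph $H'$. It then views a cycle of $T_f$ as a cycle in the dual of $\Gamma(H')$ that uses only gate-edges, and applies cut--cycle duality to obtain a closed curve in $f$ separating $V(H)$, which contradicts connectivity of $H$. You instead show directly that each gate is a separating cross-cut of $f$, so every edge of the connected graph $T_f$ is a bridge. This avoids the auxiliary graph and duality. Your cut-counting also gives the $\mathcal{O}(n)$ bound more explicitly than the paper does.

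The step you flag as the main obstacle is not needed. To close $g$ into a Jordan curve, any simple arc of $\Gamma(H)$ from $v$ to $p$ will do; it exists because $H$ is connected and $p$ lies on an edge of $H$. The arc need not run along $\partial f$, and $f\setminus g$ avoids the resulting curve. The same argument shows the two strips at $g$ share no second gate, so passing to the simple graph $T_f$ loses nothing.

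\textbf{The failing step.} Your justification that all top-gates of a strip $s$ have the same height does not work. Suppose $g_1$ (at $w$, height $y_1$) and $g_2$ (height $y_2>y_1$) were both top-gates of $s$. Nothing forces the other horizontal ray at $w$ to enter $s$. Picture $s$ as a U whose left arm is capped by $g_1$ and whose right arm is capped by $g_2$, with part of the drawing hanging down between the arms. Both rays at $w$ then miss $s$. The configuration is impossible, but the witness is the lowest vertex $u$ of the hanging part, whose two gates cut the arms, not $w$.

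\textbf{A repair.} Use the sweep argument behind vertical decompositions.
\begin{enumerate}
\item Take a maximal horizontal chord of $s$ at a non-vertex height, with ends on edges $e_L$ and $e_R$.
\item Translate it upward between $e_L$ and $e_R$ until the closed chord first contains a vertex $w$; this $w$ is unique by general position.
\item Before that height, the swept region lies in $f$ and meets no gate. A gate at an intermediate height starts at a vertex outside the region and stops at $e_L$ or $e_R$ at the latest.
\item At that height, every point of the open chord other than $w$ lies in $f$, since an edge through it would already have entered the swept region. So the open chord minus $w$ is the union of the at most two gates at $w$.
\item Sweep downward in the same way. The resulting open trapezoid is relatively closed in $f$ minus the gates, hence equals $s$.
\end{enumerate}
Therefore all top-gates of $s$ sit at the single vertex $w$, and the symmetric argument handles bottom-gates.

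The paper itself only asserts this structure from general position (``exactly two vertices lie on the boundary of any strip''). With this repair your proof is at least as complete as the paper's.
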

\begin{proof}
\emph{For convenience, we provide a direct and self-contained proof of all claims in the theorem, as it serves as a starting point for many of our future considerations (but does not constitute a major technical contribution on its own). It is worth noting that some of the claimed properties can also be ascertained by leveraging known facts about vertical decompositions and the sweep-line algorithm~\cite{BergCKO08}.}

    First of all, note that the general position with respect to the $y$-coordinates implies that exactly two vertices lie on the boundary of any strip $s$ of $f$ and each of these vertices yields at most two gates for $s$.
    Therefore, $s$ has at most two adjacent strips above it and at most two adjacent strips below it as claimed.
    Furthermore, every vertex lies on the gate of at most two strips so the number of strips is bounded by $\mathcal{O}(n)$.

    Next, we show that $T_f$ is connected. Let $A \neq B$ be two distinct strips in $f$ and let~$a \in A$ and $b \in B$ be two points in these strips.
    Since $f$ is a connected region, there exists a polyline~$\phi$ in $f$ connecting $a$ to $b$.
    By slightly perturbing the bends of $\phi$, we may assume that no segment is horizontal.
    This polyline naturally defines a sequence $\psi$ of strips visited by $\phi$ starting with $A$ and ending with $B$.
    Observe that by definition, any point in the intersection of a pair of strips (if such a point exists) belongs to their common gate. 
    Therefore, if two strips $s_1$ and $s_2$ are consecutive in $\psi$ (i.e., $\phi$ visits them consecutively), the strips $s_1$ and $s_2$ share a gate and they are adjacent in $T_f$.
    Hence $\psi$ is a walk from $A$ and $B$ in $T_f$, establishing connectivity.

    For the final step, suppose for a contradiction that $T_f$ contains a cycle $C$. Intuitively, we will now construct a plane drawing (of some graph) that visualizes the face $f$ along with its gates.    
    
    Let $\overline{H_f}$ denote the underlying undirected subgraph of $H$ which consists of all vertices and edges on the boundary of $f$, and $\Gamma(\overline{H_f})$ the restriction of $\Gamma(H)$ to the vertices and edges of $\overline{H_f}$. 
    For the purposes of our proof, we further alter $\overline{H_f}$ and its corresponding restriction of the drawing $\Gamma(H)$ as follows: for each gate $g$ which connects a vertex $a\in V(\overline{H_f})$ to a point $p$ on the straight-line segment representing the edge $e$, we (1) subdivide $e$ by creating a new vertex $v_p$ and placing it precisely at the point $p$, and (2) connecting $p$ to $a$ with an edge. Let $H'$ denote the resulting graph, and let $\Gamma(H')$ be the drawing of $H'$ obtained by replacing each gate with a straight-line segment representing the edge arising from that gate. Observe that $\Gamma(H')$ is a plane drawing of $H'$ where gates are replaced by \emph{gate-edges} which form ``solid'' line segments.

Next, let us consider the dual graph $D$ of~$\Gamma(H')$, i.e., $D$ contains a vertex for each face of $\Gamma(H')$ and where two faces are adjacent if and only if they share an edge on their boundary. Observe that every strip of $T_f$ is a vertex in $D$, and in particular $C$ is also a cycle in $D$ whose vertices are strips and edges connect pairs of strips which share a gate-edge~$D$. By the cut-cycle duality, there exists a curve $\gamma$ crossing only the gate-edges of~$D$ and a partition~$(X, Y)$ of $V(H')$ such that $X$ and $Y$ are non-empty sets separated by $\gamma$, i.e., every path from~$X$ to $Y$ in $H'$ uses a gate-edge crossed by $\gamma$. 

 To complete the proof, notice that there must exist vertices $x\in V(H)\cap X$ and $y\in V(H)\cap Y$; indeed, by construction of $H'$, every vertex in $X$ (or $Y$) is either simply a vertex of $H$ or a point on a straight-line segment $\zeta$ of $\Gamma(H)$ which is not crossed by $\gamma$, and hence $\zeta$ connects two vertices of $H$ which must both lie in $X$ ($Y$). Moreover, inside $\Gamma(H)$ the curve~$\gamma$ exists completely inside the face $f$. Hence, the existence of a cycle $C$ in $T_f$ contradicts the connectivity of $H$. This implies that $T_f$ is acyclic and connected, and hence forms a tree. 
 \end{proof}

\subparagraph{Simplifications and Streamlining.}
Let us call a strip \emph{non-polar} if it has precisely one top-gate, precisely one bottom-gate and there is no vertex adjacent to $V_M$ on its boundary; otherwise, we call it \emph{polar}. 
For our following considerations, it will be useful to ensure that 
 \begin{enumerate}
 \item if a baseline  solution exists, then there also is a baseline solution which (a) places missing vertices only in non-polar strips and (b) places at most one vertex in each strip;
 \item $T_f$ has maximum degree $3$,
 \item for any polar strip $s$ of degree 3 in $T_f$, there exists at most on vertex adjacent to $V_M$ on the boundary of $s$. And if such a vertex exists, then it lies on two gates of $s$.
 \end{enumerate}
For this, we will perform a few straightforward modifications of the instance (and hence of the strip tree $T_f$). 
Begin by marking each node of $T_f$ as \emph{unprocessed}; we will first exhaustively process leaf nodes and then proceed to the non-leaf nodes. 

Let $s$ be an unprocessed leaf in $T_f$ with unique neighbor $p$; for the following we assume that $p$ lies above $s$, as the converse case is entirely symmetric, and refer to \cref{fig:leaf-strip} for an illustration of the forthcoming step.
We choose a non-horizontal straight-line segment $\phi$ of the boundary of $s$ (i.e., a non-gate straight-line segment which therefore belongs to $\Gamma(H)$ and represents some edge $e$). We then subdivide $e$ by adding a new pre-drawn vertex, say~$v_s$, and add $v_s$ into $\Gamma(H)$ while ensuring that its $y$-coordinate is unique (i.e., preserving our general position property). 
 Finally, we further adapt the instance by adding a new vertex $w_s$ adjacent only to $v_s$ and alter the drawing $\Gamma$ in such a way that the new edge $(v_s, w_s)$ is drawn as a ``tiny'' (say, of length $\varepsilon_s$ for sufficiently small $\varepsilon_s > 0$) 
 non-horizontal straight-line line segment which starts at a point inside $s$ and below $v_s$ and which is fully contained in~$s$---the placement of $w_s$ is once again chosen so as to preserve $y$-wise general position.

\begin{figure}[ht]
\centering
\includegraphics{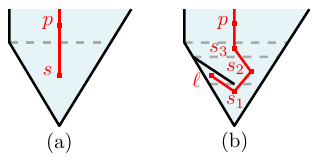}
\caption{Ensuring that every leaf strip has no vertex incident with a missing edge on its boundary. (a) A leaf $s$ in a strip tree (illustrated in red). (b) The leaf $s$ is partitioned into four new strips so that no missing vertex needs to be placed in the new leaf $\ell$.}
\label{fig:leaf-strip}
\end{figure}

 This way, the strip $s$ is split into the three strips $\ell$, $s_1$, $s_2$, and $s_3$ and the corresponding edge $s p$ is replaced by the four edges $s_1 \ell$, $s_1 s_2$, $s_2 s_3$, and $s_3 p$ (see \cref{fig:leaf-strip}~(b)).
 Among these four strips, in the arising strip tree, only the strip $\ell$ is a leaf and by construction, its boundary does not contain a vertex incident with a missing edge. We observe that this change does not influence the existence of a baseline solution for the instance, and crucially, we may assume that the sought-after baseline solution does not place any missing vertex or any missing edge inside $\ell$.
 Indeed, on the one hand, if the new instance admits such a baseline solution, then the original clearly does as well.
 On the other hand, if the original instance admits a baseline solution, then it also admits one where all missing  vertices and edges are placed at distance at least $3 \cdot \varepsilon_s$ from $v_s$ and in particular, no vertex or edge is drawn inside $\ell$ or crosses $e_s$; this yields a baseline solution of the new instance where no missing edge or vertex is placed inside $\ell$.
 
  We mark the leaf $\ell$ as processed and $s_1$, $s_2$ and $s_3$ as unprocessed. Observe that exhaustively applying the procedure outlined above increases the size of the instance $\mathcal{I}$ at most by a constant factor of $4$, the parameter does not increase, and that the procedure results in a new instance whose strip tree $T_f$ has every leaf marked as processed. 

Afterwards, let us consider an unprocessed non-leaf node~$p$ of $T_f$ such that the top-gate of~$p$ has the lowest $y$-coordinate among all unprocessed non-leaf nodes (resolving ties arbitrarily). We once again choose a non-horizontal straight-line segment $\phi$ of the boundary of $p$ where $\phi$ corresponds to the edge $e$ in $H$, subdivide $e$ $k+1$ many times, and adapt $\Gamma(H)$ by placing the new vertices on the original drawing of $\phi$ while maintaining $y$-wise general position (i.e., avoiding the use of $y$-coordinates which are already used by other vertices). We mark all the strips in the new drawing that are contained in $p$ as processed. Moreover, if $\phi$ occurs on the boundary of another unprocessed node~$p'$, we note that such $p'$ must be unique (by the choice of $p$) and mark all the strips in the new drawing that are contained in~$p'$ as processed as well.

Observe that this subdivision partitions the strip $p$ (and $p'$ if it exists) into precisely $k+2$ strips and each of these strip has the degree of at most $3$ in the arising strip tree.
Further, at most two of these strips (namely, the topmost and the bottommost) can be polar, the remaining $k$ strips are non-polar.
Finally, if some of two possibly non-polar strips contained in $p$ has the degree $3$ and it has a vertex adjacent to $V_M$ on its boundary, then this vertex is precisely the vertex that lies on two gates of this strip (i.e., on the side of $s$ with two neighbors)---this is because the new subdivision vertices are not adjacent to $V_M$.

Also observe that the exhaustive application of the above procedure will terminate after at most linearly many iterations: each iteration replaces either $1$ or $2$ unprocessed nodes in~$T_f$ with $k+2$ processed nodes each, without altering any other part of $T_f$. The carried out operation (i.e., subdividing an existing edge of $H$) obviously cannot influence the existence of a baseline solution for the instance. Crucially, given any hypothetical baseline solution~$\Gamma(G)$ for the original instance such that $\Gamma(G)$ places multiple missing vertices $x_1,\dots, x_i$ in $p$ (resp.~$p'$), we may construct a new baseline solution $\Gamma(G)'$ which coincides with $\Gamma(G)$ outside of $p$ (resp.~$p'$) and where the missing vertices $x_1,\dots, x_i$ are each placed in a distinct processed non-polar strip inside $p$ ($p'$). Indeed, assume $x_1,\dots,x_i$ are ordered in an ascending fashion based on their $y$-coordinates; since the strips are convex trapezoidal regions, one may apply a direct redrawing argument to place each $x_j$, $j\in [i]$ into the $j$-th lowest new non-polar strip inside $p$ (resp.\ $p'$). An illustration is provided in Figure~\ref{fig:nonleaf-strip}.

We summarize the outcome of our streamlining and simplification steps below:

\begin{figure}[ht]
\centering
\includegraphics{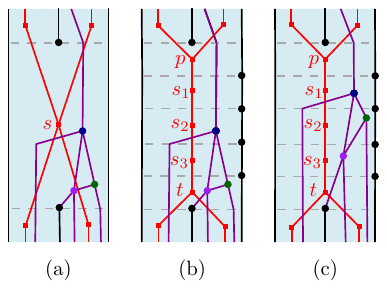}
\caption{The transformation for non-leaf strips. (a) A strip $s$ along with its neighbors in the strip tree (in red) and a hypothetical solution (in purple, blue, and green). (b) The strips resulting from~$s$ after applying our transformation. (c) Adapting the original solution to a new one where missing vertices are placed in distinct non-polar strips.}
\label{fig:nonleaf-strip}
\end{figure}

   \begin{lemma}
   \label{lem:streamlining}
There is a polynomial-time parameter-preserving reduction which takes as input an instance $\III'$ of \upefshort\ and outputs an instance $\III$ of \upefshort\ such that $|\III|\leq k\cdot |\III'|$ and if $\III'$ is a YES-instance, then $\III$ admits a baseline solution which only places missing vertices in pairwise distinct non-polar strips. Also, the strip tree of $\III$ contains $\bigoh(kn)$ vertices, none of them is of degree greater than $3$, every polar strip $s$ has at most one vertex $v$
adjacent to $V_M$ on its boundary, and if $s$ has three gates, then $v$ is on the side of $s$ with the two gates. Furthermore, a solution for $\III'$ can be constructed from a solution for $\III$ in linear time.
  \end{lemma}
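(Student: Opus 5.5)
The plan is to obtain $\III$ by running the two-phase procedure described above and then to verify the claimed properties one by one. In the first phase every leaf strip is processed; in the second phase every non-leaf strip is processed in order of increasing top-gate $y$-coordinate. The verification splits into three parts: polynomial running time and size, the structural properties of the new strip tree, and equivalence together with solution transfer.

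\emph{Running time, size and the parameter.} Each leaf operation adds $\bigoh(1)$ pre-drawn vertices. Each non-leaf operation adds $k+1$ subdivision vertices and handles one or two unprocessed nodes. The number of strips is $\bigoh(n)$ by Proposition~\ref{pro:striptree}, so the total number of added vertices is $\bigoh(kn)$ (the constant is absorbed in the stated bound on $|\III|$). The strip tree of $\III$ therefore has $\bigoh(kn)$ nodes. No step adds a missing vertex, so $k$ is unchanged. General position is maintained because every new vertex receives a fresh $y$-coordinate. $H$ stays connected and induced, since each new vertex is either a subdivision vertex or the pendant vertex $w_s$, and none of them is adjacent to $V_M$.

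\emph{Structural properties.} I will argue locally. Each processed strip now has a boundary with one of two shapes. Either it meets at most one original vertex, which lies on the side carrying the original gates, while the other side is a new subdivision vertex contributing exactly one gate. Or it is one of the $k$ middle strips, which have one top-gate and one bottom-gate and only subdivision vertices on their boundary, and are therefore non-polar. This gives maximum degree $3$. It also shows that the only possible vertex adjacent to $V_M$ on a polar strip is the original one, and in the degree-$3$ case this vertex lies on the two-gate side. Leaves $\ell$ see only $v_s$ and $w_s$, and these are not adjacent to $V_M$.

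\emph{Equivalence and solution transfer.} Every solution of $\III$ is also a solution of $\III'$: delete the pendant vertices $w_s$ and un-subdivide the added vertices; this takes linear time. For the converse, take a baseline solution of $\III'$. Perturb it so that all missing elements stay away from the tiny edges $v_sw_s$, so nothing enters any $\ell$. Then, for each original strip $p$ containing missing vertices $x_1,\dots,x_i$ with $i\le k$, ordered by $y$-coordinate, redraw inside $p$: move each $x_j$ into the $j$-th lowest new middle strip, and reroute its incident edges y-monotonically within $p$. This step is the main obstacle. One has to show that the redrawing preserves planarity and the cyclic orders at the gates of $p$. I would argue this using convexity of the trapezoid $p$: its crossing edges and vertices can be deformed by a vertical homotopy that keeps $y$-order, so that the $y$-coordinates of the $x_j$ are squeezed into the prescribed strip heights without changing the combinatorial embedding inside $p$. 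Once the solution has been adjusted in every original strip, the missing vertices lie in pairwise distinct non-polar strips, as required.
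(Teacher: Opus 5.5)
Your proposal is correct and follows the paper's own argument. It uses the same two-phase construction (the $v_s$/$w_s$ gadget for leaves, then $(k+1)$-fold subdivision of a side of each non-leaf strip), the same local analysis of the resulting strips, and the same solution transfer: keep missing elements away from the gadgets, then move the at most $k$ missing vertices of a strip into its middle non-polar strips. Your $y$-order-preserving deformation of the convex trapezoid is a more explicit version of what the paper calls a direct redrawing argument; like the paper, it tacitly assumes the missing vertices in a strip have distinct $y$-coordinates, which a small perturbation provides.
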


Given Lemma~\ref{lem:streamlining}, we will assume that the strip tree $T_f$ is subcubic, and will call baseline solutions which place missing vertices in distinct non-polar strips \emph{solutions}.
Finally, we choose an arbitrary leaf-node $r'$ as the root of $T_f$, and let $r$ denote its unique neighbor.
We root $T_f$ at some leaf-node $r'$.
\section{Dynamic Programming Setup}\label{sec:properties}
\label{sec:dynprogsetup}

Our aim is to perform a leaf-to-root dynamic programming procedure along the strip tree.
As we already sketched in the Introduction, there are two main obstacles that prevent us from directly providing such a procedure at this point.
Both of these stem from the fact that the number of missing edges is not bounded by a function of the number of missing vertices (i.e., our parameter) while we need to be able to store only an amount of information about a partial solution that is bounded by a function of the number of missing vertices.
For a strip $s$, we use $f^s$ to denote the union of strips in the subtree of $s$.

First, to be able to later check if some \emph{partial solutions} of the children of a node $s$ in the strip tree can be combined to form a larger partial solution at $s$, we need to keep certain information about the ordering in which the missing edges cross the gate of a strip. 
For this we show that we can group the edges into so-called \emph{threads} of the same \emph{type} so that the edges of the same thread cross the gate consecutively.
This allows us to define a so-called \emph{gate word} of a partial solution and show that it has bounded length.

\begin{figure}[t]
\centering
\includegraphics{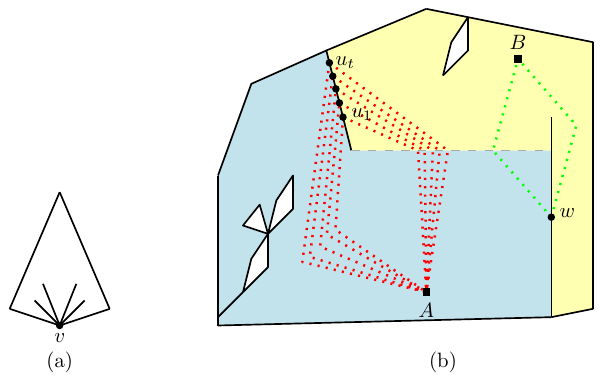}
\caption{(a) A vertex $v$ having a large number of incident strips. (b) In blue we have the region $f^s$, in yellow the region $f \setminus f^s$. The union of the two is the face $f$, note that it can contain white gaps corresponding to further faces of $\Gamma(H)$. The gate between the two regions is gray. Each of the edges $A u_1, \dots, A u_t$ and $w B$ can be routed inside $f^s$ or inside $f \setminus f^s$ (dotted lines) leading to $2^{\Omega(t)}$ partial solutions. The choice if the edge is routed or not is possible for missing edges incident with vertices already placed in $f^s$ (e.g., $A$) as well as for vertices not placed in $f^s$ (e.g., $B$).}
\label{fig:cut-vertices}
\end{figure}

The second obstacle is formed by cut-vertices of $G$.
Such vertices can have up to $\mathcal{O}(n)$ incident strips, and therefore in a solution, a missing edge incident with such a vertex can start in one of a large number of strips (see \cref{fig:cut-vertices}~(a)).
For this reason, we additionally need to keep track of the set of missing edges that have been drawn by the partial solution.
This approach would yield an algorithm with running time $2^{\Omega(n)}$ (see \cref{fig:cut-vertices}~(b) for an example with many partial solutions).
To overcome this issue, we carry out a careful analysis of the vertices on the boundary of the subtree processed so far.
We say that an edge from $E_M^1$ (i.e., a missing edge with precisely one pre-drawn endpoint) with pre-drawn endpoint $v\in V(H)$ is routed through $f^s$ by a solution $S$ if the curve representing this edge leaves $v$ inside $f^s$.
We then show that for some vertices on the boundary of $f^s$, in every solution either all edges incident with such vertices are ``routed" through $f^s$ or none of them is (see, e.g., \cref{lem:enclosed-vertices-use-same-side}).
We also show that it suffices to restrict ourselves to so-called \emph{clean} solutions which allow some other vertices on the boundary of $f^s$ to be ``bundled'' together so that, again, either all of the edges incident with the same bundle are routed through $f^s$ or none of them is (see e.g., \cref{lem:clean-solutions}).
The next subsections carry out this analysis and result in \cref{thm:relevant-records}, which provides a bound on the number of partial solutions we keep track of at any strip $s$ by a function of the parameter $k$.

\subsection{One-Sided Regions and Bundles}
\label{sub:onesided}
The aim of this subsection is to formalize the notion of bundles and show that edges incident with bundled vertices can be considered as a ``single block'' in our dynamic program (as formalized in the technical \cref{lem:clean-solutions}). But first, we will need to introduce a bit of specialized terminology.

Let $T'$ be a connected subtree of $T_f$ that can be obtained from $T_f$ by removing some edge $e'$ and picking one of the two arising connected components.
We call the union $f'$ of strips in $T'$ a \emph{one-sided region}.
Observe that $f'$ is a one-sided region if and only if there exists a strip $s$ such that $f' = f^s$ or $f' = f \setminus f^s$ (in \cref{fig:cut-vertices}, the blue and the yellow regions exemplify one-sided regions).
Let $g'$ denote the gate shared by the end-strips of $e'$; we call this gate the \emph{gate} of $f'$.
Observe that the gate $g'$ splits the face $f$ into two connected regions, one of which is $f'$.
Since $g'$ is a horizontal segment, one of these two regions lies below and the other above $g'$; For this reason, in the following we will speak about the cases where $f'$ lies below or above this gate.

\begin{figure}[!h]
\centering
\includegraphics[scale=0.9]{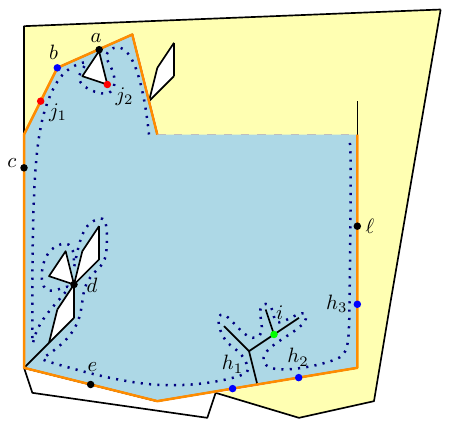}
\caption{The region $f'$ is depicted in blue and the region $f \setminus f'$ in yellow. The union of the two is the face $f$; further faces are white. The gate between the two regions is gray. 
The dotted navy curve is $C'(f')$. 
The outer boundary of $f'$ is orange.
The vertices $a,b,j_1,c,e,h_1,h_2,h_3,\ell$ are external to~$f'$, the vertices $j_2, c, d,e,i$ are internal to $f'$.
Red resp.\ blue resp.\ green vertices are adjacent to a missing vertex $A$ resp.\ $B$ resp.\ $C$.
The vertex $a$ has an $A$-attachment in $f'$ (due to~$j_2$).
The vertex $b$ is $A$-enclosed in $f'$.
The vertex $j_1$ is not interesting while $h_1$ is interesting with respect to $f'$.
The vertices $h_2$ and $h_3$ belong to the same $B$-bundle with respect to $f'$, while the vertices $h_1$ and $h_2$ do not (due to $i$). 
}
\label{fig:one-sided-regions}
\end{figure}

Let us emphasize that the following definitions of vertex categories depend on the one-sided region $f'$: for example, for two different one-sided regions~$f'$ and $f''$ (each being a subset of the face $f$), a pre-drawn vertex may be ``internal'' (will be defined in a moment) to~$f'$ and not internal $f''$; the same applies to enclosedness, bundles etc.
Also, the following definitions are independent of (partial) solutions, they only depend on the pre-drawn graph~$H$ and its drawing $\Gamma(H)$.

The \emph{outer boundary} of $f'$ is a closed curve consisting of the unique path connecting the endpoints of the gate of $f'$---in particular, the path starting from the vertex endpoint of that gate and ending at the edge containing the endpoint of that same gate---together with the gate of $f'$.
We emphasize that in general the interior of the outer boundary of $f'$ contains holes corresponding to some further faces of $\Gamma(H)$ (see \cref{fig:one-sided-regions}).

Let $C'=C(f')$ be the subcurve of the boundary curve $C(f)$ following the boundary of the region $f'$ (see \cref{fig:one-sided-regions}) and let $W'=W'(f')$ be the subwalk of $W(f)$ restricted to the subcurve $C'(f')$.
We say that the \emph{boundary} of $f'$ consists of all vertices and edges that occur in $W'$.
Let $v^*(f') \in V(H)$ be the unique vertex that lies on the gate of $f'$.
We say that a pre-drawn vertex $v \in V(H)$ with $v \neq v^*(f')$ on the boundary of $f'$ is \emph{internal} to $f'$ if $f'$ contains all strips incident with $v$. 
We say that a vertex $v \in V(H)$ on the boundary of $f'$ is \emph{external} to~$f'$ if it occurs on the outer boundary of $f'$---we use $\outerv(f')$ to denote the set of all vertices external to $f'$.
Also note that a vertex can be both internal and external to $f'$ at the same time (for example the vertices $c$ and $e$ in \cref{fig:one-sided-regions}).
Since all vertices external to $f'$ lie on the outer boundary of $f'$, there is a well-defined order in which the external vertices occur along $C'$.

Let $v_1(f'), \dots, v_q(f')$ be the vertices  of $\outerv(f')$ in the order they occur along the outer boundary of $f'$.
Note that this is the same ordering in which these vertices occur in $W'$: There we first see all occurrences of $v_1$ in $W'$, after that all occurrences of $v_2$, then all occurrences of $v_3$ etc.\ possibly interleaved by occurrences of further vertices on the boundary of $f'$.
For~$B \in V_M$, we call an external vertex $v_i \in \outerv(f')$ \emph{$B$-enclosed} if there exist vertices~$u$ and $w$ on the boundary of $f'$ with the following properties.
First, both $u$ and $w$ are adjacent to $B$.
Second, in $W'$ we first see all occurrences of $u$ in $W'$, then all occurrences of $v_i$ in~$W'$, and then all occurrences of $w$ in $W'$.
Observe that in particular, $v_i$ is $B$-enclosed if there exist~$i_1$ and $i_2$ such that $1 \leq i_1 < i < i_2 \leq q$ and $v_{i_1}$ and $v_{i_2}$ are adjacent to $B$.
For~$A \neq B \in V_M$, we use~$\enclosed_A^B(f')$ to denote the set of all external vertices adjacent to $A$ that are~$B$-enclosed.

Consider a solution $S$ of an instance $\mathcal{I}$ and let $X_S$ denote the set of vertices placed inside~$f'$ by $S$.
Let $v$ be a vertex on the boundary of $f'$ and let $A \in V_M$ be a missing vertex adjacent to $v$.
We say that the edge $\{v, A\}$ is \emph{routed inside} $f'$ by $S$ if there exists a curve representing~$\{v, A\}$ that leaves the vertex $v$ in $f'$.
Observe that if $A \in X_S$, then the edge~$\{v, A\}$ is routed inside $f'$ if and only if it does not cross the gate.
Similarly if $A \notin X_S$, then the edge $\{v, A\}$ is routed inside $f'$ if and only if it crosses the gate.
This is because no upward curve crosses the gate $g'$ more than once and this gate separates the one-sided regions $f'$ and $f \setminus f'$ from each other.

With the above terminology, we can formulate and prove our first statement that restricts how missing edges are routed inside $f'$ (see also Figure~\ref{fig:enclosed-vertices-same-side} for the intuition behind the proof).

\begin{lemma}\label{lem:enclosed-vertices-use-same-side}
    Let $i < j \in [q]$ be such that $v_i, v_j \in \enclosed^B_A(f')$ are two vertices on the outer boundary of $f'$ that are both adjacent to some $A \in V_M$ and both $B$-enclosed for some~$B \in V_M$ with $B \neq A$.
    Then in every solution $S$ of $\mathcal{I}$ either both $\{v_i, A\}$ and $\{v_j, A\}$ are routed~inside $f'$ (and not routed inside $f \setminus f'$) or both are routed inside~$f\setminus f'$ (and not routed inside $f'$).
\end{lemma}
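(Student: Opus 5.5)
We argue by contradiction, using planarity of the drawing restricted to a closed curve built from missing edges. Suppose some solution $S$ routes $\{v_i, A\}$ inside $f'$ and $\{v_j, A\}$ inside $f \setminus f'$, or vice versa. By symmetry (swap the roles of $i$ and $j$, or reverse the boundary walk), assume the former. Since $v_i$ and $v_j$ are both $B$-enclosed, there exist boundary vertices $u$ and $w$ of $f'$ adjacent to $B$ such that all occurrences of $u$ precede those of $v_i$ (resp.\ $v_j$) in $W'$, and all occurrences of $w$ follow them. We may choose $u$ and $w$ as the extreme $B$-neighbours along $W'$, so that one pair $u,w$ encloses both $v_i$ and $v_j$ simultaneously; this is possible because the enclosing property only depends on the first and last $B$-neighbours in the walk order. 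The proof is a case analysis on whether $A$ and $B$ are placed in $f'$ or in $f\setminus f'$.

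First, we fix the curve. Consider the closed curve $\kappa$ formed by the drawn edges $\{u,B\}$ and $\{B,w\}$ together with a portion of the boundary of $f'$ from $u$ to $w$. Because the edges of $S$ are non-crossing curves inside $f$, $\kappa$ separates $f$, and hence $f'$ together with part of $f\setminus f'$, into two sides. We choose the boundary portion of $\kappa$ as the one not containing the gate when $B$ is inside $f'$. The key topological claim is then the following: the external vertices strictly between $u$ and $w$ along the outer boundary, in particular $v_i$ and $v_j$, have all their incident strips near the boundary on one side of $\kappa$, while the gate, or the rest of $f$, lies on the other side or is reached only through the same side. Since $v_i,v_j$ are \emph{external}, they lie on the outer boundary, so they are not hidden in holes; this is why the statement restricts to outer-boundary vertices.

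Next, we derive the contradiction. The edge $\{v_i,A\}$ leaves $v_i$ into $f'$, and $\{v_j,A\}$ leaves $v_j$ into $f\setminus f'$. Consider first the case $A\in X_S$. Then $\{v_j,A\}$ crosses the gate while $\{v_i,A\}$ does not. Concatenating the two gives a curve from $v_i$ through $A$ to $v_j$ that starts in $f'$ and crosses the gate exactly once. We then argue that this curve, combined with the outer boundary segment between $v_i$ and $v_j$, forms a closed curve separating $u$ from $w$ within the face, or separating $B$ from one of $u$ and $w$. In either situation $\{u,B\}$ or $\{B,w\}$ must cross it, contradicting planarity. The case $A\notin X_S$ is analogous, with the roles of crossing and not crossing the gate swapped, and $B$'s location is handled by the choice of the closed curve above. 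We also use that upward curves cross the horizontal gate at most once, as noted before the statement, to ensure that the constructed curves are simple.

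The main obstacle is making the separation argument rigorous despite holes: $f'$ is not simply connected, and $v_i$ and $v_j$ may be touched by many strips. I would handle this by working in the closed disk bounded by the outer boundary of $f'$, where the holes can be filled in since no missing vertex lies in them. In this disk, the gate and the outer boundary path form a circle on which $u, v_i, v_j, w$ and the gate appear in cyclic order $u, v_i, v_j, w$, gate. A path attached at $v_i$ and $v_j$ whose interior meets the gate side must then separate $\{u,w\}$ in a way incompatible with a disjoint $u$–$B$–$w$ path, by the standard fact that two chords with interleaving endpoints on a circle must cross. Reducing every case to this interleaving-chords argument is the step I expect to need the most care.
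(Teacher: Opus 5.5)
Your second paragraph has the right core idea, and it is the one the paper uses. The curve $\phi'$ from $v_i$ through $A$ to $v_j$ (the two drawn $A$-edges), closed up by the arc $\beta$ of the outer boundary between $v_i$ and $v_j$, is a simple closed curve, and the drawn curve $u$--$B$--$w$ would have to cross it. The gap is in showing that $\phi'\cup\beta$ separates $u$ from $w$.

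\textbf{Where the disk argument fails.} Your disk argument asserts that $u,v_i,v_j,w$ and the gate appear in this cyclic order on the circle formed by the outer boundary and the gate. But the definition of $B$-enclosed only requires $u$ and $w$ to be vertices on the boundary of $f'$ (i.e., to occur in $W'$), not external vertices. They may lie on a pre-drawn tree hanging into $f'$ or on the boundary of a hole. They are then in the interior of your filled-in disk, not on the circle. Moreover, $\phi'$ always crosses the gate exactly once, since exactly one of the two $A$-edges does. The $u$--$B$--$w$ curve may also leave the disk through the gate, e.g.\ when $B$ is placed in $f\setminus f'$. So neither curve is a chord, and the interleaving-chords fact does not apply as stated. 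Your first-paragraph curve $\kappa$ does not rescue this: ``the portion of the boundary of $f'$ from $u$ to $w$'' is in general neither a simple arc nor contained in the outer boundary.

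\textbf{The missing idea.} The paper transports $u$ and $w$ to the outer-boundary path $P'$ through pre-drawn paths.
For $u\notin P'$, take a path $R_u$ in $H$ along the boundary of $f'$ from $u$ to a vertex $P'(u)\in P'$, meeting $P'$ only there. All occurrences of $u$ in $W'$ lie between two occurrences of $P'(u)$. Hence the walk order ``$u$ before $v_i$ before $v_j$ before $w$'' forces $P'(u)$ to lie on the arc $\beta_1\setminus\{v_i\}$ from one gate endpoint to $v_i$. Symmetrically, $P'(w)$ lies on the arc $\beta_2\setminus\{v_j\}$.
Because $\phi'$ leaves $v_i$ inside the outer boundary and enters $v_j$ from outside, $\beta_1\setminus\{v_i\}$ and $\beta_2\setminus\{v_j\}$ lie on opposite sides of $\phi'\cup\beta$.
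Since $\phi'$ meets $\Gamma(H)$ only at $v_i$ and $v_j$, the paths $R_u$ and $R_w$ avoid $\phi'\cup\beta$. Therefore $u$ and $w$ are separated.

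With this, no case analysis on where $A$ and $B$ are placed is needed. Planarity alone, together with $B\neq A$ and $u,w\notin\{v_i,v_j\}$, makes the whole $u$--$B$--$w$ curve disjoint from $\phi'\cup\beta$, which is the contradiction.
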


\begin{proof}
    Let $u_i$ and $w_i$ be two vertices adjacent to $B$ such that in $W'$ we first see all occurrences of $u_i$ in $W'$, then all occurrences of $v_i$ in $W'$, and then all occurrences of $w_i$ in $W'$.
    Similarly, let $u_j$ and $w_j$ be two vertices adjacent to $B$ such that along $W'$ we first see all occurrences of~$u_j$ in $W'$, then all occurrences of $v_j$ in $W'$, and then all occurrences of $w_j$ in $W'$.
    Let~$u = u_i$ and let $w = w_j$. 
    The vertices $v_i$ and $v_j$ belong to the outer boundary of $f'$ and we have $i < j$.
    So along $W'$ we first see all occurrences of $u$ in~$W'$, then all occurrences of $v_i$ in $W'$, then all occurrences of $v_j$ in $W'$, and then all occurrences of $w$ in $W'$.

    Let $\alpha$ denote the curve that together with the gate $g'$ forms the outer boundary of~$f'$.
    Let~$P'$ denote the path in $H$ that consists of all vertices and edges lying on the outer boundary of $f'$.
    Note that $\alpha$ consists of the drawing of $P'$ in $\Gamma(H)$ without the straight-line segment of the unique edge in $H$ that corresponds to the gate $g'$.
    Recall that all vertices external to $f'$ belong to $P'$.
    
    We consider the subgraph $H_{f'}$ of $H$ consisting of all vertices and edges on the boundary of $f'$, i.e., all vertices and edges that occur in $W'$.
    Let $v'$ be a vertex on the boundary of $f'$ that does not belong to $P'$.
    Since the graph $H_{f'}$ is connected, there is a path $R_{v'}$ from $v'$ to a vertex on $P'$.
    We may assume that all internal vertices of $R_{v'}$ do not belong to $P'$ by taking subpath of $R_{v'}$ that ends in the first vertex on $P'$ otherwise.
    And we use $P'(v')$ to denote the endpoint of $R_{v'}$ other than $v'$.
    Then all occurrences of $v'$ in $W'$ take place between two occurrences of the vertex $P'(v')$.
    For a vertex $v' \in V(P')$, we use $R_{v'}$ to simply denote the zero-length path in $H_{f'}$ that consists of $v'$ only and by $P'(v')$ the vertex $v'$ itself.

    \begin{figure}[t]
        \centering
        \includegraphics{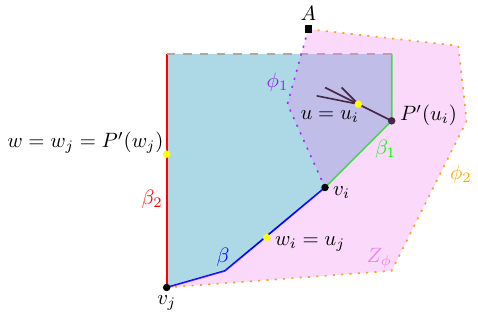}
        \caption{Proof of \cref{lem:enclosed-vertices-use-same-side}. The vertices $v_i$ and $v_j$ are $B$-enclosed, the neighbors of $B$ are depicted yellow. The edge $\{v_i, A\}$ is routed inside $f^s$ (in blue) while $\{v_j, A\}$ is routed outside. The neighbors $u_i$ and $w_j$ of $B$ become separated, i.e., the curve $u_i, B, w_j$ would necessarily cross one of the edges.}
        \label{fig:enclosed-vertices-same-side}
    \end{figure}
    
    Now suppose that in $S$, without loss of generality, the edge $\{v_i, A\}$ is routed inside $f'$ while the edge $\{v_j, A\}$ is routed inside $f \setminus f'$.
    And let $\phi_1$ resp.\ $\phi_2$ be the curves representing the edges $e_1$ and $e_2$ in $S$.
    We recall that $v_i$ and $v_j$ are vertices external to $f'$ so none of them lies on the gate $g'$, and therefore the endpoints of $\alpha$ are not equal to any of $v_i$ and $v_j$.
    By definition of the outer boundary of $f'$, the region $f'$ lies inside the curve defined by $\alpha$ together with the gate of $f'$.
    Let us recall that the region inside this curve, in general, contains holes that do not belong to $f'$ but these holes are disconnected from $f'$, and therefore do not belong to $f$.
    Therefore, since $S$ is the solution of \upef{} with the face $f$, no edge incident with $V_M$ is drawn inside any of these holes.
    
    Recall that $\alpha$ belongs to the outer boundary of $f'$, in the solution $S$, the edge $\{v_i, A\}$ is routed inside $f'$ while the edge $\{v_j, A\}$ is routed inside $f \setminus f'$.
    Therefore, the curves~$\phi_1$ and~$\phi_2$ leave the vertices $v_i$ and $v_j$, respectively, on different sides of the curve $\alpha$.
    Let~$\phi'$ be the closed curve that starts in $v_i$, follows $\phi_1$ until it reaches $A$, and then follows $\phi_2$ until it reaches~$v_j$.
    This curve does not have any common points with $\Gamma(H)$ other than its endpoints~$v_i$ and~$v_j$.
    Let $\beta$ be the subcurve of $\alpha$ between the vertices $v_i$ and~$v_j$.
    And let $\beta_1$ resp.\ $\beta_2$ be the subcurves of $\alpha$ having $v_i$ resp.\ $v_j$ as its endpoint such that $\alpha$ is the concatenation of~$\beta_1$, $\beta$, and $\beta_2$.
    And let $\phi$ be the self-nonintersecting closed curve given by the concatenation of $\phi'$ and $\beta$.
    The curve $\phi$ defines a closed region $Z_\phi$.
    The closed region $Z_\phi$ then either contains the subcurve $\beta_1 \setminus \{v_i\}$ inside and the curve $\beta_2 \setminus \{v_j\}$ lies outside $Z_{\phi}$, or it is the other way around.

    Without loss of generality assume that the former occurs; the other case is symmetric.
    Now recall that we have two vertices $u$ and $w$ lying on the boundary of $f'$ both adjacent to~$B$ such that along~$W'$ we first see all occurrences of $u$ in $W'$, then all occurrences of $v_i$ in~$W'$, then all occurrences of $v_j$ in $W'$, and then all occurrences of $w$ in $W'$.
    Then along $P'$ the vertices $P'(u), v_i$, $v_j$, and then $P'(w)$ occur in this or in the reversed ordering.
    Therefore, the curve $\beta_1 \setminus \{v_i\}$ contains the vertex $P'(u)$ and since the path $R_u$ is disjoint from $P'$ apart from the vertex $P'(u)$, the whole path $R_u$ lies inside $Z_\phi$.
    In particular, the vertex $u$ lies inside $Z_\phi$.
    Similarly, the curve $\beta_2 \setminus \{v_j\}$ contains the vertex $P'(w)$ and since the path $R_w$ is disjoint from $P'$ apart from the vertex $P'(w)$, the whole path $R_w$ lies outside $Z_\phi$.
    
    Finally, consider the curves $\xi_u$ and $\xi_w$ corresponding to the edges $\{u, B\}$ and $\{w, B\}$, respectively, in $S$.
    And let $\xi$ be their concatenation.
    The endpoint $u$ of this curve lies inside the closed region $Z_\phi$ while the other endpoint $w$ lies outside this region.
    Therefore, the curve~$\xi$ intersects the boundary of $Z_\phi$.
    But the boundary of $Z_\phi$ consists of the curves from~$\Gamma(H)$ and $S$ contradicting the planarity of $S$.
\end{proof}
Let us remark that some vertex, say $v$, in $\enclosed^B_A(f')$ may be internal to $f'$, i.e., all of its incident strips belong to $f'$: In this case, the edge $\{v, A\}$ is routed inside $f'$ in any solution, and therefore all edges $\{u, A\}$ are routed inside $f'$ (and not inside $f \setminus f'$) for any $u \in \enclosed^B_A(f')$ in any solution.

Next, we make an observation about vertices in $H$ incident with at least two missing edges.

\begin{observation}\label{obs:large-degree-vertices}
    Let $A \neq B$ and
    let $i_1 < \dots < i_{q_Z}$ be such that $\{i \mid \{A, B\} \subseteq N(v_i) \cap V_M\} = \{i_1, \dots, i_{q_Z}\}$. 
    Then every vertex in $v_{i_2}, \dots, v_{i_{q_Z - 1}}$ is $A$-enclosed and $B$-enclosed.
\end{observation}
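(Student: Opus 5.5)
This follows directly from the definition of $B$-enclosedness together with the remark made right after that definition. That remark says $v_i$ is $B$-enclosed as soon as there are indices $i_1 < i < i_2$ in $[q]$ with $v_{i_1}$ and $v_{i_2}$ both adjacent to $B$. I will check this remark first. The external vertices $v_1, \dots, v_q$ occur along the outer boundary of $f'$ in this order. Moreover, in $W'$ all occurrences of $v_1$ come before all occurrences of $v_2$, and so on. Taking $u = v_{i_1}$ and $w = v_{i_2}$ therefore gives exactly the required pattern: all occurrences of $u$, then all occurrences of $v_i$, then all occurrences of $w$.

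With the remark in hand, I fix $\ell$ with $2 \leq \ell \leq q_Z - 1$ and consider $v_{i_\ell}$. By the choice of the indices, $v_{i_1}$ and $v_{i_{q_Z}}$ are external vertices with $\{A,B\} \subseteq N(v_{i_1})$ and $\{A,B\} \subseteq N(v_{i_{q_Z}})$, and $i_1 < i_\ell < i_{q_Z}$. Applying the remark with the missing vertex $B$ and the pair $(v_{i_1}, v_{i_{q_Z}})$ shows that $v_{i_\ell}$ is $B$-enclosed. The definition of enclosedness is symmetric in the missing vertex, so the same pair shows that $v_{i_\ell}$ is also $A$-enclosed.

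Strictly, the statement only needs enclosedness. If membership in $\enclosed^B_A(f')$ is also wanted, it follows at once, since $v_{i_\ell}$ is external and adjacent to $A$; symmetrically, $v_{i_\ell} \in \enclosed^A_B(f')$.

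The only step needing any care is the ``all occurrences before all occurrences'' ordering in $W'$ used in the remark. This is exactly the property the paper notes when it introduces the ordering $v_1, \dots, v_q$ along the outer boundary, so I take it as given. No real obstacle remains: the cases $q_Z \leq 2$ are vacuous, and otherwise the argument is the two applications above.
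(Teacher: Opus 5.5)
Your proof is correct and matches the paper's argument: the paper also takes $v_{i_1}$ and $v_{i_{q_Z}}$, both adjacent to $A$ and $B$, as the enclosing pair, and uses that in $W'$ all occurrences of each $v_{i_j}$ with $2 \leq j \leq q_Z-1$ lie after all occurrences of $v_{i_1}$ and before all occurrences of $v_{i_{q_Z}}$. Routing this through the remark that follows the definition of enclosedness only spells out that same ordering property explicitly.
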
   

\begin{proof}
    For every $j \in \{2, \dots, q_Z - 1\}$, all occurrences of $v_{i_j}$ in $W'$ take place after all occurrences of $v_{i_1}$ of $W'$ and before all occurrences of $v_{i_{q_Z}}$ in $W'$, and the vertices $v_{i_1}$ and~$v_{i_{q_Z}}$ are adjacent to both $A$ and $B$.
\end{proof}

The next definition and lemma will later allow us to argue that a walk along the boundary of a face of $\Gamma(H)$ cannot visit neighbors of missing vertices in an interleaving fashion.

\begin{definition}
    Let $i \in [q]$ and let $B \in V_M$. 
    We say that the vertex $v_i$ has a $B$-\emph{attachment} if there exists a vertex $w$ adjacent to $B$ such that $w$ occurs between two occurrences of $v_i$ in~$W'$; in this case, we also say that $w$ is attached to $v_i$ in $f'$.
    For $A \neq B \in V_M$, we use $\attachment^B_A(f')$ to denote the set of all vertices external to $f'$ that are adjacent to $A$, have a $B$-attachment, and are not $B$-enclosed.
\end{definition}

\begin{lemma}\label{obs:vertices-with-attachments}
    Let $A \neq B \in V_M$.
    Then $|\attachment^B_A(f')| \leq 2$.
\end{lemma}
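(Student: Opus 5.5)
Let $v_{i_1},\dots,v_{i_m}$ with $i_1<\dots<i_m$ be the vertices of $\attachment^B_A(f')$, listed in the order in which they occur along the outer boundary of $f'$. We will show that no vertex strictly between $v_{i_1}$ and $v_{i_m}$ in this order can lie in $\attachment^B_A(f')$; then $m\le 2$. The argument is purely combinatorial on $W'$, i.e., on the order in which vertices occur along $C'$, using the nesting structure of occurrences of external vertices described before \cref{lem:enclosed-vertices-use-same-side}.

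\textbf{Step 1 (locating attached vertices).} For each $j$, let $w_j$ be a neighbor of $B$ that is attached to $v_{i_j}$. Thus some occurrence of $w_j$ in $W'$ lies strictly between two occurrences of $v_{i_j}$. We claim that $w_j\neq v_{i_{j'}}$ for $j'\ne j$, and that every occurrence of $w_j$ in $W'$ lies within the block of $W'$ spanned by the occurrences of $v_{i_j}$. For the latter, recall from the proof of \cref{lem:enclosed-vertices-use-same-side} the paths $R_{w}$ and the vertex $P'(w)$. If $w_j$ is not on the outer boundary, its occurrences all lie between two occurrences of $P'(w_j)$, and one checks $P'(w_j)=v_{i_j}$ by planarity: occurrences of distinct external vertices are not interleaved, since external vertices appear along $W'$ in blocks in the order $v_1,\dots,v_q$. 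If instead $w_j$ is itself external, then it must coincide with $v_{i_j}$, which is excluded because $w_j$ occurs strictly between two occurrences of $v_{i_j}$ while being a different vertex. Hence $w_j$ is not external, and its occurrences are confined to the block of $v_{i_j}$.

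\textbf{Step 2 (three attachments force enclosure).} Suppose $m\ge 3$. By Step 1, an occurrence of $w_1$ lies inside the block of occurrences of $v_{i_1}$, and an occurrence of $w_m$ lies inside the block of $v_{i_m}$. By the block ordering, all occurrences of $v_{i_1}$ precede all occurrences of $v_{i_2}$, which precede all occurrences of $v_{i_m}$. Hence all occurrences of $w_1$ precede all occurrences of $v_{i_2}$, and all occurrences of $w_m$ follow them. Since $w_1$ and $w_m$ are both adjacent to $B$, this is exactly the definition of $v_{i_2}$ being $B$-enclosed, contradicting $v_{i_2}\in\attachment^B_A(f')$. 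Therefore $m\le 2$.

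\textbf{Main obstacle.} The delicate point is Step 1, namely showing that all occurrences of a vertex attached to $v_{i_j}$ lie within the span of $v_{i_j}$'s occurrences, rather than merely one occurrence. This requires that the boundary walk has a laminar, non-crossing structure. I would derive it from planarity: if $w$ had one occurrence inside the $v_{i_j}$-block and another outside it, then the two visits of $C'$ to $v_{i_j}$ together with the vertex $v_{i_j}$ bound a closed curve. The boundary subgraph connecting the two occurrences of $w$ would have to cross this curve, which is impossible. The corner case where $v_{i_j}$ is an endpoint of the outer-boundary path, adjacent to the gate, also needs care. There, occurrences before the first or after the last visit may still remain within the same external block. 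Treating the gate as a solid edge, as in the construction of $H'$ in \cref{pro:striptree}, should handle this uniformly.
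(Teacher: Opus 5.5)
Your proposal is correct and follows the paper's argument: with three vertices of $\attachment^B_A(f')$, the $B$-neighbors attached to the outer two vertices enclose the middle one, which contradicts its membership in $\attachment^B_A(f')$. Your Step~1 uses the block structure of external vertices along $W'$ and the confinement of each non-external vertex $w$ to the block of $P'(w)$ to derive that \emph{all} occurrences of an attached vertex lie within the block of the vertex it is attached to; the paper does not prove this but simply reads it into the definition of an attachment.
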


\begin{proof}
    Suppose there exist $i_1 < i_2 < i_3 \in [q]$ such that $v_{i_1}, v_{i_2}, v_{i_3} \in \attachment^B_A(f')$.
    And let $u_1$ resp.\ $u_3$ be a vertex attached to $v_{i_1}$ resp.\ $v_{i_3}$ and adjacent to $B$.
    But then the vertex $v_{i_2}$ is~$B$-enclosed contradicting the assumption on $i_2$ for the following reason.
    Along $W'$ we first see all occurrences of $v_{i_1}$ in $W'$, then all occurrences of $v_{i_2}$ in $W'$, and then all occurrences of $v_{i_3}$.
    By definition of an attachment, there are two occurrences of $v_{i_1}$ resp.\ $v_{i_3}$ such that all occurrences of $u_1$ resp.\ $u_3$ happen between them.
    Therefore, along $W'$ we first see all occurrences of $u_1$ in $W'$, then all occurrences of $v_{i_2}$ in $W'$, and then all occurrences of $u_3$.
\end{proof}

If the region $f'$ is below the gate $g'$, we call a vertex $v$ with $N(v) \cap V_M \neq \emptyset$ \emph{interesting} if it is possible to draw an upward curve from $v$ to $g'$ that does not intersect $\Gamma(H)$; note that such an upward curve reaches $g'$ from below and is therefore fully contained in $f'$.
Symmetrically, if $f'$ is above $g'$, we call an external vertex $v$ with $N(v) \cap V_M \neq \emptyset$ \emph{interesting} if there is an upward curve from $g'$ to $v$ that does not intersect $\Gamma(H)$; note that such an upward curve starts above $g'$ and is therefore fully contained in $f'$.
The set $\interesting(f')$ denotes the set of all interesting vertices.

Now fix $A \in V_M$. 

\begin{definition}[Bundled $A$-vertices, part of \textbf{Definition 4} in the main body]\label{def:bundles-vertices}
Let $V^A(f') \subseteq \{v_1, \dots, v_q\}$ be the set consisting of all interesting external vertices $v$ with the following properties:
\begin{enumerate}
    \item the missing neighborhood of $v$ is $\{A\}$,
    \item $v A \in E_M$ if $f'$ lies below $g'$,
    \item $A v \in E_M$ if $f'$ lies above $g'$,
    \item $v$ is not $B$-enclosed for any $B \neq A \in V_M$,
    \item $v$ does not have a $B$-attachment for every $B \neq A \in V_M$.
\end{enumerate}
We call $V^A(f')$ the set of \emph{bundled $A$-vertices} with respect to $f'$.
Further, let $v^A_1, \dots, v^A_{q_A}$ be the order in which the vertices of $V^A(f')$ occur in $W'$.
For $i < j \in [q_A]$ we say that $v^A_i$ and~$v^A_j$ belong to the same \emph{$A$-bundle} if between any occurrence of $v^A_i$ and any occurrence of~$v^A_j$ in $W'$, no neighbor of $V_M \setminus \{A\}$ occurs.
Note that the $A$-bundles partition the set~$V^A(f')$.
We use~$\Delta_A(f')$ to denote the set of $A$-bundles with respect to $f'$.
\end{definition}
Below, we provide a bound on the size of this set.

\begin{lemma}[Part of \textbf{Lemma 5} in the main body]\label{lem:small-number-of-bundles}
    Let $A \in V_M$. Then $|\Delta_A(f')| \leq k$.
\end{lemma}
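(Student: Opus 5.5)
The plan is to charge each boundary between two consecutive $A$-bundles to a distinct missing vertex $B \neq A$. There are at most $k-1$ such vertices, so there are at most $k-1$ boundaries and hence at most $k$ bundles.

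Order $V^A(f') = \{v^A_1,\dots,v^A_{q_A}\}$ as they occur along $W'$. Recall that all external vertices appear in $W'$ in blocks: all occurrences of $v_1$, then all occurrences of $v_2$, and so on, possibly interleaved with non-external vertices. Hence consecutive bundled vertices $v^A_i, v^A_{i+1}$ lie in different bundles exactly when a neighbour $w$ of some $B \in V_M\setminus\{A\}$ occurs in $W'$ between an occurrence of $v^A_i$ and an occurrence of $v^A_{i+1}$. I first check that this one-step relation captures the whole bundle relation, i.e.\ that bundles are contiguous in the order $v^A_1,\dots,v^A_{q_A}$. If $v^A_i$ and $v^A_j$ with $i<j$ are separated by such a $w$, then any $v^A_l$ with $l>j$ is separated from $v^A_i$ as well, because its occurrences all come after those of $v^A_j$. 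So each bundle is an interval, and the number of bundles is $1$ plus the number of indices $i$ at which a ``cut'' occurs.

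Next, fix a cut at $i$, witnessed by $w$ adjacent to $B \neq A$. By condition~5 of Definition~\ref{def:bundles-vertices}, $v^A_i$ has no $B$-attachment, so $w$ does not occur between two occurrences of $v^A_i$; the same holds for $v^A_{i+1}$. I would use this to argue that all occurrences of $w$ lie after all occurrences of $v^A_i$ and before all occurrences of $v^A_{i+1}$. The no-attachment condition only rules out $w$ sitting strictly inside the block of a single vertex, so I still have to handle $w$ appearing both before $v^A_i$ and between $v^A_i$ and $v^A_{i+1}$. In that case, however, some occurrence of $w$ lies before $v^A_i$ and another lies after it, so the occurrences of $v^A_i$ fall between two occurrences of $w$. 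I will take care to use the cyclic and linear ordering of $W'$ correctly here.

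The main step is to show that distinct cuts $i<j$ cannot be witnessed by the same $B$. Suppose $w$ witnesses the cut at $i$ and $w'$ witnesses the cut at $j$, both adjacent to $B$. Then $W'$ contains, in order, $w$, then all of $v^A_{i+1}$, then $w'$. This makes $v^A_{i+1}$ $B$-enclosed, contradicting condition~4. The degenerate cases are $w=w'$, or occurrences of $w$ straddling $v^A_{i+1}$. In these cases $v^A_{i+1}$ would instead lie between two occurrences of $w$, so $w$ is attached to $v^A_{i+1}$, which contradicts condition~5 (or, more precisely, I fall back to the enclosure argument via the block structure of external vertices). Checking these degenerate cases is where I expect the main difficulty, because the definitions of enclosure and attachment are phrased via ``all occurrences'' and I must make sure one of them always applies. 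Once injectivity of the cut-to-$B$ map is established, there are at most $k-1$ cuts and therefore $|\Delta_A(f')|\le k$.
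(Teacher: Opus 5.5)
Your architecture matches the paper's. The paper also takes, between distinct bundles, a witness $w_j$ adjacent to some $B_j\neq A$, applies pigeonhole over the $k-1$ vertices of $V_M\setminus\{A\}$, and concludes that the bundled vertex between two witnesses for the same $B$ is $B$-enclosed. Your reduction to cuts between consecutive bundled vertices is a harmless variant.

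The gap is in the step you flag yourself. The enclosure conclusion needs \emph{all} occurrences of $w$ to precede all occurrences of $v^A_{i+1}$, and \emph{all} occurrences of $w'$ to follow them. Your treatment of the straddling cases, including $w=w'$, does not establish this. If the occurrences of $v^A_{i+1}$ lie between two occurrences of $w$, that is not a $B$-attachment of $v^A_{i+1}$. By definition, $w$ is attached to $v^A_{i+1}$ when $w$ occurs between two occurrences of $v^A_{i+1}$, which is the reverse containment. So condition~5 of Definition~\ref{def:bundles-vertices} does not apply here. The situation you leave open at the end of your third paragraph is never resolved.

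The definitions of enclosure and attachment alone cannot resolve it either. Take an abstract walk $w\,v^A_1\,w\,v^A_2\,w\,v^A_3\,w$ with $V_M=\{A,B\}$ and $w$ the only neighbour of $B$. It violates neither condition~4 nor condition~5 for any $v^A_i$, yet it has three bundles, more than $k=2$.

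What excludes this is where boundary vertices sit relative to the outer boundary path $P'$ (notation from the proof of Lemma~\ref{lem:enclosed-vertices-use-same-side}).
\begin{itemize}
\item Occurrences of distinct external vertices do not interleave: between the first and last occurrence of an external vertex there is no occurrence of another external vertex.
\item As shown in that proof, every non-external boundary vertex $w$ has all its occurrences between two occurrences of the single external vertex $P'(w)$. Hence they lie inside the span of $P'(w)$.
\end{itemize}
Your witness has an occurrence strictly after the last occurrence of $v^A_i$ and strictly before the first occurrence of $v^A_{i+1}$; this is where condition~5 is legitimately used. Consider the span containing all occurrences of $w$: its own if $w$ is external, that of $P'(w)$ otherwise. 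That span therefore lies strictly between the spans of $v^A_i$ and $v^A_{i+1}$. In particular, $P'(w)\in\{v^A_i,v^A_{i+1}\}$ is impossible by position.

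This is exactly the fact the paper asserts directly: all occurrences of $w_j$ come after all occurrences of $v^A_{i_j}$ and before all occurrences of $v^A_{i_{j+1}}$. With it, the case $w=w'$ and the straddling cases disappear, and your injectivity argument (equivalently, the paper's pigeonhole) goes through.
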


\begin{proof}
    Suppose there exist $i_1 < i_2 < \dots < i_{k+1} \in [q_A]$ such that the vertices $v^A_{i_1}, v^A_{i_2}, \dots, v^A_{i_k} \in V^A(f')$ belong to pairwise different $A$-bundles.
    Recall that the vertices in $V^A(f')$ do not have $B$-attachments for any $B \neq A \in V_M$.
    Thus for every $j \in [k]$ there exist a pre-drawn vertex $w_j$ on the boundary of  $f'$ and a missing vertex $B_j \in V_M \setminus \{A\}$ with the following properties.
    First, $w_j$ is adjacent to $B_j$.
    Second, in the walk $W'$, the vertex $w_j$ occurs (possibly multiple times) between the occurrences of $v^A_{i_j}$ in $W'$ and $v^A_{i_{j+1}}$ in $W'$, i.e., the occurrences of $w_j$ all happen after all occurrences of $v^A_{i_j}$ and before all occurrences of $v^A_{i_{j+1}}$.
    Since $|V_M \setminus \{A\}| = k - 1$, there exist $j_1 < j_2 \in [k]$ such that $B_{j_1} = B_{j_2}$. 
    But then all occurrences of $w_{j_1}$ in $W'$ happen before all occurrences of $v^A_{i_{j_1 + 1}}$ in $W'$; And all occurrences of $w_{j_2}$ in $W'$ happen after all occurrences $v^A_{i_{j_1 + 1}}$ in $W'$.
    Since both $w_{j_1}$ and $w_{j_2}$ are adjacent to $B_{j_1}$, the vertex $v^A_{i_{j_1 + 1}}$ is $B_{j_1}$-enclosed contradicting the fact that vertices in $V^A(f')$ are not $B$-enclosed for any $B$.
\end{proof}

The next lemma represents the culmination of this subsection. Essentially, it allows us to restrict our attention to solutions where ``bundled'' vertices are routed in the same way. Since the number of bundles is bounded by a function of the parameter (as per \cref{lem:small-number-of-bundles}), this means we will be able to effectively keep track of the routing of bundled vertices in our dynamic program.

\begin{lemma}[Part of \textbf{Lemma 8} in the main body]\label{lem:clean-solutions}
Let $S$ be a solution to an instance $\mathcal{I}$ of \textsc{UPEF} and let $f'$ be a one-sided region in $f$. 
Let $X \subseteq V_M$ be the set of vertices placed in $f'$ by $S$ and let $A \notin X$. 
Consider two bundled $A$-vertices (with respect to $f'$) $v_1$ and $v_2$ which belong to the same $A$-bundle (with respect to $f'$) and such that in $S$ the edge $e_1\coloneq \{v_1, A\}$ is routed inside $f'$ while the edge $e_2 \coloneq \{v_2, A\}$ is not routed inside $f'$.
Then: 
\begin{itemize} 
\item There exists a solution $S'$ to $\mathcal{I}$ which can be obtained by performing a topological transformation of $(S-e_1) \cap f'$ and then adding an upward curve for $e_2$ that so that this edge is routed inside $f'$. 
\item Also, the set of edges routed inside $f'$ in $S'$ is the set of edges routed inside $f'$ in $S$ together with the edge $e_2$ while the set of edges routed in $f \setminus f'$ remains the same.
\item Furthermore, for every edge $e = \{v, B\} \in E_M^1$ other than $e_2$ (with $v \in V(H)$ and $B \in V_M$) the set of strips in which the drawings of $e$ leave $v$ remains the same while for the edge $e$ it remains the same up to one additional strip where the new drawing of $e_2$ leaves $v_2$.
\item The orderings in which the edges other than $e_2$ cross the gate of $f'$ coincide in $S$ and $S'$. 
\item The edge $e_2$ crosses the gate of $f'$ right before or right after some edge $e_1'=\{A,v_1'\}$, where $v_1'$ lies on the boundary of $f'$.
\end{itemize}
\end{lemma}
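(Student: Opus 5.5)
The idea is to draw $e_2$ inside $f'$ by ``cloning'' the drawing of $e_1$. Since $A\notin X$, the curve $\phi_1$ of $e_1$ leaves $v_1$ inside $f'$ and crosses the gate $g'$ exactly once, at some point $p_1$. Assume $f'$ lies below $g'$, so $v_1A\in E_M$ by the definition of bundled vertices; the other case is symmetric.

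\textbf{Step 1: find a channel between $v_1$ and $v_2$ on the gate side.} Let $\sigma$ be the subwalk of $W'$ between the occurrences of $v_1$ and $v_2$; if $v_1,v_2$ are not consecutive in the bundle, pass through intermediate bundle vertices. Since $v_1,v_2$ are in the same $A$-bundle, no neighbor of $V_M\setminus\{A\}$ occurs on $\sigma$. Since they have no $B$-attachment for any $B\neq A$, no such neighbor hides between repeated occurrences of $v_1$ or $v_2$ either. Hence every missing edge leaving a pre-drawn vertex of $\sigma$ into $f'$ goes to $A$.

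Consider the closed region $R\subseteq f'$ bounded by the portion of $C'$ along $\sigma$, a curve $\phi_1'$ running close to $\phi_1$ from $v_1$ to $g'$, the gate segment, and an upward curve $\gamma$ from $v_2$ to $g'$. The curve $\gamma$ exists because $v_2$ is interesting. Then $R$ contains no endpoint of any missing edge other than $A$ and vertices of $\sigma$. So the only missing edges that can meet $R$ are edges to $A$ from $\sigma$-vertices, and edges crossing the region transversally that enter and exit through $\gamma\cup\phi_1'$.

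\textbf{Step 2: topological transformation.} Inside $R$, I would push every curve that meets $R$ but is not incident with a $\sigma$-vertex out of $R$, in one of two ways. If it is a segment of some edge to $A$ entering $R$ across $\gamma$, retract it along $\gamma$. Otherwise, replace the subcurve within $R$ by one hugging $\partial R$; this is the ``topological transformation of $(S-e_1)\cap f'$''. The transformation is a homeomorphism of $f'$ fixing $\Gamma(H)$, the gate ordering and the strips in which edges leave pre-drawn vertices. Upwardness can be restored by standard local redrawing, since all strips are trapezoids and $y$-monotonicity is preserved under isotopies that fix $y$-coordinates.

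In fact, the simplest route I would try first is the following. Let $e_1'=\{A,v_1'\}$ be the edge to $A$ incident with a $\sigma$-vertex that crosses $g'$ closest to $\gamma$. By the above, every curve between $\gamma$ and that crossing is incident with $A$. Draw $e_2$ starting at $v_2$ in $f'$, going up along $\gamma$ and then alongside $e_1'$ just before $g'$. It crosses $g'$ immediately next to $e_1'$ and then follows $e_1'$ at distance $\varepsilon$ all the way to $A$ in $f\setminus f'$. The segment outside $f'$ crosses nothing, because it is parallel to $e_1'$. Finally, delete the old drawing of $e_2$ (or keep it, since multiple curves are allowed in baseline solutions). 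This directly yields the last four bullet points: only the routing of $e_2$ changes, gate orders of other edges are preserved, $e_2$ is adjacent to $e_1'$ on $g'$, and leaving strips are unchanged except for the new strip at $v_2$. The edge set routed in $f\setminus f'$ is unchanged, because the curves are multisets and we may keep the old curve.

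\textbf{Main obstacle.} The hard step is showing that $\gamma$ (suitably chosen) can be made crossing-free up to $e_1'$. Other missing edges, e.g.\ $\{u,C\}$ with $u$ outside $\sigma$ and $C\in X$, may cut between $v_2$ and $g'$. Here I would use non-enclosure and the absence of attachments: any such curve separates $v_2$ from the gate. Together with the curve of $e_1$ and the boundary, it would then separate a $C$-neighbor from the rest, and a Jordan-curve argument as in \cref{lem:enclosed-vertices-use-same-side} forces $v_2$ or $v_1$ to be $C$-enclosed or to carry a $C$-attachment, a contradiction. Any remaining obstructing curves are edges to $A$, which can be rerouted around; this is exactly the topological transformation allowed in the first bullet.
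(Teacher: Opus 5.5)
\textbf{There is a genuine gap.} Your framework is the paper's. You use the bundle property and the absence of attachments to see that only edges to $A$ start along the stretch of $C(f)$ between $v_1$ and $v_2$. Interestingness of $v_2$ gives an upward curve $\gamma$ to $g'$, and $e_2$ is drawn along $\gamma$ and then beside the neighbouring $A$-edge $e_1'$ up to $A$. The gap sits exactly at the step you flag as the main obstacle, and your proposed way around it is wrong.

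\textbf{Where it fails.} The assertion in Step~1 that $R$ contains no endpoint of a missing edge other than $A$ and $\sigma$-vertices is false. Vertices of $X$ may lie inside $R$, attached to the rest of $G$ only by edges crossing $\gamma$ (or the gate segment on $\partial R$, which you also omit). Such edges can block every upward curve from $v_2$ without separating anything and without any enclosure or attachment. For instance:
\begin{itemize}
\item Let $v_2$ be the lower end of a pre-drawn edge $v_2x$ with $f'$ on its left and $f\setminus f'$ on its right. This is how $e_2$ can leave $v_2$ outside $f'$ while $v_2$ is interesting.
\item Let $g'$ be the left gate at $x$, and let $v_1$ lie further along the boundary, away from $x$.
\item Place $C\in X$ to the left of and below $v_2$, with $Cx$ as its only edge.
\end{itemize}
Every upward curve from $v_2$ into $f'$ is then confined between $Cx$ and $v_2x$, a region that meets $g'$ only in $x$. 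Yet $x$ is the only neighbour of $C$, the bundle is untouched, and $Cx$ does not even separate $v_2$ from the gate topologically (one can pass below $C$). So no Jordan-curve contradiction exists: the solution genuinely has to be redrawn, which is why the statement allows a topological transformation. Moreover, in any solution routing $e_2$ inside $f'$, $C$ must lie above the height of $v_2$. Hence the $y$-preserving isotopy sketched in your Step~2 cannot do the job either.

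\textbf{What the paper does instead.} This redrawing is the actual content of the paper's proof.
\begin{itemize}
\item It reroutes $\phi_2$ so that it crosses no edge to $A$, and merges its gate endpoint with that of $\phi_1'$.
\item It considers the forbidden region $F$ bounded by $\phi_1'$, the part of $C^*$ between $v_1'$ and $v_2$, and $\phi_2$. By the choice of $e_1'$ nothing enters $F$ through $C^*$, so by connectivity of $G$ everything drawn in $F$ is attached to the rest only across $\phi_2$.
\item It repeatedly takes a leaf strip of the forest $T^*$ (strips meeting $F$ in missing material) that $\phi_2$ does not traverse. It squeezes that strip's contents, by horizontal and then vertical rescaling, into a thin rectangle in the neighbouring strip; this is where $y$-coordinates change. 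Acyclicity of the strip tree guarantees such a leaf exists until $T^*$ lies on the strip path of $\phi_2$.
\item Finally it shifts and compresses everything left in $F$ horizontally across $\phi_2$ into a thin band on the other side. This keeps curves upward, preserves the gate order and the strips in which edges leave pre-drawn vertices, and leaves $\phi_2$ crossing-free.
\end{itemize}
Without an argument of this kind, the first bullet of the statement is not established, and the remaining bullets inherit the gap.
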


\begin{proof}
    Without loss of generality, we assume that the gate between $f'$ and $f \setminus f'$ is the top-gate of $f'$. 
    In the opposite case, the proof is symmetric.

    First of all, note that since $A$ is not placed in $f'$, the edge $e_1$ that leaves $v$ in $f'$ crosses the gate of $f'$ at least once.
    On the other hand, no upward curve crosses the gate more than once.
    The edge $e_1$, when followed from tail to head, crosses this gate from bottom to top (and therefore from $f'$ to $f \setminus f'$) so the vertex $A$ is the head of this edge and it lies (in terms of $y$-coordinates) above this gate.
    On the other hand, we know that the vertex $v_2$ is interesting with respect to $f'$ (by definition of bundled $A$-vertices) so there is an upward curve from $v_2$ to the gate of $f'$ so $v_2$ lies below the gate of $f'$, and therefore it lies below $A$ (in terms of $y$-coordinates).
    Since $S$ is a solution, the edge $e_2$ is drawn as an upward curve in~$S$ so $A$ is also the head of the edge $e_2$.
    So our goal is to adapt the solution $S$ in such a way that we can insert an upward curve from $v_2$ to $A$ to it.
    
    Let $W^*$ define the minimal subwalk of $W(f)$ that contains all occurences of $v_1$ and $v_2$ inside $f'$. 
    Note that then the whole subwalk $W^*$ belongs to $f'$. 
    Let $C^*$ be a subcurve of~$C(f)$ corresponding to $W^*$.
    We know that the edges $e_1$ and $e_2$ are in the same bundle, therefore~$W^*$ does not contain any neighbor of $V_M \setminus \{A\}$.
    Since $C(f)$ closely follows the boundary of~$f$, $C^*$ is not crossed by any missing edge other than of the form $\{v, A\}$ where $v$ belongs to~$W^*$.
    At the same time, $C^*$ may be crossed by multiple edges of the form $\{v, A\}$ where $v$ belongs to $W^*$.

    Since $v_2$ is interesting (recall \cref{def:bundles-vertices} of bundled $A$-vertices), there is an upward curve $\phi_2$ that starts in $v_2$ and ends on the gate of $f'$.
    Let $\phi_1$ be the drawing of $e_1$ in $S$. 
    We refer to \cref{fig:clean-solutions-lemma} for an illustration.    
    Note that both $\phi_1$ and $\phi_2$ cross $C^*$; however, $\phi_2$ may cross other curves in $S$, including curves for edges which are not incident with $A$ at all.
    We choose the edge $e_1' = \{v'_1, A\}$ whose drawing $\phi_1'$ in $S$ crosses $C^*$ such that no further edge crosses $C^*$ between the crossing with $\phi_1'$ and $\phi_2$. 
    First, we may assume that $\phi_2$ does not cross any edge of form $\{v, A\}$ with $v \in V(H)$: indeed, otherwise one may construct a new choice of $\phi_2$ by picking the first intersection of $\phi_2$ with such an edge and following the drawing of the crossed edge until we reach the gate of~$f'$. The newly constructed choice of $\phi_2$ remains an upward curve and connects $v_2$ to the gate of~$f'$.
    Moreover, we may assume that the intersections of $\phi_1'$ and $\phi_2$ with the gate coincide: this can be achieved by using an almost horizontal straight-line segment from the intersection of $\phi_2$ with the gate to the intersection of $e_1'$ with gate.
    Note that this does not create any intersections with edges of form $\{v, A\}$ since by the choice of $e_1'$, the intersections of~$e_1'$ and~$e_2$ are consecutive on the gate of $f'$.
    Let us now define the curve $\phi'_2$ that is identical to $\phi_2$ until reaching the gate of $f'$ and then follows $\phi_1'$ until reaching $A$.
    If $\phi_2$ does not cross any other edge, the lemma holds (even without requiring any further change to $S$).
    
    Now recall that the edge $e_1'$ is drawn in the solution $S$ so it does not cross any other edge.
    Since the curve $\phi_2'$ follows $e_1'$ from the gate of $f'$, it also does not cross any edge above the gate of $f'$. 
    However $\phi'_2$ can still cross some edges inside $f'$, i.e., in the part where it matches~$\phi_2$.
    To complete the proof, we provide a redrawing argument to eliminate such crossings. 

    Let us define the \emph{forbidden} region $F$ as the area enclosed by the following closed curve: we first follow $\phi_1'$ from gate till $v_1'$, then we follow $C^*$, and finally we follow $\phi_2$ (recalling that we ensured $\phi_2$ essentially touches $\phi_1'$ near the gate of $f'$), see \cref{fig:clean-solutions-lemma}(a) for the illustration.
    We will now adapt the drawing $S$ in such a way that no missing edge and no missing vertex is placed inside $F$.  

    Let $T^*$ denote the subforest of $T_f$ induced by all strips $s'$ such that the intersection of $s'$ with $F$ contains a drawing of a missing vertex or a missing edge other than $e_1'$. 
    Let $P$ denote the path in $T^*$ that consists of all strips $s' \in T^*$ such that an internal point of $s'$ is visited by $\phi_2$ (i.e., all strips $s' \in T^*$ that are crossed by $\phi_2$ from bottom to top). 
    The next claim shows that one can locally transform the solution $S$ to always avoid a ``dangling leaf'' in $T^*$.
    
    In the following, we will carry out certain redrawing procedures to obtain the desired solution in the end.
    Along the way, we will, however, violate some properties of a solution: namely, we will possibly place some missing vertices into polar strips and possibly place multiple missing vertices inside the same strip---in the very end, we will ensure that this properties are not violated.
    So we say that a drawing is an \emph{almost-solution} if is satisfies all properties of a solution apart from, possibly, the aforementioned two.
        
    \begin{claim}
    \label{cl:redrawfar}
        Assume there exists a strip $\ell \notin P$ such that $\ell$ is a leaf of $T^*$. Then $\ell$ is adjacent to another strip $t$ of $T^*$ and there is an almost-solution $S'$ with the following properties:
        \begin{itemize}
\item $S'$ precisely coincides with $S$ everywhere in $f$ except in $F \cap (t \cup \ell)$;
\item $S'$ does not intersect $F\cap \ell$.
        \end{itemize}
\end{claim}

        \begin{figure}[h]
            \centering
            \includegraphics[scale=0.8]{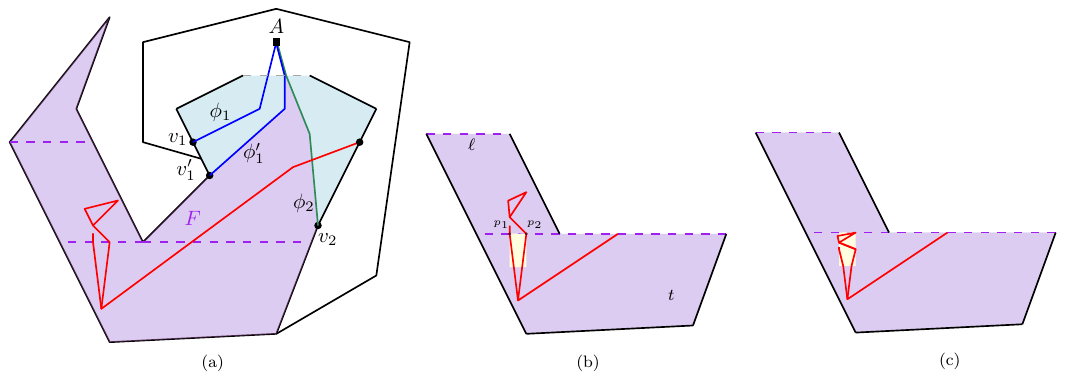}
            \caption{The proof of \cref{cl:redrawfar}. (a) The forbidden region $F$ defined by the curves $\phi_1'$, and $\phi_2$ is sketched purple. (b) We ensure that the $x$-coordinates of $\beta$ lie between $x(p_1)-\varepsilon'$ and $x(p_q)+ \varepsilon'$. (c)~We scale down vertically and shift the drawing out of $\ell$. The region $\zeta$ is depicted light yellow.}
            \label{fig:clean-solutions-lemma}
        \end{figure} 
    
    \begin{claimproof}
        Recall that by the choice of $e_1'$, no missing edge (apart from the curves~$\phi_1'$ and~$\phi_2$) crosses the subcurve of $C^*$ between the intersections with $\phi_1'$ and $\phi_2)$.
        In other words, along this subcurve $C$ no drawing of a missing edge other than $e_1'$ and $e_2$ starts.
        Therefore, the drawing of any path from a missing vertex placed in $\ell \cap F$ to the pre-drawn graph $H$ must cross the curve $\phi_2$, i.e., visit a strip in $P$.
        Recall that the graph $G$ is connected so if some missing vertex is placed in $\ell \cap F$, then there is a path in $T^*$ from $\ell$ to a vertex of $P$: this path (in $T^*$) is obtained by following the path (in $G$) from a missing vertex in $\ell$ until we cross $\phi_2$ for the first time.
        So if $\ell$ contains at least one missing vertex, then~$\ell$ is not isolated and in particular must be adjacent to a strip $t$ in $T^*$.
        
        Now suppose that $\ell \cap F$ only contains segments of missing edges and no missing vertex.
        We show that $\ell$ cannot be a leaf of $T^*$ that does not belong to $P$.
        Let $e$ be an arbitrary edge such that the drawing of $e$ in $S$ intersects $\ell \cap F$.
        Moreover, let $\alpha$ denote the part of the drawing of $e$ in $S$ restricted to the strip $\ell$.
        The strip $\ell$ does not belong to $P$, i.e., it is not crossed by $\phi_2$. 
        Moreover, $S$ is a planar drawing, so $\alpha$ crosses neither $\Gamma(H)$ nor~$\phi_1'$.
        Therefore,~$\alpha$ is fully contained in $F$.
        Also, $e_1'$ and $e_2$ belong to the same bundle and the choice of $e_1'$ ensures that the edge $e$ does not have an end-vertex on $W^*$ between the vertices~$v_1'$ and~$v_2$.
        Since no missing vertex is placed inside $\ell \cap F$, the end-vertices of $e$ lie outside~$\ell$. 
        This implies that there exist two strips~$s_1$ and $s_2$ such that each of them is a neighbor of~$\ell$ in the strip tree and the drawing of the edge $e$ in $S$ intersects $s_1 \cap F$ and $s_2 \cap F$.
        But then~$s_1$ and~$s_2$ are two neighbors of $\ell$ in the forest $T^*$ contradicting the fact that $\ell$ is a leaf of this forest.
        So indeed~$\ell$ is not isolated in $T^*$ and we use $t$ to denote its unique neighbor in~$T^*$.
        Moreover, this shows that for every edge that intersects $\ell \cap F$, at least one of its end-vertices is a missing vertex in $\ell \cap F$.
        And either the other end-vertex belongs to $\ell \cap F$ as well and then the drawing of $e$ is fully contained in $\ell \cap F$; or the drawing of the edge $e$ continues in~$t \cap F$.
        
        Now we provide a redrawing to obtain the desired almost-solution $S'$, we refer to \cref{fig:clean-solutions-lemma} for an illustration of this procedure. 
        Without loss of generality, we assume that $\ell$ lies above~$t$.
        Let $\beta$ be the part of $S$ consisting of all partial curves and points that intersect $F\cap \ell$. 
        Let~$p_1, \dots, p_q$ be all the points on the gate between $\ell$ and $t$ which are touched by $\beta$ (i.e., where the curves intersecting $\beta$ cross that gate), in the order of increasing $x$-coordinates; note that it may occur that $p_1=p_q$.
        Let us now consider a redrawing of $\beta$, say denoted~$\beta'$, which is still placed inside $\ell$ but whose $x$-coordinates lie entirely within the interval between~$x(p_1) - \varepsilon'$ and $x(p_q) + \varepsilon'$ for a sufficiently small value of $\varepsilon'$ so that no other curve of $S$ crosses the gate of $f'$ within this interval.
        In particular, we achieve this via a simple horizontal rescaling of all parts of $\beta$ which extend beyond the horizontal interval spanned by~$x(p_1) - \varepsilon'$ and $x(p_q) + \varepsilon'$, thus ensuring that no two curves in $\beta'$ intersect each other (see \cref{fig:clean-solutions-lemma}~(b)).         
        Further, we slightly perturb the drawing of $\beta'$ so that no vertex and no bend occurs in $F$ on the gate between $\ell$ and $t$. 

        Let $\gamma$ be a horizontal line segment on the gate of $f'$ between $x(p_1) - \varepsilon'$ and $x(p_q) + \varepsilon'$.
        Let $\varepsilon$ be a second sufficiently small distance, in particular satisfying the following property: the rectangle $\zeta$ obtained by projecting $\gamma$ into $t$ with height $\varepsilon$ contains no vertices or bends of~$S$ and only intersects curves of $S$ on its bottom and top boundary (see \cref{fig:clean-solutions-lemma}~(c)). In particular, this ensures that the rectangle $\zeta$ only contains straight-line segments representing the edges that cross the gate at the points $p_1,\dots,p_q$.

We now:
        \begin{itemize}
        \item vertically scale the part of $S$ in $\zeta$ down by a factor of $2$, 
        \item vertically scale all of $\beta'$ to span a vertical height less than $\varepsilon/2$, and
        \item place the rescaled part of $\zeta$ and the rescaled part of $\beta'$ just below~$\gamma$.
        \end{itemize}
        See \cref{fig:clean-solutions-lemma}~(c) for an illustration.     

        Let $S'$ be the object obtained by replacing $\beta\cup \zeta$ with the construction defined in the previous paragraph. Observe that since $\beta'$ was constructed in a way that no two curves were pairwise intersecting, no missing vertex or bend were placed in $\zeta$ by $S$ (now $\zeta$ contains all the elements previously represented by $\beta$), and since $S'$ coincides with $S$ outside of $\ell\cup \zeta$, we conclude that $S'$ is an almost-solution to $\mathcal{I}$, as desired.
        Also note that this redrawing did not change, for any drawing of any edge, say $\{v, B\} \in E_M^1$ with $v \in V(H)$, the strip in which this drawing leaves $v$.
    \end{claimproof}

    The exhaustive application of Claim~\ref{cl:redrawfar} allows us to reduce to the case where $T^*\subseteq P$ as for the following reason. 
    Suppose there is a strip $x$ in $T^* \setminus P$, but all leaves of $T^*$ belong to $P$, in particular, $x$ is an internal node of $T^*$.
    Since $T^*$ is a forest and $x$ is an internal node, there are two distinct leaves $a$ and $b$ of $T^*$ such that the path, say $Q$, from~$a$ to~$b$ in~$T^*$ contains~$x$.
    Let $Q'$ be the subpath of this path such that first, it contains $x$, second, its end-vertices belong to $P$, and finally, no internal vertex belongs to $P$: the path $Q'$ can be obtained by going along $Q$ from $x$ in both directions and ending at the first vertices that belong to $P$.  
    But then $Q'$ together with the subpath of $P$ between the end-vertices of $Q'$ forms a cycle in the strip tree $T_f$ contradicting the fact that $T_f$ is acyclic (see \cref{pro:striptree}).
    Therefore, as long as $T^* \setminus P \neq \emptyset$, there is a leaf of $T^*$ that does not belong to $P$ and we can apply \cref{cl:redrawfar} to eliminate this leaf.
    So from now on we may assume that $T^* \subseteq P$ holds.
    
    For a distance $\varepsilon$, let the $\varepsilon$-\emph{band} of $\phi_2$ be the connected region containing all points of~$f' \setminus F$ which lie at a horizontal distance of at most $\varepsilon$ from $\phi_2$ and simultaneously lie at a vertical distance of least $\varepsilon$ from the endpoints of $\phi_2$. Intuitively, one can imagine that the $\varepsilon$-band is obtained by projecting $\phi_2$ horizontally out of $F$ to a distance of $\varepsilon$, and then cutting off a small ($\varepsilon$-high) piece from the bottom and top. The \emph{spine} of the $\varepsilon$-band is the curve that connects (i.e., contains) all points $p$ in the $\varepsilon$-band which are equidistant to the two boundary points of the $\varepsilon$-band with the same $y$-coordinate as $p$; intuitively, the spine can be seen as a ``central curve'' of the $\varepsilon$-band. Observe that by construction, if $\varepsilon$ is sufficiently small then the spine, the left boundary, and the right boundary of the $\varepsilon$-band of $\phi_2$ all have the same shape, see \cref{fig:redrawing-spine}(a) for the illustration.

    \begin{figure}[h]
\centering
\includegraphics[scale = 0.9]{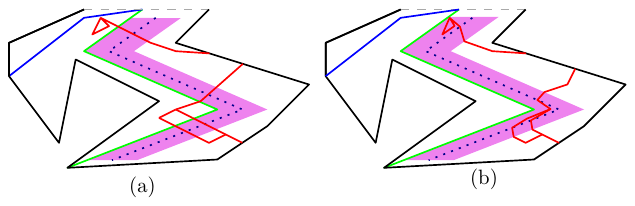}
\caption{In pink $\varepsilon$-band of $\phi_2$---the region to which all the missing edges and vertices, that are placed by $S$ inside $F$, will be shifted. (a) A drawing with $T^* \subseteq P$. The $\varepsilon$-band of $\phi_2$ is pink, its spine is in dashed dark blue. The curves $\phi_1'$ and $\phi_2$ are solid lines, blue and green respectively. (b) The almost-solution $S'$ obtained in proof of \cref{claim:spine}.}
\label{fig:redrawing-spine}
\end{figure}

    \begin{claim}\label{claim:spine}
     Assume that $T^*\subseteq P$. Then there exists a sufficiently small distance $\varepsilon$ and an almost-solution $S'$ with the following properties:
         \begin{itemize}
 \item $S'$ does not intersect $F$; and
 \item $S'$ precisely coincides with $S$ outside of $F$ and the $\varepsilon$-band of $\phi_2$.
         \end{itemize}
\end{claim}

\begin{claimproof}
Let $y_\downarrow$ and $y_\uparrow$ be the smallest and largest $y$-coordinates of a strip in~$P$. Observe that, by definition, the curve $\phi_2$ spans
the whole vertical distance of all strips in~$P$, i.e., from $y_\downarrow$ to $y_\uparrow$. Moreover, since $T^*\subseteq P$, 
the missing vertices and the straight-line segments of edges which $S$ places inside $F$ span a union of subintervals of the $y$-coordinates from $y_\downarrow$ to $y_\uparrow$. By slightly perturbing the $y$-coordinates of bends and missing vertices of $S$, we can choose an $\varepsilon$ such that:
\begin{itemize}
\item every missing edge and vertex of $S$ either intersects $F$ or does not intersect the $\varepsilon$-band of~$\phi_2$, 
\item no missing edge other than $e_2$ intersects the horizontal boundaries of the $\varepsilon$-band,
\item $\drawing$ does not intersect the $\varepsilon$-band of $\phi_2$, and 
\item every missing vertex and bend of $S$ is placed at a vertical distance of at least $\varepsilon$ from $y_\downarrow$ and $y_\uparrow$.
\end{itemize}

Given the above, we may obtain $S'$ by:
\begin{itemize}
\item[(1)] horizontally shifting all curves and points placed by $S$ inside $F$ by $\varepsilon/2$ to touch the spine of the $\varepsilon$-band of $\phi_2$ rather than $\phi_2$ itself, 
\item[(2)] horizontally downscaling the shifted drawing to fit entirely inside the side of the $\varepsilon$-band of $\phi_2$ which touches $\phi_2$ while preserving the contact points with the spine, so that no parts of the drawing are placed inside the $\frac{\varepsilon}{100}$-band of $\phi_2$;  
\item[(3)] horizontally downscaling all curves and points placed by $S$ inside the $\varepsilon$-band by a factor of $2$ while preserving the contact points with the side of the $\varepsilon$-band which does not touch~$\phi_2$.
Also note that this redrawing did not change, for any drawing of any edge, say~$\{v, B\} \in E_M^1$ with $v \in V(H)$, the strip in which this drawing leaves $v$.
\end{itemize}
 
An illustration of the construction is provided in \cref{fig:redrawing-spine}(b). Correctness follows by the fact that the intersection between $S'$ and the $\varepsilon$-band of $\phi_2$ was obtained by rescaling $S$ while preserving the contact points between the parts of the drawing where different scaling factors were applied.
\end{claimproof}

Applying \cref{claim:spine} makes the interior of the forbidden region $F$, enclosed by $\phi_1'$, $\phi_2$ and the fragment of $C^*$ between their endpoints $v_1'$ and $v_2$, free from any missing vertices and edges. Since the $\frac{\varepsilon}{100}$-band of~$\phi_2$ is free as well, the curve~$\phi_2$ does not cross a drawing of any other missing edge or vertex in the modified almost-solution $S'$. Moreover, all the edges that were routed inside $f'$ by $S$ are routed inside $f'$ by $S'$ as well and cross the gate of $f'$ in the same order. The only new edge routed inside $f'$ is the edge $e_2$ corresponding to the curve~$\phi_2$, and~$\phi_2$ crosses the gate right before or right after $\phi_1'$.
Note that the constructed object~$S'$ is an almost-solution.
We can now proceed analogously to the proof of \cref{lem:streamlining} to alter the drawing by slightly perturbing the missing vertices to obtain the desired solution $S'$.
\end{proof}

\begin{definition}
    We say that a solution is \emph{everywhere clean} if the following holds for every gate $g$.
    Let $f_0$ and $f_1$ denote the two one-sided regions into which $f$ is split by $g$. 
    Then for every $j \in \{0, 1\}$, every vertex $A$ placed inside $f_j$ by $S$, every $A$-bundle $\gamma$ with respect to $f_{1-j}$, and every pair $u \neq q \in \gamma$, if $\{u, A\}$ is routed inside $f_{1-j}$, then the edge $\{v, A\}$ is also routed inside $f_{1-j}$.
\end{definition}
Let us remark that the word ``everywhere'' in this notion refers to ``every gate $g$''.

\begin{remark*}
    First, since the vertex $A$ is placed in $f_j$, the above condition can be reformulated as follows: if some drawing of $\{u, A\}$ crosses~$g$, then also some drawing of $\{v, A\}$ crosses~$g$.
    We also emphasize that we do not forbid the drawings of $\{u, A\}$ or $\{v, A\}$ to be routed inside~$f_j$ as well. 
\end{remark*}
\setcounter{theorem}{7}
\begin{lemma}\label{lem:everywhere-clean-solutions}
    Every YES-instance of \upefshort\ admits an everywhere clean solution.
\end{lemma}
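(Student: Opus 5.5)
Starting from an arbitrary solution $S$ of a YES-instance, I will repeatedly apply \cref{lem:clean-solutions} until no violation of everywhere-cleanness remains. I will use a potential argument to show that this process terminates.

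A \emph{violation} is a tuple $(g, j, A, \gamma, u, v)$ with the following properties. The gate $g$ splits $f$ into $f_0, f_1$, the vertex $A$ is placed in $f_j$, and $\gamma$ is an $A$-bundle with respect to $f_{1-j}$ containing $u \neq v$. Moreover, $\{u,A\}$ is routed inside $f_{1-j}$ (i.e.\ some drawing crosses $g$), but no drawing of $\{v,A\}$ is. For such a violation, I apply \cref{lem:clean-solutions} with $f' = f_{1-j}$, $X$ the set of missing vertices placed in $f'$ (so $A\notin X$), $v_1 = u$ and $v_2 = v$. This yields a solution $S'$ in which $\{v,A\}$ is additionally routed inside $f_{1-j}$, while all other routing information is preserved.

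The potential is the number of pairs $(e, s)$ where $e = \{w,B\}\in E_M^1$ and $s$ is a strip incident with $w$ in which some drawing of $e$ leaves $w$. The third bullet of \cref{lem:clean-solutions} states that these sets of strips are unchanged for all edges other than $e_2$. For $e_2$ exactly one new strip is added, namely one on the $f_{1-j}$ side at $v$. That strip is new because no drawing of $\{v,A\}$ previously left $v$ inside $f_{1-j}$. Hence each step strictly increases the potential. Since baseline solutions may route missing edges multiple times, adding a new curve is legitimate.

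The potential is bounded by $|E_M^1|$ times the maximum number of strips at a vertex, which is polynomial. Therefore, after finitely many steps no violation remains, and the final solution is everywhere clean.

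The point I must check carefully is that a step never ``undoes'' progress. The only danger would be if some previously routed drawing disappeared, since then the potential could drop. The second and third bullets of \cref{lem:clean-solutions} guarantee monotonicity of routed sets, because the redrawing only topologically transforms $(S-e_1)\cap f'$ without deleting curves. The subtle part is that the lemma is stated for a fixed $f'$, while bundles depend on the gate. I must also argue that the set of strips in which edges leave their pre-drawn endpoints is a global quantity, independent of the gate under consideration, so that a single potential works for all gates simultaneously. This is exactly what the third bullet gives. The main obstacle is ensuring that the output of \cref{lem:clean-solutions} is a genuine solution: missing vertices must sit in distinct non-polar strips and the drawing must be a baseline solution permitting duplicate curves. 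I would rely on the final perturbation step in that lemma's proof to handle this.
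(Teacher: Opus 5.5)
Your proposal is correct and follows the paper's proof. Both start from an arbitrary solution, exhaustively apply \cref{lem:clean-solutions} to violating tuples, and argue termination via the number of pairs $(e,s)$ with $e\in E_M^1$ and $s$ a strip in which some drawing of $e$ leaves its pre-drawn endpoint; this count strictly increases by the lemma's third bullet and is bounded by a polynomial. Your remark that the added strip is genuinely new, because no drawing of $\{v,A\}$ previously left $v$ inside $f_{1-j}$, spells out a step the paper states only briefly.
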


\begin{proof}
    The proof follows by exhaustive application of \cref{lem:clean-solutions}:
    We start with some solution of the instance.
    As long as there exists a gate $g$ (let $f_0$ and $f_1$ denote the two one-sided regions into which $f$ is split by $g$) $j \in \{0, 1\}$, vertex $A$ placed inside $f_j$ by $S$, an $A$-bundle~$\gamma$ with respect to $f_{1-j}$, and a pair $u \neq q \in \gamma$ such that~$\{u, A\}$ is routed inside $f_{1-j}$ and~$\{v, A\}$ is not routed inside $f_{1-j}$, we apply \cref{lem:clean-solutions} to extend the solution $S$ with a new curve for the edge $\{v, A\}$ such that $\{v, A\}$ is now also routed inside $f_{1-j}$.
    The statement of \cref{lem:clean-solutions} implies that after this redrawing, the number of pairs $(e, s)$ where $e \in E_M^1$, $s$ is a strip, and the solutions contains a drawing of $e$ that leaves its pre-drawn end-vertex in $s$ increased by one---therefore, the process terminates after a finite number of changes to the initial solution.
\end{proof}

\subsection{Partial Solutions and Records}
\label{sub:partialrec}
Using the results obtained in \cref{sub:onesided}, we will now formalize the records and the notion of partial solutions that will be used in our algorithm.
For the purposes of this section, let us fix an arbitrary non-root strip $s$ in $T_f$ and let $X \subseteq V_M$.
Recall that $T_f^s$ is the subtree of the strip tree $T_f$ rooted at strip $s$ and $f^s$ is the restriction of the face $f$ to the strips contained in $T_f^s$. 
Without loss of generality, assume for the rest of this section that the parent of $s$ lies above $s$, the definitions and results for the other case are entirely symmetric.

We say that the vertex $v$ is internal/external/interesting with respect to $s$ if it is internal/external/interesting with respect to the one-sided region $f^s$.
Recall that for every vertex $A$, the bundled $A$-vertices with respect to $f^s$ are denoted by $V^A(f^s)$, and the bundled $A$-vertices with respect to $f \setminus f^s$ are denoted by $V^A(f \setminus f^s)$.
We define the \emph{inner~$A$-bundles} of $s$ as the bundles of $V^A(f^s)$ with respect to the one-sided region $f^s$.
And we define the \emph{outer $A$-bundles} of $s$ as the bundles of $V^A(f \setminus f^s)$ with respect to the one-sided region $f \setminus f^s$.
Let us emphasize that for an outer $A$-bundle of $s$ some of the vertices of this bundle might, in general, not occur on the boundary of $f^s$.
However, since the definition of outer $A$-bundles of $s$ depend on the instance $\mathcal{I}$ only, and not on partial solutions, our dynamic-programming algorithm in \cref{sec:dynprog-new} will be able to efficiently check whether two pre-drawn vertices belong to the same outer $A$-bundle of $s$.

\begin{lemma}\label{lem:outer-bundles-behaviour}
    Let $s$ be a non-leaf strip such that the parent of $s$ lies above $s$.
    And let $t$ be a child of $s$ lying below $t$. 
    Further, let $v$ be a vertex 
    that is $A$-bundled with respect to $f \setminus f^t$, i.e., we have $v \in V^A(f \setminus f^t)$.
    If $v \notin V^A(f \setminus f^s)$, then $v$ is internal to $s$ or $v$ is not interesting with respect to $f \setminus f^s$.
\end{lemma}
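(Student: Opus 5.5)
We prove the contrapositive: assuming $v \in V^A(f \setminus f^t)$, $v$ not internal to $s$, and $v$ interesting with respect to $f \setminus f^s$, we show $v \in V^A(f\setminus f^s)$ by checking the conditions of \cref{def:bundles-vertices} one at a time. The basic geometric fact is that $f\setminus f^s \subseteq f\setminus f^t$, since $f^t \subseteq f^s$; equivalently, $f\setminus f^t = (f\setminus f^s)\cup s \cup f^{t'}$, where $t'$ is the possible second child of $s$. Both regions lie above their gates, because the parent of $s$ lies above $s$ and $s$ lies above $t$.

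The conditions depending only on $v$ and the orientation are immediate. The missing neighbourhood of $v$ is $\{A\}$. Both one-sided regions lie above their gates, so condition 3 ($Av \in E_M$) carries over. Interestingness with respect to $f\setminus f^s$ is assumed. It remains to show that $v$ is external to $f\setminus f^s$ and that it is neither $B$-enclosed nor has a $B$-attachment there.

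\textbf{Externality.} $v$ lies on the boundary of $f\setminus f^t$ and is interesting with respect to $f\setminus f^s$, so it is incident with some strip of $f\setminus f^s$. Since $v$ is not internal to $s$ (reading this as $f^s$), it is also incident with a strip of $f^s$. A vertex incident with strips on both sides of the gate of $f\setminus f^s$ must lie on the outer boundary of $f\setminus f^s$: the strips around $v$ form a cyclic sequence through the face, and passing from one side to the other forces $v$ onto the path connecting the gate endpoints. I would argue this using the tree structure of $T_f$ from \cref{pro:striptree}, together with the fact that the outer boundary is the frontier separating the two one-sided regions.

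\textbf{Enclosure and attachment.} This is the main obstacle. The walk $W'(f\setminus f^s)$ should be a contiguous subwalk of $W'(f\setminus f^t)$, obtained by removing the part of the boundary walk that lies in $s \cup f^{t'}$. Then an occurrence pattern $u, \dots, v, \dots, w$ witnessing $B$-enclosure, or two occurrences of $v$ with a $B$-neighbour between them witnessing a $B$-attachment, would persist in the larger walk. That would contradict $v \in V^A(f\setminus f^t)$.

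The delicate part is justifying the subwalk relation, since $C(f)$ restricted to a one-sided region is a single arc ending at the gate endpoints. I would show that the arc $C'(f\setminus f^s)$ is contained in the arc $C'(f\setminus f^t)$. The reason is that $C'(f\setminus f^t)$ consists of the arc through $s\cup f^{t'}$ together with $C'(f\setminus f^s)$, glued at the endpoints of the gate of $s$. Since both walks share the same linear order, patterns of occurrences transfer directly, and this completes the contrapositive.
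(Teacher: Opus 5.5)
Your overall plan is the same as the paper's: you argue by contrapositive, check the conditions of \cref{def:bundles-vertices} one at a time, and transfer attachments and enclosures using the fact that $W(f\setminus f^s)$ is a contiguous subwalk of $W(f\setminus f^t)$. However, the externality step has a genuine hole. You infer from ``$v$ is not internal to $f^s$'' that $v$ is incident with a strip of $f^s$, and that inference is backwards. Internality to $f^s$ means that \emph{all} strips at $v$ lie in $f^s$. Its negation only gives \emph{some} strip at $v$ outside $f^s$, i.e.\ in $f\setminus f^s$, which you already had from interestingness. The conclusion you want is also false in general. Take a vertex of $V^A(f\setminus f^t)$ lying, say, on the outer boundary of $f$ well above $s$, reached by an upward curve that passes from the gate of $t$ through $s$ and across the gate of $s$. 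Such a vertex can have all its incident strips in $f\setminus f^s$. It is in the scope of the lemma, and your ``strips on both sides of the gate'' argument (which is correct as far as it goes) says nothing about it. Note that this step never uses that $v$ is \emph{external} to $f\setminus f^t$, and that is exactly the hypothesis needed.

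The repair is the paper's monotonicity argument. Since $f\setminus f^s\subseteq f\setminus f^t$, the closed region bounded by the outer boundary of $f\setminus f^s$ is contained in the one bounded by the outer boundary of $f\setminus f^t$. By interestingness, $v$ lies on the boundary of $f\setminus f^s$. If it were not on the outer boundary of $f\setminus f^s$, it would lie strictly inside that curve, hence strictly inside the outer boundary of $f\setminus f^t$. That contradicts $v\in V^A(f\setminus f^t)$, and this one argument covers both subcases at once.

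With that fix, the rest of your proposal matches the paper. The paper justifies the subwalk relation by a three-case analysis of the children of $s$. Note that the extra part of $W(f\setminus f^t)$ generally consists of two pieces flanking $W(f\setminus f^s)$, not a single arc glued at the gate endpoints.
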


In simpler words, the lemma states that
if a vertex is~$A$-bundled with respect to the child~$t$ but not~$A$-bundled with respect to the parent $s$ ``anymore'', then this could only because either $v$ is internal for~$f^s$, or~$v$ is not interesting with respect to $f \setminus f^s$---in particular, $v$ cannot get an attachment or become enclosed in $f \setminus f^s$ if this was not the case for $f \setminus f^t$.
\begin{proof}

    \begin{figure}[t]
        \centering
        \includegraphics[scale = 0.9]{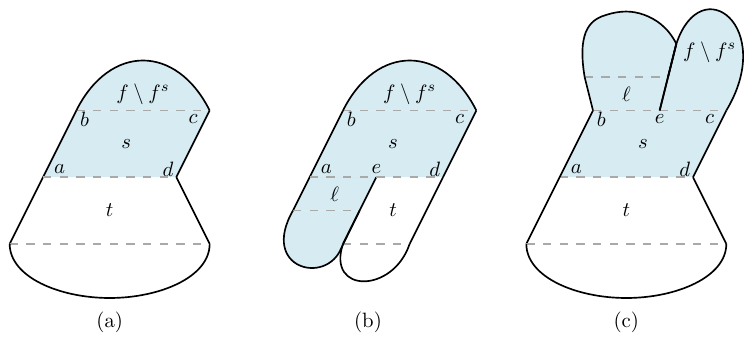}
        \caption{Illustration to the proof of \cref{lem:outer-bundles-behaviour}: the walk $W(f \setminus f^s)$ is a subwalk of $W(f \setminus f^t)$. The one-sided region $f \setminus f^t$ is sketched in blue.}
        \label{fig:lem-outer-bundles}
    \end{figure}
    
    First, let $v$ be a vertex in $V^A(f \setminus f^t) \setminus V^A(f \setminus f^s)$.
    Recall that by definition, we have~$f \setminus f^s \subseteq f \setminus f^t$.
    We make a case distinction on the properties of $v$ on the boundary of~$f \setminus f^s$.
    If $v$ does not appear on the boundary of $f \setminus f^s$, then all strips incident with $v$ belong to~$f^s$, i.e., $v$ is internal to $f^s$ as desired.
    So in the remainder of the proof we may assume that $v$ appears on the boundary of $f \setminus f^s$.
    Suppose $v$ is not external to $f \setminus f^s$, i.e., $v$ lies inside the polygon defined by the outer boundary of $f \setminus f^s$.
    Then $f \setminus f^s \subseteq f \setminus f^t$ implies that $v$ also lies inside the polygon defined by the outer boundary of $f \setminus f^t$, and therefore $v$ is not external to~$f \setminus f^t$ contradicting $v \in V^A(f \setminus f^t)$.
    Thus, $v$ is external to $f \setminus f^s$.
    
    Further, $f \setminus f^s \subseteq f \setminus f^t$ implies that $W(f \setminus f^s)$ is a subwalk of $W(f \setminus f^t)$.
    To see this more formally, we consider the three following cases that can occur by assumption made due to \cref{lem:streamlining}.
    First, suppose that $t$ is the unique child of $s$ and let $a,b,c,d$ be the vertices on the boundary of $s$ (see \cref{fig:lem-outer-bundles}~(a))---then $W(f \setminus f^t) = a, \{a,b\}, W(f \setminus f^s), \{c, d\}, d$.
    Let us remark that, formally speaking, we cannot have, for example, $b$ and $c$ both being the vertices since $\Gamma(G)$ is in general position.
    So technically speaking, the above equality holds only if we drop some vertices on the boundary of $s$ that do not exist (e.g., one of~$b$ and $c$) from the walk on the right-hand side, and also replace the corresponding edges (e.g.,~$ab$) by the edges to which the straight-line segment on the boundary of $s$ really belong to.
    However, this technicality does not invalidate the fact that $W(f \setminus f^s)$ appears as a subwalk on the right-hand side.
    For this reason, we decided to keep this notation in the following two cases as well, despite the slight abuse of notation.
    Second, suppose that $s$ has exactly two children and the child, say $\ell$, other than $t$ also lies below $s$.
    Without loss of generality we may assume that the gate shared by $\ell$ and $s$ lies to the left of the gate shared by $s$ and~$t$.
    Let $a,b,c,d,e$ be the vertices on the boundary of $s$ (see \cref{fig:lem-outer-bundles}~(b))---then we have~$W(f \setminus f^t) = W(f^\ell), \{a,b\}, W(f \setminus f^s), \{c, d\}, d$.
    Finally, we may assume that $s$ has exactly two children and the child, say $\ell$, other than $t$, lies above $s$.
    Without loss of generality we may assume that the gate shared by $\ell$ and $s$ lies to the left of the gate shared by $s$ and its parent.
    Let $a,b,c,d,e$ be the vertices on the boundary of $s$ (see \cref{fig:lem-outer-bundles}~(c))---then~$W(f \setminus f^t) = a, \{a, b\} W(f^\ell) W(f \setminus f^s), \{c, d\}, d$.
    So in any case $W(f \setminus f^s)$ is indeed a subwalk of $W(f \setminus f^t)$.
    
    Therefore, if $v$ would have a $B$-attachment with respect to $f \setminus f^s$ for some $B \neq A \in V_M$, then this would also be the case with respect to $f \setminus f^t$ contradicting $v \in V^A(f \setminus f^t)$.
    Similarly, if $v$ would be $B$-enclosed with respect to $f \setminus f^s$ for some $B \neq A \in V_M$, this would also be the case with respect to $f \setminus f^t$ contradicting $v \in V^A(f \setminus f^t)$.
    Therefore, by definition of~$V^A(f \setminus f^t)$ (\cref{def:bundles-vertices}) the only reason for $v \notin V^A(f \setminus f^s)$ remaining is that $v$ is not interesting with respect to $f \setminus f^s$ as claimed.
\end{proof}

\begin{figure}[t]
\centering
\includegraphics[scale = 0.9]{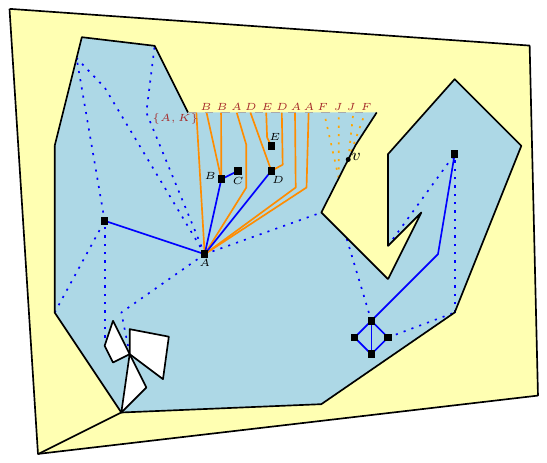}
\caption{A partial solution for a strip $s$, the region $f^s$ is sketched in light blue. The set $X$ of vertices placed is depicted by squares. The finished edges are blue and the curves ending on the gate are drawn orange. 
The drawings of routed edges are dotted.
We do not depict the labels of the labeled curves, instead providing their types in dark red along the gate. The type of the labeled curve is the intersection of its label with the set of missing vertices. In this example, all orange curves apart from the leftmost (labeled $AK$, and therefore having the type $\{A, K\}$) are labeled with edges from $E_M^1$.  
The partial solution thus yields the gate word $\{A, K\}BADEDAFJF$. We recall that consecutive orange curves of the same type (e.g., the two of type $B$) are grouped together in a thread represented by only one letter in the gate word.}
\label{fig:partial-solutions}
\end{figure}

\medskip

\begin{definition}\label{def:partial-solution}
A \emph{partial solution} $S$ for a strip $s$ (see also \cref{fig:partial-solutions}) consists of: 
\begin{itemize}
\item \label{def:ps-first} an upward planar drawing of some subset $X\subseteq V_M$ and $Y\subseteq E_M$; for $X$ and $Y$ we say that they are \emph{finished} by $S$---the drawings of $Y$ are \textbf{unlabeled} curves,,
\item a set of \textbf{labeled} (will be made precise next) upward curves of non-zero length connecting some subset of vertices on the boundary of $f^s$ to the gate of $s$, 
\item and a set of \textbf{labeled} upward curves connecting vertices from $X$ to the gate of $s$, 
\end{itemize}
with the following properties:
\begin{enumerate}
\item \label{def:ps-no-crossing-gamma} All parts of the partial solution are drawn inside $f^s$ (i.e., they do not intersect $\Gamma(H)$) and no missing vertex is placed on the gate of $s$.
\item No two curves intersect. 
\item No drawing is placed inside any of the leaves of $T_f$.
\item No missing vertex is placed inside a polar strip, and no two missing vertices are placed inside the same non-polar strip.
\item No two labeled curves share the same endpoint on the gate.
\item \label{def:ps-label} For every labeled curve, its label is a missing edge in $E_M$ such that the tail of this edge matches the non-gate endpoint of that curve.
\item \label{def:ps-placed-vertices} For every finished missing vertex $A \in X$, the following hold:
\begin{itemize}
    \item Every missing edge, say $e$, with the head $A$ is finished and there is no curve labeled $e$.
    \item For every missing edge $e$ with tail $A$, we have $e \in Y$ or there is a curve starting in $A$ with label $e$. 
\end{itemize}
\item \label{def:ps-E-M-2} For every missing edge $\{A, B\} \in E_M^2$ (i.e., with both end-vertices from $V_M$):
\begin{itemize}
    \item If both $A$ and $B$ belong to $X$, then the edge $\{A, B\}$ is finished by $S$, i.e., $\{A, B\} \in Y$, and no curve is labeled $\{A, B\}$ in $S$.
    \item If exactly one of $A$ and $B$ belongs to $X$, then there is exactly one curve labeled $\{A, B\}$ in $S$.
\end{itemize}
\end{enumerate}
\end{definition}

In the following, unless specified otherwise, by \emph{curve} we refer to a labeled curve and we sometimes refer to unlabeled curves as \emph{finished edges} as those are drawings of finished edges.
For every edge $e = \{v, A\} \in E_M^1$ (i.e., with exactly one end-vertex from $V_M$) with $v \in V(H)$ on the boundary of $f^s$ and $A \in V_M$, we say that $e$ is \emph{routed} by $S$ if it is finished by $S$ or there is a curve in $S$ starting in $v$, i.e., in the \textbf{pre-drawn} endpoint of this edge, and labeled with $e$.
Informally speaking, the edge $e \in E_M^1$ with a pre-drawn end-vertex $v \in V(H)$ is routed by $S$ if there is a ``possibly partial'' drawing of $e$ in $S$ starting in $v$ and either ending in its other end-vertex or on the gate of $s$; note that all finished edges and all curves in $S$ lie inside $f^s$ so in particular, this partial drawing leaves $v$ inside $f^s$.

\begin{remark*}
    Let us explicitly point out that, in general, a partial solution is allowed to ``draw'' an edge from $E_M^1$ multiple times: in particular, it is allowed to have multiple curves with the same label, or to have multiple drawings of the same edge in $Y$.
\end{remark*}
Note that every solution to an instance $\mathcal{I}$ of \upef\ induces a partial solution for $s$ that is obtained by merely removing everything outside of $f^s$ and labeling every curve with the edge it is representing.
 
The \emph{type} of a missing edge $e=\{a,b\} \in E_M$ is defined as $\{a,b\}\cap V_M$, i.e., the subset of its endpoints that are missing.
Recall that every edge in $E_M$ has at least one end-vertex in~$V_M$.
For simplicity, we identify the set $V_M$ with the set of all singletons-sets consisting of some element of $V_M$, i.e., $\{\{A\} \mid A \in V_M\}$.
This way we may assume that the type of every missing edge is an element of the set $V_M \cup {V_M \choose 2}$.
The \emph{type} of a curve in a partial solution is the type of its label.

Since no two curves have the same endpoint on the gate, there is a well-defined left-to-right ordering in which these curves reach the gate.
Let $t_1, \dots, t_q$ be the types of curves reaching this gate in the order they reach this gate from left to right.
Now we are ready to define the \emph{record} of $S$: this is a tuple $(\omega, X, E^1, \Phi = (\Phi^A)_{A \in X})$ defined as follows:
\begin{enumerate}
    \item $\omega$, called the \emph{gate word}, is the word over the alphabet $V_M \cup {V_M \choose 2}$ obtained from the word $t_1 \dots t_q$ as follows.
    Let $r_1, \dots, r_p \in V_M \cup {V_M \choose 2}$ and $\alpha_1, \dots, \alpha_p \in \mathbb{N}^+$ be such that~$t_1 \dots t_q = r_1^{\alpha_1} r_2^{\alpha_2} \dots r_p^{\alpha_p}$ and $r_i \neq r_{i+1}$ for all $i \in [p-1]$.
    Note that the choice of~$r_1, \dots, r_p$ and $\alpha_1, \dots, \alpha_p$ is unique.
    Then we define $\omega$ as $\omega = r_1 r_2 \dots r_p$, i.e., $\omega$ is obtained from $t_1 \dots t_q$ by replacing every maximal non-empty subword that consists of multiple occurrences of the same letter with a single occurrence of this letter.
    \item $X$ is the above-mentioned set of missing vertices finished by $S$.
    \item \label{def:record-phi} For every $A \in X$, the set $\Phi^A$ is a subset of $\Delta_A(f \setminus f^s)$ with the following property: for every $\gamma \in \Delta_A(f \setminus f^s)$, we have $\gamma \in \Phi^A$ if and only if $S$ contains a curve ending on the gate of $s$ and labeled $Av$ for some $v \in \gamma$.
    \item \label{def:record-e1} Let $E' \subseteq E_M^1$ denote the set of edges routed by $S$, that is, the edges in $E_M^1$ finished by $S$ as well as edges $e \in E_M^1$ for which there is a curve labeled with $e$ in $S$ that starts in its \textbf{pre-drawn} endpoint on the boundary of $f^s$. Then it holds that 
    \[
        E^1 = E' \setminus \{Bv \mid B \in X, v \in \Phi^B\}.
    \]
\end{enumerate}

To elaborate on the use of the gate word, suppose we have a set of curves that are all labeled with edges of the same type (say $A$) and their endpoints on the gate are consecutive.
Intuitively, it may be useful to think of this set of curves as a ``thread'',  imagine that the endpoints on the gate are very close to each other, and fix $A \notin X$ (as the other case is symmetric).
From this gate on, all of these curves can be essentially treated as a single curve since they all head towards $A$: if one of these curves can be extended to an upward planar curve reaching $A$, then all of them can follow this curve very closely to reach $A$ in an upward planar way. 
Therefore, to obtain $\omega$, we eliminate all repetitions of $t_1 \dots t_q$.
In our proofs, we will sometimes say that these curves \emph{correspond} to a certain occurrence of $A$ in the gate word.

We now formalize the notion of being ``well-behaved'' for partial solutions. 
Crucially, we then use the insights from \cref{sub:onesided} (in particular, \cref{lem:everywhere-clean-solutions}) to prove that one can safely focus solely on such ``well-behaved'' partial solutions. 

\begin{definition}\label{def:clean-ps}
A partial solution $S$ of $f^s$ is called \emph{clean} if the following holds. 
Let~$X \subseteq V_M$ denote the set of missing vertices placed by $S$. 
First, for every $A \in V_M \setminus X$, every inner~$A$-bundle $\gamma$ of $s$ (i.e., $\gamma \in \Delta_A(f^s)$), and every pair $u, v \in \gamma$, if $\{u, A\}$ is routed by~$S$, then~$\{v, A\}$ is also routed by $S$.
Second, for every $A \in X$, every outer $A$-bundle~$\gamma$ of $s$ (i.e.,~$\gamma \in \Delta_A(f \setminus f^s)$), and every pair $u, v \in \gamma$, if there is a curve labeled $\{u, A\}$ in $S$ ending on the gate of $s$, then there is also a curve labeled $\{v, A\}$ in $S$ ending on the gate of $s$.
\end{definition}

Observe that the restriction of a solution clean at $s$ to $f^s$ is a clean partial solution of $f^s$.

\begin{definition}
We say that a partial solution $S$ at $s$ is \emph{realizable} if there exists a solution $S'$ of $\mathcal{I}$ such that 
\begin{itemize}
    \item the restriction of $S'$ to $f^s$ coincides with the drawing of $S$, and
    \item for every edge $e$ whose drawing in $S'$ crosses the gate, the curve in $S$ that coincides with the drawing of $e$ in $f^s$ is labeled with $e$.
\end{itemize}
In this case, we simply say that the restriction of $S'$ to $f^s$ \emph{coincides} with $S$.
\end{definition}

In the remainder of this section, we establish a series of properties that are present in all realizable partial solutions (Lemmas~\ref{lem:respecting-internal-vertices}-\ref{lem:respecting-partial-solutions}) and use these individual insights to prove that the number of records required for our dynamic program---formalized in \cref{def:acceptable}---is small enough to support an efficient dynamic programming procedure (\cref{thm:relevant-records}).

\begin{definition}[Respecting internal vertices]
    We say that a partial solution $S$ of $s$ \emph{respects internal vertices}, if for every vertex $v$ on the boundary of $f^s$ that is internal to $f^s$, every missing edge incident with $v$ is routed by $S$.
\end{definition}

\begin{lemma}[Part of \textbf{Lemma 7} in the main body]
\label{lem:respecting-internal-vertices}
    Every realizable partial solution respects internal vertices.
\end{lemma}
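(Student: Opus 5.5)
The plan is to unfold the definitions and argue directly from a solution that witnesses realizability. Let $S$ be a realizable partial solution of $s$, and let $S'$ be a solution of $\mathcal{I}$ whose restriction to $f^s$ coincides with $S$ (including labels of curves crossing the gate). Fix a vertex $v$ on the boundary of $f^s$ that is internal to $f^s$. Fix also a missing edge $e$ incident with $v$. Since $H$ is induced, $e=\{v,A\}\in E_M^1$ for some $A\in V_M$. We must show that $e$ is routed by $S$, i.e., that $e$ is finished by $S$ or that $S$ contains a curve labeled $e$ starting in $v$.

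First, consider the drawing $\phi$ of $e$ in $S'$. It leaves $v$ into some strip $t$ of $f$ incident with $v$: all missing edges are drawn inside $f$, and $\phi$ leaves $v$ in an arbitrarily small neighbourhood of $v$ inside $f$, which lies in strips touching $v$. By the definition of internal vertices, every strip incident with $v$ lies in $f^s$, so $t\subseteq f^s$. Note also that $v\neq v^*(f^s)$, so $v$ does not lie on the gate of $s$. Hence $\phi$ genuinely starts inside $f^s$.

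Next, split into two cases according to where $A$ is placed by $S'$.

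\emph{Case $A\in X$.} Then $A$ lies in $f^s$. The curve $\phi$ is upward, so it crosses the (horizontal) gate of $s$ at most once. If it starts and ends in $f^s$, it therefore never crosses the gate and lies entirely in $f^s$. Since $S$ coincides with the restriction of $S'$ to $f^s$, $\phi$ is a drawing of $e$ in $S$. It is also consistent with property~\ref{def:ps-placed-vertices}, so $e$ is finished (or, if $A$ is the tail, it is represented by a curve starting at $A$ — but then it starts in $v$ only if $v$ is the tail). I need to handle orientation carefully here: if $e$ is finished, it is in $Y$ and is routed. Since $\phi$ lies fully in $f^s$, it is an unlabeled finished curve of $S$.

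\emph{Case $A\notin X$.} Then $\phi$ starts in $f^s$ and ends outside it, so it crosses the gate of $s$ exactly once. Its restriction to $f^s$ is a curve from $v$ to the gate. By realizability, the curve of $S$ coinciding with this restriction is labeled $e$, so $S$ contains a curve labeled $e$ starting in $v$. Thus $e$ is routed.

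The main subtlety I expect is the first step. I must justify that the initial piece of $\phi$ near $v$ really lies in a strip incident with $v$, and not on a gate or outside $f$. The first point follows from general position and upwardness: a small neighbourhood of $v$ intersected with $f$ is covered by the strips touching $v$, and gates through $v$ are horizontal, so an upward curve leaves them immediately. The second point is ensured because solutions place everything in $f$. A second point to check is that the realizability labeling convention indeed forces the label to be $e$ itself and not another edge drawn along the same curve; this holds because curves in $S'$ are pairwise non-crossing drawings of distinct edges.
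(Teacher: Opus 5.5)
Your proof is correct and follows the same idea as the paper's: because $v$ is internal to $f^s$, every drawing of $\{v,A\}$ in the witnessing solution $S'$ leaves $v$ inside a strip of $f^s$, so the edge is routed by $S$. Your case split on $A\in X$ versus $A\notin X$ (a finished edge versus a curve labeled $e$ that ends on the gate) only spells out what the paper's two-line argument leaves implicit. In the first case, the detour through the placed-vertex condition of a partial solution is unnecessary: $\phi\subseteq f^s$ does not end on the gate, so it is already an unlabeled curve of $S$ finishing $e$.
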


\begin{proof}
    Let $v$ be an internal vertex of $f^s$, i.e., every strip incident with $v$ belongs to $f^s$.
    Then in any solution $S'$, any curve representing an edge incident with $v$ leaves $v$ in a strip that belongs to $f^s$.
    Therefore, for the partial solution $S$ which coincides with $S'$ on $f^s$, all edges incident with $v$ are routed by $S'$.
\end{proof}

\begin{definition}[Respecting non-interesting vertices]
    Let $S$ be a partial solution of $s$ and let~$X$ be the set of vertices placed by $S$. 
    We say that $S$ \emph{respects non-interesting vertices} if 
    the following holds for every $A \in X$. 
    If $v$ is a vertex of $f^s$ adjacent to $A$ and not interesting with respect to $f \setminus f^s$, then $\{v, A\}$ is routed by $S$.
\end{definition}

\begin{lemma}[Part of \textbf{Lemma 7} in the main body]
\label{lem:respects-non-interesting}
    Every partial solution respects non-interesting vertices.
\end{lemma}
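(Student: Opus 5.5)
The plan is a direct case analysis on the orientation of the edge $\{v,A\}$, using only the axioms of \cref{def:partial-solution} and the geometry of the gate. No appeal to a full solution is needed. Fix a partial solution $S$ of $s$ with placed set $X$, a vertex $A\in X$, and a pre-drawn vertex $v$ on the boundary of $f^s$ that is adjacent to $A$ and not interesting with respect to $f\setminus f^s$. Since $v\in V(H)$, the edge $\{v,A\}$ lies in $E_M^1$. By the definition of routed, it suffices to show that either $\{v,A\}\in Y$ (finished), or there is a curve labeled $\{v,A\}$ starting in $v$.

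\emph{Case 1: $vA\in E_M$, i.e.\ $A$ is the head.} Property~\ref{def:ps-placed-vertices} of \cref{def:partial-solution} says that every missing edge with head $A\in X$ is finished by $S$. Hence $vA\in Y$, so the edge is routed and we are done. This case uses no geometry at all.

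\emph{Case 2: $Av\in E_M$, i.e.\ $A$ is the tail.} Property~\ref{def:ps-placed-vertices} gives two options: $Av\in Y$, in which case the edge is routed, or $S$ contains a curve starting in $A$, labeled $Av$, and ending on the gate of $s$. Only the second option needs work, and I would rule it out using the non-interestingness of $v$.

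\begin{itemize}
\item Under the standing assumption that the parent lies above $s$, the curve is upward from $A$ to the gate $g$ of $s$. For $Av$ to be drawable, it would have to continue from $g$ upward inside $f\setminus f^s$ and reach $v$. The portion after $g$ is then exactly an upward curve from $g$ to $v$ that avoids $\Gamma(H)$ and lies in $f\setminus f^s$, which is what it means for $v$ to be interesting with respect to $f\setminus f^s$. This contradicts the hypothesis.
\item In the symmetric orientation (parent below $s$), the labeled curves are drawn upward from the gate to $A$. A curve labeled $Av$ with its non-gate endpoint at the tail $A$ would then point downward away from its tail. I would use this to exclude the curve in that orientation as well.
\end{itemize}

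The main obstacle is making Case~2 rigorous for an arbitrary partial solution rather than a realizable one. The axioms do not explicitly require a labeled curve to be geometrically extendable to its head. The first thing I would try is to argue purely from upwardness and from the fact that the gate separates $f^s$ from $f\setminus f^s$, both of which are guaranteed by \cref{def:partial-solution}. Concretely, I would show that the label $Av$, combined with the side of $g$ on which $v$ lies, forces $v$ to be reachable from $g$ by an upward $\Gamma(H)$-avoiding curve. If this does not go through directly, the fallback is to read the notion of respecting non-interesting vertices as a property that is enforced on every record kept by the dynamic program. In that reading, the lemma reduces to the two applications of property~\ref{def:ps-placed-vertices} above, together with the observation that a dangling labeled curve from $A$ toward a non-interesting $v$ is not counted as routing.
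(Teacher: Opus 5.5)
Your Case~1 is correct and matches the paper's argument. Case~2, however, cannot be completed as you set it up, because for arbitrary partial solutions the statement is false.

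Nothing in \cref{def:partial-solution} prevents the following. Take some $A\in X$ and some $Av\in E_M^1$ with $v$ on the boundary of $f^s$ and not interesting with respect to $f\setminus f^s$. Let $S$ contain a curve that starts in $A$, is labeled $Av$, and ends on the gate of $s$, while $Av\notin Y$. This is allowed: \cref{def:ps-label} only asks that the tail of the label be the non-gate endpoint, and \cref{def:ps-placed-vertices} is satisfied by that very curve. An edge of $E_M^1$ counts as routed only if it is finished or a curve labeled with it starts in its pre-drawn endpoint $v$. So this $S$ does not route $Av$.

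Consequently, your first plan (forcing reachability of $v$ from the label and upwardness alone) cannot succeed. Your fallback merely replaces the lemma by \cref{def:acceptable-non-interesting} of \cref{def:acceptable}, which is circular: \cref{thm:relevant-records} justifies that condition for realizable partial solutions precisely by this lemma. Your phrase ``for $Av$ to be drawable'' is exactly where an unstated hypothesis enters.

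That hypothesis is realizability, and it is what the paper's proof actually uses. Despite the wording of the statement, the proof starts from a realizable $S$, which is also the only way the lemma is applied in \cref{thm:relevant-records}. Fix a solution $S'$ whose restriction to $f^s$ coincides with $S$. The drawing of $Av$ in $S'$ starts at $A$ inside $f^s$. If it crossed the gate $g$ of $s$, it would do so exactly once. Its part after $g$ would then be an upward curve from $g$ to $v$ inside $f\setminus f^s$ avoiding $\Gamma(H)$, so $v$ would be interesting. Hence this drawing lies in $f^s$, and $Av$ is finished by $S$.

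In your own framing the same fix works. By the labeling condition of realizability, a curve of $S$ labeled $Av$ and ending on $g$ would be the restriction of a drawing of $Av$ in $S'$ that crosses $g$, which is impossible. So \cref{def:ps-placed-vertices} leaves only $Av\in Y$.

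Separately, your treatment of the mirrored orientation is incorrect. There the axioms are mirrored too, so the edges with tail $A$ are the ones automatically finished. The non-interestingness argument is needed instead for edges $vA$ with head $A$. Keeping the unmirrored labeling rule and observing that the curve ``points downward'' does not address that case. It is cleaner to simply invoke symmetry, as the paper does.
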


\begin{proof}
    Let $S$ be a realizable partial solution of $s$, let $X$ be the set of vertices placed by $S$, and let $S'$ be a solution whose restriction to $f^s$ coincides with $S$.
    If $A$ is the head of $\{v, A\}$, the claim follows from the fact that the definition of the partial solution (\cref{def:ps-placed-vertices}). 
    
    Hence, we may assume that $A$ is the tail of $\{v, A\}$, i.e., we are speaking about the edge~$Av$.
    The vertex $v$ is not interesting with respect to $f \setminus f^s$ so there is no upward curve from the gate of $s$ to $v$ in $f \setminus f^s$.
    The vertex $A$ is placed inside $f^s$.
    If the curve representing $Av$ would cross the gate of $s$, there would then be an upward curve from the gate to $v$.
    Therefore, this curve lies inside $f^s$ and it is, in particular, routed by $S$.
\end{proof}

\begin{definition}
    We say that a word $\omega$ over $V_M \cup {V_M \choose 2}$ is \emph{short} if
    every letter from ${V_M \choose 2}$ occurs in $\omega$ at most once, every letter from $V_M$ occurs in $\omega$ at most $k+{k \choose 2}$ times, and no two consecutive letters in $\omega$ coincide. 
    A partial solution of $s$ is \emph{short}, if for its record $(\omega, X, E_1, \Phi)$, the gate word $\omega$ is short.
\end{definition}

\begin{observation}\label{obs:short-words-cubic}
    Every short word over $V_M \cup {V_M \choose 2}$ has length at most ${k \choose 2} + k (k+{k \choose 2}) \in \mathcal{O}(k^3)$.
    The number of short words over $V_M \cup {V_M \choose 2}$ is therefore bounded by $\mathcal{O}(k^2)^{\mathcal{O}(k^3)} = k^{\mathcal{O}(k^3)}$.
\end{observation}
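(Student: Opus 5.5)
The statement is purely a counting bound, so I would prove it by bounding the length of a short word and then counting words of at most that length.

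\emph{Step 1: the length.} Let $\omega$ be a short word. Each letter of ${V_M \choose 2}$ contributes at most one occurrence, giving at most ${k \choose 2}$ positions. Each of the $k$ letters of $V_M$ contributes at most $k+{k \choose 2}$ occurrences, giving at most $k(k+{k \choose 2})$ positions. Summing these,
\[
|\omega| \le {k \choose 2} + k\left(k+{k \choose 2}\right) = \frac{k^3}{2}+\frac{k^2}{2}+\frac{k(k-1)}{2}\in \mathcal{O}(k^3).
\]
The third condition in the definition of shortness, that no two consecutive letters coincide, is not needed for the upper bound and only further restricts the set of words.

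\emph{Step 2: the count.} The alphabet $V_M \cup {V_M \choose 2}$ has size $k+{k \choose 2}\in\mathcal{O}(k^2)$. Let $L\in\mathcal{O}(k^3)$ denote the length bound from Step 1. The number of words of length at most $L$ over an alphabet of size $\sigma$ is $\sum_{\ell=0}^{L}\sigma^\ell \le (L+1)\sigma^{L}$, which for $\sigma\ge 2$ is at most $\sigma^{L+1}$.

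Plugging in $\sigma\in\mathcal{O}(k^2)$ and $L\in\mathcal{O}(k^3)$ gives $\mathcal{O}(k^2)^{\mathcal{O}(k^3)}$. Since $k^2$ raised to $\mathcal{O}(k^3)$ is still $k^{\mathcal{O}(k^3)}$, and constant factors in the base are absorbed into the exponent for $k\ge 2$, this equals $k^{\mathcal{O}(k^3)}$.

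\emph{Edge cases.} For $k\le 1$ the alphabet has at most one letter. No two consecutive letters may coincide, so there are at most two short words: the empty word and the single letter. This constant is trivially covered.

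\emph{Main obstacle.} There is no real obstacle here. The only care needed is the bookkeeping of absorbing the constant factors and the $(L+1)$ factor into the exponent, so that the final bound is written as $k^{\mathcal{O}(k^3)}$.
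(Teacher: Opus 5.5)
Your argument is correct and is exactly the direct count the paper intends; the paper states this as an observation without writing out a proof. You sum the per-letter occurrence caps to bound the length, then count words of that length over an alphabet of size $k+{k \choose 2}$. One small precision: the bound $\sigma^{L+1}$ holds for the geometric sum $\sum_{\ell=0}^{L}\sigma^\ell$ itself, but not for the intermediate $(L+1)\sigma^{L}$, which exceeds $\sigma^{L+1}$ once $L+1>\sigma$. Either expression is $k^{\mathcal{O}(k^3)}$, so the conclusion is unaffected.
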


\begin{lemma}[Part of \textbf{Lemma 6} in the main body]
\label{lem:short-partial-solutions}
    Every realizable partial solution is short.  
\end{lemma}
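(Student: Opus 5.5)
Let $S$ be realizable, witnessed by a solution $S'$ whose restriction to $f^s$ coincides with $S$. We must check three things about its gate word $\omega$: consecutive letters differ, each letter of ${V_M \choose 2}$ occurs at most once, and each letter of $V_M$ occurs at most $k+{k\choose 2}$ times. Consecutive letters differ by the construction of $\omega$, since maximal runs are collapsed. For a pair letter $\{A,B\}$, only one missing edge has this type. By property~\ref{def:ps-E-M-2} of partial solutions, the curves of this type that reach the gate are exactly the pieces of the drawing of $\{A,B\}$ in $S'$ that cross the gate. An upward curve crosses a horizontal gate at most once, so there is at most one such curve, and hence at most one occurrence of $\{A,B\}$ in $\omega$.

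The main work is bounding the occurrences of a singleton letter $A$. All curves of type $\{A\}$ on the gate are pieces of edges $\{v,A\}$ with $v\in V(H)$, and all of them head to (or come from) the same vertex $A$ on the same side of the gate. Consider two maximal runs of type-$A$ curves, separated in the left-to-right order on the gate by some curve $\kappa$ of a different type. Extend the two separated $A$-curves in $S'$ on the side of the gate where $A$ lies, so that both reach $A$. Together with the segment of the gate between their endpoints, these extensions form a closed curve $Z$ that avoids $\Gamma(H)$ except possibly at its ends. The curve $\kappa$ enters the region bounded by $Z$. By planarity, the whole edge containing $\kappa$ stays on that side, so the endpoint of that edge on that side lies strictly inside $Z$.

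That endpoint is either a missing vertex $B\neq A$ or a pre-drawn vertex adjacent to some missing $B\neq A$. To charge the separation to $B$, we first show that the missing vertices inside such a region can be handled through the pair types. Any type-$\{A,B\}$ curve is unique, so it can separate runs at most once. Otherwise, $\kappa$ has type $\{B\}$ with $B\neq A$. If two different separators both had type $\{B\}$ and lay in nested-but-disjoint regions cut out by $A$-curves, then the $\{B\}$-edges would have to reach the single vertex $B$ from two regions that are separated by $A$-edges and the gate. This is impossible by planarity, with the argument essentially the same as the proof of \cref{lem:enclosed-vertices-use-same-side}.

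So the separating pockets between consecutive $A$-runs can be injectively associated to distinct elements of $V_M\setminus\{A\}$ or to distinct pair types, with some care about pockets on the far side of the gate. The pockets should be arranged as a laminar family of regions bounded by $A$-curves all ending at $A$, whose complements toward $A$ do not interact. This gives at most $(k-1)+{k\choose 2}\le k+{k\choose 2}$ separators, hence at most $k+{k\choose 2}$ occurrences of $A$.

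The obstacle I expect is this planarity charging argument. The difficulty is proving rigorously that a fixed $B$ (or a fixed pair letter) cannot separate two different pairs of consecutive $A$-runs. I would argue it via Jordan curves: the curves $Z$ for consecutive runs bound pairwise interior-disjoint regions on $A$'s side of the gate. The vertex $B$, or the unique edge of a pair type, can lie in at most one of them, because every edge of type $\{B\}$ has $B$ as an endpoint and cannot cross into another region.
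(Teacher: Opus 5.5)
Your skeleton matches the paper's argument: pigeonhole over the gaps between consecutive $A$-runs, and pockets bounded by two $A$-curves and a gate subsegment that have disjoint interiors. But the step that makes the charging injective is missing.

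You correctly observe that the endpoint of the separating edge on $A$'s side lies inside the pocket, and that this endpoint is either $B$ itself or a pre-drawn neighbour of $B$. Your injectivity claim, ``the vertex $B$ can lie in at most one of them'', then silently assumes the first case. Consider instead the case where $B$ is placed on the opposite side of the gate from $A$, and the pre-drawn endpoints $v_1,v_2$ of two type-$\{B\}$ edges sit in two different pockets. Nothing in your argument yields a contradiction here. Both $\{B\}$-edges simply cross the gate once and meet at $B$ on the other side. So a single letter $B$ could be charged for two gaps, and your bound on the occurrences of $A$ does not follow.

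The missing idea, which the paper supplies, is that no pre-drawn vertex can lie in a pocket at all.
\begin{itemize}
\item The pocket's boundary consists of the parts of two missing edges between the gate and $A$, which contain no point of $\Gamma(H)$, and a gate subsegment whose interior avoids $\Gamma(H)$. So the boundary is disjoint from $\Gamma(H)$.
\item The $A$-curves are upward, so the boundary lies strictly on $A$'s side of the gate's horizontal line except along that subsegment. Hence the endpoints of the gate, which lie on $\Gamma(H)$, are outside the pocket.
\item Since $H$ is connected, all of $\Gamma(H)$ lies outside the pocket.
\end{itemize}
Only then can you conclude that the endpoint inside the pocket is the missing vertex $B$. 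This is exactly where the connectivity of $H$ enters, and your proposal never uses it.

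With this in place, two gaps containing type-$\{B\}$ curves (the paper's $ABABA$ pattern) force $B$ into two interior-disjoint pockets, which is a contradiction. Your remarks about laminar families and ``pockets on the far side of the gate'' become unnecessary, since every pocket lies on $A$'s side. The count of $m-1$ gaps against at most $(k-1)+\binom{k}{2}$ separator letters then finishes the proof.
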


\begin{proof}
    Let $S$ be a realizable partial solution of $s$, let $(\omega, X, E_1, \Phi)$ be the record of $S$, and let $S'$ be a solution whose restriction to $f^s$ coincides with $S$.
    By the definition of $\omega$, no two consecutive letters of $\omega$ are equal.
    Since for every $t \in {V_M \choose 2}$, the graph $G$ contains at most one edge of type $t$ and this edge crosses the gate of $s$ at most once, the letter $t$ appears at most once in $\omega$.
    So it remains to show that every letter from $V_M$ occurs at most once in $\omega$. 
    
    If $k = 1$, then the claim holds since 
    the alphabet consists of only one letter and it cannot repeat by definition of the gate word of a partial solution.
    So from now on let $k \geq 2$.
    Now suppose there is a letter $A \in V_M$ that occurs at least $k+{k \choose 2}+1$ times in $\omega$.
    Recall that no two consecutive letters in $\omega$ are equal.
    So between any two occurrences of $A$, there is an occurrence of a letter $(V_M \setminus \{A\}) \cup {V_M \choose 2}$.
    Again, every letter from ${V_M \choose 2}$ occurs at most once in $\omega$ and there are $k-1$ letters in $V_M \setminus \{A\}$.
    So the fact that $A$ occurs at least $k+{k \choose 2} + 1$ times in $\omega$ implies that there exists a letter $B \in V_M$ such that $A B A B A$ is a subsequence of~$\omega$.
    So there exist curves $\phi_A^1, \phi_B^1, \phi_A^2, \phi_B^2, \phi_A^3$ in $S$ representing the edges of types~$A$,~$B$,~$A$,~$B$,~$A$, respectively, that end on the gate of $s$ and the order in which these curves end on the gate from left to right is given by $\phi_A^1, \phi_B^1, \phi_A^2, \phi_B^2, \phi_A^3$.

    \begin{figure}[t]
        \centering
        \includegraphics[scale=1.5]{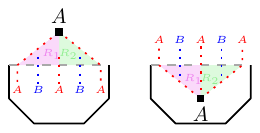}
        \caption{Proof of \cref{lem:short-partial-solutions}. If the gate word contains $ABABA$ as a subsequence for some $A \neq B \in V_M$, then the vertex $B$ must lie inside both $R_1$ and $R_2$ contradicting their disjointness. On the left resp.\ right, the vertex $A$ is placed inside resp.\ outside $f^s$.}
        \label{fig:short-words}
    \end{figure}
    
    We consider two internally disjoint connected regions $R_1$ and $R_2$ which are defined by the solution $S'$ as follows (see \cref{fig:short-words} for an illustration). 
    The boundary of the region $R_1$ follows the curve $\phi_A^1$ from $A$ to the gate of $s$, then it follows the gate till the intersection with~$\phi_A^2$, and then it follows the curve~$\phi_A^2$ till $A$.
    Similarly, the boundary of $R_2$ follows the curve~$\phi_A^2$ from $A$ to the gate of $s$, then it follows the gate till the intersection with~$\phi_A^3$, and then the it follows curve $\phi_A^3$ till $A$.
    Observe that the interiors of $R_1$ and $R_2$ are indeed disjoint. 
    Moreover, no pre-drawn vertex or edge lies inside of these two regions for the following reason. 
    The drawing is planar so no edge crosses either of the curves $\phi_A^1$, $\phi_A^2$, and~$\phi_A^3$ and by definition of a strip, no pre-drawn edge can cross the interior of a gate.
    The graph~$H$ is connected and since no pre-drawn edge crosses the gate between the endpoints of~$\phi_A^1$ and~$\phi_A^3$, no pre-drawn vertex lies in $R_1 \cup R_2$.

    Now recall that the curve $\phi_B^1$ resp.\ $\phi_B^2$ intersects the gate between the curves $\phi_A^1$ and $\phi_A^2$ resp.\ between $\phi_A^2$ and $\phi_A^3$.
    Let $e_B^1$ resp.\ $e_B^2$ denote the edges corresponding to $\phi_B^1$ and $\phi_B^2$ in~$S'$.
    For both these edges one end-vertex is $B$ and the other is pre-drawn.
    The edge $e_B^1$ resp.~ $e_B^2$ crosses the boundary of $R_1$ resp.\ $R_2$ exactly once so exactly one of its end-vertices lies inside $R_1$ resp.\ $R_2$.
    No pre-drawn vertex lies inside either of $R_1$ and $R_2$ so $B$ lies in both~$R_1$ and~$R_2$ contradicting the disjointness of the interiors of $R_1$ and $R_2$.
\end{proof}

\begin{definition}[Respecting enclosing]
    Let $S$ be a partial solution of $s$. 
    We say that a partial solution $S$ of $s$ \emph{respects enclosing} if for every pair $A \neq B \in V_M$ the following holds.
    Let~$f' = f^s$ if $A \notin X$ and let $f' = f \setminus f^s$ otherwise.
    For every pair $u \neq v$ of vertices external to $f'$ that are both adjacent to $A$ and both $B$-enclosed with respect to~$f'$, either both edges~$\{u, A\}$ and $\{v, A\}$ are routed by $S$ or both are not.
\end{definition}
Let us remark that for $A \in X$ (and therefore, $f' = f \setminus f^s$), one of the vertices, say $u$, might, in general, not be on the boundary of $f^s$: In this case, the edge $\{u, A\}$ cannot be routed by~$S$, and therefore if $S$ respects enclosing the edge $\{v, A\}$ is not routed by~$S$.

\cref{lem:enclosed-vertices-use-same-side} directly implies the following:
\begin{corollary}[Part of \textbf{Lemma 7} in the main body]\label{lem:respecting-partial-solutions}
    Every realizable partial solution respects enclosing.
\end{corollary}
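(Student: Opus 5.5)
The plan is to reduce the statement to \cref{lem:enclosed-vertices-use-same-side}, applied to the correct one-sided region. Let $S$ be a realizable partial solution of $s$ with placed set $X$, and let $S'$ be a solution whose restriction to $f^s$ coincides with $S$. Fix $A \neq B \in V_M$ and set $f' = f^s$ if $A \notin X$ and $f' = f \setminus f^s$ otherwise; both are one-sided regions. Take two vertices $u \neq v$ that are external to $f'$, adjacent to $A$ and $B$-enclosed with respect to $f'$, i.e., $u, v \in \enclosed^B_A(f')$. Since $S'$ is a solution of $\mathcal{I}$, \cref{lem:enclosed-vertices-use-same-side} (with the order of $u,v$ along the outer boundary chosen as $i<j$) gives that in $S'$ the edges $\{u,A\}$ and $\{v,A\}$ are either both routed inside $f'$ or both routed inside $f \setminus f'$.

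It remains to translate ``routed inside $f'$ in $S'$'' into ``routed by $S$''. In the case $A \notin X$, we have $f' = f^s$. An edge $\{w,A\}$ with $w$ pre-drawn is routed by $S$ exactly when its drawing leaves $w$ inside $f^s$. If $A$ is placed, its drawing is finished; otherwise it crosses the gate and appears as a curve labeled $\{w,A\}$ starting at $w$, and realizability guarantees the label. So ``routed by $S$'' coincides with ``routed inside $f^s$ by $S'$'', and the two edges agree.

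In the case $A \in X$, we have $f' = f \setminus f^s$, and by the lemma both edges are routed inside $f\setminus f^s$ or both inside $f^s$. Routed inside $f^s$ in $S'$ again means routed by $S$. Since a solution routes each edge on exactly one side (the lemma's parenthetical ``and not routed inside''), ``routed inside $f\setminus f^s$'' means not routed by $S$. In both subcases the two edges agree, as required.

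The only subtle point I expect is the remark after the definition: for $A \in X$ one of $u,v$ might not lie on the boundary of $f^s$. Then that edge necessarily leaves its pre-drawn endpoint in $f\setminus f^s$, so \cref{lem:enclosed-vertices-use-same-side} forces the other edge out of $f^s$ as well, and neither edge is routed by $S$. This is consistent with the claim. I would also note the caveat about multiple drawings per edge allowed in baseline solutions. I would handle it by applying the lemma to a solution $S'$ in which each edge has a single drawing, or by observing that the lemma's proof forbids any drawing on the opposite side.
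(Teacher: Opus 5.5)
Your proposal is correct and takes the paper's route: the paper records this corollary as a direct consequence of \cref{lem:enclosed-vertices-use-same-side}, applied to a solution realizing $S$ with $f'$ chosen as in the definition of respecting enclosing, and your translation between ``routed by $S$'' and ``routed inside $f^s$ by $S'$'' spells out what the paper leaves implicit. One thing to state explicitly: you need the exclusive form of the lemma (``and not routed inside'') in the case $A \notin X$ as well, because a baseline solution may draw an edge more than once, so knowing only that both edges are routed on a common side would not suffice.
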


\begin{definition}\label{def:acceptable}
    For non-root strip $s$, we call a tuple $R = (\omega, X, E^1, \Phi = (\Phi^A)^{A \in X})$ 
    with~$X \subseteq V_M$, $\omega$ being a word over $V_M \cup {V_M \choose 2}$, $E_1 \subseteq E_M^1$, and $\Phi^A \subseteq \Delta_A(f \setminus f^s)$ for every $A \in X$ an \emph{acceptable record} for $s$ if the following properties hold:
    \begin{enumerate}
        \item \label{def:acceptable-short} $\omega$ is short,
        \item \label{def:acceptable-internal} $E^1$ \emph{respects internal vertices}, i.e., for every vertex $v$ internal to $f^s$ and every edge~$\{v, A\} \in E_M^1$, we have $\{v, A\} \in E^1$.
        \item \label{def:respects-internal-outside} For every vertex $v$ internal to $f \setminus f^s$ and every $\{v, A\} \in E_M^1$, we have $\{v, A\} \notin E^1$.
        \item \label{def:acceptable-non-interesting} $E^1$ \emph{respects non-interesting vertices}, i.e., for every vertex $v$
        not interesting with respect to $f \setminus f^s$ and every $A \in X$ with $\{A, v\} \in E^1_M$, we have $\{v, A\} \in E^1$,
        \item \label{def:acceptable-non-interesting-inside} For every vertex $v$ 
        not interesting with respect to $f^s$ and every $A \in V_M \setminus X$ with~$\{v, A\} \in E^1_M$, we have $\{v, A\} \notin E^1$,
        \item \label{def:acceptable-enclosing} $E^1$ \emph{respects enclosing}, i.e., the following condition is satisfied for every pair $A \neq B \in V_M$.
        Let $f' = f^s$ if $A \notin X$ and let $f' = f \setminus f^s$ otherwise.
        For every pair $u, v$ of vertices external to~$f'$ that are both adjacent to $A$ and both $B$-enclosed with respect to $f'$, i.e.,~$u, v \in \enclosed^B_A(f')$, if we have $\{u, A\} \in E^1$, then we also have $\{v, A\} \in E^1$.
        \item \label{def:acceptable-inner-bundles} For every $A \in V_M \setminus X$, every inner $A$-bundle $\gamma$ of $s$, and every $u, v \in \gamma$, if $uA \in E^1$, then we also have $vA \in E^1$.
        \item \label{def:acceptable-E-M-2} For every pair $A \neq B \in V_M$ of vertices, it holds that $\{A, B\}$ occurs in $\omega$ if and only if we have $\{A, B\} \in E_M^2$, we have $|X \cap \{A, B\}| = 1$, and the unique element of this set is the tail of the edge $\{A, B\}$.
        \item \label{def:acceptable-not-placed} For every vertex $A \in V_M \setminus X$ and every edge $Av \in E_M^1$ we have $Av \notin E^1$.
        \item \label{def:acceptable-incoming edges} For every vertex $A \in X$ and every $vA \in E_M^1$ we have $vA \in E^1$.
        \item \label{def:acceptable-outgoing-edges} For every $A \in X$ and every $v \notin V^A(f \setminus f^s)$ with $Av \in E^1_M$, we have $Av \in E^1$ or $A$ occurs in $\omega$.
        \item \label{def:acceptable-outer-bundles-routed} For every $A \in X$, every $\gamma \in \Delta_A(f \setminus f^s) \setminus \Phi^A$ we have $\{Av \mid v \in \gamma\} \subseteq E^1$.
        \item \label{def:acceptable-outer-bundles-letter} For every $A \in X$, if $\Phi^A \neq \emptyset$, then $\omega$ contains the letter $A$.
        \item \label{def:acceptable-A-letter} For every $A \in X$, if $A$ occurs in $\omega$, then there exists a vertex $v$ with $Av \in E_M^1$ and $v \notin \gamma$ for every $\gamma \in \Delta_A(f \setminus f^s) \setminus \Phi^A$.
    \end{enumerate}
\end{definition}

\begin{remark*}
    We emphasize that the above definition of an acceptable record $R$ \textbf{does not} require that there is a partial solution of $s$ with record $R$. 
    The usage of acceptable records is as follows: in \cref{thm:relevant-records} we will show that on the one hand, every clean realizable partial solution has an acceptable record. 
    And on the other hand, we will show that the number of acceptable records of $s$ is bounded by a function of $k$ and, furthermore, we can generate all acceptable records of $s$ in fixed-parameter time.
    Then the dynamic-programming algorithm in \cref{sec:dynprog-new} will consider only acceptable records of $s$ as candidates for which realizability is checked implying the desired running time.
\end{remark*}

We will later show that we can check if a partial solution is acceptable by considering its record only, this property will be used by the dynamic programming algorithm in the next section.

\begin{theorem}\label{thm:relevant-records}
    For every clean realizable partial solution $P$, the record of $P$ is acceptable.
    Further, for a strip $s$, the number of acceptable records of $s$ is bounded by $k^{\bigoh(k^3)}$.
    Moreover, the set of all acceptable records of $s$ can be computed in time $k^{\bigoh(k^3)} n^{\bigoh(1)}$.
\end{theorem}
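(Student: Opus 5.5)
The proof splits into three parts: acceptability of the record of any clean realizable partial solution, a counting bound, and an enumeration procedure.

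\textbf{Acceptability.} Let $P$ be clean and realizable with record $(\omega,X,E^1,\Phi)$, and let $S'$ be a solution whose restriction to $f^s$ coincides with $P$. I will verify the conditions of \cref{def:acceptable} one by one, mostly by citing earlier results.
\begin{itemize}
\item Condition~\ref{def:acceptable-short} follows from \cref{lem:short-partial-solutions}.
\item Condition~\ref{def:acceptable-internal} follows from \cref{lem:respecting-internal-vertices}.
\item Condition~\ref{def:acceptable-non-interesting} follows from \cref{lem:respects-non-interesting}.
\item Condition~\ref{def:acceptable-enclosing} follows from \cref{lem:respecting-partial-solutions}.
\item Condition~\ref{def:acceptable-inner-bundles} follows from the first part of cleanness. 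Note that $E^1$ differs from $E'$ only on edges $Bv$ with $B\in X$, so edges incident to $A\notin X$ are untouched.
\item Conditions~\ref{def:respects-internal-outside} and~\ref{def:acceptable-non-interesting-inside} are the mirror statements. A vertex internal to $f\setminus f^s$ is not on the boundary of $f^s$, so no curve can leave it inside $f^s$. If $A\notin X$ and $v$ is not interesting with respect to $f^s$, then an edge leaving $v$ in $f^s$ toward $A$ would have to reach the gate upward, which contradicts non-interestingness.
\item Conditions~\ref{def:acceptable-E-M-2}, \ref{def:acceptable-not-placed} and~\ref{def:acceptable-incoming edges} follow from items~\ref{def:ps-placed-vertices} and~\ref{def:ps-E-M-2} of \cref{def:partial-solution}, together with upwardness: the parent lies above, so an edge crossing the gate upward leaves from its tail inside $f^s$. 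For condition~\ref{def:acceptable-not-placed}, if $A\notin X$ is the tail of $Av$, then $A$ lies outside $f^s$ and a routed copy would have to leave $v$ inside $f^s$ and cross the gate downward, which is impossible.
\item Conditions~\ref{def:acceptable-outgoing-edges}, \ref{def:acceptable-outer-bundles-routed}, \ref{def:acceptable-outer-bundles-letter} and~\ref{def:acceptable-A-letter} come from the definition of $\Phi^A$ and item~\ref{def:ps-placed-vertices}. Each outgoing edge $Av$ is either finished, and then routed, or has a curve on the gate, and then $A$ occurs in $\omega$. For $\gamma\notin\Phi^A$, no curve labeled $Av$ with $v\in\gamma$ reaches the gate, so every such edge is finished inside $f^s$; in particular it lies in $E'$ and is not removed. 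For condition~\ref{def:acceptable-A-letter}, if $A$ occurs in $\omega$, the label of such a curve gives an edge $Av$. Either $v$ lies in a bundle of $\Phi^A$, or in no bundle at all; it cannot lie in a bundle outside $\Phi^A$, because a curve labeled $Av$ on the gate would put that bundle into $\Phi^A$.
\end{itemize}
I expect the bookkeeping between $E'$ and $E^1$ to be the one delicate point here.

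\textbf{Counting.} By \cref{obs:short-words-cubic} there are $k^{\mathcal O(k^3)}$ choices of $\omega$ and $2^k$ choices of $X$. By \cref{lem:small-number-of-bundles}, $|\Delta_A(f\setminus f^s)|\le k$, so there are at most $2^{k^2}$ choices of $\Phi$. The main obstacle is $E^1$, which ranges over subsets of a possibly huge set. My plan is to partition $E_M^1$ into $\mathrm{poly}(k)$ classes such that conditions~\ref{def:acceptable-internal}--\ref{def:acceptable-outer-bundles-routed} force $E^1$ to be a union of classes, or to fix the membership of every edge in all remaining classes.

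Fix $A$ and the one-sided region $f'$ relevant to $A$ (namely $f^s$ if $A\notin X$, and $f\setminus f^s$ otherwise). Classify each $v$ adjacent to $A$ as follows.
\begin{itemize}
\item If $v$ is internal, or not interesting, its membership is forced.
\item If $v$ is $B$-enclosed for some $B\neq A$, then all such $v$ form at most $k-1$ blocks that are each all-in or all-out, by condition~\ref{def:acceptable-enclosing}. Overlapping blocks only reduce the number of choices further.
\item If $v$ has a $B$-attachment but is not $B$-enclosed, there are at most $2$ such vertices per $B$ by \cref{obs:vertices-with-attachments}.
\item If the missing neighbourhood of $v$ is larger than $\{A\}$, \cref{obs:large-degree-vertices} shows that all but two vertices per pair $\{A,B\}$ are enclosed.
\item If the edge is oriented the wrong way for bundling, it is forced by conditions~\ref{def:acceptable-not-placed} and~\ref{def:acceptable-incoming edges}, or excluded by interestingness when the tail lies inside $f^s$.
\item The remaining vertices are the bundled $A$-vertices. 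Each of the at most $k$ bundles is all-in or all-out: for $A\notin X$ by condition~\ref{def:acceptable-inner-bundles}, and for $A\in X$ by the definition of $\Phi^A$ together with condition~\ref{def:acceptable-outer-bundles-routed}, which makes edges into bundles outside $\Phi^A$ fully present and edges into bundles in $\Phi^A$ removed from $E^1$.
\end{itemize}
Edges whose $v$ is external to one of $f^s$ or $f\setminus f^s$ but not to the other need a separate check, using that $v$ is then internal to one side or covered by the previous bullets. Altogether each $A$ contributes $2^{\mathcal O(k)}$ choices, giving $2^{\mathcal O(k^2)}$ choices of $E^1$ and a total of $k^{\mathcal O(k^3)}$ records.

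\textbf{Enumeration.} The classification above is computable in polynomial time from $\Gamma(H)$ by walking $W'$, so I will enumerate $\omega$, $X$, $\Phi$ and the class choices for $E^1$, and keep exactly those tuples that pass a polynomial-time check of all fourteen conditions. Every acceptable record arises from some class choice, since the argument above shows each acceptable record is determined by one. This yields $k^{\mathcal O(k^3)}n^{\mathcal O(1)}$ time.
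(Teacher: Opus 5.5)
Your route is the paper's route. The acceptability part uses the same lemmas plus items~\ref{def:ps-placed-vertices}--\ref{def:ps-E-M-2} of \cref{def:partial-solution}; your argument for condition~\ref{def:acceptable-outer-bundles-routed} goes directly through item~\ref{def:ps-placed-vertices} instead of realizability, which is fine. Your per-$A$ classification is exactly the paper's encoding $(\omega,X,a,b,c,\Phi,\Psi)$: forced vertices, one all-or-nothing block per enclosing $B$, at most two attachment vertices and two shared-neighbour vertices per $B$, and one bit per bundle.

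There is one step that does not go through as written, and the paper's proof contains the same slip. In your last classification bullet you assert that, for $A\in X$, the edges $Av$ with $v$ in a bundle $\gamma\in\Phi^A$ are removed from $E^1$. That is true for the record of an actual partial solution, by \cref{def:record-e1}. But you are counting tuples that satisfy \cref{def:acceptable}, and none of its fourteen conditions says this. Condition~\ref{def:acceptable-outer-bundles-routed} only forces bundles \emph{outside} $\Phi^A$ into $E^1$. A vertex of such a $\gamma$ is interesting, not enclosed, lies in $V^A(f\setminus f^s)$, and is not internal to $f^s$, so conditions~\ref{def:acceptable-internal}, \ref{def:acceptable-non-interesting}, \ref{def:acceptable-enclosing} and~\ref{def:acceptable-outgoing-edges} say nothing about it. Hence a tuple with $\Phi^A=\{\gamma\}$ may contain an essentially arbitrary subset of $\{Av\mid v\in\gamma\}$; only condition~\ref{def:respects-internal-outside} can force some of these edges out. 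The number of acceptable tuples can therefore grow with the bundle size rather than with $k$. Consequently your claim that every acceptable record is determined by a class choice fails. The paper's enumeration has the identical hole: it sets $\alpha_\Lambda(\{v,A\})=0$ for $\gamma\in\Phi^A$ and then claims this agrees with every acceptable record.

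The repair is to add to \cref{def:acceptable} the condition that $Av\notin E^1$ whenever $A\in X$, $\gamma\in\Phi^A$ and $v\in\gamma$. Every genuine record satisfies it by \cref{def:record-e1}, so the first claim of the theorem is unaffected. With this condition your classification does determine $E^1$ from $\bigoh(k)$ bits per $A$, which gives the $k^{\bigoh(k^3)}$ bound and the enumeration.

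A smaller point concerns conditions~\ref{def:acceptable-internal}, \ref{def:acceptable-non-interesting}, \ref{def:acceptable-enclosing} and~\ref{def:acceptable-outgoing-edges}. The lemmas you cite only place the relevant edges in $E'$, so you should say why they survive the passage to $E^1$. In each case $v$ is internal to $f^s$, non-interesting for $f\setminus f^s$, $B$-enclosed, or outside $V^A(f\setminus f^s)$, so it lies in no outer $A$-bundle and nothing is removed. This is how the paper argues.
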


\begin{proof}
    Recall that without loss of generality we assume that the gate shared by $s$ and its parent is a top-gate of $s$.
    We start with proving the first part of the statement.
    Let $P$ be a realizable partial solution and let $R = (\omega, X, E^1, (\Phi^A)_{A \in X})$ be the record of $s$.
    The fact that~$R$ is acceptable almost follows from the definition of a clean partial solution (see \cref{def:partial-solution} and \cref{def:clean-ps}) and  together with \cref{lem:short-partial-solutions}, \cref{lem:respecting-internal-vertices}, \cref{lem:respects-non-interesting}, and \cref{lem:respecting-partial-solutions}.
    However, one needs to be careful since for the set $E'$ of edges routed by $P$, in general, we have $E^1 \subsetneq E'$ by definition of the record (\cref{def:record-e1}), namely we have
    \[
        E^1 = E' \setminus \{Bv \mid B \in X, v \in \Phi^B\}.
    \]
    For this reason, we go through the properties of an acceptable record (see \cref{def:acceptable}) one by one and argue that $R$ satisfies them.
    
    The property \cref{def:acceptable-short} follows from \cref{lem:short-partial-solutions}.
    By \cref{lem:respecting-internal-vertices} $P$ respects internal vertices, i.e., 
    for every vertex $A \in V_M$ and every vertex $v \in V(H)$ internal to $f^s$ with $\{v, A\} \in E_M^1$ we have $\{v, A\} \in E^1$.
    Since the set difference $E' \setminus E^1$ only contains edges incident with pre-drawn~$A$-bundled vertices of $f \setminus f^s$ for $A \in V_M$ (in particular, these pre-drawn vertices lie on the boundary of $f \setminus f^s$), 
    we then also have $\{v, A\} \in E^1$---so \cref{def:acceptable-internal} is satisfied.
    Further, for a vertex $v$ internal to $f \setminus f^s$, all strips incident with $v$ belong to $f \setminus f^s$, i.e., $v$ does not occur on the boundary of $f^s$---so no edge incident with $v$ can be routed by $P$ and \cref{def:respects-internal-outside} is satisfied.
    By \cref{lem:respects-non-interesting}, the partial solution $P$ respects non-interesting vertices, i.e., for every~$A \in X$ and every an external vertex $v$ of $f^s$ adjacent to $A$ and not interesting with respect to~$f \setminus f^s$, we have $\{v, A\} \in E'$.
    Since $v$ is not interesting with respect to $f \setminus f^s$, we have $v \notin \gamma$ for any~$\gamma \in \Phi^A$.
    Thus, it holds that $\{v, A\} \in E^1$---and therefore, \cref{def:acceptable-non-interesting} holds.
    Let $v$ be a vertex not interesting with respect to $f^s$.
    For every vertex $A \in V_M \setminus X$ with~$\{v, A\} \in E_M^1$, the edge~$\{v, A\}$ cannot be finished by $P$, no curve can start in $A$, and there cannot be an upward curve starting in the non-interesting vertex $v$ and ending on the gate of $s$.
    Thus we have~$\{v, A\} \notin E^1$---so \cref{def:acceptable-non-interesting-inside} holds.

    Further, by \cref{lem:respecting-partial-solutions} the partial solution $P$ respects enclosing, that is the following holds for every $A \neq B \in V_M$.
    Let $f' = f^s$ if $A \notin X$ and let $f' = f \setminus f^s$ otherwise.
    For every pair $u \neq v$ of vertices external to $f'$ that are both adjacent to $A$ and both $B$-enclosed with respect to $f'$, either we $\{u, A\} \in E'$ and $\{v, A\} \in E'$ or we have $\{u, A\} \notin E'$ and $\{v, A\} \notin E'$.
    For $A \notin X$, then this also applies to $E^1$ since $E'$ and $E^1$ coincide on the edges incident with vertices in $V_M \setminus X$.
    For $A \in X$, we recall that $B$-enclosed vertices with respect to~$f \setminus f^s$ are not $A$-bundled with respect to $f \setminus f^s$ (see \cref{def:bundles-vertices}).
    In particular, this implies~$u, v \notin \gamma$ for any $\gamma \in \Phi^A$, and therefore we also have either $\{u, A\} \in E^1$ and $\{v, A\} \in E^1$, or we have~$\{u, A\} \notin E^1$ and $\{v, A\} \notin E^1$---so \cref{def:acceptable-enclosing} is satisfied by $R$ as well.

    The partial solution $P$ is clean so for every $A \in V_M \notin X$, every inner $A$-bundle $\gamma$ of~$s$, and every $u, v \in \gamma$, if we have $uA \in E^1$, then we also have $vA \in E^1$.
    Since $E^1$ and $E'$ coincide on edges incident with $V_M \setminus X$, this property then also applies to $E'$---thus \cref{def:acceptable-inner-bundles}.
    By definition of the partial solution, $P$ satisfies \cref{def:ps-E-M-2} of \cref{def:partial-solution}.
    Furthermore by \cref{def:ps-label} every curve in $P$ labeled $AB$ for some $A, B \in V$ ending on the gate of $s$ starts in the tail $A$ of this edge so we have $B \in X$.
    Therefore, $R$ satisfies \cref{def:acceptable-E-M-2}.
    For every $A \in V_M \setminus X$, no edge incident with $A$ can be finished by $P$ and no curve can start in $A$ so we have $Av \notin E^1$ for any $v \in V(H)$, and therefore \cref{def:acceptable-not-placed} is satisfied.

    Further, consider $A \in X$ a vertex $v \in V(H) \setminus V^A(f \setminus f^s)$ with $Av \in E_M^1$ and $Av \notin E^1$.
    Due to $v \notin V^A(f \setminus f^s)$, we have $v \notin \gamma$ for any $\gamma \in \Phi^A$ so we also have $Av \notin E'$.
    So in particular, $Av$ is not finished by $P$.
    Since $P$ is a partial solution, by \cref{def:ps-placed-vertices} there exists a curve in $P$ starting in $A$, ending on the gate of $s$, and labeled $Av$.
    By definition of the gate word $\omega$ of $P$, the letter $A$ then occurs in $\omega$---so \cref{def:acceptable-outgoing-edges} is also satisfied by $R$.

    Since $P$ is realizable, there exists a solution $S$ whose restriction to $f^s$ coincides with~$P$.
    Consider a vertex $A \in X$, an outer $A$-bundle $\gamma$ of $s$ not belonging to $\Phi^A$ (i.e., $\gamma \in \Delta_A(f \setminus f^s) \setminus \Phi^A$), and a vertex $v \in \gamma$.
    By definition of the record $R$ of $P$ (see \cref{def:record-phi}), there is no curve labeled $Av$ in $P$ ending on the gate of $s$.
    Therefore, in the solution $S$, no drawing of the edge $Av$ crosses the gate of $s$.
    Due to $A \in X$, the vertex $A$ is placed inside $f^s$ by $S$.
    Thus, every drawing of $Av$ in $S$ is fully contained in $f^s$.
    Since $S$ and $P$ coincide on $f^s$, the edge $Av$ is routed by $P$, i.e., we have $Av \in E'$.
    Then $v \in \gamma \in \Delta_A(f \setminus f^s) \setminus \Phi^A$ implies that we also have $Av \in E^1$---so \cref{def:acceptable-outer-bundles-routed} also holds for $R$.
    
    Finally, consider a vertex $A \in X$ such that $\Phi^A \neq \emptyset$.
    By definition of the record $R$ (see \cref{def:record-phi}), there exists a curve labeled $Av$ in $P$ ending on the gate of $s$ with $v \in \gamma$ and $\gamma \in \Phi^A$.
    So the gate word $\omega$ of $P$ contains the letter $A$ as desired and \cref{def:acceptable-outer-bundles-letter} is satisfied by $R$.
    Finally, if for some $A \in X$, the letter $A$ occurs in $\omega$, there exists a curve labeled $Av$ for some~$Av \in E_M^1$ in $P$ ending on the gate of $s$.
    By definition of the record (see \cref{def:record-phi}), we then have~$v \notin \gamma$ for any $\gamma \in \Delta_A(f \setminus f^s) \setminus \Phi^A$---so \cref{def:acceptable-A-letter} is satisfied.
    Altogether, we obtain that~$R$ is an acceptable record for~$s$ proving the first part of the claim.

    We will prove the remaining two statements of the theorem together.
    For this we will show that we can define ``encodings'' whose number is upper-bounded by a function of $k$ and show that there is a way to generate every acceptable record from some encoding.
    An \emph{encoding} for $s$ is a tuple $\Lambda = (\omega, X, a, b, c, \Phi, \Psi)$ with the following properties:
    \begin{enumerate}
        \item $X \subseteq V_M$. 
        \item $\omega$ is a short word over $V_M \cup {V_M \choose 2}$.
        \item $a$ maps every pair $A \neq B \in V_M$ to a value $a_{A,B} \in \{0, 1\}$. 
        \item $b$ maps every triple $A \neq B \in V_M$ and $j \in [2]$ to a value $b_{A, B, j} \in \{0,1\}$.
        \item $c$ maps every triple $A \neq B \in V_M$ and $j \in [2]$, to a value $c_{A,B,j} \in \{0,1\}$.
        \item $\Phi$ maps every $A \in X$ to a set $\Phi^A \subseteq \Delta_A(f \setminus f^s)$, i.e., $\Phi^A$ is a subset of outer $A$-bundles of~$s$.
        \item $\Psi$ maps every $A \in V_M \setminus X$ to a set $\Psi^A \subseteq \Delta_A(f^s)$, i.e., $\Phi^A$ is a subset of inner $A$-bundles of~$s$.
    \end{enumerate}

    \begin{claim}
        The number of encodings for $s$ is bounded by $k^{\bigoh(k^3)}$.
    \end{claim}

    \begin{proof}
        For $X$ there are $2^k$ options.
        The number of options for $\omega$ is bounded by $k^{\bigoh(k^3)}$ by \cref{obs:short-words-cubic}.
        The number of options for each of $a$, $b$, and $c$ is bounded by $2^{\bigoh(k^2)}$.
        Further, by \cref{lem:small-number-of-bundles}, we have $\Delta_A(f \setminus f^s) \leq k$ and $\Delta_A(f^s) \leq k$ for every $A \in V_M$.
        Therefore, the number of options for each of $\Phi$ and $\Psi$ is bounded by $(2^k)^k = 2^{k^2}$.
        Altogether, the number of options for $\Lambda$ is bounded by 
        \[
            k^{\bigoh(k^3)} \cdot 2^{\bigoh(k^2)} \cdot 2^{\bigoh(k^2)} \cdot 2^{\bigoh(k^2)} \cdot 2^{k^2} \cdot 2^{k^2} \in k^{\bigoh(k^3)}. \qedhere
        \]
    \end{proof}
    
    Now we define the following algorithm which, as we will argue later, enumerates all acceptable records of $s$.
    First of all, in time polynomial in the size of the instance, we can construct the walks $W(f^s)$ and $W(f \setminus f^s)$.
    From these for every $f' \in \{f^s, f \setminus f^s\}$ we can now identify the internal vertices of $f'$, the external vertices of $f'$, interesting vertices of $f'$ (for this we check, for a vertex $v$ in question, if there is strip $t$ such that $v$ lies on the boundary of $t$ and there is a directed path from $t$ to $s$ in the strip tree $T_f$), as well as vertices that have $B$-attachments in $f'$ for $B \in V_M$, that are $B$-enclosed in $f'$ for $B \in V_M$, that have at least two neighbors in $V_M$ (together with the missing neighborhood of these vertices), and finally the $A$-bundles of $f'$ for every $A \in V_M$.
    Now we initialize the set $D_s = \emptyset$---in the end it will contain precisely the acceptable records of $s$.
    
    Now we iterate through all encodings $\Lambda = (\omega, X, a, b, c, \Phi, \Psi)$---this can be done in~$k^{\bigoh(k^3)} n^{\bigoh(1)}$ by exhaustively trying all options for $\omega, X, a, b, c, \Phi, \Psi$---and proceed as follows.
    We create a binary vector $\alpha_{\Lambda}$ indexed by $E_M^1$ and initialized with ``dummy values'' $\bot$ for all $e \in E_M^1$.
    In the following, we will describe how to fill the entries of $\alpha_{\Lambda}$ with binary values depending on~$\Lambda$.
    Let us remark that some of the values may be rewritten in this process.
    
    First of all, for every $A \in X$ and every $vA \in E_M^1$, set $\alpha_{\Lambda}(vA) = 1$.
    Similarly, for every~$A \in V_M \setminus X$ and every $Av \in E_M^1$, we set $\alpha_{\Lambda}(Av) = 0$.

    Now for every vertex $v$ inner to $f^s$ and every edge $e$ incident with $v$ we set $\alpha_{\Lambda}(e) = 1$.
    And for every vertex $v$ inner to $f \setminus f^s$ and every edge $e$ incident with $v$ we set $\alpha_{\Lambda}(e) = 0$. 
    For every $A \in V_M$, we define $f^A = f^s$ if $A \notin X$ and $f^A = f \setminus f^s$ if $A \in X$.
    Further, for every vertex $v$ adjacent to $A$ and not interesting with respect to $f^A$ we set $\alpha_{\Lambda}(\{v, A\}) = 0$ if~$A \notin X$ and $\alpha_{\Lambda}(\{v, A\}) = 1$ otherwise.
    
    Further, for every $A \neq B \in V_M$ we proceed as follows.
    Let $S = \enclosed^B_A(f^A)$.
    Then for every $v \in S$, we set $\alpha_{\Lambda}(\{v, A\}) = a_{A,B}$.
    Further, by \cref{obs:large-degree-vertices} there exist at most two (the precise number is denoted by $r$ where $r \leq 2$) external vertices, say $v_1$ and $v_2$, on the boundary of $f^A$ adjacent to both $A$ and $B$ and not $B$-enclosed.
    Then for $i \in [r]$ we set~$\alpha_{\Lambda}(\{v_i, A\}) = b_{A, B, i}$.
    Further, by \cref{obs:vertices-with-attachments}, there exist at most two ( the precise number is denoted by $z$ where $z \leq 2$) external vertices, say $u_1$ and $u_2$, on the boundary of~$f^A$ adjacent to~$A$, having a $B$-attachment, and not $B$-enclosed.
    Then for $i \in [z]$ we set~$\alpha_{\Lambda}(\{u_i, A\}) = c_{A, B, i}$.
    
    Further, for every $A \in X$, every $\gamma \in \Delta_A(f^A)$, and every $v \in \gamma$, we set $\alpha_{\Lambda}(\{v, A\}) = 0$ if~$\gamma \in \Phi^A$ and $\alpha_{\Lambda}(\{v, A\}) = 1$ otherwise.
    And finally, for every $A \in V_M \setminus X$, every $\gamma \in \Delta_A(f^A)$, and every $v \in \gamma$, we set $\alpha_{\Lambda}(\{v, A\}) = 1$ if $\gamma \in \Psi^A$ and $\alpha_{\Lambda}(\{v, A\}) = 0$ otherwise.

    \begin{claim}
        In the end $\alpha_{\Lambda}$ does not contain any $\bot$ entry.
    \end{claim}

    \begin{proof}
        Consider an edge $\{v, A\} \in E_M^1$.
        As the very first step we set the values for all edges incoming into $X$ and all edges outgoing from $V_M \setminus X$ so in the remainder we focus on edges outgoing from $X$ or incoming into $V_M \setminus X$.
        If $v$ is internal with respect to $f^s$ or $f \setminus f^s$, the algorithm sets the value $\alpha_{\Lambda}(\{v, A\})$ as well.
        Thus we may assume that $v$ is external with respect to both $f^s$ and $f \setminus f^s$.
        If $v$ is not interesting with respect to $f^A$, the value~$\alpha_{\Lambda}(\{v, A\})$ was also set in the second step.
        In the remaining cases, $v$ can either be~$B$-enclosed or have a~$B$-attachment for some, or be one of at most $2$ vertices adjacent to $B$ for some~$B \neq A \in V_M$, or be one of at most $2$ vertices with a~$B$-attachment for some $B \neq A \in V_M$, or~$v$ is $A$-bundled with respect to $f^A$.
        In any of these cases, the value $\alpha_{\Lambda}(\{v, A\})$ was set.
        So the claim holds.    
    \end{proof}

    We define the set $E^1(\Lambda) = \{e \in E_M^1 \mid \alpha(e) = 1\}$.
    After that we define $R = (\omega, X, \Phi, E^1(\Lambda))$ and in time polynomial in $n$ we check whether $R$ is an acceptable record for~$s$.
    If so, we add it to $D_s$, otherwise, we discard it and continue with the next $\Lambda$.

    \begin{claim}
        In the end, $D_s$ contains precisely the acceptable records $s$.
    \end{claim}

    \begin{proof}
        First, the check at the very end ensures that only acceptable records of $s$ are contained in $D_s$.
        On the other hand, let $R = (\omega, X, E^1, \Phi)$ be an acceptable record of $s$.
        We now construct an encoding $\Lambda$ with $E^1(\Lambda) = E^1$.
        We let $\Lambda = (\omega, X, a, b, c, \Phi, \Psi)$ defined as follows.
        For every $A \neq B \in V_M$ we let $a_{A,B} = 1$ if for every $v \in \enclosed^B_A(f^A)$ we have $\{v, A\} \in E^1$ and we let $a_{A,B} = 0$ if for every $v \in \enclosed^B_A(f^A)$ we have $\{v, A\} \notin E^1$---\cref{def:acceptable-enclosing} ensures that precisely one of these cases occurs (unless $\enclosed^B_A(f^A) = \emptyset$ in which case we let $a_{A,B}$ be arbitrary as it is never used by the algorithm). 

        Further, for $A \neq B \in V_M$ and $j \in [r]$ we set $b_{A, B, j} = 1$ if and only if $\{v_j, A\} \in E^1$. 
        We also set $b_{A, B, j} = 1$ for $r < j \leq 2$ (this value is never used then).
        Similarly, for $A \neq B \in V_M$ and $j \in [z]$ we set $c_{A, B, j} = 1$ if and only if $\{u_j, A\} \in E^1$. 
        We also set $c_{A, B, j} = 1$ for $r < j \leq 2$ (this value is never used then).

        Finally, for every $A \in V_M \setminus X$ we start with $\Psi^A = \emptyset$.
        Then for every $\gamma \in \Delta_A(f \setminus f^s)$ we add $\gamma$ to $\Psi^A$ if we have~$\{vA \mid v \in \gamma\} \subseteq E^1$ and we do not add it to $\Psi^A$ if we have~$\{vA \mid v \in \gamma\} \cap E^1 = \emptyset$---note that one of these cases occurs since the acceptable record~$R$ satisfies \cref{def:acceptable-inner-bundles}.

        Note that now by construction of $\alpha_{\Lambda}$, it only sets some value $\alpha(e)$ to 1 if we have $e \in E^1$, and it only sets some value $\alpha(e)$ to 0 if we have $e \notin E^1$.
        In particular, it never rewrites a value $\alpha_\Lambda(e) = 0$ by $\alpha_\Lambda(e) = 1$ or vice versa. 
        Together with the fact that by the above remark in the end, $\alpha_\Lambda$ does not contain any $\bot$ entry, we obtain $E^1(\Lambda) = E^1$.
        Since $R$ is acceptable, it is not discarded due to the check in the end of the algorithm. 
    \end{proof} 
    Thus, the size of $D_s$ is upper-bounded by the number of encodings of $s$, i.e., by $k^{\bigoh(k^3)}$.
    Note that for a fixed $\Lambda$, the procedure runs in time polynomial in $n$.
    Therefore, the set $D_s$ of acceptable records of $s$ is computed in time $k^{\bigoh(k^3)} n^{\bigoh(1)}$ concluding the proof.
\end{proof}

\section{The Dynamic Programming Algorithm}
\label{sec:dynprog-new}

With the results of the previous section in hand, we are finally ready to provide a fixed-parameter algorithm that solves \upefshort\ by dynamic programming along the strip tree $T_f$.
Let $S$ be an everywhere clean solution and $s$ be a strip. 
A record $R$ is the \emph{record of} $S$ \emph{at} $s$ if $R$ is the record of the partial solution obtained from $S$ by restricting it to $f^s$.
The high-level idea of the algorithm is to proceed in a leaf-to-root fashion and to compute for every node $s \in V(T_f)$, a \emph{representative} set $\mathcal{R}_s$ of records, that is (1) every record in $\mathcal{R}_s$ is acceptable for $s$, (2) for every record $R \in \mathcal{R}_s$, there exists a clean partial solution $P$ for $s$ with record $R$, and (3) for every everywhere clean solution $S$, the record of $S$ at $s$ belongs to $\mathcal{R}_s$.

\begin{corollary}\label{cor:size-of-representative}
    If $\mathcal{R}_s$ is a representative of $s$, it holds that $|\mathcal{R}_s| \leq k^{\bigoh(k^3)}$.
\end{corollary}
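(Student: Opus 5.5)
The corollary follows almost immediately from property (1) of a representative set combined with the counting part of \cref{thm:relevant-records}. I would argue as follows. By definition, a representative set $\mathcal{R}_s$ satisfies (1): every record in $\mathcal{R}_s$ is an acceptable record for $s$. Hence $\mathcal{R}_s$ is a subset of the set of all acceptable records of $s$, and so $|\mathcal{R}_s|$ is at most the number of acceptable records of $s$.

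The second statement of \cref{thm:relevant-records} bounds the number of acceptable records of any strip $s$ by $k^{\bigoh(k^3)}$. Chaining the two inequalities gives $|\mathcal{R}_s| \leq k^{\bigoh(k^3)}$.

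There is one detail I would check. Acceptable records (\cref{def:acceptable}) are defined only for non-root strips. For the root $r'$ (a leaf of $T_f$ on which, by streamlining, nothing is drawn) and for strips where the symmetric case applies (parent below), the statement of \cref{thm:relevant-records} is phrased without loss of generality, so the same bound transfers verbatim. If the root needs special handling, its record set is trivially of constant size, or it can be treated via its unique neighbor $r$, which is a non-root strip.

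Properties (2) and (3) of a representative set are not needed for this bound. No real obstacle arises: the entire combinatorial work, namely shortness of gate words (\cref{obs:short-words-cubic}) and the bound of $k$ bundles per missing vertex (\cref{lem:small-number-of-bundles}), is already encapsulated in the encoding-counting argument within \cref{thm:relevant-records}.
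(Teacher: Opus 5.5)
Your proof is correct and matches the paper's argument: property (1) of a representative places $\mathcal{R}_s$ inside the set of acceptable records of $s$, whose size \cref{thm:relevant-records} bounds by $k^{\bigoh(k^3)}$. Your aside about the root is harmless but unnecessary, since representatives are only computed for, and only used at, non-root strips.
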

The bound from the above corollary is immediately obtained by \cref{thm:relevant-records}.
Our algorithm will compute a representative for every strip $s$.
 Our preprocessing (\cref{lem:streamlining}) ensures that every leaf node has exactly one acceptable record, namely $(\omega_0, \emptyset, \emptyset, \emptyset)$ where~$\omega_0$ denotes the empty word, so we can use the set $\mathcal{R}_{\ell} = \{(\omega_0, \emptyset, \emptyset, \emptyset)\}$ as the base case of our algorithm for every leaf $\ell$ of $T_f$.
 The primary task in this section is thus to provide a fixed-parameter procedure for computing representatives in a leaf-to-root fashion: 
 
\medskip

\begin{theorem}\label{thm:dp-algorithm}
There exists an algorithm $\mathcal{A}$ 
that takes as input a non-leaf strip $s$ of $T_f$ and a representative $\mathcal{R}_t$ of $t$ for each child $t$ of $s$, runs in time $k^{\mathcal{O}(k^3)} n^{\mathcal{O}(1)}$, and outputs a representative $\mathcal{R}_s$ of $s$.
\end{theorem}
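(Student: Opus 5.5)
The plan is a ``guess-and-verify'' procedure. By \cref{thm:relevant-records}, all acceptable records $D_s$ of $s$ can be enumerated in time $k^{\bigoh(k^3)}n^{\bigoh(1)}$. For each candidate $R\in D_s$, each choice of one record $R_t\in\mathcal{R}_t$ per child $t$ (at most $(k^{\bigoh(k^3)})^2$ combinations by \cref{cor:size-of-representative}), and each local configuration inside $s$, we run a polynomial-time consistency check. If some combination passes, $R$ goes into $\mathcal{R}_s$. Property (1) of a representative holds by construction, since only acceptable records are kept.

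The local configuration inside $s$ consists of three parts.
\begin{itemize}
\item Whether $s$ receives a missing vertex, and if so which one $A$. This is allowed only if $s$ is non-polar, and then $X=\{A\}\cup\bigcup_t X_t$, where the sets $X_t$ must be pairwise disjoint and must not contain $A$.
\item The interleaving of the children's gate words along the gate of $s$. By \cref{lem:streamlining}, $s$ has at most two children below or one child on the side, so the threads of the children's bottom gates appear left to right, together with threads starting at the at most two boundary vertices of $s$ and at $A$.
\item A matching that terminates threads inside $s$. A thread of type $B$ ends at $B=A$ if $B$ is placed in $s$. Threads of types $\{A,B\}$ are paired with each other. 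For a child lying above $s$, whose gate is a top-gate of $s$, threads are treated symmetrically.
\end{itemize}
Since every word involved is short, the number of such configurations is $k^{\bigoh(k^3)}$. Planarity of the routing inside the convex trapezoid $s$ reduces to a non-crossing condition on this word-level matching, which can be checked directly.

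We then verify the remaining components of $R$ from the children and the choices in $s$. We check that $\omega$ is the compressed word of the surviving threads, and that $\Phi^A$ records exactly which outer $A$-bundles of $s$ are hit. We also check that $E^1$ equals the union of the children's routed edges, the edges leaving the boundary vertices of $s$ inside $s$, and the edges that the children recorded implicitly through their $\Phi^B$ sets, minus $\{Bv\mid v\in\Phi^B\}$.

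The subtle point, which I expect to be the main obstacle, is that bundles change between $t$ and $s$. A child's outer $B$-bundle of $t$ may split or disappear at $s$. By \cref{lem:outer-bundles-behaviour}, a vertex leaves $V^B(f\setminus f^t)$ only by becoming internal to $f^s$ or non-interesting for $f\setminus f^s$. It cannot gain an attachment or become enclosed. Consequently, a bundle in $\Phi^B_t$ is either still contained in a bundle of $s$ or its vertices now lie on the boundary of $s$ itself. In the first case, the child's curves for that bundle propagate into $\Phi^B$. In the second case, they must end inside $s$ at the corresponding boundary vertices, which is then checked locally. I would prove an explicit translation rule along these lines and use it in the check.

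For completeness (property (3)), take an everywhere clean solution $S$ (it exists by \cref{lem:everywhere-clean-solutions}). Its restrictions to each $f^t$ are clean partial solutions whose records lie in $\mathcal{R}_t$, and its restriction to $s$ gives exactly one of the enumerated local configurations. Hence the record of $S$ at $s$ is produced by the algorithm.

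For soundness (property (2)), let $R_t$ be accepted and let $P_t$ be the clean partial solutions witnessing them. We glue the $P_t$ along the gates, first shrinking each thread to a narrow bundle of parallel curves near its gate point. Every curve of one thread has the same type, so following the routed matching inside $s$ yields upward, non-crossing curves, and the combined partial solution has record $R$. Cleanliness of the glued solution uses the same bundle translation rule: for a placed $A$, uniformity of the outer $A$-bundles of $s$ is inherited from the children or enforced in $s$. For $A\notin X$, uniformity of the inner $A$-bundles follows from item~\ref{def:acceptable-inner-bundles} of acceptability.

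The running time is the product of the enumerations with a polynomial-time check, which is $k^{\bigoh(k^3)}n^{\bigoh(1)}$.
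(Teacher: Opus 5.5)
Your overall architecture matches the paper's: you enumerate $D_s$, one record per child, and in the polar cases a short word for the curves at the single boundary vertex of $s$ adjacent to $V_M$, then check gate-word concatenation. Completeness comes from everywhere clean solutions and soundness from gluing. The gap is in the step you yourself call the crux, the bundle translation rule, and in the $E^1$ test built on it. \cref{lem:outer-bundles-behaviour} only says that a vertex leaving $V^B(f\setminus f^t)$ becomes internal to $f^s$ or non-interesting for $f\setminus f^s$. It does not place that vertex on the boundary of $s$.

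Take $s$ with a child $b$ below and a child $t$ above. A vertex $v$ on the boundary of the sibling region $f^t$, reachable from the gate of $b$ only by an upward curve through $s$ into $f^t$, can lie in $V^A(f\setminus f^b)$ and drop out at $s$ in exactly this way. A curve of $P_b$ labeled $Av$ for such $v$ cannot be terminated by any local check in $s$. It can only be finished if a curve of $P_t$ leaves $v$ with label $Av$, i.e.\ if $Av\in E^1_t$. Your formula re-adds the children's implicitly recorded $\Phi$-edges and then subtracts only $\{Bv\mid v\in\bigcup\Phi^B\}$. It therefore accepts $R$ with $Av\in E^1$ even when $Av\notin E^1_t\cup E^1_u$; items~\ref{def:acceptable-internal} and~\ref{def:acceptable-non-interesting} of acceptability in fact force $Av\in E^1$ here. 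Gluing $P_b$ and $P_t$ then leaves $Av$ unfinished, so property~(2) of a representative fails. The same happens for $vB$ with $B\in X_t$ and $v\in\bigcup\Phi^B_t$, which must be routed by $P_b$.

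The paper avoids this by checking $E^1=(E^1_b\cup E^1_t\cup E^1_u)\setminus\{Av\mid A\in X_b,\ v\in\bigcup\Phi^A\}$ without re-adding anything. It adds the per-vertex condition that $v\in\gamma\in\Phi^A$ and $v\in\gamma'\in\Delta_A(f\setminus f^b)$ imply $\gamma'\in\Phi^A_b$. In the completeness direction, \cref{lem:outer-bundles-behaviour} together with everywhere-cleanness shows that such dropped-out edges lie in $E^1_t$. In the non-polar and same-side cases your re-added edges are subtracted again, so the problem only appears in the opposite-sided case.

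Second, your gluing sketch does not justify that the glued object has record exactly $R$. Records keep only compressed words, so matched threads on the two sides of a gate generally carry different labels and different numbers of curves. Following the matching therefore does not determine which edges become finished or which outer bundles are hit.

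The missing ingredient is a copying and relabeling freedom. A curve starting at a placed missing vertex $A$ can be duplicated and given any label $Av\in E^1_M$ without changing the gate word or the routed set. The paper uses this freedom in three steps:
\begin{enumerate}
\item It gives each pair of matched curves the label from the side on which the missing endpoint is \emph{not} placed, so the edges finished across $s$ are exactly those routed by the other child.
\item It puts a curve labeled $Av$ for every $Av\in E^1_M$ at every occurrence of $A$ in $\omega$.
\item It then discards the curves with $v$ in a bundle of $\Delta_A(f\setminus f^s)\setminus\Phi^A$.
\end{enumerate}
Acceptability items~\ref{def:acceptable-outgoing-edges}--\ref{def:acceptable-A-letter} are exactly what make this discarding leave $\omega$ unchanged. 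They also guarantee that every outgoing edge of a placed vertex ends up finished or carried to the gate, with the correct $\Phi$.
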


We split the proof of this theorem into three subsections: one dedicated to the case where~$s$ is non-polar, one where~$s$ is polar and its two children lie on the same side, and one where $s$ is polar and its two children lie on opposite sides (i.e., bottom and top).
We remark that a polar strip might have exactly one child; in this case we treat the non-existing second child as a ``dummy'' child $t$ with the representative $\mathcal{R}_{t} = \{(\omega_0, \emptyset, \emptyset, \emptyset)\}$ where $\omega_0$ denotes the empty word, and then run the procedure for the case with two children.

\subsection{Non-Polar Strips}
In this section we consider a non-polar strip $s$ in $T_f$ and let $t$ denote its unique child.
As previously, without loss of generality we assume that the gate shared by $s$ and its parent is a top-gate of $s$.
Then by definition of non-polar strips, the gate shared by $t$ and $s$ is a bottom-gate of $s$.
We start with the following simple lemma.
\begin{lemma}\label{lem:outer-bundles-non-polar}
    For every $A \in V_M$, the sets of outer $A$-bundles of $s$ and $t$ coincide, i.e., it holds that $\Delta_A(f \setminus f^s) = \Delta_A(f \setminus f^t)$.
\end{lemma}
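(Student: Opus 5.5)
The plan is to show that the one-sided regions $f\setminus f^s$ and $f\setminus f^t$ carry exactly the same bundle data: the same bundled $A$-vertices, the same boundary walk up to pieces containing no relevant vertices, and hence the same partition into bundles. Note that $f \setminus f^t = (f\setminus f^s)\cup s$. Since $s$ is non-polar, it has exactly one top-gate and one bottom-gate, and no vertex adjacent to $V_M$ lies on its boundary.

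\emph{Step 1 (walks).} Following the first case in the proof of \cref{lem:outer-bundles-behaviour}, where $t$ is the unique child, we write
\[
W(f\setminus f^t)=a,\{a,b\},W(f\setminus f^s),\{c,d\},d,
\]
with $a,b,c,d$ the vertices on the boundary of $s$. Every added element is a vertex or edge on the boundary of $s$. By non-polarity, none of these vertices is adjacent to $V_M$.

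\emph{Step 2 (vertex properties are unchanged for vertices adjacent to $V_M$).} Let $v$ be adjacent to $V_M$.
\begin{itemize}
\item \emph{Presence on the boundary.} $v$ lies on the boundary of $f\setminus f^s$ if and only if it lies on the boundary of $f\setminus f^t$. Indeed, $v$ is not on the boundary of $s$, and every other strip belongs to both regions or to neither.
\item \emph{Externality.} The outer boundaries differ only by the two side segments of $s$ and the two gates. These segments lie on edges of $H$ whose endpoints are not adjacent to $V_M$. Hence the ordered lists of external vertices adjacent to $V_M$ coincide.
\item \emph{Enclosedness and attachments.} These notions only refer to relative positions of occurrences of neighbors of missing vertices in the walk. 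Such occurrences are identical in the two walks, since the extra walk entries are not adjacent to $V_M$.
\item \emph{Interestingness.} Any upward curve from the gate of $s$ into $f\setminus f^s$ extends downward through the convex trapezoid $s$ to the gate of $t$. Conversely, an upward curve from the gate of $t$ to $v$ must pass through $s$ and cross its top-gate, because $v$ is not on the boundary of $s$ and $s$ has only that one top-gate. Truncating it there gives an upward curve from the gate of $s$.
\item \emph{Missing neighborhood and edge orientations} do not depend on the region.
\end{itemize}
Therefore $V^A(f\setminus f^s)=V^A(f\setminus f^t)$.

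\emph{Step 3 (bundles).} Two bundled $A$-vertices lie in the same bundle if and only if no neighbor of $V_M\setminus\{A\}$ occurs between their occurrences in the walk. By Step 1, the inserted portion of the walk contains no such neighbor. Moreover, the vertices in $V^A$ all occur within the common subwalk $W(f\setminus f^s)$, so the relevant stretches of the walk are identical. This gives $\Delta_A(f\setminus f^s)=\Delta_A(f\setminus f^t)$.

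\emph{Main obstacle.} The hardest part is the interestingness equivalence together with the careful bookkeeping of the outer-boundary and walk comparison. I would handle both using the trapezoid shape of $s$, which has one gate on top and one on the bottom. I would also handle the general-position abuse of notation exactly as the paper does in \cref{lem:outer-bundles-behaviour}.
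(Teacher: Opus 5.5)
Your proposal is correct and follows essentially the same route as the paper: since $f\setminus f^t=(f\setminus f^s)\cup s$ and the non-polar strip $s$ has no neighbor of $V_M$ on its boundary, the two boundary walks agree once restricted to neighbors of $V_M$, and interestingness transfers through the trapezoid $s$ (you spell out the externality, enclosure/attachment and bundle-partition bookkeeping that the paper leaves implicit). One small wording fix: the side segments of $s$ can lie on edges whose far endpoints are adjacent to $V_M$, so that sentence is not literally true; the fact you actually need, which does hold, is that the only vertices on the boundary of $s$ are its two corner vertices, and neither is adjacent to $V_M$.
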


\begin{proof}
    It holds that $f^s = f^t \cup s$ since $t$ is the unique child of $s$.
    Recall that by definition of non-polar strips, $s$ does not have any vertices adjacent to $V_M$ on its boundary.
    Thus, if we restrict the walks $W(f \setminus f^s)$ and $W(f \setminus f^t)$ to the vertices adjacent to $V_M$, these two walks coincide.
    Furthermore, for a vertex $v$ adjacent to $V_M$, there is a downward curve from $v$ to the gate of $t$ inside $f^t$ if and only if there is a downward curve from $v$ to the gate of $s$ inside~$f^s$:
    One direction is trivial, while the other holds because the parent of $s$ lies above~$s$ which, in turn, lies above $t$.
    Thus, a vertex adjacent to $V_M$ is interesting with respect to~$f \setminus f^s$ if and only if it is interesting with respect to $f \setminus f^t$.
    So the sets of outer~$A$-bundles~$\Delta_A(f \setminus f^s)$ and $\Delta_A(f \setminus f^t)$ are equal.
\end{proof}

Now we are ready to describe the dynamic programming procedure for a non-polar strip~$s$, after which we establish correctness.
We recall that by construction every vertex on the boundary of $s$ is a ``subdivision-vertex'' that appeared in \cref{sec:strips}, and therefore no vertex on the boundary of $s$ is adjacent to missing vertices.

We start with the empty set $\TT_s$.
First of all, we add every record in  $\mathcal{R}_t$ to $\mathcal{R}_s$. 
This corresponds to partial solutions in which no missing vertex is placed inside $s$.

Next, we compute the set $D_s$ of all acceptable records of $s$ (see \cref{thm:relevant-records}).
Now we iterate through all records $R_t = (\omega_t, X_t, E^1_t, \Phi_t) \in \mathcal{R}_t$ and all acceptable records $R= (\omega, X, E^1, \Phi) \in D_s$ for $s$
to check if $R$ can be ``obtained from'' $R_t$ by placing one vertex inside~$s$ as follows.
If at least one of the upcoming checks fails (we also call them \emph{algorithmic checks}), we discard this pair and continue with the next one; Otherwise, in the very end, the record will be added to $\mathcal{R}_s$.

First of all, we check that $X_t \subseteq X$ and $|X \setminus X_t| = 1$ holds.
Now let $A$ denote the unique vertex in $X \setminus X_t$.
If we have $\Phi^A \neq \Delta_A(f \setminus f^s)$, i.e., $\Phi^A$ is missing at least one outer $A$-bundle of $s$, we discard this candidate.
For every $B \in X_t$ we also check that $\Phi_t^B = \Phi^B$ holds 
-- this is because 
by \cref{lem:outer-bundles-non-polar} we have $\Delta_B(f \setminus f^t) = \Delta_B(f \setminus f^s)$.
Further, we check that $E_t^1 = E_t$ holds.

Next, we check that all edges incoming into $A$ are routed as follows.
First, we check that~$\{vA \mid vA \in E_M^1\} \subseteq E_t^1$. 
And second, for every edge $BA \in E_M^2$, it holds that $B \in X_t$.

Similarly, we carry out some checks for the  edges outgoing from $A$ as follows. 
First, we check that $A$ has at least one outgoing edge in $E_M^1$ if and only if the letter $A$ occurs in $\omega$ at least once.
And second, for every vertex $B \in V_M$ the word $\omega$ contains the letter $\{A, B\}$ if and only if $AB \in E_M^2$ and $B \notin X$ holds.

Next, we carry out the checks for the gate words.
First, we check that in each of $\omega$ and~$\omega_t$ the letters containing $A$ occur consecutively. 
If $\omega$ contains at least one letter containing $A$, we define the unique words $\omega^1, \omega^A, \omega^2$ such that:
\begin{itemize}
    \item $\omega = \omega^1 \omega^A \omega^2$,
    \item every letter in $\omega^A$ contains $A$,
    \item no letter in any of $\omega^1$ and $\omega^2$ contains $A$.
\end{itemize}
Similarly, if $\omega^t$ contains at least one letter containing $A$, we define the unique words $\omega^1_t, \omega^A_t, \omega^2_t$ such that:
\begin{itemize}
    \item $\omega_t = \omega^1_t \omega^A_t \omega^2_t$,
    \item every letter in $\omega^A_t$ contains $A$,
    \item no letter in any of $\omega^1_t$ and $\omega^2_t$ contains $A$.
\end{itemize}
Note that at least one of $\omega^A$ and $\omega_t^A$ is non-empty since otherwise $A$ would be isolated in the connected graph $G$. 
Now we distinguish between the following three cases.

\textbf{Case 1: both $\omega^A_t$ and $\omega^A$ are non-empty.} In this case, we check that $\omega^1_t = \omega^1$ and~$\omega^2_t = \omega^2$ hold.

\textbf{Case 2: $\omega^A_t$ is non-empty and $\omega_A$ is empty (i.e., $A$ only has incoming edges).}
We check that the following two conditions are satisfied. 
If the last letter of $\omega^1_t$ and the first letter of $\omega^2_t$ do not coincide, then $\omega$ is equal to $\omega^1_t \omega^2_t$.
And if these two letters coincide, then~$\omega$ is equal to the concatenation of $\omega^1_t$ and $\omega^2_t$ from which we remove the first letter of~$\omega^2_t$.

\textbf{Case 3: $\omega^A_t$ is empty and $\omega_A$ is non-empty (i.e., $A$ only has outgoing edges).}
We check that the following two conditions are satisfied.  
If the last letter of $\omega^1$ and the first letter of $\omega^2$ do not coincide, then $\omega_t$ is equal to $\omega^1_t \omega^2_t$. And if these two letters coincide, then~$\omega_t$ is equal to the concatenation of~$\omega^1$ and~$\omega^2$ from which we remove the first letter of~$\omega^2$.

If none of the previous checks failed, the record $R$ is added to $\mathcal{R}_t$ and we continue with the next pair $R_t, R$. 
This concludes the description of the algorithm; we now prove its correctness, i.e., show that the computed table is indeed a representative of $s$.
This is implied by the following three claims.

\begin{claim}[Non-Polar Claim 1]
    Every record in $\TT_s$ is acceptable.
\end{claim}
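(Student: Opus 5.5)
This claim is essentially immediate from how $\mathcal{R}_s$ is built. I would split by the two ways a record can enter $\mathcal{R}_s$.

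The main case is records added via the pair-check loop. These records $R$ are drawn from $D_s$. By \cref{thm:relevant-records}, $D_s$ consists precisely of the acceptable records of $s$, since it is the set computed in the proof of that theorem. So every record added in this way is acceptable by construction, and the algorithmic checks only discard candidates.

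The remaining case is the first step, where every record $R_t\in\mathcal{R}_t$ is copied into $\mathcal{R}_s$. Here I would show that a record acceptable for $t$ is also acceptable for $s$ whenever $s$ is non-polar. I would go through the conditions of \cref{def:acceptable} one by one, using that $f^s=f^t\cup s$ and that no vertex on the boundary of $s$ is adjacent to $V_M$.

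First, the conditions involving the outer side reduce to earlier facts. By \cref{lem:outer-bundles-non-polar}, $\Delta_A(f\setminus f^s)=\Delta_A(f\setminus f^t)$. By the argument in its proof, interestingness with respect to $f\setminus f^s$ and $f\setminus f^t$ agrees on vertices adjacent to $V_M$. The same is true for internality, enclosedness and attachments with respect to the outer region, since the walks restricted to neighbours of $V_M$ coincide.

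Second, the analogous facts hold on the inner side. Restricted to neighbours of $V_M$, the walks $W(f^s)$ and $W(f^t)$ coincide, so the inner bundles, the enclosed vertices and the attached vertices agree. Interestingness with respect to $f^s$ versus $f^t$ also agrees, because an upward curve to the gate of $t$ extends through the convex strip $s$ to the gate of $s$, and conversely. A vertex adjacent to $V_M$ is internal to $f^s$ if and only if it is internal to $f^t$, since its incident strips are never $s$. Hence every condition of \cref{def:acceptable} reads identically for $t$ and $s$ on the data $(\omega,X,E^1,\Phi)$, and $R_t$ is acceptable for $s$.

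The step I expect to need the most care is the inner interestingness equivalence. The other direction is trivial: a curve to the gate of $s$ inside $f^s$ must cross the gate of $t$ before reaching $s$'s top gate, because $v$ is not on $s$. I would argue this exactly as in \cref{lem:outer-bundles-non-polar}.
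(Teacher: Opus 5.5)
Your proposal is correct and follows essentially the paper's proof. Records taken from $D_s$ are acceptable by \cref{thm:relevant-records}. Records copied from $\mathcal{R}_t$ remain acceptable because, $s$ being non-polar, internality, interestingness, enclosedness, attachments and both inner and outer bundles coincide for $s$ and $t$ on neighbours of $V_M$. For inner interestingness, your convexity and gate-crossing argument is a more precise justification than the paper's appeal to the coincidence of the restricted walks $W(f^s)$ and $W(f^t)$.
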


\begin{proof}
    We first show that every record, say $R = (X, \omega, E^1, \Phi)$, in $\TT_t$ is acceptable for $s$.
    Recall that $\TT_t$ is a representative of $t$ so $R$ is acceptable for $t$.
    It is not difficult to verify that~$R$ is then also an acceptable record for $s$ because $s$ is a non-polar strip; we shortly argue this below.
    Since the boundary of $s$ contains no neighbors of $V_M$, the sets of internal vertices of~$f^s$ and~$f^t$ adjacent to $V_M$ coincide.
    As argued in the proof of \cref{lem:outer-bundles-non-polar}, the sets of vertices interesting with respect to $f \setminus f^s$ and $f \setminus f^t$ adjacent to $V_M$ coincide as well.
    Symmetrically, the sets of vertices interesting with respect to $f^s$ and $f^t$ adjacent to~$V_M$ coincide as well---this is because if we restrict the walks $W(f^s)$ and $W(f^t)$ to vertices adjacent to $V_M$, these two walks coincide.
    This also implies that for any $A \neq B \in V_M$, we have $\enclosed^B_A(f^s) = \enclosed^B_A(f^t)$ and $\enclosed^B_A(f \setminus f^s) = \enclosed^B_A(f \setminus f^t)$.
    Moreover, for every $A \in V_M$, we have $\Delta_A(f^s) = \Delta(f^t)$ and by \cref{lem:outer-bundles-non-polar} we have $\Delta_A(f \setminus f^s) = \Delta(f \setminus f^t)$.
    These properties imply that $R$ is indeed acceptable for $s$ as well.
    This concludes the proof of this claim since for every $R \notin \TT_t$ added to $\TT_s$, the algorithm ensures that $R$ is acceptable for $s$ by only considering acceptable ``candidates'' $R \in D_s$. 
\end{proof}

\begin{claim}[Non-Polar Claim 2]
    For every everywhere clean solution $S$, the table $\mathcal{R}_s$ contains the record of $S$ at~$s$.
\end{claim}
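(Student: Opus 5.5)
Let $S$ be an everywhere clean solution. Let $R=(\omega,X,E^1,\Phi)$ be the record of $S$ at $s$ and $R_t=(\omega_t,X_t,E^1_t,\Phi_t)$ the record of $S$ at $t$. Since $\mathcal{R}_t$ is a representative of $t$, we have $R_t\in\mathcal{R}_t$. The restriction of $S$ to $f^s$ is realizable and clean, because $S$ is everywhere clean. By \cref{thm:relevant-records} it follows that $R$ is acceptable for $s$, so $R\in D_s$. Since $S$ is a solution in the sense of \cref{lem:streamlining}, at most one missing vertex lies in the non-polar strip $s$. We split into two cases.

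\textbf{Case (a): no vertex is placed in $s$.} Then $X=X_t$. The strip $s$ has no neighbor of $V_M$ on its boundary. Hence every curve crossing the bottom gate of $s$ continues upward through $s$ and crosses the top gate, and no new curve starts inside $s$. By planarity, the left-to-right order of the crossings is preserved, so $\omega=\omega_t$. The routed sets coincide as well. By \cref{lem:outer-bundles-non-polar} the outer bundles are the same, so $\Phi=\Phi_t$ and $E^1=E^1_t$. Thus $R=R_t$, and this record was copied into $\mathcal{R}_s$ directly.

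\textbf{Case (b): exactly one vertex $A$ is placed in $s$.} Then $X=X_t\cup\{A\}$. We verify the algorithmic checks for the pair $(R_t,R)$ in order.

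First, every outer $A$-bundle belongs to $\Phi^A$. The reason is that every vertex of an outer bundle lies in $f\setminus f^s$, with the boundary of $s$ containing no neighbors of $V_M$. So every edge $Av$ with $v$ bundled must cross the top gate of $s$, since $A$ is in $s$ and upward curves cross a gate at most once.

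Second, for $B\in X_t$ the curves from $B$ pass through $s$ unchanged. Together with \cref{lem:outer-bundles-non-polar}, this gives $\Phi^B_t=\Phi^B$. The same argument gives equality of the $E^1$ parts, i.e., the intended check $E^1_t=E^1$. Here we use that edges at $A$ whose other endpoint is pre-drawn are routed in $f^t$ or in $f\setminus f^s$ exactly as they cross the respective gates.

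Third, the incoming edges of $A$ come from below, because $A$ sits in $s$ and $s$ has no neighbor of $V_M$. Hence each edge $vA$ crosses the bottom gate and is routed inside $f^t$, giving $vA\in E^1_t$. Each edge $BA$ with $B$ missing has $B$ below $A$, hence $B\in X_t$. Symmetrically, the outgoing edges of $A$ produce exactly the letters $A$ and $\{A,B\}$ required in $\omega$.

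The main step is the gate-word check. Inside the trapezoidal strip $s$, consider the curves of $S$ incident to $A$ together with the vertical extent of $A$. These curves split $s$ into a left part and a right part. All curves crossing both gates without touching $A$ lie entirely on one side. This yields the decompositions $\omega_t=\omega^1_t\omega^A_t\omega^2_t$ and $\omega=\omega^1\omega^A\omega^2$, in which the letters containing $A$ appear consecutively. The type-respecting bijection of through-curves then gives $\omega^1=\omega^1_t$ and $\omega^2=\omega^2_t$ in Case~1. In Cases~2 and~3, one side has no curve incident to $A$, and the two flanking threads become adjacent. If the boundary letters coincide they merge into one thread, which is exactly the removal of the duplicate letter.

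The main obstacle is this compression. The raw type sequences must be shown to agree before the maximal runs are collapsed. Only then can we conclude the stated identities of the compressed words, including the merging of adjacent equal letters. Therefore $R$ passes all checks and is added to $\mathcal{R}_s$.
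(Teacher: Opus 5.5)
Your proposal is correct and follows the paper's proof. It uses the same split into the case where $s$ contains no missing vertex (so the record of $S$ at $s$ equals the one at $t$) and the case where $s$ contains exactly one vertex $A$, followed by the same check-by-check verification, with the gate-word condition settled by a planarity argument inside $s$.

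One step needs a small repair. When $A$ has only incoming or only outgoing edges, do not use a literal vertical segment through $A$ as part of the splitting curve, because a through-curve may legitimately arc over (or under) $A$ and cross it. It suffices, as in the paper, that disjoint curves crossing $s$ from its bottom gate to its top gate keep their left-to-right order, and that the curves at $A$ are consecutive on their gate, since two of them together with the gate enclose a region no through-curve can reach.
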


\begin{proof}
    Consider an arbitrary everywhere clean solution $S$.
    For $x \in \{s, t\}$ let $S^s$ be the partial solution of $x$ obtained from $S$ by restricting it to $f^x$ and let $R_x = (\omega_x, X_x, E^1_x, \Phi_x)$ denote the record of $S^x$.
    And let $R = R_s = (\omega, X, E^1, \Phi)$ for simplicity.
    We will distinguish whether $S$ places some vertex inside $s$ or not (by definition of a solution, at most one vertex is placed inside $s$).

    We start with the simpler case where $S$ does not place any vertex inside $S$.
    Recall that the boundary of $s$ does not contain any vertex adjacent to $V_M$.
    Thus, the following properties apply.
    First, the sets of placed vertices as well as the sets of edges routed by $S^s$ and $S^t$ coincide.
    Also, the labels of curves ending on the gate of $s$ in $S^s$ are exactly the same as those ending on the gate of $t$ in $S^t$. 
    Furthermore, these curves end on these gates in exactly the same ordering---otherwise, two of them would cross inside $s$.
    So the gate words of~$S^s$ and~$S^t$ coincide.
    Also, if for an outer $A$-bundle, say $\gamma$, of $s$ and a vertex $v \in \gamma$, there is a curve in~$S^t$ labeled $Av$ ending on the gate of $t$ if and only if there is a curve in $S^s$ labeled $Av$ ending on the gate of $s$.
    Now recall that we have $\Delta_A(f \setminus f^s) = \Delta_A(f \setminus f^t)$ by \cref{lem:outer-bundles-non-polar}.
    Altogether, this implies that $R = R_t$. 
    Since $\mathcal{R}_t$ is representative of $t$, we then have $R = R_t \in \mathcal{R}_t$.
    Finally, recall that at the beginning of the algorithm, we add all records from $\mathcal{R}_t$ to $\mathcal{R}_s$.
    Thus, we have $R = R_t \in \mathcal{R}_s$ as desired.

    It remains to resolve the case where $S$ places exactly one vertex, say $A$, inside $s$.
    We will now show that 
    in the iteration corresponding to the pair $R_t$ and $R$, the record $R$ is added to~$\mathcal{R}_s$. 
    First, we define $X_t = X \setminus \{A\}$.
    Recall that $X$ is precisely the set of vertices placed by~$S^s$ inside $f^s$, only the vertex $A$ is placed inside $s$, and we have $f^s = f^s \cup s$.
    Therefore,~$X_t$ is the set of vertices placed by $S^t$ inside $f^t$.
    Next, we define $\omega_t$ to be the gate word of $S^t$.

    Now let $E'$ be the set of edges routed by $S^s$.
    Then since $R$ is a record of $S^s$, we have 
    \[
        E^1 = E' \setminus \{Bv \mid B \in X, v \in \Phi^B\}.
    \]
    We let $(\Phi_t^A)_{B \in X_t}$ be such that $\Phi_t^B = \Phi^B$ for every $B \in X_t$.
    Note that we have $\Phi^t_B \subseteq \Delta_B(f \setminus f^t)$ because by \cref{lem:outer-bundles-non-polar}, $\Delta_B(f \setminus f^s) = \Delta_B(f \setminus f^t)$ holds for every $B \in V_M$.
    We now define~$R'_t = (X_t, \omega_t, E^1, \Phi_t)$ and we will show that $R'_t = R_t$ holds, i.e., $R'_t$ is the record of $S^t$.
    For this, we show that $E^1$ and $\Phi_t$ satisfy the definition of a record of $S^t$.
    First, since the boundary of $s$ does not contain any neighbor of $V_M$, the sets of edges routed by $S^s$ and $S^t$ coincide, i.e., the set of edges routed by $S^t$ is equal to $E'$ as well.
    We claim that
    \begin{align*}
        E^1 &= E' \setminus \{Bv \mid B \in X, v \in \Phi^B\} \\
            &= E' \setminus \{Bv \mid B \in X_t = X \setminus \{A\}, v \in \Phi^B\} 
    \end{align*}
    holds.
    Observe that $S^t$ cannot route any edge $Av \in E_M^1$ (due to $A \notin X_t)$.
    Therefore, we have $Av \notin E'$ for any $v \in V(H)$, and therefore the above equality indeed holds.
    To show that $R'_t$ is a record of $S^t$, it remains to show that for every $B \in X_t$ and every $\gamma \in \Delta_B(f \setminus f^t)$, there exists a curve in $S^t$ labeled $Bv$ for some $v \in \gamma$ and ending on the gate of $t$ if and only if $\gamma \in \Phi^B$ holds.
    But this holds since for vertices $B \in X_t$ and $v \in V(H)$ there is a curve in~$S^t$ labeled $Bv$ and ending on the gate of $t$ if and only if there is a curve in $S^s$ labeled $Bv$ and ending on the gate of $s$---this is because no such curve can end in $s$.
    So $R'_t$ is indeed a record of $S^t$, i.e., $R_t = R'_t$.
    Since $\TT_t$ is a representative of $t$, we have $R_t \in \TT_t$. 

    So it remains to show that in the iteration of the algorithm in which $R_t$ and $R$ are considered, no algorithmic check fails.
    Some of the checks are satisfied just by definition of~$R_t$ so we only focus on the remainder.
    Consider an arbitrary outer $A$-bundle $\gamma \in \Delta_A(f \setminus f^s)$ of~$s$ and an arbitrary vertex $v \in \gamma$.
    The edge $Av$ cannot be finished by $S^s$ because any curve starting in $A$, ending in some pre-drawn neighbor of $A$, and fully contained in $f^s$ would necessarily need to cross the gate between $s$ and $t$ from top to bottom, i.e., such a curve is not upward.
    Thus there exists a curve in $S^s$ labeled $Av$ and ending on the gate of $s$.
    Since~$R$ is a record of $S^s$, it then holds that $\gamma \in \Phi^A$.
    Since $\gamma$ was chosen arbitrarily, we obtain~$\Phi^A = \Delta_A(f \setminus f^s)$ and this check is satisfied.

    Next, we analyze the check $\{vA \mid vA \in E_M^1\} \subseteq E_t^1$.
    Since $S^s$ is a realizable partial solution, all edges incoming into $A$ are routed by $S^s$, i.e., we have $\{vA \mid vA \in E_M^1\} \subseteq E'$.
    Recall that $E'$ is also the set of edges routed by $S^t$, by definition of the record $R_t$ of $S^t$ we have
    \[
        E^1_t = E' \setminus \{Bv \mid B \in X_t = X \setminus \{A\}, v \in \Phi^B\}.
    \]
    Since in $E' \setminus E^1_t$ we only have edges with the tail being a missing vertex, we then also have~$\{vA \mid vA \in E_M^1\} \subseteq E^1_t$---and this check is satisfied.
    Further, since $S^s$ is a realizable partial solution, all edges $BA \in E_M^2$ are finished by $S^s$ and in particular, we have $B \in X$.
    Since no missing vertex other than $A$ is placed in $s$, we have $B \in X_t$ as desired.

    Further, since $S^s$ is a realizable partial solution, for every edge $Av \in E_M^1$ there is a curve labeled $Av$ in $S^s$ ending on the gate of $s$ (as argued before it cannot be finished), and therefore $\omega$ contains letter $A$ if there is an edge outgoing from $A$ in $E^1_M$.
    Vice versa,~$\omega$ can only contain the letter $A$ is there is a curve of type $A$ ending on the gate of $s$---by definition of a partial solution (\cref{def:ps-placed-vertices}), this curve then starts in the vertex $A$ and therefore has label~$Av$ for some $Av \in E_M^1$.
    Similarly, for every edge $AB \in E_M^2$, this edge cannot be finished by~$S^s$ since a curve starting in $A$, ending in $B$, and fully contained in $f^s$ would necessarily cross the gate shared by $s$ and $t$ from top to bottom---and therefore, it cannot be upward.
    Thus,~$S^s$ contains a curve with label $AB$ ending on the gate of $s$.
    So the word~$\omega$ contains the letter~$\{A, B\}$.
    Vice versa, $\omega$ can only contain the letter~$\{A, B\}$ is there is a curve of type~$\{A,B\}$ ending on the gate of $s$.
    By definition of a partial solution (\cref{def:ps-E-M-2}) this curve then starts in the vertex $A$ and has label $AB$ for some $AB \in E_M^2$ for some $B \notin X$.

    Finally, we check that the words $\omega$ and $\omega_t$ satisfy the last check of the algorithm.
    For this we take a closer look at how the restriction of the solution $S$ to the strip $s$ looks like.
    The only endpoint of the edges in $E_M$ on the boundary of and inside $f^s$ is the vertex $A$. 
    First, suppose that the letters containing $A$ in $\omega$ are not consecutive, i.e., in $S^s$ there exist two curves $\phi_1$ and $\phi_2$ both of some type containing $A$ and a curve $\phi$ of some type not containing~$A$.
    The curve $\phi$ starts therefore in $f^t$.
    By definition of a partial solution, the curves $\phi_1$ and~$\phi_2$ start in $A$, and therefore these curves together with the part of the gate form a connected region, say $\psi$, that does not contain any point of the gate shared by $s$ and $t$. 
    Thus, the upward curve $\phi$ enters its endpoint on the gate of $s$ inside $\psi$. 
    But then it has to cross one of the curves $\phi_1$ and $\phi_2$ to enter this region (see \cref{fig:a-consecutive}~(a) for an illustration)---a contradiction.
    A symmetric argument applied to $S \cap (f \setminus f^t)$ shows that the letters containing $A$ are also consecutive in $\omega_t$.
    
    \begin{figure}[t]
        \centering
        \includegraphics{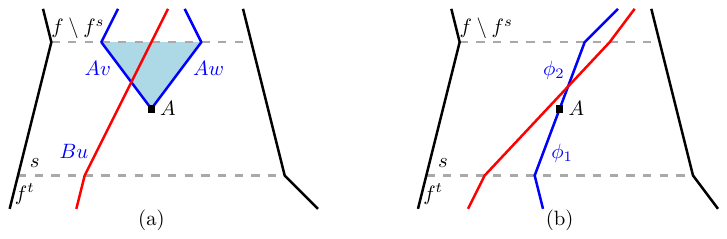}
        \caption{(a) Sketch illustrating that the letters containing $A$ are consecutive in $\omega$. 
        In black, the pre-drawn part $\Gamma(H)$ and in blue the missing vertices and edges.
        Vertex $A$ is the only vertex placed inside $s$, the region $\psi$ (in light blue) is bounded by two curves $\phi_1$ and $\phi_2$ whose type contains $A$. If there would be a curve of a type not containing $A$ whose endpoint one the gate of $s$ lies between the endpoints of $\phi_1$ and $\phi_2$, then there would be a crossing. 
        (b) Arguing that no straight-line segment in $s$ can have the lower endpoint in $\omega^1_t$ and the higher endpoint in $\omega^2$.}
        \label{fig:a-consecutive}
    \end{figure}

    Further, as argued above, the curves ending on the gate of $s$ and not starting in $A$, end on this gate in the same ordering as they crossed the gate shared by $s$ and $t$.
    This already implies that if $A$ has only incoming or only outgoing edges in $G$, the check of the algorithm goes through.
    Otherwise, pick an arbitrary curve $\phi_1$ representing an edge incoming to $A$ in~$S^s$, and pick another arbitrary curve $\phi_2$ representing an edge outgoing from $A$ in~$S^s$; note that, $\phi_2$ must end at a gate of $s$. 
    Then the curve obtained from $\phi_1$ and $\phi_2$ by gluing them at the vertex $A$ partitions the strip $s$ into two connected regions: one to the left and 
    one to the right of this curve.
    Thus, if some curve, say $\phi$, of $S^s$ whose label's type does not contain $A$ crosses the gate shared by $s$ and $t$ before $\phi_1$ and ends on the gate of $s$ after $\phi_2$ (or vice versa), it necessarily crosses $\phi_1$ or $\phi_2$---contradicting the definition of a solution (see \cref{fig:a-consecutive}~(b) for an illustration).
    Hence every curve that crossed the gate shared by $s$ and $t$ before (resp.\ after) all curves whose type contains $A$ also crosses the gate of $s$ before (resp.\ after) all curves whose type contains $A$.
    Therefore, the last check does not fail as well, and~$R$ is added to $\mathcal{R}_t$ as desired.
\end{proof}

\begin{claim}[Non-Polar Claim 3]
    For every record $R$ in $\mathcal{R}_s$ there exists a clean partial solution with this record.
\end{claim}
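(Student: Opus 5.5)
The plan is to build the partial solution by extending a witness for the child's record, splitting on how $R$ entered $\mathcal{R}_s$. If $R$ was copied from $\mathcal{R}_t$, then $\mathcal{R}_t$ being a representative gives a clean partial solution $P_t$ of $t$ with record $R$. We extend each labeled curve of $P_t$ straight upward through $s$ to the gate of $s$. Since $s$ is a convex trapezoid with no vertex adjacent to $V_M$ on its boundary, these are parallel, non-crossing, almost-vertical segments. Their left-to-right order is unchanged, so the gate word is unchanged. The routed edges, the placed vertices and the sets $\Phi^B$ are also unchanged; for $\Phi^B$ we use \cref{lem:outer-bundles-non-polar}. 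Cleanness transfers because, as noted in Non-Polar Claim~1, the inner and outer bundles of $s$ and $t$ coincide.

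In the main case, $R$ was produced from a pair $(R_t,R)$ that passed all algorithmic checks. We take a clean partial solution $P_t$ with record $R_t$ and let $A$ be the unique vertex of $X\setminus X_t$. We place $A$ at an interior point of $s$, roughly in the middle of its height. We then draw the threads of $P_t$ according to the factorization $\omega_t=\omega_t^1\omega_t^A\omega_t^2$, distinguishing Cases~1--3.
\begin{itemize}
\item The curves corresponding to $\omega_t^1$ and $\omega_t^2$ (types not containing $A$) are continued upward to the gate of $s$ on the left and right of $A$, respectively, preserving order.
\item The curves corresponding to $\omega_t^A$ end at $A$ and become finished edges. The check $\{vA\mid vA\in E_M^1\}\subseteq E^1_t$ together with $B\in X_t$ for $BA\in E_M^2$ ensures that every incoming edge of $A$ is present as a curve or already finished. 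Redundant copies, for instance of type $\{A\}$ from different threads, are simply extended to $A$ as well, which is allowed since edges may be routed multiple times.
\item From $A$ we draw a fan of new upward curves to the gate of $s$, one labeled $Av$ for each $Av\in E_M^1$ and one labeled $AB$ for each $AB\in E_M^2$ with $B\notin X$. We order the fan so that consecutive letters match $\omega^A$, which is possible because the first and last letters of $\omega^A$ are dictated by $\omega^1,\omega^2$.
\end{itemize}
The Case~2/3 word checks (concatenation, with merging of equal adjacent letters) then give exactly that the resulting gate word equals $\omega$.

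Next we verify that the record is exactly $R$. The check $\Phi^A=\Delta_A(f\setminus f^s)$ holds because every outgoing edge $Av$ now has a curve reaching the gate. For $B\in X_t$ we have $\Phi^B=\Phi^B_t$, since no new $B$-curves were created. The equality $E^1=E^1_t$ follows because no new edge with a pre-drawn endpoint is routed inside $s$, and the outgoing edges of $A$ go through the gate and are subtracted via $\Phi^A$. One subtlety is an outgoing edge $Av$ with $v\notin V^A(f\setminus f^s)$. Such an edge is not removed by the $\Phi$-subtraction, so it would appear in $E'$ without being routed. This has to be reconciled using the definition of routed (pre-drawn endpoint) together with condition~\ref{def:acceptable-outgoing-edges}: since $Av$ is not routed by the partial solution, it is not in $E'$, consistent with $Av\notin E^1$ forced by acceptability. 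Cleanness also needs checking. The inner condition is inherited from $P_t$, since the inner bundles coincide and nothing pre-drawn is routed in $s$. The outer condition for $A$ holds because every $Av$ has a gate curve. For $B\in X_t$ it is inherited.

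I expect the main obstacle to be the geometric drawing inside $s$: realizing the prescribed interleaving $\omega^1\,\omega^A\,\omega^2$ with upward, pairwise non-crossing curves while also routing incoming curves from $\omega_t^A$ into $A$. I would first normalize using convexity of the trapezoid. All gate endpoints of $\omega_t^1$ lie left of those of $\omega_t^A$, which lie left of those of $\omega_t^2$, so straight segments from the bottom gate to $A$ and from $A$ to the top gate, together with vertical shifting of the side threads, avoid crossings. Merged equal letters at the $\omega^1/\omega^2$ boundary (Cases~2 and~3) are handled by letting the two adjacent threads run next to each other so that they form a single thread.
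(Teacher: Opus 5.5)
Your overall strategy is the paper's: extend a clean witness $P_t$ through $s$; in the main case place $A$ in $s$, end the $A$-curves of $P_t$ at $A$, draw a fan from $A$ to the top gate, and then verify the record and cleanness. However, two steps of your construction inside $s$ fail as written, and both for the same reason: you never duplicate curves.

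\textbf{The fan.} You draw exactly one curve labeled $Av$ for each $Av\in E_M^1$. Nothing forces the number $z$ of occurrences of the letter $A$ in $\omega^A$ to be at most $|\{Av\in E_M^1\}|$. The algorithm only checks that $A$ occurs in $\omega$ iff $A$ has an outgoing edge in $E_M^1$, and shortness allows up to $k+\binom{k}{2}$ occurrences. For example, let $A$ have a single outgoing edge $Av\in E_M^1$ and one outgoing edge $AB\in E_M^2$ with $B\notin X$. Then $\omega^A=A\{A,B\}A$ passes every check, but your fan can only produce $A\{A,B\}$ or $\{A,B\}A$. Your justification via the first and last letters of $\omega^A$ does not address this. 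The paper instead draws $r=\max(|\{Av\in E_M^1\}|,z)$ segments from $A$ with repeated labels, which is legal since $E_M^1$-edges may be drawn several times. It then inserts the $\{A,B\}$-segments between them according to the factorization $\omega^A=\alpha_0A\alpha_1\cdots A\alpha_z$.

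\textbf{Case~3.} When $\omega_t^A$ is empty and $\omega^A$ is nonempty, you need to split a thread, not merge two. Here $\omega_t$ arises from $\omega^1\omega^2$ by deleting the first letter of $\omega^2$ when it equals the last letter of $\omega^1$. So one thread of $P_t$ on the gate of $t$ must be divided into a part passing left of the fan and a part passing right of it. In this case your $\omega_t^1,\omega_t^2$ are not given by any factorization; you must choose the split point. If that thread consists of a single curve, the split is impossible, e.g.\ $\omega_t=B$ and $\omega=BAB$. Your remark about adjacent threads running together covers only Case~2.

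The fix, as in the paper, is to first add a close copy of a curve of that thread. This is admissible because the letter occurs twice in the short word $\omega$, hence lies in $V_M$. So the copied curve carries an $E_M^1$-label, and copying changes neither the routed edges, nor the gate word of $P_t$, nor $\Phi_t$.

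With these two duplication steps added, the rest of your verification goes through. Two small corrections:
\begin{itemize}
\item $Av\notin E^1$ follows from the check $E^1=E^1_t$ together with $A\notin X_t$, since no curve starting at a pre-drawn vertex can carry a label with tail $A$. It does not follow from condition~\ref{def:acceptable-outgoing-edges}.
\item In the first case, any order-preserving segments suffice. Parallel ones need not exist in a skewed trapezoid.
\end{itemize}
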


\begin{proof}
    Consider a record $R = (X, \omega, E^1, \Phi)$ added by the algorithm; we establish that there exists a clean partial solution of $s$ with this $R$ as its record.
    First, consider the simpler case that $R \in \mathcal{R}_t$ holds (the other case will be treated later).
    Since $\TT_t$ is a representative of $t$, there exists a clean partial solution $P$ of $t$ such that $R$ is the record of $t$.
    Let $Q$ be a labeled drawing obtained from $P$ as follows. 
    Let $\phi_1, \dots, \phi_r$ (for some $r \geq 0$) be all curves in $P$ ending on the gate of $t$ in the left-to-right ordering.
    We choose pairwise distinct points $p_1, \dots, p_r$ on the gate of $s$ such that $p_i$ lies to the left of $p_{i+1}$ for every $i \in [r-1]$ (see \cref{fig:polar-no-vertex} for an illustration).
    And we extend $\phi_i$ by connecting its gate endpoint with $p_i$ by a straight-line segment, each curve preserves its label.
    By construction, no crossings are created this way.
    It is straightforward to verify that $Q$ satisfies all properties of a partial solution because $P$ does.
    It remains to show that $R$ is the record of $Q$.
    Clearly, the set of vertices placed by $Q$ is still $X$.
    Further, we did not change the set, labels, or the ordering of the curves ending on the gate, and therefore $\omega$ is still the gate word of $Q$.
    Let $E'$ denote the set of edges routed by $P$.
    First, since $R$ is the record of $P$ we have
    \[
        E^1 = E' \setminus \{Av \mid A \in X, \exists \gamma \in \Phi^A \colon v \in \gamma \}.
    \]
    Second, by construction, the set of edges routed by $Q$ is still $E'$.
    Thus, the set $E^1$ also satisfies the definition of the record of $Q$ (see \cref{def:record-e1}).
    Finally, there is a curve in $P$ ending on the gate of $t$ and having label $e$ if and only if there is a curve in $Q$ ending on the gate of $s$ and having label $e$.
    Thus, the fact that $P$ and $R$ satisfy the property \cref{def:record-phi} in the definition of a record together with $\Delta_A(f \setminus f^s) = \Delta_A(f \setminus f^t)$ for every $A \in V_M$ (see \cref{lem:outer-bundles-non-polar}) implies that $Q$ and $R$ also satisfy this property.
    Hence, $Q$ is indeed a partial solution of $s$ with record~$R$.

    \begin{figure}[t]
        \centering
        \includegraphics{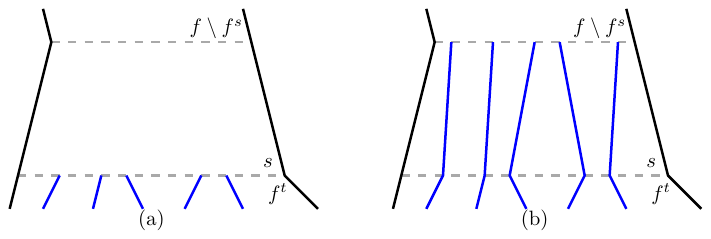}
        \caption{(a) Partial solution $P$ of $t$.
        (b) Extending every curve by a straight-line segment to obtain a partial solution $Q$ that places no vertices inside $s$.}
        \label{fig:polar-no-vertex}
    \end{figure}
    
    It remains to show that $Q$ is clean.
    For edges incident with the vertices $A \notin X$ this is implied by the fact that the record $R$ is acceptable (see \cref{def:acceptable-inner-bundles}) together with the fact that the restrictions of $E'$ and $E^1$ to edges whose missing endpoint is in $V_M \setminus X$ coincide.
    So now consider a vertex $A \in V_M \setminus X$ and recall that we have $\Delta_A(f \setminus f^s) = \Delta_A(f \setminus f^t)$ by \cref{lem:outer-bundles-non-polar}. 
    The partial solution $P$ of $t$ is clean, thus for every outer bundle $\gamma \in \Delta_A(f \setminus f^s) = \Delta_A(f \setminus f^t)$ and every pair $u, v \in \gamma$ either $P$ has a curve labeled $Av$ and a curve labeled $Au$ ending on the gate, or it has no curve labeled $Av$ and no curve labeled $Au$ ending on the gate.
    By construction of $P$ this then also applies to $Q$.
    So $Q$ is indeed a clean partial solution with record $R$.
    
    Now we consider the remaining case that $R \notin \TT_t$, i.e., $R$ was added to $\TT_s$ in an iteration for $R$ and $R' = (X_t, \omega_t, E^1_t, \Phi_t)$ for some $R' \in \TT_t$.
    Again, since $\TT_t$ is a representative of~$t$, there exists a clean partial solution $P$ of $t$ such that $R$ is a record of $t$.
    We will first construct a labeled drawing $Q$, then argue why it is a clean partial solution, and finally show that~$R$ is a record of $Q$.
    First, one check of the algorithm ensures that we have $X_t \subseteq X$ and~$X \setminus X_t = \{A\}$ for some $A \in V_M$.
    
    We now construct the labeled drawing $Q$ inside $f^s$ as follows.
    First of all, it coincides with $P$ inside $f^t$ and we only add a part of the drawing inside the strip $s$ as follows.
    We start by drawing the vertex $A$ in an arbitrary non-boundary point inside $s$.

    Next, we define the integer $r$ as the maximum of the values $|\{Av \mid Av \in E_M^1\}|$ and the number of occurrences of $A$ in $\omega$.
    Now we add $r$ straight-line segments from $A$ to the top-gate of $s$ so that the gate endpoints of these straight-line segments are pairwise distinct (see \cref{fig:polar-a-outgoing} for an illustration).
    We label these $r$ curves with the elements of $\{Av \mid Av \in E_M^1\}$ in such a way that every element of $\{Av \mid Av \in E_M^1\}$ occurs at least once as a label (note that the choice of $r$ ensures that this is possible).
    Let $p_1, \dots, p_r$ be the gate endpoints of these curves in the left-to-right ordering.
    
    Recall that one of the checks of the algorithm ensures that the letters containing $A$ occur consecutively in $\omega$, and the subword containing all these letters is denoted by $\omega^A$.
    Let $p_0$ denote an arbitrary but fixed point on the gate of $s$ to the left of $p_1$ if $r \geq 0$; and otherwise, let~$p_0$
    be just an arbitrary point
    on the gate of $s$.
    Now let $z$ denote the number of occurrences of $A$ in $\omega$ (recall that we have $r \geq z$ by definition of $r$).
    Furthermore, let~$\alpha_0, \dots, \alpha_z \in {V_M \choose 2}^*$ be the words such that $\alpha_0 A \alpha_1 A \alpha_2 \dots A \alpha_z = \omega^A$ (we mean $\alpha_0 = \omega^A$ in case $z = 0$).
    For~$i \in [z - 1]_0$ we proceed as follows (the value $i = z$ will be treated separately) and refer to \cref{fig:polar-a-outgoing} for an illustration. 
    Let $q_i$ be an integer (possibly $q_i = 0$ for~$i = 0$ or $i = z$) and let~$B_{i, 1}, \dots, B_{i, q_i}$ be such that $\alpha_i = \{A, B_{i, 1}\} \dots \{A, B_{i, q_i}\}$ holds.
    Note that those are well-defined since the word $\omega^A$ contains only letters containing $A$ and~$A$ occurs exactly $z$ times in~$\omega^A$.
    Now we pick $q_i$ points $p_{i, 1}, \dots, p_{i, q_i}$ on the gate of $s$ such that all of these lie to the right of $p_i$ and to the left of $p_{i+1}$, and furthermore, for every~$j \in [q_i - 1]$ the point $p_{i, j}$ lies to the left of $p_{i, j+1}$.
    After that we add a straight-line segment~$A p_{i,j}$ labeled with $A B_{i,j}$ for every $j \in [q_i]$.
    Similarly, if $\alpha_z$ is non-empty, we pick $q_z$ points~$p_{z, 1}, \dots, p_{z, q_z}$ on the gate of $s$ such that all of them lie to the right of $p_r$ and furthermore, for every $j \in [q_z - 1]$ the point $p_{z, j}$ lies to the left of $p_{z, j+1}$.
    Afterwards we add a straight-line segment $A p_{z,j}$ labeled with $A B_{z,j}$ for every $j \in [q_z]$.
    Observe that the labeled straight-line segments added in this procedure yield precisely the gate word $\omega^A$.
    Also note that this still did not create any crossings since only straight-line segments with one endpoint $A$ and the other on the boundary of $s$ have been added so far.

\begin{figure}[t]
        \centering
        \includegraphics{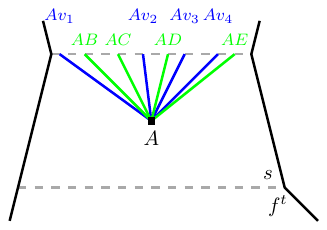}
        \caption{An example for the drawing of edges outgoing from $A$. The edges in $E_1^M$ outgoing from $A$ are $Av_1, Av_2, Av_3, Av_4$ and the word $\omega^A$ is equal to $A \{A, B\} \{A, C\} A \{A, D\} A \{A, E\}$. Here we have $r = 4$, $z = 3$, and the word $\alpha_0$ is empty. The labels are sketched above the corresponding straight-line segments.}
        \label{fig:polar-a-outgoing}
    \end{figure}

    Now for every curve of $P$ that ends on the gate of $s$ and whose label has a type containing~$A$, we add a straight-line segment from the gate endpoint of this curve to $A$.
    Note that this only connects $A$ to some points on the boundary of $s$ via straight-line segments and therefore still no crossings are produced.
    
    If $\omega^A_t$ is non-empty, let $L_b$ and $R_b$ be two points on bottom gate of $s$ such that $L_b$ lies to the left of $R_b$ and all endpoints of straight-line segments added so far on this gate lie between~$L_b$ and $R_b$.
    Similarly if $\omega^A$ is non-empty, let $L_t$ and $R_t$ be two points on top-gate of~$s$ such that $L_t$ lies to the left of $R_t$ and all endpoints of straight-line segments added so far on this gate lie between $L_t$ and $R_t$.
    
    To conclude the construction of $Q$ we describe how to extend the curves ending on the gate of $t$ such that the type of their labels do not contain $A$.
    For this construction we make the same case distinction as in the algorithm.
    Let $\phi_1, \dots, \phi_h$ be all curves in $P$ that end on the gate of $t$ in the left-to-right order of their gate endpoints.

    \subparagraph{Case 1: both $\omega_t^A$ and $\omega^A$ are non-empty} (see \cref{fig:polar-3-cases}~(a) for an illustration)
    We may assume that the vertex $A$ is placed inside the trapezoid, denoted by $\psi$, spanned by~$L_b$,~$R_b$,~$R_t$, and~$L_t$---otherwise, we may shift the vertex $A$ inside this trapezoid and replace each straight-line segment by the one with the same gate endpoint and the other being the new position of~$A$. 
    Thus, all straight-line segments added so far also lie inside $\psi$.
    By definition of this case there exist $1 \leq p \leq q \leq h$ such that $\phi_p, \dots, \phi_q$ are precisely the curves such that the type of their labels contains $A$. 
    This is because $\omega_t$ is the gate word of $P$ and the letters containing~$A$ occur consecutively in $\omega_t$ as ensured by one of the algorithmic checks. 
    
    Now we choose the points $b_1, \dots, b_{p-1}$ on the gate of $s$ all to the left of $L_t$ such that $b_i$ lies to the left of $b_{i+1}$ for every $i \in [p-2]$.
    Now for every $i \in [p-1]$ we add a straight-line segment between $b_i$ and the gate endpoint of $\phi_i$.
    Note that this does not produce any crossings: 
    First, the new segments do not cross the old ones since the new ones lie to the left of $\psi$.
    Second, the choice of $b_1, \dots, b_{p-1}$ ensures that the new segments are pairwise non-crossing.
    Note that the gate word yield by curves ending on the gate of $s$ so far is now $\omega_t^1 \omega^A$.
    Similarly, we choose the points $b_{q+1}, \dots, b_h$ on the gate of $s$ all to the right of $R_t$ and such that $b_i$ is to the left of $b_{i+1}$ for every $i \in [q+1, h-1]$.
    And for every $i \in [q+1, h]$ we now add a straight-line segment between $b_i$ and the gate endpoint of $\phi_i$.
    Again, observe that this does not create any crossings: 
    First, the new segments do not cross the old ones since the new ones lie to the right of $\psi$.
    Second, the choice of $b_{q+1}, \dots, b_h$ ensures that the new segments are pairwise non-crossing.
    This concludes the construction of $Q$ in this case.
    Observe that the gate word of $Q$ is given by $\omega_t^1 \omega^A \omega_t^2$.
    One of the checks of the algorithm ensures that we have $\omega_t^1 = \omega^1$ and $\omega_t^2 = \omega^2$ and therefore the gate word of $Q$ is $\omega^1 \omega^A \omega^2 = \omega$.
    
    \begin{figure}[t]
        \centering
        \includegraphics{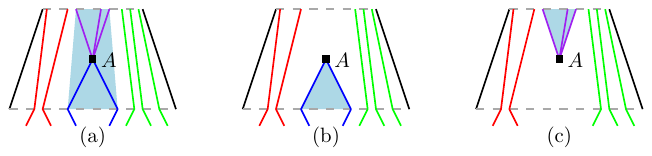}
        \caption{The drawing $Q$ restricted to the strip $s$. The curves corresponding to the subword $\omega_t^1$ resp.\ $\omega^A_t$ resp.\ $\omega^2_t$ are red resp.\ blue resp.\ green. The curves starting in $A$ are purple, the region $\psi$ is light blue. (a) Both $\omega_t^A$ and $\omega^A$ are non-empty, (b) $\omega^A$ is empty, (c) $\omega^A_t$ is empty.}
        \label{fig:polar-3-cases}
    \end{figure}
    
    \subparagraph{Case 2: $\omega^A_t$ is non-empty and $\omega^A$ is empty} (see \cref{fig:polar-3-cases}~(b) for an illustration)
    In this case we let $\psi$ denote the triangle spanned by $L_b$, $R_b$, and $A$.
    We may assume that the $x$-coordinate of $A$ is between the $x$-coordinates of $L_b$ and $R_b$ by an argument analogous to Case 1.
    Furthermore, we may assume that the $y$-coordinate of $A$ is $\varepsilon$-close to the bottom gate of $s$ for a sufficiently small $\varepsilon$.
    Similarly as in the previous case, there exist $1 \leq p \leq q \leq t$ such that $\phi_p, \dots, \phi_q$ are precisely the curves such that the type of their labels contains $A$. 
    We choose the points $b_1, \dots, b_{p-1}$ on the gate of $s$ such that $b_i$ lies to the left of $b_{i+1}$ for every $i \in [p-2]$.
    We also choose the points $b_{q+1}, b_{q+2}, \dots, b_h$ on the gate of $s$ all lying to the right of $b_{p-1}$ such that $b_i$ lies to the left of $b_{i+1}$.
    Now for every $i \in [1, p-1] \cup [q+1,h]$ we add a straight-line segment between $b_i$ and the gate endpoint of $\phi_i$.
    The sufficiently small choice of $\varepsilon$ ensures that these new segments do not cross $\psi$ and therefore they cross none of the existing segments.
    The choice of $b_1, \dots, b_{p-1}$ and $b_{q+1}, b_{q+2}, \dots, b_h$ ensures that the new segments are pairwise non-crossing.
    This concludes the construction of $Q$ in this case.
    Observe that the gate word of $Q$ is obtained from $\omega_t^1 \omega_t^2$ by, if the last letter of $\omega_t^1$ and the first letter of $\omega_t^2$ coincide, removing the first letter of $\omega_t^2$.
    A check of the algorithm then ensures that this gate word is then equal to $\omega$.
    
    \subparagraph{Case 3: $\omega^A_t$ is empty and $\omega_A$ is non-empty} (see \cref{fig:polar-3-cases}~(c) for an illustration)
    In this case we let $\psi$ denote the triangle spanned by $L_t$, $R_t$, and $A$.
    We may assume that the $x$-coordinate of $A$ is between the $x$-coordinates of $L_t$ and $R_t$ by an argument analogous to the previous cases.
    Similarly, we may assume that the $y$-coordinate of $A$ is $\varepsilon$-close to the top gate of $s$ for a sufficiently small $\varepsilon$.
    We make a case distinction depending on whether the last letter of $\omega^1$ and the first letter of $\omega^2$ coincide to define a certain value $p$.
    If they do not coincide, the algorithm ensures that we have $\omega_t = \omega^1 \omega^2$.
    We then define the (unique) value $p$ such that $\phi_1, \dots, \phi_p$ yield the gate word $\omega^1$ and the curves $\phi_{p+1}, \dots, \phi_h$ yield the gate word~$\omega^2$.
    If the last letter of $\omega^1$ and the first letter of $\omega^2$ coincide the algorithm ensures that $\omega_t$ is obtained from $\omega^1 \omega^2$ by removing the first letter of $\omega^2$.
    We proceed as follows.
    We may assume that $P$ contains at least two curves corresponding to the last letter of $\omega^1$ in~$\omega_t$ for the following reason.
    Note that this letter is from $V_M$ (otherwise, $\omega$ would contain the same letter from ${V_M \choose 2}$ at least twice, contradicting shortness).
    So if such a curve, say $\phi$, is unique, we can insert a curve $\phi'$ into $P$ that has the same label as $\phi$ and follows $\phi$ very closely from its non-gate endpoint to the gate. 
    The arising object still satisfies all properties of a partial solution of $t$, it is still clean, and it still has the record $R_t$.
    With this assumption in hand, we can now choose a value $p$ (not necessarily unique) in such a way that $\phi_1, \dots, \phi_p$ yield the gate word $\omega^1$ and the curves $\phi_{p+1}, \dots, \phi_h$ yield the gate word $\omega^2$.
    Now given the choice of $p$ in both cases we proceed as follows.
    First, we choose the points $b_1, \dots, b_p$ on the gate of $s$, all to the left of $L_t$ and such that $b_i$ is to the left of $b_{i+1}$ for every $i \in [p-1]$.
    And for every $i \in [p]$ we add we straight-line segment between $b_i$ and the gate endpoint of $\phi_i$.
    Similarly, we choose the points $b_{p+1}, \dots, b_h$ on the gate of $s$, all to the right of $R_t$ and such that $b_i$ is to the left of $b_{i+1}$ for every $i \in [p+1,h-1]$.
    And for every $i \in [p+1,h]$ we add the straight-line segment between $b_i$ and the gate endpoint of $\phi_i$.
    As before, no crossings are produced this way: the choice of $\varepsilon$ ensures that $\psi$ is not crossed while the choice of $b_1, \dots, b_h$ ensures that the new segments are pairwise non-crossing.
    This concludes the construction of~$Q$ in this case.
    Observe that the gate word of $Q$ is given by $\omega^1 \omega^A \omega^2 = \omega$.
    
    \medskip
    
    Let us remark that in the above construction, all curves extended by a straight-line segment and ending on the gate of $s$ preserve their label from $P$.
    Now we argue that $Q$ is indeed a partial solution.
    The properties 1-\ref{def:ps-label} hold since they hold in $P$.
    The property \cref{def:ps-placed-vertices} holds for any vertex $B \in X_t$ since it holds for $P$.
    We now argue that it also holds for the vertex $A$.
    The algorithm checks that $\{vA \mid vA \in E_M^1\} \subseteq E_t^1$ is true.
    Since $A \notin X_t$, this implies that for every $vA \in E_M^1$, the partial solution $P$ contains a curve labeled $vA$ starting in $v$ and ending in some point, say $p$ on the gate of $t$.
    In the construction of $Q$, we connected this curve with $A$ by a straight-line segment and thus, finished this edge.
    On the other hand, the labels of the curves that end on the gate of $s$ in $Q$, by construction, have their tail in $A$ or their head is not $A$.
    Therefore, every edge $e \in \{vA \mid vA \in E_M^1\}$ is finished by $Q$ and there is no curve labeled $e$ ending on the gate of $s$.
    Furthermore, by construction of $Q$, for every edge $Av \in E_M^1$, we added a straight-line segment starting in $A$, ending on the gate of $s$, and labeled $Av$.
    So $Q$ satisfies \cref{def:ps-placed-vertices}.
    
    Similarly, the property \cref{def:ps-E-M-2} is satisfied for the edges $BC \in E_M^2$ with $\{B, C\} \cap A = \emptyset$ by $P$ so by construction of $Q$ it is also satisfied by $Q$.
    Now consider an edge $BA \in E_M^2$ for some $B \in V_M$.
    A check of the algorithm ensures that we have $B \in X_t$.
    Since $P$ is a partial solution, $P$ contains exactly one curve starting in $B$ and having the label $BA$. 
    The construction of $Q$ connects this curve with $A$ using a straight-line segment, and therefore $BA$ is finished.
    Now consider an edge $AB \in E_M^2$ with some $B \in V_M$.
    Note that we have $B \notin X_t$ since that would imply that $P$ finishes the edge $AB$, and therefore also $A \in X_t$ would hold.
    So we have $B \notin X$ due to $A \neq B$.
    Since $R$ is acceptable, $\omega$ (and therefore, also $\omega^A$) contains the letter $AB$.
    So the construction of $Q$ has a straight-line segment labeled $AB$ starting in $A$.
    Furthermore, $\omega$ is short (by acceptability of $R$) so exactly one such segment exists.
    Altogether this implies that \cref{def:ps-E-M-2} is satisfied and $Q$ is indeed a partial solution.
    
    Next we show that $R$ is a record of $Q$.
    We already argued that $X$ and $\omega$ satisfy the desired conditions.
    Now let $E'$ denote the set of edges routed by $P$.
    By construction of $Q$, the set of edges routed by $Q$ is then $E'$ as well.
    We have $E^1_t = E' \setminus \{Bv \mid B \in X_t, v \in \Phi^B\}$ by definition of the record $R$ of $P$.
    The algorithm checks that we have $E^1 = E^1_t$.
    Thus it holds that 
    \[
        E^1 = E^1_t = E' \setminus \{Bv \mid B \in X_t, v \in \Phi^B\} = E' \setminus \{Bv \mid B \in X = X_t \cup \{A\}, v \in \Phi^B\}
    \]
    where the last equality holds since $E'$ does not contain any edge of the form $Av$---this is because $A$ was not placed by $P$.
    Therefore, $E^1$ indeed satisfies \cref{def:record-e1} for $Q$.
    
    Finally, it remains to show that $\Phi$ satisfies the property \cref{def:record-phi}.
    We recall that by \cref{lem:outer-bundles-non-polar}, we have $\Delta_B(f \setminus f^t) = \Delta_B(f \setminus f^s)$ for every $B \in V_M$.
    First, consider $B \in X_t$. 
    Since $R_t$ is the record of $P$, for every $\gamma \in \Delta_A(f \setminus f^t) = \Delta_B(f \setminus f^s)$ and $v \in \gamma$, there is a curve in $P$ labeled $Bv$ if and only if $\gamma \in \Phi^B_t$ holds.
    Now recall that one of the algorithmic checks ensures that we have $\Phi^B_t = \Phi^B$ so by contstruction of $Q$, the above property also holds for $Q$---and \cref{def:record-phi} is satisfied by all $B \in X_t$.
    For the vertex $A$, the algorithm checks that we have~$\Phi^A = \Delta_A(f \setminus f^s)$.
    Furthermore, by construction, for every edge $Av \in E_M^1$, $Q$ contains a straight-line segment labeled $Av$ ending on the gate of $s$.
    So the property \cref{def:record-phi} is satisfied by $A$ as well.
    Therefore, $R$ is indeed a record of $Q$.

    It remains to show that $Q$ is clean.
    For edges incident with a vertex $B \notin X$ this is implied by the fact that the record $R$ is acceptable.
    For edges incident with a vertex $B \in X_t$ this is true because a curve with label $Bv$ for some $v \in V(H)$ ends on the gate of $s$ in $Q$ if and only some curve with the label $Bv$ ends on the gate of $t$ in $P$, $P$ is clean, and we have~$\Delta_A(f \setminus f^t) = \Delta_A(f \setminus f^s)$ for every $V_M$ by \cref{lem:outer-bundles-non-polar}.
    Finally, the partial solution $Q$ has a curve labeled $Av$ for every $Av \in E_M^1$, and therefore edges incident with $A$ also satisfy the cleanness condition.
\end{proof}

\begin{claim}[Non-Polar Claim 4]
    The described procedure computing $\TT_s$ runs in time $k^{\mathcal{O}(k^3)} n^{\mathcal{O}(1)}$.
\end{claim}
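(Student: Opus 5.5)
The running time is a direct count of the work done by the procedure. The computation splits into three parts: computing $D_s$, copying $\TT_t$, and running the checks for every pair $(R_t,R)\in\TT_t\times D_s$.

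\textbf{Computing $D_s$.} By \cref{thm:relevant-records}, the set $D_s$ of all acceptable records of $s$ can be computed in time $k^{\bigoh(k^3)} n^{\bigoh(1)}$. The same theorem gives $|D_s|\le k^{\bigoh(k^3)}$.

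\textbf{Copying $\TT_t$.} Since $\TT_t$ is a representative of $t$, \cref{cor:size-of-representative} gives $|\TT_t|\le k^{\bigoh(k^3)}$. Each record has polynomial size: $\omega$ is short, so it has length $\bigoh(k^3)$ by \cref{obs:short-words-cubic}; $X$ and $\Phi$ have size bounded in $k$; and $E^1\subseteq E_M^1$ has size at most $n$. Copying all of $\TT_t$ into $\TT_s$ therefore costs $k^{\bigoh(k^3)}n^{\bigoh(1)}$.

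\textbf{Checks per pair.} The number of pairs is at most $k^{\bigoh(k^3)}\cdot k^{\bigoh(k^3)}=k^{\bigoh(k^3)}$. It remains to show that a single pair is handled in time polynomial in $n$ and $k$. I would go through the checks one by one.

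\begin{itemize}
\item Computing $X\setminus X_t$ and comparing the sets $\Phi^B$ takes $\bigoh(k^2)$ time, since by \cref{lem:small-number-of-bundles} there are at most $k$ bundles per vertex.
\item Testing $\Phi^A=\Delta_A(f\setminus f^s)$ requires the outer bundles of $s$. These are computed once, in polynomial time, from the walk $W(f\setminus f^s)$, as in the proof of \cref{thm:relevant-records}.
\item Comparing $E^1$ with $E^1_t$ and testing the inclusion $\{vA\}\subseteq E^1_t$ take $\bigoh(n)$ time.
\item The checks on the edges $BA, AB\in E_M^2$ and on the letters of $\omega$ take $\bigoh(k^2\cdot k^3)$ time.
\item Testing that the letters containing $A$ are consecutive, splitting $\omega$ and $\omega_t$ into $\omega^1\omega^A\omega^2$ and $\omega_t^1\omega_t^A\omega_t^2$, and performing the comparisons of Cases 1--3 (including the possible deletion of a single letter) are operations on words of length $\bigoh(k^3)$. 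They take $\bigoh(k^3)$ time.
\end{itemize}

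Summing up, the total running time is $k^{\bigoh(k^3)}n^{\bigoh(1)}$. There is no real obstacle here. The only point needing care is that the instance-dependent data (the walks, the internal, external and interesting vertices, and the outer bundles) must be precomputed once per strip in polynomial time rather than recomputed per pair. Even recomputing them per pair, however, would only add a polynomial factor.
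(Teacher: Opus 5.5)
Your proposal is correct and follows the same argument as the paper: compute $D_s$ via \cref{thm:relevant-records}, bound both $|D_s|$ and $|\mathcal{R}_t|$ by $k^{\bigoh(k^3)}$ (the latter via \cref{cor:size-of-representative}), and observe that each pair $(R_t,R)$ is checked in time polynomial in $n$. Your itemized accounting of the individual checks and your remark about precomputing the walks and bundles once per strip make explicit what the paper leaves implicit.
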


\begin{proof}
    First of all, the algorithm computes the set $D_s$ of all acceptable records for $s$---by \cref{thm:relevant-records} this can be done in time~$k^{\bigoh(k^3)} n^{\bigoh(1)}$, and the size of this set is bounded by~$k^{\bigoh(k^3)}$.
    After that it iterates through all records $R_t \in \TT_t$ (whereas the size of this set is bounded by~$k^{\bigoh(k^3)}$ by \cref{cor:size-of-representative}) and all acceptable records $R \in D_s$ for $s$.
    And for a fixed pair~$R,~R_s$ the algorithm carries out a set of checks, each of which can be completed in time polynomial in $n$. 
\end{proof}

\subsection{One Child Above, One Child Below}

Let $s$ be a polar strip with exactly two children one of which lies above $s$ and the other lies below $s$.
Let $t$ denote the child of $s$ lying above $s$ and let $b$ the child of $s$ lying below $s$.
As always, without loss of generality we may assume that the parent, say $p$, of $s$ lies above $s$.
Without loss of generality we may also assume that the gate shared by $s$ and $t$ lies to the left of the gate shared by $s$ and its parent. 
Recall that by \cref{lem:streamlining}, the boundary of $s$ contains at most one vertex, say $u$, adjacent to $V_M$.
Furthermore, it is precisely the vertex separating the gate shared by $s$ and $t$ from the gate shared by $s$ and its parent.
We refer to \cref{{fig:children-above-below-illustration}} for an illustration.

\begin{figure}[t]
    \centering
    \includegraphics{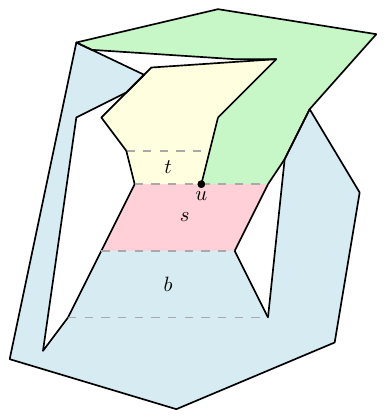}
    \caption{A sketch for a strip $s$ that has one child above and one child below. The strip $s$ is pink. 
    The one-sided regions $f^t$, $f^b$, and $f \setminus f^s$ are sketched in yellow, blue and green, respectively. 
    Note that the boundaries of any two of these one-sided regions share vertices---for edges incident with such vertices, in general, there are options inside which regions they are routed.
    }
    \label{fig:children-above-below-illustration}
\end{figure}

As in the previous case, 
the algorithm first computes the set $D_s$ of all records acceptable for $s$ (see \cref{thm:relevant-records}).
Then it considers all pairs $R_t = (\omega_t, X_t, E^1_t, \Phi_t) \in \mathcal{R}_t$, $R_b = (\omega_b, X_b, E^1_b, \Phi_b) \in \mathcal{R}_b$, and all acceptable records $R = (\omega, X, E^1, \Phi) \in D_s$ to check if we can ``combine'' the records $R_b$ and $R_t$ with a drawing inside the strip $s$ in order to obtain the record $R$.
Below, we describe this procedure for a fixed but arbitrary choice of $R_t$, $R_b$ and~$R$.
If at least one of the upcoming checks fails (we also call them \emph{algorithmic checks}), we discard this triple and continue with the next one; Otherwise, in the very end, the record $R$ will be added to $\mathcal{R}_s$.

We check that $X_b \cap X_t = \emptyset$ and $X_b \cup X_t = X$.
For every vertex $A \in X_t$ we first check that~$\Phi^A = \emptyset$ and $\omega$ does not contain any letter containing $A$.
Second, for every letter $\alpha$ occurring in $\omega_t$, we require $\alpha \subseteq X$.
Third, we check that for every $BA \in E_M^2$, we have~$B \in X$. 
Fourth, we check for every $A \in X_b$, every $\gamma \in \Phi^A$, and every $v \in \gamma$ that the following holds:
if there exists an outer $A$-bundle $\gamma' \in \Delta_A(f \setminus f^b)$ with $v \in \gamma'$, then we have $\gamma' \in \Phi_b^A$. 

Assuming all of these checks succeed, we iterate through all words $\omega_u$ over the alphabet~$V_M \cup {V_M \choose 2}$ with the following properties:
\begin{itemize}
    \item no letter from ${V_M \choose 2} \cup (V_M \setminus X_b)$ occurs in $\omega_u$,
    \item $\omega_u$ is short, 
    \item for every letter $A$ occurring in $\omega_u$, the graph $G$ contains the edge $Au$,
    \item for every letter $A$ occurring in $\omega_u$, if $u \in V^A(f \setminus f^b)$, then we have $u \in \gamma$ for some $\gamma \in \Phi_b^A$.
\end{itemize}
We define the set $E_u^1 = \{Au \mid A \text{ occurs in } \omega_u \}$.
And we check that we have
\[
    E^1 = E^1_b \cup E^1_t \cup E_u^1 \setminus \{Av \mid A \in X_b, \exists \gamma \in \Phi^A \colon v \in \gamma\}.
\]

Finally, we ensure that the following holds. 
Let $\overleftarrow{\omega_t}$ denote the reverse of $\omega_t$---at this point let us remark that since the gate of $t$ is a bottom gate of $t$, the word $\omega_t$ reflects the right-to-left order of the edges crossing the gate of $t$, so we use its reverse $\overleftarrow{\omega_t}$ to ``fit'' the left-to right orderings in $\omega_b$ and $\omega$.
Now let the word $\omega'$ be obtained from the concatenation~$\overleftarrow{\omega_t} \omega_u \omega$ by first, removing the first letter of $\omega_u$ if it coincides with the last letter of $\overleftarrow{\omega_t}$ and second, removing the first letter of $\omega$ from it if it coincides with the last letter of $\omega_u$---less formally speaking we eliminate at most two ``repetitive'' letters so that in the arising word no two consecutive letters coincide.
Then we check that 
\begin{equation}\label{eq:word-check-opposite-sided}
    \omega_b = \omega'
\end{equation} 
holds.
If none of the listed checks failed, we add the record $R$ to $\mathcal{R}_t$, and move on to the next candidate tuple.
We emphasize that for each acceptable record $R$ for $s$ we check \textbf{all} pairs~$R_b \in \mathcal{R}_b$ and $R_t \in \mathcal{R}_t$.

\begin{claim}[Opposite-Sided Polar Claim 1]
    For each everywhere clean solution $S$, the table~$\mathcal{R}_s$ contains the record of $S$ at~$s$.
\end{claim}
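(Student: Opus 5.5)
We follow the template of Non-Polar Claim 2. Fix an everywhere clean solution $S$. For $x\in\{s,t,b\}$ let $S^x$ be the restriction of $S$ to $f^x$ and $R_x=(\omega_x,X_x,E^1_x,\Phi_x)$ its record, with $R=R_s$. Since $S$ is everywhere clean, $S^t$ and $S^b$ are clean realizable partial solutions. As $\mathcal{R}_t,\mathcal{R}_b$ are representatives, $R_t\in\mathcal{R}_t$ and $R_b\in\mathcal{R}_b$, and by \cref{thm:relevant-records} $R\in D_s$. For $S^t$ we apply the statements with the roles of ``above'' and ``below'' exchanged, because the gate of $t$ is a bottom gate of $t$. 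It remains to exhibit a word $\omega_u$ satisfying the listed properties for which every algorithmic check passes. Since $s$ is polar, no missing vertex lies in $s$, so $X=X_t\,\dot\cup\,X_b$ holds immediately.

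For $A\in X_t$ we argue as follows. The gate of $t$ lies above $s$, and the parent gate is also above $s$. An upward curve from $A$ that crosses the gate of $t$ downward cannot exist. Every edge leaving $f^t$ toward $A$ therefore enters $f^t$ from $s$ moving upward, so it ends at $A$ and is incoming to $A$. Hence no curve with tail $A$ reaches the parent gate, so $\omega$ contains no letter involving $A$ and $\Phi^A=\emptyset$. Symmetrically, every curve crossing the gate of $t$ has its head in $X_t$, and its tail is either in $X_b$ or pre-drawn, or it is $u$ or in $X_b$. This gives $\alpha\subseteq X$ for the letters of $\omega_t$ and $B\in X$ for every $BA\in E_M^2$. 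For the $\Phi$-check, a curve labeled $Av$ with $A\in X_b$ that ends on the parent gate must have crossed the gate of $b$. If $v$ lies in an outer $A$-bundle $\gamma'$ of $b$, then $\gamma'\in\Phi^A_b$ by the definition of the record.

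Next we define $\omega_u$ as the thread-compressed sequence of types of the curves in $S$ that start at $u$ inside $s$ and cross the gate of $b$. Such curves go downward from $u$, so each is an edge $Au$ with tail $A\in X_b$, and they contain no ${V_M\choose 2}$ letters. Shortness follows from the argument of \cref{lem:short-partial-solutions}, applied to these curves together with the other curves on the gate of $b$. The condition involving $V^A(f\setminus f^b)$ comes from the definition of $\Phi_b$. For the identity for $E^1$, the edges routed by $S^s$ are exactly those routed by $S^t$, by $S^b$, and those leaving $u$ inside $s$. The last kind go either down through $b$, giving exactly $E^1_u$, or up through the parent gate, which can only occur for bundled $u$ that is removed by $\Phi^A$. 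We then subtract bundle-crossing edges consistently, which needs \cref{lem:outer-bundles-behaviour} to relate $\Delta_A(f\setminus f^b)$ with $\Delta_A(f\setminus f^s)$. I expect this bookkeeping, showing that the subtracted sets match, to be the main obstacle.

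Finally, for \eqref{eq:word-check-opposite-sided} we trace the left-to-right order of curves along the gate of $b$. Inside $s$ every curve through the gate of $b$ continues to exactly one of three places: the gate of $t$, whose order is read reversed; the vertex $u$, which lies between the gate of $t$ and the parent gate; or the parent gate. Planarity inside the trapezoid $s$ forces the order $\overleftarrow{\omega_t}$, then $\omega_u$, then $\omega$, and within each group the order is preserved. Compressing repeated letters at the two junctions yields $\omega'=\omega_b$. All checks pass, so $R\in\mathcal{R}_s$.
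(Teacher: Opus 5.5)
Your decomposition follows the paper's: the same $\omega_u$, and the same three-way split of the curves in $s$ for \eqref{eq:word-check-opposite-sided}. However, the step that carries the real content of this claim is announced and not proved. That step is the identity $E^1=(E^1_b\cup E^1_t\cup E^1_u)\setminus\{Av \mid A\in X_b,\exists\gamma\in\Phi^A\colon v\in\gamma\}$.

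Your sketch of it also contains an impossible case. Since $s$ lies below $u$, every curve leaving $u$ inside $s$ descends and exits through the gate of $b$. It can never go ``up through the parent gate'', so these edges are exactly $E^1_u$.

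What is actually missing is an argument for the edges that $E^1_b$ discards but $E^1$ must keep. These are the edges $Av$ with $A\in X_b$, $v\in\gamma'$ for some $\gamma'\in\Phi^A_b$, and $v$ in no member of $\Phi^A$; you must show each lies in $E^1_t\cup E^1_u$.
\begin{itemize}
\item By cleanness of $S^b$, there is a curve labeled exactly $Av$ leaving $f^b$ through the gate of $b$. The record alone only gives such a curve for some $v'\in\gamma'$.
\item By \cref{lem:outer-bundles-behaviour}, applied with the child $b$, one of two things holds. Either $v$ lies in some $\gamma^*\in\Delta_A(f\setminus f^s)\setminus\Phi^A$, so by the definition of $\Phi^A$ this curve does not end on the parent gate. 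Or $v$ is internal to $f^s$ or not interesting with respect to $f\setminus f^s$, so an upward curve through the parent gate can never reach $v$.
\item Hence the curve enters $f^t$ or ends at $u$, which gives $Av\in E^1_t\cup E^1_u$.
\end{itemize}
Symmetrically, take an edge $wB$ with $B\in X_t$. Either it is routed by $S^b$, and then it lies in $E^1_b$ because its tail is pre-drawn. Or it has no drawing crossing the gate of $t$, and then cleanness of $S^t$ keeps it in $E^1_t$. Together with $\Phi^A=\emptyset$ for $A\in X_t$, which you already have, this closes the identity. Without it, the check on $E^1$ is not established.

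Two smaller points.
\begin{itemize}
\item In the $\Phi$-check you argue only about a curve labeled $Av$ already known to reach the parent gate. The check, however, ranges over every $v\in\gamma$ with $\gamma\in\Phi^A$, and the record of $S^s$ guarantees such a curve only for some $v'\in\gamma$. You need the everywhere-cleanness of $S$ at the gate of $s$ to obtain a curve for each $v\in\gamma$ before pushing it down through the gate of $b$.
\item Your assertion that every curve crossing the gate of $t$ has its head in $X_t$ is false: the head can be a pre-drawn vertex on the boundary of $f^t$. What you need, and what is true, is that both endpoints of such an edge lie in $f^s$. Its missing endpoints therefore lie in $X$.
\end{itemize}
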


\begin{proof}
    Let $S$ be an everywhere clean solution.
    For $x \in \{b, t, s\}$, let $S^x$ denote the restriction of~$S$ to $f^x$ 
    and let $R_x = (\omega_x, X_x, E^1_x, \Phi_x)$ denote its record.
    We also define $R = (\omega, X, E^1, \Phi) = R_s$ to stay consistent with the notation of the algorithm.
    Since $\TT_t$ and $\TT_b$ are representatives of $t$ and $b$, respectively, it holds that $R_t \in \TT_t$ and $R_b \in \TT_b$.
    We will now show that there exists a word $\omega_u$ such that in the iteration where $R_b$, $R_t$, $R$, and $\omega_u$ are considered by the algorithm, no check fails and $R$ is indeed added to the table $\TT_s$ as desired. 
    
    First of all, we have $X_t \cap X_b = \emptyset$ since every missing vertex is placed exactly once by $S$.
    We also have $X = X_t \cup X_b$ since $f^s = f^t \cup f^b \cup s$ holds and the solution $S$ places no missing vertex inside the polar strip $s$.
    
    For an arbitrary vertex $A \in X_t$, a curve starting in $A$, fully contained in $f^s$, and ending on the gate shared by $s$ and its parent 
    would necessarily cross the gate shared by $t$ and $s$ from top to bottom---therefore, such a curve cannot be upward.
    Thus, we have $\Phi_A = \emptyset$, the word $\omega$ contains no letter containing $A$, and this check does not fail.
    Also if some letter, say~$\alpha$, in $\omega_t$ would have a non-empty intersection with $V_M \setminus X$, this would mean that there is an upward or downward curve that starts inside $f \setminus f^s$ and ends inside $f^t$---thus it would necessarily cross the gate shared by $s$ and its parent from top to bottom and the gate shared by $s$ and $t$ from bottom to top---contradicting both upwardness. 
    This check is therefore satisfied.
    Furthermore, for every edge $BA \in E_M^2$ with $A \in X_t$, if $S$ would place $B$ outside~$f^s$, then a drawing of $BA$ would necessarily cross the gate shared by $s$ and its parent from top to bottom, and therefore it would be not an upward curve contradicting the fact that $S$ is a solution---therefore, we have $B \in X$ and this check is satisfied.
    
    Next, consider a vertex $A \in X_b$ and bundle $\gamma \in \Phi^A$.
    By definition of a record of $S^s$, there exists a curve in $S^s$ starting in $A$, ending on the gate of $s$, and labeled $Av'$ for some $v' \in \gamma$.
    Since $S$ is everywhere clean, for every vertex $v \in \gamma$, the partial solution $S^s$ contains a curve labeled $Av$ starting in $A$ and ending on the gate of $s$.
    So let $v \in \gamma$ be arbitrary and let $\phi$ be a curve labeled $Av$ starting in $A$ and ending on the gate of $s$.
    Since the gate shared by $s$ and~$b$ is the only bottom gate of $s$, the curve crosses this gate.
    Therefore, the partial solution~$S^b$ contains a curve labeled $Av$.
    So if $v \in \gamma'$ for some $\gamma' \in \Delta_A(f \setminus f^b)$, then by definition of the record of $S^b$ we have $\gamma' \in \Phi_b^A$, and this check is satisfied.
    
    Now let $\phi^u_1, \dots, \phi^u_r$ be the curves in the solution that leave $u$ inside the strip $s$ in the left-to-right ordering they are leaving $u$, and let $\omega_u$ the word over $V_M$ corresponding to these ordering of curves:
    More formally, for $i \in [r]$ let $A_i \in V_M$ denote the end-vertex of $\phi^u_i$ other than $u$---let us remark that $A_i = A_j$ for $i \neq j$ is, in general, possible since multiple drawings of the same edge are possible in $S$.
    Then there exist unique $\ell \in \mathbb{N}_0$,~$i_1, \dots, i_\ell \in \mathbb{N}^+$, and~$B_1, \dots, B_\ell \in V_M$ such that $A_1 A_2 \dots A_r = B_1^{i_1} B_2^{i_2} \dots B_\ell^{i_\ell}$ and for every $j \in [\ell-1]$ we have~$B_j \neq B_{j+1}$.
    We then define $\omega_u = B_1 \dots B_\ell$.
    Since $u$ on the top-side of $s$, all curves~$\phi^u_1, \dots, \phi^u_r$ enter $u$ from below.
    In other words $\phi_i^u$ is an upward curve from $A_i$ to $q$ and therefore $A_i u \in E_M^1$ holds for every $i \in [r]$.
    Furthermore, since $s$ is polar, no missing vertex is placed inside $s$ by $S$ and thus $\phi_i^u$ crosses the unique bottom gate of $s$, and therefore starts inside $f^b$.
    So we have $A_i \in X_b$.
    In particular, we then have $B_i u \in E_M^1$ and $B_i \in X_b$ for every $i \in [\ell]$.
    Furthermore, for $i \in [\ell]$, if we have $u \in \Delta_{B_i}(f \setminus f^b)$, the fact that there is a curve labeled $B_i u$ in $S^b$ ending on the gate of $s$ implies that we have $u \in \Phi^{B_i}_b$.
    Note that by construction, no two consecutive letters coincide in $\omega_u$.
    We will later show that $\omega_u$ is short.
    Let $E^1_u = \{Au \mid A \text{ occurs in } \omega_u \}$ as defined by the algorithm.
    
    To verify that the checks of the gate words satisfy the desired condition, we inspect how the solution $S$ looks inside the strip $s$.
    Let $z \in \mathbb{N}$ and let $\phi^s_1, \dots, \phi^s_z$ be the curves that constitute the solution $S$ restricted to the strip $s$ in the left-to-right ordering.
    Further, let~$p^b_1, \dots, p^b_z$ denote the endpoints of $\phi^s_1, \dots, \phi^s_z$ on the bottom of $s$, respectively.
    Note that $p^b_i$ lies to the left of $p^b_{i+1}$ for every $i \in [z-1]$.
    Similarly, let $p^t_1, \dots, p^t_z$ denote the endpoints of~$\phi^s_1, \dots, \phi^s_z$ on the top of $s$, respectively.
    Note that $p^t_i$ lies to the left of $p^t_{i+1}$ for every $i \in [z-1]$.
    Recall that the gate shared by $s$ and $t$ lies to the left of the vertex $u$ which, in turn, lies to the left of the gate shared by $s$ and its parent.
    Thus, there exist two indices $h$ and $g$ such that $p^t_1, \dots, p^t_h$ lie on the gate shared by $s$ and $t$, $p^t_{h+1} = \dots = p^t_{g} = u$, and $p^t_{g+1}, \dots, p^t_z$ lie on the gate shared by $s$ and its parent.
    In the solution $S$, the curves~$\phi^s_1, \dots, \phi^s_h$ then ``continue'' inside $f^t$, and therefore yield the subword $\overleftarrow{\omega_t}$ on the gate of $b$.\footnote{We remark that this is $\overleftarrow{\omega_t}$ as opposed to $\omega_t$ because the gate of $t$ lies below $t$, and therefore $\omega_t$ represents the right-to-left ordering of the curves in $S^t$ ending on this gate.}
    The curves $\phi^t_{h+1}, \dots, \phi^t_{g}$, i.e., those ending in $u$, form precisely the intersection of the curves~$\phi^u_1, \dots, \phi^u_r$ with the strip $s$, and therefore yield the subword $\omega_u$ on the gate of $b$.
    Finally, in the solution~$S$, the curves $\phi^t_{g+1}, \dots, \phi^t_{z}$ ``continue'' in $f \setminus f^s$, and therefore yield the subword $\omega$ on the gate of $b$.
    Altogether, this implies that $\omega_b$ is obtained from $\overleftarrow{\omega_t} \omega_u \omega$ by (at most twice) removing a letter that coincides with the letter following it---so this check does not fail as well.
    Note that in particular this implies that $\omega_u$ is a subword of a short (by acceptability of $R_b$, see \cref{def:acceptable-short}) word $\omega_b$, and therefore $\omega_u$ is also short.

    Finally, we will show that the check of $E^1$ does not fail, i.e., that 
    \begin{equation}\label{eq:desired-edge-equality-above-below}
        E^1 = \bigl(E^1_b \cup E^1_t \cup E_u^1\bigr) \setminus \{Av \mid A \in X_b, \exists \gamma \in \Phi^A \colon v \in \gamma\}
    \end{equation}
    holds.
    For $x \in \{s, t, b\}$, let $E'_x \subseteq E_M^1$ be the set of edges routed by $S^x$, and we let $E' = E'_s$.
    Then it holds that 
    \begin{equation}\label{eq:routed-edges-children-different-sides}
        E' = E'_b \cup E'_t \cup E^1_u.
    \end{equation}
    Further, by definition of a record $R_x$ of $S^x$ for $x \in \{s, t, b\}$, we have 
    \[
        E^1 = E' \setminus \{Av \mid A \in X, \exists \gamma \in \Phi^A \colon v \in \gamma\},
    \]
    \[
        E^1_b = E'_b \setminus \{Av \mid A \in X_b, \exists \gamma \in \Phi^A_b \colon v \in \gamma\},  
    \]
    and
    \[
        E^1_t = E'_t \setminus \{vA \mid A \in X_t, \exists \gamma \in \Phi^A_t \colon v \in \gamma\}
    \]
    (note that in the last equality $A$ is the head of the excluded edges because the gate of~$t$ is a bottom-gate).
    So the desired equality~\eqref{eq:desired-edge-equality-above-below} is equivalent to
    \begin{align*}
        &E' \setminus \{Av \mid A \in X, \exists \gamma \in \Phi^A \colon v \in \gamma\} = \\
        &\Bigl(\bigl(E'_b \setminus \{Av \mid A \in X_b, \exists \gamma \in \Phi^A_b \colon v \in \gamma\}\bigr) \cup \bigl(E'_t \setminus \{vA \mid A \in X_t, \exists \gamma \in \Phi^A_t \colon v \in \gamma\}\bigr) \cup E^1_u\Bigr) \setminus \\ 
        &\{Av \mid A \in X_b, \exists \gamma \in \Phi^A \colon v \in \gamma\}.
    \end{align*}
    In the remainder of the proof we will rewrite the sides of the equality until we reach the desired form.
    First of all, consider a vertex $B \in X_t$ and an incoming edge $wB \in E_M^1$.
    Recall that no drawing of $wB$ can leave $w$ outside $f^s$ as it would then cross the gate shared by $s$ and its parent from top to bottom, and therefore would be not an upward curve.
    Also it cannot leave the vertex $w$ inside $s$ since that would imply $w = u$ but $u$ is at the top of $s$ so only edges incoming into $u$ can leave $u$ inside $s$.
    Thus, $wB$ is routed by one of $S^t$ and $S^b$, i.e., we have $wB \in E'_t \cup E'_b$.
    If $wB$ is routed by $S^b$ we have 
    \[
        wB \in E'_b \setminus \{Av \mid A \in X_b, \exists \gamma \in \Phi^A_b \colon v \in \gamma\}.
    \]
    Otherwise, the edge $wB$ is not routed by $S^b$ and only routed by $S^t$.
    So any drawing of $wB$ in $S$ does not cross the gate of $t$ (as the first such crossing would cross this gate from top to bottom contradicting upwardneess of the corresponding curve).
    Therefore, by definition of~$R_t$ being a record of $S^t$ we have $w \notin \gamma$ for all $\gamma \in \Phi^A_t$.
    Thus it holds that 
    \begin{align*}
        &\Bigl(\bigl(E'_b \setminus \{Av \mid A \in X_b, \exists \gamma \in \Phi^A_b \colon v \in \gamma\}\bigr) \cup \bigl(E'_t \setminus \{vA \mid A \in X_t, \exists \gamma \in \Phi^A_t \colon v \in \gamma\}\bigr) \cup E^1_u\Bigr) \setminus \\ 
        &\{Av \mid A \in X_b, \exists \gamma \in \Phi^A \colon v \in \gamma\} = \\
        &\Bigl(\bigl(E'_b \setminus \{Av \mid A \in X_b, \exists \gamma \in \Phi^A_b \colon v \in \gamma\}\bigr) \cup E'_t \cup E^1_u\Bigr) \setminus \{Av \mid A \in X_b, \exists \gamma \in \Phi^A \colon v \in \gamma\}.
    \end{align*}
    
    Now consider an edge $Av \in E_M^1$ with $A \in X_b$ and $v \notin \gamma$ for any $\gamma \in \Phi^A$ and $v \in \gamma'$ for some $\gamma' \in \Phi_b^A$.
    First, by definition of $R_b$ being the record of $S^b$, there is a curve labeled $Av'$ in $S^b$ ending on the gate of $b$ for some $v' \in \gamma'$.
    Since $S$ is everywhere clean, $S^b$ is clean so there is also a curve, say $\alpha$, labeled $Av$ in $S^b$ ending on the gate of $b$.
    The property $v \notin \gamma$ for any $\gamma \in \Phi^A$ implies by \cref{lem:outer-bundles-behaviour} that exactly one of the following two properties hold. 
    Either there exists no downward curve from $v$ to the gate of $s$ inside $f \setminus f^s$, or we have $v \in \gamma^*$ for some $\gamma^* \in \Delta_A(f \setminus f^s) \setminus \Phi^A$.
    In the former case the curve $\alpha$ cannot continue by crossing the gate of $s$ as it would then never be able to reach $v$ while being an upward curve---and therefore, it continues inside $f^t$ and reaches $v$ there, i.e., we have $Av \in E_t^1$.
    In the latter case, by definition of the record $R$ of $S^s$, there is no curve labeled $Av$ in $S^s$ ending on the gate of~$s$, and therefore the curve $\alpha$ continues in $f^t$ and we also have $Av \in E_t^1$.
    Hence, we have
    \begin{align*}
        &\Bigl(\bigl(E'_b \setminus \{Av \mid A \in X_b, \exists \gamma \in \Phi^A_b \colon v \in \gamma\}\bigr) \cup E'_t \cup E^1_u\Bigr) \setminus \{Av \mid A \in X_b, \exists \gamma \in \Phi^A \colon v \in \gamma\} = \\
        &\Bigl(\bigl(E'_b \setminus \{Av \mid A \in X_b, \exists \gamma \in \Phi^A_b \colon v \in \gamma, \exists \gamma' \in \Phi^A \colon v \in \gamma'\}\bigr) \cup E'_t \cup E^1_u\Bigr) \setminus \\ 
        &\{Av \mid A \in X_b, \exists \gamma \in \Phi^A \colon v \in \gamma\} = \\
        &\Bigl(E'_b \cup E'_t \cup E^1_u\Bigr) \setminus \{Av \mid A \in X_b, \exists \gamma \in \Phi^A \colon v \in \gamma\} \stackrel{\eqref{eq:routed-edges-children-different-sides}}{=} \\
        &E' \setminus \{Av \mid A \in X_b, \exists \gamma \in \Phi^A \colon v \in \gamma\}
    \end{align*}
    ---here, the second equality holds since the set excluded by the inner set difference operator is a subset of the set excluded by the outer set difference operator, i.e., due to 
    \[
        \{Av \mid A \in X_b, \exists \gamma \in \Phi^A_b \colon v \in \gamma, \exists \gamma' \in \Phi^A \colon v \in \gamma'\} \subseteq \{Av \mid A \in X_b, \exists \gamma \in \Phi^A \colon v \in \gamma\}.
    \]
    Finally, one check of the algorithm ensures that we have $\Phi^A = \emptyset$ for every $A \in X_t$.
    Together with $X_b \cup X_t = X$ this implies that 
    \[
        E' \setminus \{Av \mid A \in X, \exists \gamma \in \Phi^A \colon v \in \gamma\} = E' \setminus \{Av \mid A \in X_b, \exists \gamma \in \Phi^A \colon v \in \gamma\}.
    \]
    Altogether, this implies that both sides of the desired equality~\eqref{eq:routed-edges-children-different-sides} are equal to the set~$E' \setminus \{Av \mid A \in X_b, \exists \gamma \in \Phi^A \colon v \in \gamma\}$, and therefore this equality holds.
    So none of the checks fails and~$R$ is indeed added to $\TT_s$ by the algorithm.
\end{proof}

\begin{claim}[Opposite-Sided Polar Claim 2]\label{clm: opposite-sided}
    For every record $R$ in $\mathcal{R}_s$ there exists a clean partial solution with this record.
\end{claim}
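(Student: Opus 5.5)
Fix a record $R=(\omega,X,E^1,\Phi)\in\mathcal{R}_s$. It was added in the iteration for some $R_t\in\mathcal{R}_t$, some $R_b\in\mathcal{R}_b$ and some word $\omega_u$ for which all algorithmic checks succeeded. Since $\mathcal{R}_t$ and $\mathcal{R}_b$ are representatives, there are clean partial solutions $P_t$ of $t$ and $P_b$ of $b$ with records $R_t$ and $R_b$. The plan follows Non-Polar Claim~3. We glue $P_t$ and $P_b$ inside $f^s$ (their domains $f^t$ and $f^b$ are disjoint) and place no missing vertex inside the polar strip $s$. We then add straight or near-straight segments inside $s$, which is a trapezoid, so that every curve of $P_b$ ending on the bottom gate of $s$ continues to one of three places: the bottom gate of $t$, the vertex $u$, or the top gate of $s$. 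The check $\omega_b=\omega'$ from~\eqref{eq:word-check-opposite-sided} says that the left-to-right sequence of types on the gate of $b$ is, up to merging equal neighbours, the concatenation $\overleftarrow{\omega_t}\,\omega_u\,\omega$.

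\textbf{Splitting the bottom curves.} We split the curves $\phi_1,\dots,\phi_h$ of $P_b$ on the bottom gate into three consecutive blocks realizing $\overleftarrow{\omega_t}$, $\omega_u$ and $\omega$. Where a merged letter lies on a block boundary, we first duplicate a curve of that thread, following it closely, exactly as in Case~3 of Non-Polar Claim~3. This duplication is harmless because the letter then lies in $V_M$, by shortness. After that we can choose split indices.

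\textbf{Routing the three blocks.} The first block is joined by non-crossing segments to the endpoints of the curves of $P_t$ on the bottom gate of $t$. There the left-to-right order is the reverse of $\omega_t$, so the types match, and every such letter lies inside $X$ by the check. Each curve coming from $P_b$ is then merged with a curve of $P_t$ of the same type, giving a finished edge. For this we must make sure the labels agree: a type-$A$ curve from $b$ with label $Av$ ($A\in X_b$) must meet a curve of $P_t$ that ends at the pre-drawn $v$, and symmetrically for $E_M^2$ edges $BA$ with $A\in X_t$. I expect this label matching to be the main obstacle. I would handle it by relabelling and duplicating. Curves inside a thread are interchangeable, so we may redraw curves of $P_t$ as copies running along a single thread curve to whichever pre-drawn endpoint is needed. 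Multiple drawings of an edge are allowed, which is exactly why baseline solutions permit them. Whenever a label has no partner, we discard that curve and instead route a copy of a same-type curve that does have a partner.

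The second block goes to $u$; each of its letters $A$ has $Au\in E_M$, so these curves become finished edges $Au$ and realize $E^1_u$. The third block goes to the top gate of $s$ and keeps its labels, so the gate word is $\omega$.

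\textbf{Verifying the partial solution and its record.} Next we check Definition~\ref{def:partial-solution} item by item, using the checks $X=X_b\sqcup X_t$, $\Phi^A=\emptyset$ for $A\in X_t$, and $B\in X$ for $BA\in E_M^2$. To show that $R$ is the record, the routed set $E'$ of the glued object is $E'_b\cup E'_t\cup E^1_u$. The algorithm's check on $E^1$, reversing the set manipulations of Opposite-Sided Polar Claim~1, gives Definition~\ref{def:record-e1}. For $\Phi$ we use the fourth check together with Lemma~\ref{lem:outer-bundles-behaviour}: every bundle in $\Phi^A$ is supported by curves that are present.

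\textbf{Cleanness.} For $A\notin X$ this follows from Item~\ref{def:acceptable-inner-bundles} of acceptability. For $A\in X_b$ it follows from cleanness of $P_b$: an outer bundle of $s$ lies inside an outer bundle of $b$ by Lemma~\ref{lem:outer-bundles-behaviour}, so either all or none of its curves reach the top gate. If some reach it, we add duplicated curves so that all of them do.
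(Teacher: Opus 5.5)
\textbf{There is a genuine gap.} Your skeleton matches the paper's: glue $P_b$ and $P_t$, cut the curves on the gate of $b$ into blocks for $\overleftarrow{\omega_t}$, $\omega_u$ and $\omega$, pair the first block with $P_t$, send the second to $u$, and extend the third to the top gate. What is missing is the normalization of the third block. Your plan never uses Items~\ref{def:acceptable-outgoing-edges}--\ref{def:acceptable-A-letter} of Definition~\ref{def:acceptable}, and these items are what make the claim true. Two things break without them.

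First, Item~\ref{def:ps-placed-vertices} of Definition~\ref{def:partial-solution} can fail for out-edges of $A\in X_b$. You relabel the $b$-side curves from $A$ in the first block (to agree with their partners) and in the second block (to $Au$). An edge $Av$ that is not finished by $P_b$, not routed by $P_t$ and not in $E^1_u$ may thereby lose its only open drawing. Nothing in your construction guarantees that some curve of the $\omega$-block carries the label $Av$.

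Second, the $\Phi$-component can come out wrong in both directions.
\begin{itemize}
\item The $\omega$-block of $P_b$ may contain a curve labeled $Av$ with $v\in\gamma\in\Delta_A(f\setminus f^s)\setminus\Phi^A$. All algorithmic checks are consistent with this, for example when $Av$ is also routed by $P_t$, so that $Av\in E^1_t$. Naively extending that curve puts $\gamma$ into the record.
\item Conversely, for $\gamma\in\Phi^A$, the fourth check together with cleanness of $P_b$ only yields curves labeled $Av$ somewhere on the gate of $b$, possibly in the first two blocks.
\end{itemize}

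The paper fixes both problems before pairing.
\begin{itemize}
\item It copies curves so that every occurrence of each $A\in X_b$ in $\omega$ carries every label $Av\in E_M^1$.
\item It then deletes all curves labeled $Av$ with $v$ in a bundle of $\Delta_A(f\setminus f^s)\setminus\Phi^A$. Item~\ref{def:acceptable-A-letter} keeps the gate word unchanged.
\item Item~\ref{def:ps-placed-vertices} then follows by a case analysis. If $Av$ is finished neither via $P_b$, nor via $P_t$, nor via $u$, then $Av\notin E^1$ by the $E^1$-check. Hence $v$ is not internal (Item~\ref{def:acceptable-internal}) and not in a bundle outside $\Phi^A$ (Item~\ref{def:acceptable-outer-bundles-routed}). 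So $A$ occurs in $\omega$ by Items~\ref{def:acceptable-outgoing-edges} and~\ref{def:acceptable-outer-bundles-letter}, and a surviving copy labeled $Av$ reaches the top gate.
\end{itemize}

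Your cleanness argument also applies Lemma~\ref{lem:outer-bundles-behaviour} in the wrong direction. The lemma says a vertex bundled for the child stays bundled for $s$ unless it is internal or non-interesting. It does not say that outer bundles of $s$ lie inside outer bundles of $b$.

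\textbf{The label matching is also handled incorrectly.} You flag it as the main obstacle, but the fix you propose does not work.
\begin{itemize}
\item For a letter $A\in X_b$, requiring a $P_t$-curve ending at $v$ is generally unsatisfiable, since $v$ need not even lie on the boundary of $f^t$.
\item The type-$A$ curves of $P_t$ start at fixed pre-drawn vertices, so they cannot be redrawn ``to whichever pre-drawn endpoint is needed''.
\item If an unpartnered curve has a pre-drawn non-gate endpoint, discarding it changes $E'_b$ or $E'_t$. That destroys the identity $E'=E'_b\cup E'_t\cup E^1_u$ which your record verification relies on.
\end{itemize}
The right move, which is the paper's, is to equalize the number of curves per letter on both sides by copying and pair them left to right. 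Then relabel only the curve on the side where the missing vertex is placed. Its label may be any missing edge of that vertex with the right orientation, and relabeling it does not affect routed sets.

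Finally, the $E^1$-component cannot be obtained by ``reversing'' Claim~1, whose arguments use that $S$ is a global solution. The inclusion
\[
(E'_b\cup E'_t\cup E^1_u)\setminus\{Av\mid A\in X_b,\ v\in\textstyle\bigcup\Phi^A\}\subseteq E^1
\]
needs a direct argument. For $Av\in E'_b\setminus E^1_b$, the vertex $v$ lies in a bundle of $\Phi^A_b$. Then either $v$ lies in a bundle of $\Delta_A(f\setminus f^s)\setminus\Phi^A$, and Item~\ref{def:acceptable-outer-bundles-routed} applies. Or Lemma~\ref{lem:outer-bundles-behaviour} together with Items~\ref{def:acceptable-internal} and~\ref{def:acceptable-non-interesting} gives $Av\in E^1$.
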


\begin{proof}
    Consider an arbitrary record $R = (\omega, X, E^1, \Phi) \in \TT_s$.
    We show that there exists a clean partial solution with this record.
    Let $R_t = (\omega_t, X_t, E^1_t, \Phi_t) \in \TT_t$, $R_b = (\omega_b, X_b, E^1_b, \Phi_b) \in \TT_b$, and $\omega_u$ be the word such that $R$ was added to $\TT_t$ in the iteration of the algorithm where~$R_t$,~$R_b$, and $\omega_u$ were considered. 
    Since $\TT_t$ is a representative of $t$, there exists a clean partial solution, say $P_t$ of $t$ with record $R_t$.
    Similarly, since $\TT_b$ is a representative of $b$, there exists a clean partial solution, say $P_b$ of $b$ with record $R_b$.
    We use $E'_b$ and $E'_t$ to denote the set of edges routed by $P_b$ and $P_t$, respectively.
    We will now construct a labeled drawing $Q$, then show that it is a clean partial solution of $s$, and finally, argue that $R$ is the record of $Q$.

    \begin{figure}[t]
    \centering
    \includegraphics{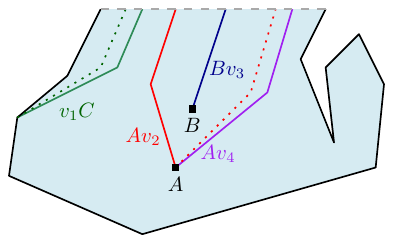}
    \caption{Illustration for the copying procedure. Solid lines represent curves existing in the partial solution. Dotted curves are copied, the color of the copied curve reflects its label. The green solid curve labeled $v_1 C$ starts in the pre-drawn vertex so its copy corresponds to the same occurrence of the letter $C$ in the gate word. The red dotted copy of the curve labeled $Av_2$ closely follows the curve labeled $Av_4$, and therefore corresponds to a different occurrence of the letter $A$ in the gate word than the red solid curve labeled $Av_2$---this is possible since the non-gate endpoint is the missing vertex $A$.
    }
    \label{fig:copying}
    \end{figure}

    In our following arguments we will use the so-called \emph{copying} operation as illustrated in~\cref{fig:copying}.
    To define it, let~$P'$ be a partial solution of some strip $s'$ and let $\phi$ be a curve in~$P'$ labeled $e$ for some $e \in E_M^1$ and ending on the gate of $s'$.
    Further, let $A$ be the type of~$e$, i.e., $\{A\} = e \cap V_M$ and let $\phi'$ be an arbitrary curve of type $A$ in $P'$.
    Then we can carry out one of the following operations.
    First, 
    we can extend $P$ by a new curve $\phi^*$ that is labeled $e$ 
    and follows $\phi$ $\varepsilon$-closely (for a sufficiently small $\varepsilon$) without producing any crossings.
    Second, if $A$ is placed by $P'$ (i.e., if the non-gate endpoint of $\phi$ and $\phi'$ is $A$), we can extend~$P'$ by a new curve $\phi^*$ that is labeled $A v'$ for an arbitrary $v' \in V(H)$ with $Av' \in E_M^1$ 
    and follows~$\phi'$~$\varepsilon$-closely (for sufficiently small $\varepsilon$) without producing any crossings.
    Observe that in both cases the arising object is still a partial solution of $s'$, it yields the same gate word and routes the same set of edges.
    We will refer to the above operations as \emph{copying} the edge~$e$.
    Let us emphasize that $e \notin E_M^2$ is crucial---otherwise copying $e$ would not yield a partial solution.

    Recall that by the algorithmic check \eqref{eq:word-check-opposite-sided}
   it holds that $\omega_b$ is equal to the concatenation~$\overleftarrow{\omega_t} \omega_u \omega$ after removing at most two repetitions.
    We may assume that for every letter of the gate word $\omega_b$ that belongs to $V_M$, there exist at least three curves corresponding to this letter: this can easily be achieved by 
    copying a curve corresponding to this letter twice. 
    Thus, if we consider the curves in $P_b$ ending on the gate of $b$ in the left-to-right ordering of their endpoints, we first see the curves yielding the subword $\overleftarrow{\omega_t}$, then the curves yielding $\omega_u$, and finally the curves for $\omega$.

    Now we will define labeled drawings $Q_b$ and $Q_t$ obtained from $P_b$ and $P_t$, respectively.
    We will then add some straight-line segments inside $s$ to combine $Q_b$ and $Q_t$ and obtain the desired partial solution $Q$ of $s$.
    Let us remark at this point that we will not prove that $Q_b$ and $Q_t$ are partial solutions of $b$ and $t$, respectively, since this will not the necessary for the proof that $Q$ is a partial solution.
    Instead, we will only argue that $Q_b$ and $Q_t$ have some useful properties sufficient for the proof.
    We start with $Q_t \coloneqq P_t$ and $Q_b \coloneqq P_b$ and adapt them as follows.

    First, for every letter $A \in X_b$ such that $A$ occurs in $\omega$ we apply the copying operation until the following property holds: for every occurrence $o$ of $A$ in $\omega$ and every edge $Av \in E_M^1$ 
    there is a curve in $Q_b$ labeled $Av$, ending on the gate of $b$, and corresponding to the occurrence~$o$ of~$A$.
    This does not change the set of edges routed by $Q_b$.
    After that we discard from~$Q_b$ all curves end on the gate of $b$ and are labeled with edges $Av$ with $A \in X_b$ and $v \in \gamma$ for some~$\gamma \in \Delta_A(f \setminus f^s) \setminus \Phi^A$.
    We claim that this discarding does not change the gate word of~$Q_b$.
    Indeed, suppose for a contradiction that it changes the gate word of $Q_b$.
    In particular, this implies that then $A$ occurs in $\omega$.
    By acceptability of $R$ (see \cref{def:acceptable-A-letter}) there exists a vertex~$v$ with $Av \in E_M^1$ and $v \notin \gamma$ for any $\gamma \in \Delta_A(f \setminus f^s) \setminus \Phi^A$.
    At the beginning of this paragraph, a curve labeled $Av$ was copied in $Q_b$ to every occurrence of $A$ in $\omega$.
    Since~$v \notin \gamma$ for any~$\gamma \in \Delta_A(f \setminus f^s) \setminus \Phi^A$, this curve was not discarded, and therefore the gate word remains~$\omega_b$.
    Also note that $Q_b$ still routes the same set of edges as $P_b$ since we only manipulated some curves that start in a missing (and not pre-drawn) vertex and end on the gate of $b$.
    
    Now let $A_1, \dots, A_{|\omega_u|} \in X_b$ be the letters such that $\omega_u = A_1 A_2 \dots A_{|\omega_u|}$---such values exist due to the properties satisfied by $\omega_u$ considered by the algorithm.
    Next we ensure that there are precisely $|\omega_u|$ curves in $Q_b$ corresponding to the subword $\omega_u$ and they are labeled as $A_1 u, A_2 u, \dots, A_{|\omega_u|}$ as follows.
    We achieve this by iterating through $i \in [|\omega_u|]$, and discarding all curves but an arbitrary one corresponding to $A_i$.
    Note that since $A_i \in X_b$, the only curve preserved, in particular, starts in $A_i$; we label this curve $A_i u$.
    Note that this also does not change the set of edges routed by $Q_b$ since we only changed some curves starting in missing vertices and ending on the gate of $b$.

    We refer to \cref{fig:above-below-relabeling} for an illustration of the step described in the next two paragraphs.
    Now let $C_1, \dots, C_{|\omega_t|}$ be such that $\overleftarrow{\omega_t} = C_1, \dots, C_{|\omega_b|}$ and recall that $\omega_b$ is a prefix of $\omega_t$.
    First, we by copying some curves we ensure that for every $i \in |\omega_t|$, the number of curves corresponding to $C_i$ in $Q_b$ and $Q_t$ are equal: we pick one curve on the ``smaller'' side and copy the curves there.
    Again, this does not change the set of edges routed by any of~$Q_b$ and~$Q_t$.
    Note that now we can, for every $i \in [|\omega_t|]$, naturally ``pair'' the curves in $Q_b$ and~$Q_t$ in the left-to-right ordering---the paired curves will later be connected by a straight-line segment to yield a finished edge in $Q$.
    Now we will describe a certain careful relabeling of the curves so that first, two paired curves ``from'' $C_i$ have the same label and this label is from the side ($t$ or $b$) where $C_i$ is not placed.
    We formalize this process below. 
    
    Recall that by the algorithmic check, for every letter $\alpha$ occurring in $\omega_b$ we have $\alpha \subseteq X_t \cup X_b = X$.
    So for every $i \in |\omega_b|$ we proceed as follows.
    Let $\phi^b_{i,1}, \dots, \phi^b_{i,z_i}$ be the curves corresponding to $C_i$ in $Q_b$ and let $\phi^t_{i,1}, \dots, \phi^t_{i,z_i}$ be the curves corresponding to $C_i$ in $Q_t$ where $z_i$ denotes their number.
    If $C_i \in {V_M \choose 2}$ holds, then we have $z_i = 1$---this is since both~$P_b$ and~$P_t$ are partial solutions, and the previous copying step did not increase the number of these curves because both were equal to 1---and both $\phi^b_{i,1}$ and $\phi^t_{i,1}$ are labeled with the edge $C_i$ (more formally, it is labeled with the unique directed version of $C_i$ in $E_M^2$).
    So the labels of $\phi^b_{i,1}$ and $\phi^t_{i,1}$ already coincide and no change needed here.
    Otherwise, for $j \in [z_i]$ we proceed as follows.
    If $C_i \in X_b$ holds, let $C_i u_j$ resp.\ $C_i w_j$ be the label of $\phi^b_{i,j}$ and $\phi^t_{i,j}$.
    We relabel the curve $\phi^b_{i,j}$ to have the label $C_i w_j$---importantly this does not change the set of edges routed by $Q_b$ since due to $C_i \in X_b$ the curve $\phi^b_{i,j}$ starts in $C_i$, and this clearly does not change the set of edges routed by $Q_t$.
    Furthermore, observe that since $C_i \notin X_t$, the curve $\phi^t_{i,j}$ starts in $w_j$.
    Later we will connect the endpoints of $\phi^b_{i,j}$ and $\phi^t_{i,j}$ by a straight-line segment that will therefore finish the edge $C_i w_j$. 
    If $C_i \in X_t$ holds, let $u_j C_i$ resp.\ $w_j C_i$ be the label of $\phi^b_{i,j}$ and $\phi^t_{i,j}$.
    We similarly relabel the curve $\phi^t_{i,j}$ to have the label $u_j C_i$. 

    \begin{figure}[t]
        \centering
        \includegraphics[width=0.8\textwidth]{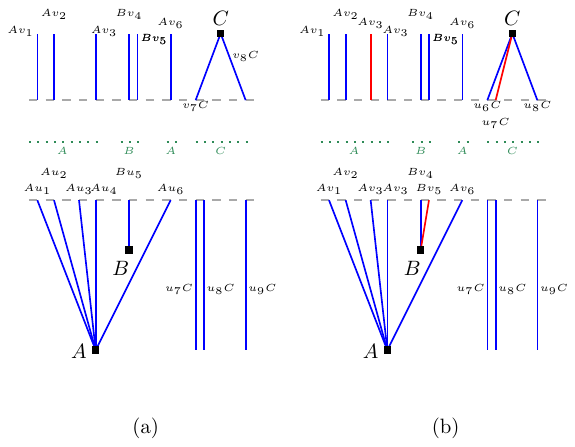}
        \caption{Illustration for the copying and relabeling of the curves to obtain $Q_t$ and $Q_b$. In blue we have the curves from $P_t$ and $P_b$, red curves are the copied ones. The ``boundaries'' of each letter of the gate word $ABAC$ are depicted in green. (a) Labeled curves from $P_t$ and $P_b$ (b) The arising relabeled drawing, in particular, paired curves now have the same label.}
        \label{fig:above-below-relabeling}
    \end{figure}

    This concludes the construction of labeled drawings $Q_b$ and $Q_t$.
    Crucially, by construction,~$Q_b$~finishes the same set $X_b$ of vertices, routes the same set $E'_b$ of edges, and yields the same gate word $\omega_b$ as $P_b$ does (we will elaborate on outer bundles and $\Phi^A_b$ in detail later on).
    The analogous statement holds for $Q_t$ and $P_t$.

    Now we can finally define the labeled drawing $Q$.
    It is obtained by starting with the union of $Q_b$ and $Q_t$ and adding some straight-line segments inside $s$ as follows.
    First, for every~$i \in [|\omega_t|]$ and every $j \in [z_i]$ we connect the gate endpoints of $\phi^b_{i,j}$ and $\phi^t_{i,j}$ by a straight-line segment---recall that this curves are labeled with the same edge and by construction of the label, the arising curve is now the drawing of this edge.
    Second, for every $i \in [|\omega_u|]$ there is precisely one curve in $Q_b$ corresponding to the letter $A_i$ in the subword $\omega_u$ of $\omega_b$: furthermore, this curve is labeled $A_i u$ and starts in $A_i$.
    We connect the gate endpoint of this curve with $u$ by a straight-line segment---thus finishing the edge $A_i u$.
    Finally, let $r \in \mathbb{N}^+$ and let $\psi_1^b, \dots, \psi_r^b$ be the curves in $Q_b$ that were not yet extended by straight-line segments.
    Recall that these curves correspond to the suffix $\omega$ of $\omega_b$ and these are the curves in $Q_b$ with the rightmost endpoints on the gate of $b$.
    We choose points $p_1, \dots, p_r$ on the gate shared by~$s$ and its parent such that $p_i$ lies to the left of $p_{i+1}$ for every $i \in [r-1]$.
    And for every~$i \in [r]$, we add a straight-line segment between the gate endpoint of $\psi_i^b$ and $p_i$, the arising curve then preserves the label of $\psi_i^b$.
    This concludes the construction of $Q$.
    First, note that the gate word of $Q$ is $\omega$.
    Also observe that we add the straight-line segments from left to right so no crossings are produced.
    
    We will now argue that $Q$ is indeed a clean partial solution of $s$ with the record $R$.
    First, the vertices placed by $Q$ are precisely $X_b \cup X_t$.
    One check of the algorithm ensures that~$X_b \cap X_t = \emptyset$ and~$X_b \cup X_t = X$---so $Q$ places the missing vertices from $X$ and each of them is placed precisely once.
    And we argued above that the gate word of $Q$ is $\omega$.
    Let $E'$ denote the set of edges routed by $Q$, then it holds that 
    \[
        E' = E'_b \cup E'_t \cup E_u
    \]
    by construction---this is because $Q_b$ resp.\ $Q_t$ routes the same set $E'_b$ resp.\ $E'_t$ of edges as $P_b$ resp.\ $P_t$ does.
    The first 6 conditions (\cref{def:ps-no-crossing-gamma}-\cref{def:ps-label}) from the definition of a partial solution are trivially satisfied by construction.
    
    Now we consider the remaining properties (\cref{def:ps-placed-vertices} and \cref{def:ps-E-M-2}).
    We start with edges incoming into missing vertices.
    For a vertex $A \in X_b$ it is satisfied because all edges incoming into $A$ are finished by $P_b$ by definition of a partial solution and our transformations of the drawings never removed the finished edges, nor they changed labels of curves starting in pre-drawn vertices. 
    Now consider a vertex $A \in X_t$.
    We know that $R$ is acceptable so for every incoming edge $vA \in E_M^1$ we have $vA \in E^1$.
    The algorithmic check also ensures that we have 
    \[
        E^1 = E^1_b \cup E^1_t \cup E_u^1 \setminus \{Av \mid A \in X_b, \exists \gamma \in \Phi^A \colon v \in \gamma\}.
    \]
    Moreover, recall that we have $vA \notin E_u^1$ since this set only contains edges incoming into~$u$.
    Thus, $vA \in E^1_b \cup E^1_t$.
    If we have $vA \in E^1_t$, then the finished drawing of $vA$ from $P_t$ remains in~$Q$ and so $vA$ is finished by~$Q$ as well.
    And if $vA \in E^1_b$ holds, then there is a curve, say~$\phi^b_{i,j}$ for some $i, j \in \mathbb{N}$, in $P_b$ labeled $vA$ and ending on the gate of $b$.
    Recall that by the algorithmic check, the letter $A$ occurs in none of $\omega_u$ and $\omega$.
    So this curve corresponds to some letter $A$ in $\overleftarrow{\omega_t}$.
    Also recall that by construction of $Q_b$, since $A \in X_t$, we preserved the label of $\phi^b_{i,j}$.
    After that we connected $\phi^b_{i,j}$ by a straight-line segment with the gate endpoint of $\phi^t_{i,j}$, which is a downward curve of type $A$ so it starts in $A$ and ends on the gate of $t$.
    Hence the edge~$vA$ is finished and, once again, since $A$ does not occur in $\omega$, there is no curve in $Q$ labeled~$vA$ and ending on the gate of $s$.
    Now we consider an edge $BA \in E_M^2$ (for the still fixed vertex~$A \in X_t$).
    Again, $A$ does not occur in any letter of $\omega$ so $Q$ contains no curve ending on the gate of $s$ labeled $BA$.
    Further, the algorithmic check ensures that we have $B \in X$.
    If $B \in X_t$, then the edge is already finished by $P_t$, we did not change the finished edges so the edge is also finished by $Q$.
    And if $B \in X_b$, then by definition of partial solutions, both $\omega_b$ and $\overleftarrow{\omega_t}$ contain the letter $\{A,B\}$.
    The construction of $Q$ connects the unique corresponding curves by a straight-line segment and thus, finishes this edge $AB$.

    Now we move on to the edges outgoing from the missing vertices. 
    For a vertex $A \in X_t$, the definition of a partial solution ensures that all edges outgoing from $A$ are finished by~$P_t$, and they are still finished by $Q$ by construction.
    Furthermore, $\omega$ does not contain any letter containing $A$ so the edges outgoing from $A$ satisfy the desired property.
    Now consider a vertex~$A \in X_b$.
    We start with the simpler case and consider an edge $AB \in E_M^2$.
    If $B \in X_b$, then $P_b$ finished this edge and the letter $\{A, B\}$ does not occur in any of $\omega, \omega_u, \omega_t, \omega_b$.
    If~$B \in X_t$, then we already argued why this edges satisfies the desired property when considering the edges incoming into $B \in X_t$.
    So we may assume $B \in V_M \setminus X$.
    The record $R$ is acceptable so the word $\omega$ contains exactly one letter $\{A, B\}$.
    Since $\omega$ is a subword of $\omega_b$, the partial solution~$P_b$ contains a curve labeled $AB$ and ending on the gate of $s$ in the part corresponding to the subword $\omega$ of $\omega_b$.
    The construction of $Q$ continues this curve to the gate of $s$ as desired.
    
    Now consider an edge $Av \in E_M^1$ with $A \in X_b$.
    If the edge is finished by $P_b$, it remains finished by $Q$---so we may now assume $Av \notin E'_b$, and in particular this implies $Av \notin E^1_b$.
    If the edge is routed by $P_t$, i.e., there is a downward curve, say $\alpha$, from $v$ to the gate of $t$ labeled $Av$, then the construction of $Q$ (recall $A \in X_b$) ensures that $\alpha$ is connected using a straight-line segment to a curve in $Q_t$ starting in $A$ and ending on the gate of $b$---so the edge $Av$ is finished.
    Hence, we may now assume that the edge is not routed by $P_t$ either, i.e.,~$Av \notin E'_t$.
    Further, if $v = u$ and $Av \in E_u^1$, the edge $Av$ is finished by construction as well.
    The assumptions we made so far imply that we have $Av \notin E^1_b \cup E^1_t \cup E^1_u \supseteq E^1$ where the last subset-relation holds due to the check of the algorithm.
    Note that since $R$ is acceptable, $v$ is not internal to $f^s$ (see \cref{def:acceptable-internal}). 
    Now recall that since $Av$ is not finished by $P_b$, by definition of a partial solution (see \cref{def:ps-placed-vertices}) it contains a curve, say $\phi$, labeled $Av$ and ending on the gate of $b$.
    Now we make a case distinction.
    If we have $v \notin V^A(f \setminus f^s)$, then $A$ occurs in $\omega$ (see \cref{def:acceptable-outgoing-edges})---now recall that in the first step of the construction of $Q_b$ the existence of $\phi$ ensured that there is a curve labeled $Av$ ending in the suffix $\omega$ of $\omega_b$, and thus by construction of $Q$, this curve was continued by a straight-line segment to add on the gate of $s$ with the same label. 
    If we would have $v \in \gamma$ for some $\gamma \in \Delta_A(f \setminus f^s) \setminus \Phi^A$, then the acceptability of~$R$ (see \cref{def:acceptable-outer-bundles-routed}) would imply $Av \in E^1$, contradicting the above assumption.
    Hence, we have~$v \in \gamma$ for some~$\gamma \in \Phi^A$.
    By acceptability we know that $A$ occurs in $\omega$ (see \cref{def:acceptable-outer-bundles-letter}).
    Then as in the case above, by construction of $Q_b$ there is a curve in the suffix $\omega$ of $\omega_b$ that is labeled~$Av$ and continues in $Q$.
    Altogether, we obtain that either $Av$ is finished in $Q$ or there is a curve in $Q$ labeled $Av$ and ending on the gate of $s$.
    Therefore, $Q$ is indeed a partial solution.
    We will argue its cleanness later on.
    
    Next we show that $\Phi$ satisfies the property of a record of $Q$.
    First, recall that in the construction of $Q_b$   
    we discarded all curves labeled $Av$ with $A \in X_b$ and $v \in \gamma$ for some~$\gamma \in \Delta_A(f \setminus f^s) \setminus \Phi^A$ so no such curve ends on the gate of $s$ in $Q$.
    We will now show that for every $A \in X$, every $\gamma \in \Phi^A$, and every $v \in \gamma$, there is a curve in $Q$ ending on the gate of $s$.
    If $\Phi^A = \emptyset$, the claim trivially holds.
    Otherwise, by acceptability (see \cref{def:acceptable-outer-bundles-letter}), the word $\omega$ contains the letter $A$.
    The construction of $Q_b$ starts by ensuring that for this letter~$A$ there exists a curve labeled $Av$ for every $Av \in E_M^1$.
    After that we only discard curves labeled $Av'$ for $v' \in \gamma'$ for some $\gamma' \in \Delta_A(f \setminus f^s) \setminus \Phi^A$.
    Thus, for every $\gamma \in \Phi^A$ and every~$v \in \gamma$, the drawing $Q$ contains the curve labeled $Av$.
    Note that in particular, this implies that $Q$ satisfies the condition of being clean for all vertices $A \in X$.
    Finally, we now show that $E^1$ indeed satisfies the properties of the record of $Q$---note that the acceptability of $R$ then implies that $Q$ is also clean with respect to $A$-bundles for $A \in V_M \setminus X$ implying that $Q$ is clean.

    Given the above, it remains to show that for the set $E' = E'_b \cup E'_t \cup E^1_u$ of edges routed by $Q$ it holds that
    \[
        E^1 = E' \setminus \{Av \mid A \in X, \exists \gamma \in \Phi^A \colon v \in \gamma\}.
    \]
    Recall that the algorithm checks that $\Phi^A = \emptyset$ for every $A \in X_t$ so the above equality is equivalent to 
    \begin{equation}\label{eq:desired-eq-partial-solution}
        E^1 = (E'_b \cup E'_t \cup E^1_u) \setminus \{Av \mid A \in X_b, \exists \gamma \in \Phi^A \colon v \in \gamma\}.
    \end{equation}
    Recall that the algorithm also  checks that 
    \begin{equation}\label{eq:algorithmic-check}
        E^1 = (E^1_b \cup E^1_t \cup E_u^1) \setminus \{Av \mid A \in X_b, \exists \gamma \in \Phi^A \colon v \in \gamma\}
    \end{equation}
    holds.
    Since $P_t$ ant $P_b$ are partial solutions with records $R_t$ and $R_b$, respectively, we have
    \begin{equation}\label{eq:useful-equation}
        E^1_b = E'_b \setminus \{Av \mid A \in X_b, \exists \gamma \in \Phi^A_b \colon v \in \gamma\}, E^1_t = E'_t \setminus \{vA \mid A \in X_t, \exists \gamma \in \Phi^A_t \colon v \in \gamma\}.
    \end{equation}
    Therefore, the left-hand side of \eqref{eq:desired-eq-partial-solution} is trivially a subset of the right-hand side.
    So it remains to show the other direction.
    For this we consider an edge 
    \[
        e \in (E'_b \cup E'_t \cup E^1_u) \setminus \{Av \mid A \in X_b, \exists \gamma \in \Phi^A \colon v \in \gamma\}
    \]
    and will show that it belongs to $E^1$.
    Note that in particular, $e \in E'_b \cup E'_t \cup E^1_u$ implies that the edge is routed by $Q$.

    First, suppose that $e = vA$ for some $v \in V(H)$ and $A \in V_M$.
    If $A \in X$, then $vA \in E^1$ holds by acceptability of $R$ (see \cref{def:acceptable-incoming edges}).
    If $A \in V_M \setminus X$, then the following holds.
    The edge $vA$ is routed by $Q$ but it is not finished by $Q$ since $A$ was not placed by $Q$.
    Therefore, there is a curve labeled $vA$ in $Q$ and ending on the gate of $s$.
    By construction, this curve was obtained from a curve labeled $vA$ in $P_t$ by adding a straight-line segment to it, i.e., we have~$vA \in E'_b$ (for this recall that when transforming $P_b$ into $Q_b$ we did not discard any curve whose non-gate endpoint is pre-drawn).
    By definition of $E^1_b$ (see \eqref{eq:useful-equation}) we then have~$vA \in E^1_b$, i.e., $vA$ also belongs to the left-hand side of \eqref{eq:desired-eq-partial-solution}.
    So in both cases we have $e \in E^1$.

    Now we may assume that $e = Av$ for some $v \in V(H)$ and $A \in V_M$.
    First, consider the case that $Av \in E'_t$, by definition of $E^1_t$ we then also have $Av \in E^1_t$ (see \eqref{eq:useful-equation}).
    If $Av \in E^1_u$, then by \eqref{eq:algorithmic-check}, we also have $Av \in E^1$.
    So we may now assume that $Av \in E'_b$.
    By definition of a partial solution $P_b$, we then have $A \in X_b$.
    If $Av \in E^1_b$, then we are done as $Av \in E^1$ by~\eqref{eq:algorithmic-check}.
    So it holds that $v \in \gamma$ for some $\gamma \in \Phi^A_b$---in particular, we have $v \in V^A(f \setminus f^b)$.
    Recall that we are considering the edge $Av$ from the right-hand side of \eqref{eq:desired-eq-partial-solution}, and therefore we have $v \notin \gamma'$ for every $\gamma' \in \Phi^A$.
    Thus, there are two possibilities: either we have $v \in \gamma^*$ for some $\gamma^* \in \Delta_A(f \setminus f^s) \setminus \Phi^A$---in which case we have $Av \in E^1$ since $R$ is an acceptable record (see \cref{def:acceptable-outer-bundles-routed}). 
    Otherwise, we have $v \notin V^A(f \setminus f^s)$.
    By \cref{lem:outer-bundles-behaviour}, the vertex $v$ is then either internal to $f^s$ or not interesting with respect to $f \setminus f^s$---in both cases, the acceptability of $R$ (see \cref{def:acceptable-internal} and \cref{def:acceptable-non-interesting}), we have $Av \in E^1$.
    This concludes the proof of \eqref{eq:desired-eq-partial-solution}.
    Altogether, we obtain that $Q$ is a clean partial solution with record $R$.
\end{proof}

\begin{claim}[Opposite-Sided Polar Claim 3]
    The described procedure computing $\TT_s$ runs in time $k^{\mathcal{O}(k^3)} n^{\mathcal{O}(1)}$.
\end{claim}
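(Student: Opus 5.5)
The running time follows by counting the objects the procedure iterates over and bounding the cost of processing each one, in the same way as Non-Polar Claim~4.

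First, the procedure computes the set $D_s$ of all acceptable records of $s$. By \cref{thm:relevant-records} this takes time $k^{\bigoh(k^3)} n^{\bigoh(1)}$, and $|D_s| \le k^{\bigoh(k^3)}$. By \cref{cor:size-of-representative}, the given representatives satisfy $|\mathcal{R}_t|, |\mathcal{R}_b| \le k^{\bigoh(k^3)}$. The number of triples $(R_t, R_b, R)$ is therefore at most $(k^{\bigoh(k^3)})^3 = k^{\bigoh(k^3)}$.

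Second, for a fixed triple the procedure iterates over candidate words $\omega_u$. By the first required property these are words over $X_b \subseteq V_M$, and they are required to be short. \cref{obs:short-words-cubic} therefore bounds their number by $k^{\bigoh(k^3)}$. We can enumerate them by listing all short words over $V_M \cup {V_M \choose 2}$ and discarding those that violate the remaining three conditions. The condition involving $V^A(f\setminus f^b)$ is checked by precomputing $W(f\setminus f^b)$ and the bundles, as in the proof of \cref{thm:relevant-records}; this is polynomial in $n$.

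Third, for each tuple $(R_t,R_b,R,\omega_u)$ every algorithmic check runs in time polynomial in $n$. The set comparisons on $X$, $\Phi$ and the letters of $\omega,\omega_t$ are immediate. The set identity for $E^1$ is a comparison of subsets of $E_M^1$, which has size at most $|E(G)|$; membership of $v$ in bundles of $\Phi^A$ is precomputed. The word check \eqref{eq:word-check-opposite-sided} involves words of length $\bigoh(k^3)$ and is trivial.

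Multiplying the three factors gives $k^{\bigoh(k^3)}\cdot k^{\bigoh(k^3)}\cdot n^{\bigoh(1)} = k^{\bigoh(k^3)} n^{\bigoh(1)}$. The only point requiring care is the second step: bounding the number of candidate words $\omega_u$ relies on their shortness, which holds because shortness is one of the conditions the procedure explicitly imposes on them.
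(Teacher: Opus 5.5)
Your proposal is correct and follows the paper's proof: bound $|D_s|$, $|\mathcal{R}_t|$, $|\mathcal{R}_b|$ and the number of short candidate words $\omega_u$ by $k^{\mathcal{O}(k^3)}$ each, using the acceptable-records theorem, the representative-size corollary and the short-word observation. Then note that every algorithmic check for a fixed combination runs in time polynomial in $n$.
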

 
\begin{proof}
    First of all, the algorithm computes the set $D_s$ of all acceptable records for $s$---by \cref{thm:relevant-records} this can be done in time $k^{\bigoh(k^3)} n^{\bigoh(1)}$, and the size of this set is bounded by~$k^{\bigoh(k^3)}$.
    After that it iterates through all records $R_t \in \TT_t$ and $R_b \in \TT_b$ (the size of each of these sets is bounded by $k^{\bigoh(k^3)}$ \cref{cor:size-of-representative}), all acceptable records $R \in D_s$ for $s$, and the short words~$\omega_u$ (whose number is bounded by $k^{\mathcal{O}(k^3)}$ by \cref{obs:short-words-cubic}).
    And for a fixed combination of these objects, the algorithm carries out a set of checks each of which runs in time polynomial in $n$.
\end{proof}

\subsection{Both Children on the Same Side}
Let $s$ be a polar strip with exactly two children that both lie on the same side of $s$---without loss of generality we assume that these both children lie below $s$.
Then the parent of $s$ lies above $s$.
Let $l$ and $r$ denote the left and right child of $s$ correspondingly. 

The algorithm first computes the set $D_s$ of all records acceptable for $s$ (see \cref{thm:relevant-records}).
Then it considers all pairs $R_l = (\omega_l, X_l, E^1_l, \Phi_l) \in \mathcal{R}_l$, $R_r = (\omega_r, X_r, E^1_r, \Phi_r) \in \mathcal{R}_r$, and all acceptable records $R = (\omega, X, E^1, \Phi) \in D_s$ to check if it is possible to ``combine'' the records~$R_l$ and $R_r$ to obtain the record $R$.
So let now $R_l$, $R_r$, and $R$ be fixed.

First of all, we ensure that $X$ is a disjoint union of $X_l$ and $X_r$, otherwise we skip this triple of records. 
Next, for every $A\in X$, and every  $v \in \bigcup \Phi_l^A \cup \bigcup \Phi_r^A$, we check the following. If $v$ belongs to some $A$-bundle $\gamma \in \Delta_A(f\setminus f^s)$, then $\gamma \in \Phi^A$. We also ensure that there is no missing vertex $A \in X_l$ such that there is a letter in $\omega_r$ containing $A$. Symmetrically, we check that there is no missing vertex $A \in X_r$ such that there is a letter in $\omega_r$ containing $A$.

By construction of the strip tree, the common gate of $l$ and $s$ and the common gate of $r$ and $s$ share a vertex, say $q$.  
To ensure that the gate words of $s$, $l$ and $r$ agree, we branch over the words $\omega_q$ with the following properties:
\begin{itemize}
    \item $\omega_q$ contains only letters from $V_M \setminus X$,
    \item $\omega_q$ is short,
    \item for every letter $A$ occurring in $\omega_q$ we have $qA \in E_M^1$.
\end{itemize}
We then define the set $E_q^1 = \{A \in V_M \mid A \text{ occurs in } \omega_q\}$.
For a fixed choice of $\omega_q$, let $\omega_*$ be the concatenation of $\omega_l$, $\omega_q$ and $\omega_r$. By a concatenation we mean that we glue the words together and in case there are consecutive appearances of the same letter, we keep only one of them. We check that $\omega_*=\omega$ holds.
Finally, we set $E^1_*=(E^1_l \cup E^1_r \cup E^1_q) \setminus \{Av \mid A \in X, \exists \gamma \in \Phi^A \colon v \in \gamma \}$. If $E^1= E^1_*$, we add the record $R$ to $\mathcal{R}_s$, and otherwise we continue with the next choice of~$\omega_q$.

To establish correctness, it suffices to prove the following two claims. 
\begin{claim}[Same-Sided Polar Claim 1]
    For every everywhere clean solution $S$, the table~$\mathcal{R}_s$ contains the record of $S$ at~$s$.
\end{claim}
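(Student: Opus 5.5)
We mirror the proof of Opposite-Sided Polar Claim 1. Fix an everywhere clean solution $S$ and, for $x\in\{l,r,s\}$, let $S^x$ be its restriction to $f^x$ with record $R_x$, writing $R=R_s=(\omega,X,E^1,\Phi)$. Since $\mathcal{R}_l$ and $\mathcal{R}_r$ are representatives and $S$ is everywhere clean, $R_l\in\mathcal{R}_l$ and $R_r\in\mathcal{R}_r$. Because $S^s$ is clean and realizable, \cref{thm:relevant-records} gives $R\in D_s$. We then exhibit a word $\omega_q$ for which the iteration on $(R_l,R_r,R,\omega_q)$ passes every algorithmic check.

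\textbf{Simple checks.} Since $s$ is polar, $S$ places no vertex in $s$. Together with $f^s=f^l\cup f^r\cup s$, this gives $X=X_l\sqcup X_r$.

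For the bundle check, take $A\in X$ and $v\in\gamma_l\in\Phi^A_l$, and suppose $v$ lies in some $\gamma\in\Delta_A(f\setminus f^s)$. Everywhere cleanness at the gate of $l$ yields a curve labeled $Av$ ending on the gate of $l$. The only continuations of this curve inside $s$ are upward, since both children lie below $s$. It cannot end at $q$, because $q$ is on the bottom of $s$. So it either reaches $v$ inside $s$, which is impossible since the boundary of $s$ carries no other $V_M$-neighbour besides $q$ (\cref{lem:streamlining}), or it crosses the top gate of $s$. Hence $\gamma\in\Phi^A$.

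For the letter check, suppose $A\in X_l$ appeared in a letter of $\omega_r$. Then some upward curve would cross the gate between $s$ and $l$ from below and the gate between $s$ and $r$ from above, i.e., downward. This contradicts upwardness, and the symmetric statement for $X_r$ follows the same way.

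\textbf{Choice of $\omega_q$.} Inside $s$, the curves of $S$ leaving $q$ upward into $s$ are edges $qA$. Their head $A$ is not in $X$, because a placed head would lie below $s$ and no upward curve from $q$ into $s$ returns downward. Let $\omega_q$ be the gate-word compression of their head letters in left-to-right order. It satisfies the required membership conditions by construction.

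\textbf{Gate-word check.} Since no vertex is placed in $s$, every curve in $s$ is a monotone segment from the bottom of $s$ to the top gate. The bottom of $s$ consists of the gate with $l$, then $q$, then the gate with $r$. Planarity inside the trapezoid preserves left-to-right order, so the top-gate sequence is the concatenation of the sequences for $l$, $q$ and $r$. After merging repeated adjacent letters this gives $\omega=\omega_*$. Shortness of $\omega_q$ follows as a subword of the short word $\omega$ (\cref{lem:short-partial-solutions}).

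\textbf{Edge-set check.} This is the main obstacle: proving $E^1=(E^1_l\cup E^1_r\cup E^1_q)\setminus\{Av\mid A\in X,\exists\gamma\in\Phi^A\colon v\in\gamma\}$. We use $E'=E'_l\cup E'_r\cup E^1_q$, which holds because only $q$ on the boundary of $s$ is adjacent to $V_M$, and we expand each $E^1_x$ via its record definition.

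The delicate case is an edge $Av$ with $A\in X_l$ and $v\in\gamma\in\Phi_l^A$ but $v$ in no bundle of $\Phi^A$. Such an edge is excluded from $E^1_l$ yet must belong to $E^1$. By \cref{lem:outer-bundles-behaviour}, $v$ is then internal to $f^s$, or not interesting with respect to $f\setminus f^s$, or lies in $\Delta_A(f\setminus f^s)\setminus\Phi^A$.
\begin{itemize}
\item If $v$ is not interesting, or lies in $\Delta_A(f\setminus f^s)\setminus\Phi^A$, the curve of $Av$ ending on the gate of $l$ cannot exit through the top gate of $s$. So it must end in $f^r$ or at $q$, which puts $Av\in E'_r\cup E^1_q$ with $v$ outside every bundle of $\Phi^A_r$.
\item If $v$ is internal to $f^s$, the edge is routed inside $f^s$, which again places it in $E'_r\cup E^1_q$ outside every bundle of $\Phi^A_r$.
\end{itemize}
Repeating this argument symmetrically for $X_r$ completes the set equality.
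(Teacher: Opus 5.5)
Up to the edge-set check, your proof follows the paper: the disjointness of $X_l$ and $X_r$, the letter check, the bundle check, the choice of $\omega_q$ and the gate-word check all match. The edge-set check, which you rightly call the main obstacle, contains a genuine error. In the delicate case ($A\in X_l$, $v\in\gamma\in\Phi^A_l$, $v\notin\bigcup\Phi^A$, $Av\in E'$) you try to show $Av\in E'_r\cup E^1_q$, and that target cannot be reached. First, $E^1_q$ only contains edges $qB$ with $B\in V_M\setminus X$, so it never contains $Av$ with $A\in X$. Second, a drawing of $Av$ that crosses the gate of $l$ enters $s$ from below as an upward curve. By exactly the reasoning you gave in the bundle check, it can neither cross the gate of $r$ (that would be downward) nor end at $q$, so it must leave through the top gate of $s$. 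Your sentence ``so it must end in $f^r$ or at $q$'' therefore contradicts your own earlier argument. The internal case (``routed inside $f^s$, which again places it in $E'_r\cup E^1_q$'') does not follow either. Moreover, $\Phi^A_r$ is not even defined when $A\in X_l$.

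The missing idea is that this case is vacuous, so the inclusion $E^1\subseteq E^1_*$ is proved by contradiction. Since $v\in\gamma\in\Phi^A_l$ and $S$ is everywhere clean, $S$ contains a drawing of $Av$ crossing the gate of $l$. As above, it crosses the top gate of $s$ and then reaches $v$ inside $f\setminus f^s$. Hence $v$ is interesting with respect to $f\setminus f^s$ and not internal to $f^s$. (The paper gets the same two facts from the witness curve showing $v$ is interesting with respect to $f\setminus f^l$, which likewise must pass the top gate of $s$.) Now \cref{lem:outer-bundles-behaviour} gives $v\in V^A(f\setminus f^s)$, so $v\in\gamma'$ for some $\gamma'\in\Delta_A(f\setminus f^s)$. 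The bundle check you already verified then yields $\gamma'\in\Phi^A$; equivalently, this follows from the definition of $\Phi^A$, since a curve labeled $Av$ ends on the gate of $s$. This contradicts $v\notin\bigcup\Phi^A$. So every $Av\in E^1$ with $A\in X$ that is routed by $S^l$ already lies in $E^1_l$, and symmetrically for $r$. The opposite inclusion $E^1_*\subseteq E^1$ is easy, from $E^1_x\subseteq E'_x$ and $E'=E'_l\cup E'_r\cup E^1_q$. Finally, the excluded sets never contain edges into missing vertices or edges incident with $V_M\setminus X$. Together these complete the edge-set check.
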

\begin{proof}
Let $S_s$ be a partial solution in $s$, obtained by restricting everywhere clean solution~$S$ to $f^s$. Let $S_l$ and $S_r$ be the restrictions of $S$ to $f^l$ and $f^r$ respectively. 
The table $\TT_l$ (resp.~$\TT_r$) is a representative of $l$ (resp.\ $r$) so it 
contains the record $R_l=(X_l, \omega_l, E^1_l, \Phi_l)$ (resp.~$R_r=(X_r, \omega_r, E^1_r, \Phi_r)$) of $S_l$ (resp.\ $S_r$). Let $R=(X, \omega, E^1, \Phi)$ be the record of $S_s$, then it is acceptable by \cref{thm:relevant-records}. We will show that there exists a word $\omega_q$ such that algorithm adds $R$ to $\mathcal T_s$ when it considers the triple of records $R$, $R_l$, and $R_r$ together with the word $\omega_q$, i.e., no check fails.

 The interiors of $f^l$ and $f^r$ are disjoint, so we have $X_l \cap X_r =\emptyset$. Moreover, since $S$ is a solution, the partial solution $S_s$ does not place any vertices in the non-polar strip $s$, so the set $X$ of missing vertices placed by $S_s$ is precisely $X_l \cup X_r$. 
 Moreover, consider some $A \in X_l$ and suppose that $\omega_r$ contains a letter containing $A$. 
 This would imply that $S$ contains an upward curve that starts inside $f^r$ and ends inside $f^l$---such an upward curve from would need to first cross the gate shared by $l$ and $s$ from bottom to top and then cross the gate shared by $r$ and $s$ from top to bottom---contradicting upwardness.
  A symmetric argument shows that we cannot have $A \in X_r$ and $A$ occurring in some letter of $\omega_l$---so these checks are also satisfied.

 To see that the check for the bundles in $\Phi$ is passed, fix $A\in X$. Let $v$ be some vertex of~$\bigcup \Phi_l^A$ that also belongs to some $A$-bundle $\gamma \in \Delta_A(f\setminus f^s)$.  
 By definition of the record of a partial solution, $S_l$ contains a curve labeled by $Av'$ where $v'$ belongs to the same $A$-bundle as~$v$ w.r.t. $f\setminus f^l$. Since $S_l$ is clean as a restriction of everywhere clean solution, it also contains a curve labeled with $\{A,v\}$.
  Since the gate shared by $s$ and its parent is the only top-gate of, this curve continues up to this gate in $S_s$.
  Since $R$ is a record of $S_s$, existence of such a curve in $S_s$ implies that $\gamma \in \Phi^A$. The case when $v$ is a vertex of $\bigcup \Phi_r^A$ that also belongs to some $A$-bundle $\gamma \in \Delta_A(f\setminus f^s)$ is symmetric.

 Let $\phi_1^q, \dots, \phi_z^q$ (for some $z \geq 0$ be the curves in $S_s$ that leave $q$ inside $s$ in the left-to-right ordering.
 And let $\omega_q$ be the word obtained from the types of these curves after leaving precisely one letter from each sequence of consecutive equal letters.
 For every letter, say~$\alpha$, occuring in $\gamma$ we have $\alpha \in V_M \setminus X$---this is because no missing vertex is placed by $S$ inside the non-polar strip $s$ so every upward curve in $S$ that leaves $q$ inside $s$ necessarily crosses the unique top-gate of $s$, namely the gate shared by $s$ and its parent.
 Furthermore, we have~$q \alpha \in E_M^1$ since this curve is a drawing of some edge with tail $q$.
 Now consider the branch of the algorithm corresponding to $\omega_q$.
 First, observe that if we traverse the curves in $S_s$ ending on the gate of $s$ from left to right, we first see the curves from $S_l$ that end on the gate of $l$ and the continue to the gate of $s$ in $S_s$; then the curves $\phi_1^q, \dots, \phi_z^q$ starting in~$q$; and finally, the curves from $S_r$ that end on the gate of $r$ and the continue to the gate of $s$ in $S_s$.
 Therefore, the word $\omega$ is obtained from the concatenation $\omega_l \omega_q \omega_r$ by removing repetitions---and this check does not fail.
 Furthermore, this implies that $\omega_q$ is a subword of the short word $\omega$ (by acceptability of $R$) so $\omega_q$ is short as well.
 
 It remains to show that the check on $E^1$ is satisfied, i.e., that we have 
 \begin{equation}\label{eq:desired-same-side-everywhere-clean}
    E^1 = E_*^1 \text{ where } E_*^1 = (E^1_l \cup E^1_r \cup E^1_q) \setminus \{Av \mid A \in X, v \in \bigcup \Phi^A \}.
 \end{equation}
 Let $E'$ denote the set of edges routed by $S_s$, and let $E'_l$ and $E'_r$ denote the sets of edges routed by $S_l$ and $S_r$.
 Then by definition of the records, we have
 \[
    E^1 = E' \setminus \{Av \mid A \in X, \exists \gamma \in \Phi^A \colon v \in \gamma\},
 \]
 \[
    E^1_l = E'_l \setminus \{Av \mid A \in X_l, \exists \gamma \in \Phi^A_l \colon v \in \gamma\},
 \]
 \[
    E^1_r = E'_r \setminus \{Av \mid A \in X_r, \exists \gamma \in \Phi^A_r \colon v \in \gamma\}.
 \]
 By definition of $S_l$, $S_r$, and $E_q^1$, we then have
 \[
    E' = E'_l \cup E'_r \cup E_q^1 \text{, and therefore also } E^1 = (E'_l \cup E'_r \cup E_q^1) \setminus \{Av \mid A \in X, v \in \bigcup \Phi^A \}.
 \]
 It is therefore straight-forward to verify that the sets $E^1$ and $E_*^1$ agree on the edges incoming into missing vertices as well as edges with an endpoint in $V_M \setminus X$---so in the remainder we focus on the edges outgoing from $X$.
 Furthermore, it is immediate to see from these equalities that we have $E_*^1 \subseteq E^1$.
 
 For the other direction, consider an edge $e = Av$ on the left-hand side of \eqref{eq:desired-same-side-everywhere-clean} with $A \in X$.
 In particular, we have $e \in E'$, i.e., the edge $e$ is routed by $S_s$.
 Any routed by $S_s$ edge either belongs to $E^1_q$ or is routed by at least one of $S_l$ and $S_r$. 
 If it belongs to $E_1^q$, then it also belongs to to the right-hand side of \eqref{eq:desired-same-side-everywhere-clean}.
 So we may assume that the edge $e\in E^1$ is routed by~$S_l$ (the case when $e$ is routed by $S_r$ is symmetric), i.e., we have $e \in E'_l$.
 If the edge $e$ belongs to $E^1_l$, we are done, as it then belongs to the right-hand side of \eqref{eq:desired-same-side-everywhere-clean}.
 So let $e$ not belong to $E^1_l$. 
 Then there exists a vertex $A\in X$ and some $A$-bundle $\gamma\in \Phi^A_l$ of~$f^l$ such that $e=Av$ and~$v \in \gamma$. In particular, $v$ is interesting with respect to $l$, so there exists an upward curve, say $\phi$, from the common gate of $l$ and $s$ to $v$ that does not intersect the pre-drawn part. 
 Since the gate shared by $s$ and its parent is the only top-gate of $s$, 
 $\phi$ also intersects the gate $g$ of $s$.
 Hence, $\phi$ witnesses that $v$ is interesting with respect to $s$. Moreover, note that every strip visited by $\phi$ after it intersects the gate of $s$ lies outside of~$f^s$. In particular, $v$ is not internal to $s$. 
 By Lemma~\ref{lem:outer-bundles-behaviour} we conclude that $v$ is $A$-bundled with respect to $f\setminus f^s$. Let $\gamma' \in \Delta_A(f \setminus f^s)$ be the bundle containing $v$. Since $v$ also belongs to $\gamma\in \Phi^A_l$, on of the algorithmic checks ensures that $\gamma' \in \Phi^A$. But then $e=Av$ does not belong to $E^1$---contradicting the choice of $e \in E^1$. 
 So we obtain $E^1 \subseteq E_*^1$.
 Altogether, this implies that this last check does not fail either.
 Therefore, the record $R=(\omega, X, E^1, \Phi)$ was successfully added to $\mathcal T_s$ by the algorithm. 
\end{proof}
\begin{claim}[Same-Sided Polar Claim 2]
    For every record $R$ in $\mathcal{R}_s$ there exists a clean partial solution with this record.
\end{claim}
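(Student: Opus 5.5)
The plan mirrors the proof of the Opposite-Sided Polar Claim 2. Fix $R=(\omega,X,E^1,\Phi)\in\mathcal{R}_s$, together with the records $R_l\in\mathcal{R}_l$, $R_r\in\mathcal{R}_r$ and the word $\omega_q$ for which the algorithm added $R$. Since $\mathcal{R}_l$ and $\mathcal{R}_r$ are representatives, there are clean partial solutions $P_l$ and $P_r$ with records $R_l$ and $R_r$; let $E'_l$ and $E'_r$ denote the sets of edges they route. The partial solution $Q$ of $s$ will be the union of modified copies $Q_l$ and $Q_r$ of $P_l$ and $P_r$, together with curves inside $s$. Inside $s$ we add straight-line segments from the gate endpoints of the curves of $Q_l$, then from new curves leaving $q$, then from the gate endpoints of the curves of $Q_r$, to left-to-right ordered points on the unique top-gate of $s$. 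Since $q$ is the common endpoint of the two bottom gates and lies between them, this order is realizable without crossings. By the check $\omega=\omega_*$, the resulting gate word is $\omega$.

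\textbf{Preparing $Q_l$ and $Q_r$.} First I apply the copying operation so that every $V_M$-letter of $\omega_l$ and $\omega_r$ is represented by enough curves. For every $A\in X_l$ occurring in $\omega_l$, I make sure each occurrence of $A$ carries a curve labeled $Av$ for every $Av\in E_M^1$. Then I discard all curves labeled $Av$ with $A\in X$ and $v$ in some $\gamma\in\Delta_A(f\setminus f^s)\setminus\Phi^A$. Condition \ref{def:acceptable-A-letter} of acceptability guarantees that this does not change the gate word. I do the same on the right. Neither step changes the routed edge sets, because only curves starting at missing vertices are touched. For each letter $A$ of $\omega_q$, I draw one segment from $q$ into $s$ labeled $qA$; the branching constraints ensure $qA\in E_M^1$ and $A\notin X$. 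The checks that no letter of $\omega_r$ contains a vertex of $X_l$ (and symmetrically) ensure that no curve of $Q_l$ or $Q_r$ would have to end at a vertex placed on the other side. Hence no edge needs to be finished inside $s$, and we only have to verify that every edge leaving $X$ that is not finished in $P_l$ or $P_r$ has a curve reaching the top-gate.

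\textbf{Verifying the partial-solution axioms and the record.} Properties \ref{def:ps-no-crossing-gamma}--\ref{def:ps-label} follow directly from the construction, using that $s$ is polar and receives no vertex. For \ref{def:ps-placed-vertices} and \ref{def:ps-E-M-2}, take an edge $Av$ with $A\in X_l$ that $P_l$ does not finish. Then $P_l$ has a curve labeled $Av$ ending on its gate. If $v\notin V^A(f\setminus f^s)$, condition \ref{def:acceptable-outgoing-edges} gives either $Av\in E^1$ or that $A$ occurs in $\omega$; if $v$ lies in a bundle of $\Phi^A$, condition \ref{def:acceptable-outer-bundles-letter} gives that $A$ occurs in $\omega$; if $v$ lies in a bundle not in $\Phi^A$, condition \ref{def:acceptable-outer-bundles-routed} gives $Av\in E^1$. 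When $A$ occurs in $\omega$, the copies created above supply a curve ending on the top-gate of $s$. The $E_M^2$ edges are handled through shortness and condition \ref{def:acceptable-E-M-2}.

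For the record, $\Phi$ is correct because of the discarding step combined with the copying step. For $E^1$, I must show $E^1=(E'_l\cup E'_r\cup E_q^1)\setminus\{Av\mid A\in X, v\in\bigcup\Phi^A\}$ from the algorithmic check $E^1=E^1_*$. The inclusion $\supseteq$ of the problematic direction is the main obstacle: an edge $Av\in E'_l\setminus E^1_l$ with $v$ in a bundle of $\Phi^A_l$ but in no bundle of $\Phi^A$. Here I would argue as in Same-Sided Polar Claim 1. The vertex $v$ is interesting with respect to $f\setminus f^l$, and the witnessing curve passes through the top-gate of $s$, so $v$ is interesting with respect to $f\setminus f^s$ and not internal to $f^s$. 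By \cref{lem:outer-bundles-behaviour}, $v$ is then $A$-bundled with respect to $f\setminus f^s$, and the bundle check forces its bundle into $\Phi^A$, which contradicts the assumption. Cleanness of $Q$ then follows from the record identity combined with condition \ref{def:acceptable-inner-bundles}, and from the copy/discard construction for vertices in $X$.
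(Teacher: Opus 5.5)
Your proposal is correct and follows essentially the same route as the paper. You combine clean partial solutions of $l$ and $r$ with segments from $q$, copy curves so that every occurrence of a placed letter $A$ carries every label $Av$, and discard the labels lying in outer bundles outside $\Phi^A$, with the $A$-letter acceptability condition protecting the gate word. You then settle the record identity via the interestingness argument and \cref{lem:outer-bundles-behaviour}.

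The one step you leave unfinished is the case where acceptability gives $Av\in E^1$ for an edge $Av$ with $A\in X_l$ that $P_l$ does not finish. There you still have to invoke the check $E^1=E^1_*$. It places $Av$ in $E^1_l\cup E^1_r\cup E^1_q$, hence in $E^1_l$, because edges of $E^1_q$ have tail $q$ and $A\notin X_r$. So $P_l$ already finishes $Av$, which rules this case out, exactly as in the paper's argument.
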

\begin{proof}
Let~$R=(X, \omega, E^1, \Phi)$ be a record that was added to $\TT_s$ when some triple~$R$,~$R_l=(X_l, \omega_l, E^1_l, \Phi_l) \in \TT_l$, and~$R_r=(X_r, \omega_r, E^1_r, \Phi_r) \in \TT_r$ was processed.
And let $\omega_q$ be the word for which this happened.
Since $\TT_l$ and $\TT_r$ are representatives of $l$ and $r$, respectively, there exist clean partial solutions $S_l$ and $S_r$ of $l$ and $r$ with records $R_l$ and $R_r$, respectively. 
Let~$Y_l$ and~$Y_r$ denote the sets of edges finished by $S_l$ and $S_r$ respectively. We construct $S$ as follows. First, we take an internally disjoint union of $S_l$ and $S_r$. 
Then we extend every curve in $S_l$ (resp.\ $S_r$) ending on the gate of $l$ (resp. $r$) by a straight-line segment inside $s$ ending on the gate of $s$---we do so 
so that all of these end on the gate of $s$ in different internal points, and no two straight-line segments intersect. We preserve the labels of these curves. 
Further, let~$\omega_q = B_1 \dots B_z$ for some $z \geq 0$ and $B_1, \dots, B_z \in V_M \setminus X$. 
For every $i \in [z]$ we add a new curve labeled $q B_i$ in such a way that on the gate of $s$ we first see the prolonged curves from~$S_l$, then the curves labeled $q B_1, \dots, q B_z$ (in this ordering), and then the prolonged curves from $S_r$.
Note that the gate word on the gate of $s$ yield by the drawing created so far is obtained from $\omega_l \omega_q \omega_r$ by repeating at most two repetitions of consecutive letters, i.e., it is precisely $\omega$ due to an algorithmic check.

Next, for every letter $A\in X$, every occurrence $o$ of $A$ in the word $\omega$, and every edge~$Av \in E^1_M$ 
we copy some curve corresponding to $o$ and label the copied curve by $Av$. The operation of copying was formally defined in the proof of \cref{clm: opposite-sided}. 
Next, for each $A$-bundle $\gamma \in \Delta_A(f \setminus f) \setminus \Phi^A$ we discard every curve labeled $Av$ with $v \in \gamma$.
Note that this does not change the set of edges routed by $S$ since we only discard some curves starting in a missing vertex and ending on the gate of $s$.
More importantly, this does not change the gate word of $S$.
Suppose for the sake of contradiction, the gate word changed by this discarding.
In particular, this implies that $\omega$ contains the letter $A$.
Since $R$ is acceptable (see \cref{def:acceptable-A-letter}), there exists a vertex $v$ with $Av \in E_M^1$ and $v \notin \gamma$ for every $\gamma \in \Delta_A(f \setminus f^s) \setminus \Phi^A$. 
But then the curve labeled $Av$ was not discarded and therefore, the gate word did not change.
This concludes the construction of $S$.
We now show that $S$ is a clean partial solution with record~$R$.

Let $X=X_l \cup X_r$ and $Y=Y_l\cup Y_r \cup E^1_q$, then $X$ and $Y$ are the sets of vertices and edges finished by $S$. 
To see that $S$ is a partial solution in $s$, first note that Item~1-\cref{def:ps-label} of \cref{def:partial-solution} follow directly from our construction and the fact that $S_l$ and $S_r$ are partial solutions.

To check \cref{def:ps-placed-vertices}, fix a vertex $A\in X$ and without loss of generality assume that $A \in X_l$ holds.
First, consider the edges outgoing from $A$.
Since \cref{def:ps-placed-vertices} holds for $S_l$, every missing edge of form $vA \in E_M^1$ is finished by $S_l$ so it is by $S$ as well.
Furthermore, $S$ contains no curve labeled $vA$ ending on the gate of $s$, this is true for the following reason. 
First, $S_l$ does not contain such a curve due to satisfying \cref{def:ps-placed-vertices}.
Second, $S_r$ does not since $A$ was not placed by $S_r$.
Further, $\omega_q$ does not contain $A$ due to $A \in X_l$.
Finally, our copying only copies some curves with labels being edges outgoing from missing vertices.
Now we consider edges outgoing from $A$.
First of all, since $S_l$ is a partial solution, for every edge~$Av \in E_M^1$, this edge is either finished by $S_l$ (and it remains finished by $S$ by construction), or there is a curve labeled $Av$ ending on the gate of $l$.
If we did not discard this curve, then it is continued in $S$.
So we may assume that $v \in \gamma$ for some $\gamma \in \Delta_A(f \setminus f^s) \setminus \Phi^A$.
Since $R$ is acceptable, we have~$Av \in E^1$ (see \cref{def:acceptable-outer-bundles-routed}).
By our algorithmic check we, in particular, have~$Av \in E^1_l \cup E^1_r \cup E_q^1$.
We cannot have $Av \in E_q^1$ since this set only contains edges outgoing from $q$.
If we have~$Av \in E^1_l$, then the edge is finished by $S_l$ and therefore, by~$S$ as well.
Finally, we cannot have~$Av \in E^1_r$ since $A$ was not placed by $S_r$ and the gate of $r$ is the top-gate of $r$.
Therefore, $S$ satisfies \cref{def:ps-placed-vertices}.

Finally, to see that \cref{def:ps-E-M-2} holds, consider an edge $e=AB$ whose both endpoints are missing, i.e., $e\in E^2_M$. 
First, if $A$ belongs to $X$ and $B$ does not belong to $X$, we may assume without loss of generality that $A\in X_l$. By \cref{def:ps-E-M-2} for $S_l$, there is precisely one curve labeled by $\{A,B\}$ that intersects the gate of $l$. By construction, this curve is continued to the gate of $s$ in $S$. We show that in this case there can be no curve labeled with $\{A,B\}$ in $S_r$. Recall that the record $R_r$ of $S_r$ is acceptable. If $S_r$ would contain such a curve, then its gate word~$\omega_r$ would contain letter $\{A,B\}$, and hence by \cref{def:acceptable-E-M-2} $A$ must belong to $X_r$. But~$A\in X_l$ and our algorithm ensures that $X_l$ and $X_r$ are pairwise disjoint. Hence, $S$ contains precisely one curve labeled by $\{A,B\}$. It remains to consider the case when both $A$ and $B$ belong to~$X$.
It follows from our algorithmic check on the gate words that in this case they either both belong to $X_l$ or both to $X_r$. 
Then \cref{def:acceptable-E-M-2} for $S_l$ (resp.~$S_r$) implies that $AB \in Y_l$ (resp.~$AB \in Y_r$) and hence $AB \in Y$. Therefore, $S$ is indeed a partial solution in $s$.

Let us show that $R$ is a record of $S$. The first two items in the definition of the record of a partial solution (corresponding to $\omega$ and $X$) follow directly from our construction. 
To check \cref{def:record-phi}, consider any $A\in X$ and any bundle $\gamma \in \Delta_A(f\setminus f^s)$. If $\gamma \in \Phi^A$, then, first, the acceptability of $R$ (see \cref{def:acceptable-outer-bundles-letter}) implies that $\omega$ contains $A$.
Now recall that while constructing~$S$ we ensured, by copying, that it contains a curve labeled by $\{A,v\}$ for every~$v \in \gamma$ with~$\gamma \in \Phi^A$ and such curves were not discarded. 
On the other hand, we discarded all curves labeled $Av$ for $v \in \gamma$ with $\gamma \in \Delta(f\setminus f^s) \setminus \Phi^A$ so $S$ contains no such curve. 

Finally, to check \cref{def:record-e1}, we will show that for the set $E'$ of edges routed by $S$ we have
\begin{equation}\label{eq:desired-eq-same-side-clean-ps}
    E^1 = E' \setminus \{Av \mid A \in X, v \in \bigcup \Phi^A\}.
\end{equation}
Let $E'_l$ and $E'_r$ denote the set of edges routed by $S_l$ and $S_r$, respectively.
By construction of~$S$ we have
\[
    E' = E'_l \cup E'_r \cup E^1_q.
\]
Since $R_l$ and $R_r$ are the records of $S_l$ and $S_r$, respectively, we have
\[
    E^1_l = E'_l \setminus \{Av \mid A \in X_l, v \in \bigcup \Phi^A_l\},
\]
\[
    E^1_r = E'_r \setminus \{Av \mid A \in X_r, v \in \bigcup \Phi^A_r\}.
\]
And one of the algorithmic checks ensures that we have
\begin{equation}\label{eq:alg-check-same-side-proof}
    E^1 = (E^1_l \cup E^1_r \cup E^1_q) \setminus \{Av \mid A \in X, v \in \bigcup \Phi^A\}.
\end{equation}
Thus we have to show that we have
\[
    (E^1_l \cup E^1_r \cup E^1_q) \setminus \{Av \mid A \in X, v \in \bigcup \Phi^A\} = (E'_l \cup E'_r \cup E^1_q) \setminus \{Av \mid A \in X, v \in \bigcup \Phi^A\}.
\]
The above equalities imply that the left-hand side is a subset of the right-hand side and it remains to show the other direction.
So let $e \in E'$ be an edge, i.e., an edge routed by $S$, that does not belong to $\{Av \mid A \in X, v \in \bigcup \Phi^A\}$.
We will show that $e \in E^1$ holds.
First, if we have $e \in E_1^q$, then $e$ also belongs to $E^1$ by \eqref{eq:alg-check-same-side-proof}.
So we may assume without loss of generality that $e \in E'_l$ holds, i.e., $e$ is routed in $S_l$. 
Then it belongs to $E^1_l \cup \{Av \mid A\in X_l,v\in \bigcup \Phi^A_l\}$. 
If $e\in E^1_l$, then it is contained in $E^1$ as well by \eqref{eq:alg-check-same-side-proof}.
So it remains to consider the case when~$e=Av$ for $A\in X_l$ and $v\in \bigcup \Phi_l^A$. 
In particular, $v$ is interesting with respect to $l$, so there exists an upward curve, say $\phi$, from the common gate of $l$ and $s$ to $v$ that does not intersect the pre-drawn part. 
 Since the gate shared by $s$ and its parent is the only top-gate of $s$, 
 $\phi$ also intersects the gate $g$ of $s$.
 Hence, $\phi$ witnesses that $v$ is interesting with respect to~$s$. Moreover, note that every strip visited by $\phi$ after it intersects the gate of $s$ lies outside of $f^s$. In particular, $v$ is not internal to $s$. 
 By Lemma~\ref{lem:outer-bundles-behaviour} we conclude that $v$ is $A$-bundled with respect to $f\setminus f^s$.
Hence, one of the algorithmic checks ensures that $v\in \bigcup \Phi^A$. But this contradicts our assumption that $e$ does not belong to $\{Av \mid A \in X, v \in \bigcup \Phi^A\}$. 
Therefore, the desired equality \eqref{eq:desired-eq-same-side-clean-ps} indeed holds.

Finally, it remains to show that $S$ is clean.
For the outer bundles, it follows from the construction of $S$ that for any $A \in X$ and any $A$-bundle $\gamma\in \Delta_A(f\setminus f^s)$, $S$ contains curves either for every missing edge $Av$ with $v \in \gamma$ (which happens if $\gamma \in \Phi^A$), or for none of them (if $\gamma \not \in \Phi^A$).  
For the inner bundles, we use the fact that $R$ is acceptable record. In particular, for any $A\in V_M \setminus X$, any $A$-bundle $\gamma\in \Delta_A(f^s)$, and any pair $v,v'$ of vertices of $\gamma$ it follows from \cref{def:acceptable-inner-bundles} of \cref{def:acceptable} that whenever $vA$ belongs to $E^1$, $v'A$ belong to $E^1$ as well. 
\end{proof}
\begin{claim}[Same-Sided Polar Claim 3]
    The described procedure computing $\TT_s$ runs in time $k^{\mathcal{O}(k^3)} n^{\mathcal{O}(1)}$.
\end{claim}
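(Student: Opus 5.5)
The plan mirrors the running-time arguments of Non-Polar Claim~4 and Opposite-Sided Polar Claim~3: bound how many combinations the procedure enumerates, and multiply by the polynomial cost of handling one combination.

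\emph{Step 1 (acceptable records).} By \cref{thm:relevant-records}, computing $D_s$ takes time $k^{\bigoh(k^3)} n^{\bigoh(1)}$ and yields $|D_s| \le k^{\bigoh(k^3)}$.

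\emph{Step 2 (enumeration).} The procedure then iterates over all triples $(R_l,R_r,R)\in \TT_l\times\TT_r\times D_s$. By \cref{cor:size-of-representative}, each of $\TT_l$ and $\TT_r$ has size $k^{\bigoh(k^3)}$, so there are $k^{\bigoh(k^3)}$ triples. For every triple it branches over candidate words $\omega_q$. Each such word is short by the branching condition, so \cref{obs:short-words-cubic} gives $k^{\bigoh(k^3)}$ candidates. These can be listed by generating all words of length $\bigoh(k^3)$ over the alphabet $V_M\cup\binom{V_M}{2}$ and discarding those that violate the conditions.

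\emph{Step 3 (cost per combination).} For a fixed combination, every check runs in time polynomial in $n$:
\begin{itemize}
\item testing that $X$ is the disjoint union of $X_l$ and $X_r$;
\item the bundle-membership tests; the bundles $\Delta_A(f\setminus f^s)$ are computed once per strip in polynomial time, as in the proof of \cref{thm:relevant-records}, and $|\bigcup\Phi^A_l|,|\bigcup\Phi^A_r|\le n$;
\item the letter tests on $\omega_l$ and $\omega_r$;
\item forming the concatenation $\omega_*$, whose length is $\bigoh(k^3)$, and comparing it with $\omega$;
\item computing $E^1_*$ as a set operation on subsets of $E_M^1$, which has size $\bigoh(kn)$, and comparing it with $E^1$.
\end{itemize}

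Multiplying the three factors gives $k^{\bigoh(k^3)}\cdot k^{\bigoh(k^3)}\cdot n^{\bigoh(1)} = k^{\bigoh(k^3)}n^{\bigoh(1)}$. The only point that needs care is the bound on the branching over $\omega_q$. It is settled by restricting the branching to short words, which is exactly what the algorithm does, so no step is a real obstacle.
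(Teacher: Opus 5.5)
Your proof is correct and follows essentially the same route as the paper: compute $D_s$ via \cref{thm:relevant-records}, enumerate the triples in $\mathcal{R}_l\times\mathcal{R}_r\times D_s$ using \cref{cor:size-of-representative}, branch over the candidate words $\omega_q$, and note that each check is polynomial in $n$. The one small difference is that you bound the $\omega_q$-branching directly through shortness and \cref{obs:short-words-cubic}, which matches the algorithm as it is actually described more closely than the paper's phrasing, which enumerates subsets $E^1_q$ and then words over them.
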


\begin{proof}
    First of all, the algorithm computes the set $D_s$ of all acceptable records for $s$---by \cref{thm:relevant-records} this can be done in time $k^{\bigoh(k^3)} n^{\bigoh(1)}$, and the size of this set is bounded by $k^{\bigoh(k^3)}$.
    After that it iterates through all records $R_l \in \TT_l$ and $R_r \in \TT_r$ (the size of each of these sets is bounded by $k^{\bigoh(k^3)}$ \cref{cor:size-of-representative}), all acceptable records $R \in D_s$ for $s$, all subsets $E^1_q$ of missing out-neighbors of $q$ and all words $\omega_q$ containing one letter for each missing endpoint of an edge from $E^1_q$. 
    And for a fixed combination of these objects, the algorithm carries out a set of checks each of which runs in time polynomial in $n$.
\end{proof}

\subsection{Checking the Root}

 Recall that $r'$ denotes the root of $T_f$ (which is a leaf of $T_f$) and $r$ denotes its unique neighbor.
 We can invoke the algorithm $\mathcal{A}$ for each node of a strip tree $T_f$ in a leaf-to-root fashion to compute the a representative for every node of the strip tree other than $r'$, in particular for the unique neighbor $r$ of $r'$.
 It remains to show how the information about the existence of the solution to our instance $\mathcal{I} = (H,G,\Gamma(H))$ can be extracted from the set of records of $r$. 

 \begin{lemma}
 \label{lem:root_correct}
 The instance $\mathcal I$ has a solution if and only if the representative $\mathcal T_{r}$ of $r$ 
 contains the record $(\omega_r, V_M, E^1_M, (\Phi^A)_{A\in X})$, where $\omega_r$ is the empty word 
 and $\Phi^A = \emptyset$ for every $A \in X$. 
 \end{lemma}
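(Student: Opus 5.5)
Both directions go through the defining properties of a representative together with \cref{lem:everywhere-clean-solutions}; the one real obstacle is describing the region $f\setminus f^r$. Write $R^*=(\omega_r,V_M,E^1_M,(\emptyset)_{A\in V_M})$ for the target record. I first collect what the preprocessing of \cref{lem:streamlining} gives about the root. The root $r'$ is a leaf, so $f\setminus f^r=r'$ is a leaf strip. By construction, no vertex adjacent to $V_M$ lies on the boundary of $r'$, and no solution places anything inside $r'$. Consequently no vertex adjacent to $V_M$ is external to $f\setminus f^r$. This gives two facts we use below: $\Delta_A(f\setminus f^r)=\emptyset$ for every $A$, and every pre-drawn neighbour $v$ of $V_M$ is internal to $f^r$, since all strips incident with $v$ lie in $f^r$.

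\emph{($\Rightarrow$)} Suppose $\mathcal I$ has a solution. By \cref{lem:everywhere-clean-solutions} it then has an everywhere clean solution $S$. By property (3) of a representative, the record of $S$ at $r$ belongs to $\mathcal T_r$, so it suffices to compute this record.
\begin{itemize}
\item All missing vertices lie in $f^r$, because none is placed in the leaf $r'$. Hence $X=V_M$.
\item Every curve reaching the gate of $r$ would have to continue inside $r'$ toward one of its endpoints. Both endpoints lie outside $r'$ (missing vertices are not in $r'$, and no neighbour of $V_M$ is on its boundary), and an upward curve crosses the gate at most once. So no curve of $S$ meets the gate, and $\omega$ is empty.
\item The sets $\Phi^A$ are subsets of $\Delta_A(f\setminus f^r)=\emptyset$, so they are empty.
\item Every $e\in E^1_M$ has its pre-drawn endpoint internal to $f^r$, so it is routed (\cref{lem:respecting-internal-vertices}). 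Since $\Phi=\emptyset$, this gives $E^1=E'=E^1_M$.
\end{itemize}
Thus the record of $S$ at $r$ is exactly $R^*$.

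\emph{($\Leftarrow$)} Suppose $R^*\in\mathcal T_r$. By property (2) of a representative there is a clean partial solution $P$ for $r$ with record $R^*$. We check that $P$ is in fact a full solution. The empty gate word means $P$ has no labeled curves, since every labeled curve ends on the gate. We then apply \cref{def:partial-solution}.
\begin{itemize}
\item By item~\ref{def:ps-placed-vertices} with $X=V_M$, every edge with head in $V_M$ is finished. Every edge with tail $A\in V_M$ is either finished or has a curve labeled with it; no such curves exist, so these edges are finished too.
\item By item~\ref{def:ps-E-M-2}, every edge of $E^2_M$ is finished.
\end{itemize}
Hence $Y=E_M$. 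Each finished edge is an upward curve that lies inside $f^r\subseteq f$ and avoids $\Gamma(H)$ and all other curves. Adding $P$ to $\Gamma(H)$ therefore yields an upward planar drawing of $G$ extending $\Gamma(H)$, with all of $V_M$ placed in $f$. This is a baseline solution, possibly drawing some edges several times, which the definition explicitly permits. If single drawings are preferred, delete duplicate curves. By \cref{lem:streamlining} this baseline solution transfers back to the original instance.
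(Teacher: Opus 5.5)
Your proof is correct and follows essentially the paper's argument. For ($\Rightarrow$), you take an everywhere clean solution, note that nothing lies in the leaf $r'$, and read off $X=V_M$, the empty gate word, $\Phi^A=\emptyset$ and $E^1=E^1_M$. For ($\Leftarrow$), you take the clean partial solution guaranteed by the representative and use the empty gate word to conclude that every missing edge is finished. The differences from the paper are cosmetic:
\begin{itemize}
\item You derive $\Phi^A=\emptyset$ from $\Delta_A(f\setminus f^r)=\emptyset$ rather than from ``no drawing crosses the gate''.
\item You get routedness via \cref{lem:respecting-internal-vertices}. This is valid because the existence of a solution forces every pre-drawn neighbour of $V_M$ onto the boundary of $f$, hence into the interior of $f^r$ in the paper's sense.
\item In ($\Leftarrow$) you use item~\ref{def:ps-placed-vertices} with $X=V_M$ instead of the component $E^1=E^1_M$ of the record.
\end{itemize}
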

 \begin{proof}
 As always, without loss of generality we assume that the strip $r'$ lies above the strip~$r$.
 
 First, suppose that the record $(\omega_r, V_M, E^1_M, \Phi)$ is contained in the set $\mathcal{R}_{r}$. 
 By \cref{thm:dp-algorithm}, the set $\TT_r$ is a representative of $R$
 so in particular, there exists a partial solution, say $P$, of $r$ with the record $(\omega_r, V_M, E^1_M, \Phi)$.
 First, all missing vertices are placed by $P$ inside $f^r$. 
 Second, the definition of a partial solution requires that for any two adjacent vertices $A, B \in V_M$, the edge $\{A, B\}$ is drawn. 
 Therefore, all edges from $E_M^2$ are finished by $P$, i.e., for every missing edge $e$ in $E_M^2$, there is an upward curve in $P$ from its tail to its head.
 Let $E' \subseteq E_M^1$ denote the set of edges routed by $P$.
 By definition of the record (see \cref{def:record-e1}) we have $E^1_M \subseteq E'$.
 Then
 we have $E' = E_M^1$. 
 Since $\omega_r$ is the empty word, no curve in $P$ ends on the gate of $P$, thus every routed edge is finished by $P$, i.e., for every edge in $E_M^1$, there is an upward curve from its tail to its end.
 The remaining properties of a partial solution then imply that $P$ is a solution of $\mathcal{I}$.

 For the other direction, let $S$ be a solution of $\mathcal{I}$.
 By \cref{lem:everywhere-clean-solutions} we may assume that $S$ is everywhere clean.
 By definition of a solution, $S$, in particular, does not place anything inside the leaf $r'$, and therefore $S$ is fully contained in $f^r$.
 By \cref{thm:dp-algorithm}, the set $\TT_r$ is a representative of $R$
 so in particular, it contains the record, say $\tilde R = (\tilde X, \tilde \omega, \tilde E^1, \tilde \Phi)$ of $S$ at $r$.
 Since $S$ is fully contained in $f^r$, the following properties hold.
 First, all vertices are placed inside $f^r$, i.e., we have $\tilde X = V_M$.
 Second, no drawing crosses the gate shared by $r'$ and $r$ and therefore $\omega$ is the empty-word and we have $\Phi^A = \emptyset$ for every $A \in V_M$.
 Finally, for every edge $e \in E_M^1$, there exists a drawing of this edge in $S$ and it is fully contained in $f^r$, i.e.,~$e$ is routed by $S$ inside $f^r$.
 Together with $\Phi^A = \emptyset$ for every $A \in V_M$ this implies that we have~$E^1 = E_M^1$ concluding the proof.
 \end{proof}

Finally, recall that by \cref{lem:streamlining} the strip tree contains $\mathcal{O}(kn)$ nodes.
So by combining~\cref{lem:root_correct} with~\cref{thm:dp-algorithm}, applied to each internal node of the strip tree in a leaf-to-root fashion, we directly obtain the main theorem of the section:

\begin{theorem}\label{thm:main_face}
    \upefshort\ can be solved in time $k^{\mathcal{O}(k^3)}\cdot n^{\mathcal{O}(1)}$.
\end{theorem}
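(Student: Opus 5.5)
The plan is to assemble \cref{thm:main_face} from the pieces already established: the streamlining reduction, the representative-set dynamic program along the strip tree, and the root check. First I apply \cref{lem:streamlining} to the given instance $\III'$ of \upefshort. This yields, in polynomial time, an instance $\III$ with the same parameter $k$, size at most $k\cdot|\III'|$, and a subcubic strip tree $T_f$ on $\bigoh(kn)$ nodes. It also satisfies the structural guarantees on polar strips, and $\III'$ is a YES-instance exactly when $\III$ admits a solution in the restricted sense (missing vertices in pairwise distinct non-polar strips). A solution for $\III'$ is then recoverable in linear time. The forward direction holds by the lemma statement. For the backward direction, every solution of $\III$ is in particular an extension placing everything in $f$, and undoing the subdivisions of pre-drawn edges and removing the pendant leaf gadgets gives a solution of $\III'$. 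I then build $T_f$ explicitly in polynomial time, root it at a leaf $r'$, and let $r$ be the unique neighbor of $r'$.

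Next I compute representatives bottom-up. Every leaf $\ell\neq r'$ gets $\mathcal{R}_\ell=\{(\omega_0,\emptyset,\emptyset,\emptyset)\}$. This is a representative for three reasons: the preprocessing guarantees that no solution draws anything inside a leaf; the empty record is acceptable; and the empty drawing realizes it as a clean partial solution. For each non-leaf, non-root strip $s$ in post-order I invoke the algorithm $\mathcal{A}$ of \cref{thm:dp-algorithm} on the representatives of its children, filling in a dummy child with the empty record when $s$ has only one child. By induction over $T_f$, every computed $\mathcal{R}_s$ is a representative. Each call costs $k^{\bigoh(k^3)}n^{\bigoh(1)}$, and there are $\bigoh(kn)$ calls. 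Since $k\le n$ and the instance size grew by at most a factor $k$, the total time is still $k^{\bigoh(k^3)}n^{\bigoh(1)}$.

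Finally, by \cref{lem:root_correct}, $\III$ has a solution if and only if $\mathcal{R}_r$ contains $(\omega_0,V_M,E^1_M,(\emptyset)_{A\in V_M})$, which is checked in time linear in $|\mathcal{R}_r|\le k^{\bigoh(k^3)}$. The main point needing care is the interface between \cref{lem:streamlining} and \cref{lem:root_correct}. The root lemma is stated for solutions of the streamlined instance, which must be everywhere clean; \cref{lem:everywhere-clean-solutions} supplies this. So the two lemmas compose correctly: $\III'$ is a YES-instance iff $\III$ has a solution iff it has an everywhere clean solution iff the root test succeeds.

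For the constructive claim, I store alongside each record a witness partial solution, built as in the Claim 3 constructions for each strip type. These constructions are explicit straight-line extensions, so the witness at $r$ gives an extension for $\III$, and \cref{lem:streamlining} then converts it to one for $\III'$.
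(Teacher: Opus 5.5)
Your proposal is correct and takes the paper's approach: streamline via \cref{lem:streamlining}, run the algorithm of \cref{thm:dp-algorithm} leaf-to-root over the $\bigoh(kn)$ strips starting from the empty record at the leaves, and decide via \cref{lem:root_correct} at $r$, with your time accounting and the appeal to \cref{lem:everywhere-clean-solutions} spelling out what the paper's short proof leaves implicit. One small correction: the dummy empty-record child is used only for \emph{polar} strips that have a single child, since non-polar strips always have exactly one child and are handled by their own case, where a missing vertex may be placed inside $s$.
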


\section{Handling the Outer Face}\label{sec:outer}
Having settled the fixed-parameter tractability of \upef, all that remains before we can solve \textsc{UPE} by applying Lemma~\ref{lem:setup} is to solve \textsc{UPEF-Outer}.
For this, we design a Turing reduction that reduces an instance~$\III=(G,H,\Gamma(H))$ of \textsc{UPEF-Outer} with the outer face~$f$ to at most $2^{k^2+2k}k!$ instances of \textsc{UPEF}.
We proceed as follows.
Intuitively, our approach is as follows. First, we make a box around the outer face $f$, and the inner region of the box will then become the new inner face $f'$. Then, to preserve the connectivity of $H$, we will add a line segment connecting the boundaries of $f$ and $f'$. The crucial difficulty is that the line segment may need to cross a hypothetical solution to $\III$; in fact, it may even need to cross every possible solution to $\III$.
Formally, let us choose an arbitrary (and ``sufficiently large'') rectangle $R$ that completely contains the drawing $\Gamma(H)$ in its interior. Intuitively, we would like to use $R$ to form a new empty outer face that will contain $\Gamma(H)$; however, to add such an outer face into $\Gamma(H)$ we need it to not contain horizontal line segments. Towards this, we choose an arbitrary convex polygon $P$ without horizontal segments which fully contains $R$---we naturally identify~$P$ with an upward planar drawing of a directed graph that contains one vertex per bend in~$P$ and the edges are directed upward, and extend $H$ and $\Gamma(H)$ with this polygon.
Let~$f'$ be the intersection of $f$ with the inner of $P$. It is not hard to see that if our instance admits a solution, it also admits a solution where all the new vertices are inserted inside the rectangle~$R$, from now on we will only restrict ourselves to such solutions. Now $f'$ is an inner face, but $H$ is not connected. 
Towards making $H$ connected, we consider
the lowermost vertex $t$ of $\Gamma(H)$ and let the coordinates of $t$ be $(x_t,y_t)$. We enhance $\Gamma(H)$ by adding a vertical line segment going down from $t$ until it hits the boundary of $P$. Let $b=(x_b,y_b)$ be the point at which the line segment hits the boundary of $P$, 
and let us use $[b,t]$ to denote this line segment. 
We may assume that $b$ is a vertex of $P$ by subdividing the corresponding edge if necessary. 
Below, we show one can assume that, w.l.o.g., $[b,t]$ only intersects missing edges at their endpoints. 

\begin{lemma}\label{lem:vertical}
If there is a solution 
$\Gamma(G)$ of $\III$, then there is also a solution 
$\Gamma'(G)$
placing all missing vertices inside $f'$ such that no internal point of any missing edge intersects the line segment $[b,t]$. 
\end{lemma}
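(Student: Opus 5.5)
Take a solution $\Gamma(G)$ and, as the text before the lemma notes, first assume every missing vertex and edge lies inside $R$, hence inside $f'$ (shrink the part of the drawing in the outer face towards $\Gamma(H)$ if needed). The segment $[b,t]$ runs from the lowest pre-drawn vertex $t$ straight down to $P$. Below $y_t$ there is no pre-drawn geometry inside $P$, so this region is the horizontal slab $Z=\{(x,y)\in f'\mid y<y_t\}$, which is free of $\Gamma(H)$. The plan is to deform the solution inside $Z$ so that nothing meets the open segment $(b,t)$, while leaving the drawing above $y_t$ essentially untouched.

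The key step is a topological cut-and-shift inside $Z$. Cut the plane along the vertical ray $\rho$ going down from $t$. Every missing object that meets $\rho$ at an interior point lies in $Z$, and there the drawing restricted to $Z$ is a planar upward drawing of a region with no obstacles. Apply an $x$-coordinate homeomorphism of the slab $y<y_t$, depending on $y$, that fixes everything at distance $\ge\delta$ from the line $x=x_t$ near $y=y_t$. It should push the portion of the drawing crossing $\rho$ to one side, say the left, and leave a thin vertical corridor around $(b,t)$ empty. Because the map only changes $x$-coordinates and is monotone in $x$ at each height, it preserves planarity and upwardness, since $y$-coordinates are unchanged. The endpoints of curves at $t$ stay fixed, and curves incident with $t$ are allowed to touch $[b,t]$ only at $t$.

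The obstacle is that a single homeomorphism cannot split curves that cross $\rho$ on both sides, because a curve can cross $\rho$ many times and wind around. To handle this, I would first simplify with the standard redrawing of this paper. Any missing curve crossing $\rho$ inside $Z$ at heights in $(y_b,y_t)$ can be rerouted, together with everything nested between two of its crossings, by the same vertical-band rescaling used in Claim~\ref{claim:spine}. This yields a drawing in which, near $\rho$, all crossings are transversal and pairwise ordered. The cleanest concrete choice is to observe that nothing is below $b$ except $P$ itself. So one can move $b$: we are free to pick the solution first, and the parts of missing curves meeting $(b,t)$ are in $Z$. Replace the relevant portion of the drawing in $Z$ by a version that is compressed horizontally into the half-plane $x<x_t-\delta$, rescaling linearly in $x$ on each horizontal line of $Z$ the part lying in $\{x>x_t-\delta\}\cup\{x<x_t-\delta\}$ to the left side. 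Here I mean: apply the map $x\mapsto$ a monotone bijection from $\mathbb{R}$ onto $(-\infty,x_t-\delta)$ blended continuously to the identity as $y\to y_t$. The blending near $y_t$ is the delicate part. Since $t$ is the unique point of $\Gamma(H)$ at height $y_t$ and missing vertices have other $y$-coordinates, a small neighbourhood above $y_t-\varepsilon$ contains only curves entering $t$ or passing near it. These can be handled by bending them locally around $t$ to approach from the left.

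Finally, check that the result is still a solution. All missing vertices remain inside $f'$ because the maps keep $Z$ inside $P$ for a large enough box. Upwardness holds because $y$ is unchanged, planarity holds because the maps are injective, and the segment $[b,t]$ meets missing edges only at $t$. The main obstacle I expect is the blending near $y=y_t$ around $t$. Making precise that curves ending at $t$ from below can be rotated to enter $t$ from one side without crossings needs an $\varepsilon$-local argument like the one in Lemma~\ref{lem:clean-solutions}.
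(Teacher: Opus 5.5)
\textbf{There is a genuine gap: you are aiming at a stronger statement than the lemma, and that statement is false.} You want a corridor around $(b,t)$ that is completely empty, so that $[b,t]$ meets the missing part of the drawing only at $t$. The lemma forbids only \emph{interior points of missing edges} on $[b,t]$. Missing vertices on the segment are unavoidable in general; this is why the construction that follows the lemma branches over the tuple $\tau$ of missing vertices lying on $[b,t]$ and splits them.

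Here is an example. Let $H$ be the diamond with edges $tv_1,tv_2,v_1w,v_2w$, with $v_1$ on the left and $v_2$ on the right. Let $G$ add one missing vertex $A$ with edges $Av_1,Av_2$. In the slab $y_t<y<y_w$ the outer face splits into a left part and a right part. Hence $A$ must lie below $y_t$, $Av_1$ crosses the line $y=y_t$ left of $t$, and $Av_2$ crosses it right of $t$. Take the parts of $Av_1$ and $Av_2$ below $y_t$, together with the horizontal segment joining their two crossing points. This closed curve contains the points just below $t$ but not $b$. So in \emph{every} solution the open segment $(b,t)$ meets $A$ or an interior point of $Av_1$ or $Av_2$.

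No deformation can therefore clear the corridor. Concretely, your blending step fails for $Av_2$. Arbitrarily close to height $y_t$ it passes to the right of $t$, and it cannot be ``bent around $t$'' to approach from the left without crossing the diamond.

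The missing idea is to let the cut pass through missing vertices. The paper builds a downward polyline $\phi$ from $t$ to $b$ as follows:
\begin{enumerate}
\item Go down until hitting the drawing.
\item If the hit is an interior point of a missing edge, run alongside that edge down to its lower endpoint. This endpoint is a missing vertex, since nothing of $\Gamma(H)$ lies below $y_t$.
\item Pass through that vertex and continue downward.
\end{enumerate}
This avoids interiors of missing edges and terminates because the height strictly decreases. A $y$-preserving horizontal deformation, the same kind of map you propose, then straightens $\phi$ onto $[b,t]$. The missing vertices met by $\phi$ end up on $[b,t]$, which is exactly what the statement permits.
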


\begin{proof}
Consider any solution $\Gamma(G)\supseteq \Gamma(H)$ where the missing vertices are placed inside~$f'$. We firstly establish the existence of a downward polyline $\phi$, i.e., a polyline whose vertical coordinate monotonically decreases when traversed from $t$ to $b$, not crossing $\Gamma(H)$ in any points other than $t$ and $b$ and such that no internal point of any missing edge is crossed by $\phi$.
To construct such a polyline, we start at $t$ and draw an arbitrary downward polyline from~$t$ to $b$ until encountering the first edge $e$ of $\Gamma(G)$. If the first point encountered on $e$ is an endpoint, we cross it and continue our downward line from it. Otherwise, if we encounter an inner point $q$ of $e$ of $G\setminus H$, we stop our polyline slightly above $q$ and continue downward by following the edge $e$ to its lowermost endpoint. We then cross this endpoint and proceed similarly until the point $b$ on the boundary of $P$ is reached. Without loss of generality, we may assume that $x_t=x_b=0$. It follows from our construction that $\phi$ does not intersect internal points of edges of $\Gamma(G)$. 

Our next step is to turn $\phi$ into a vertical line segment. For this, we modify the drawing~$\Gamma(G)$ by ``flattening'' every curve and point whose $y$-coordinates lie between $y_b$ and $y_t$ onto the vertical line segment connecting $t$ and $b$; more precisely, we rescale and place all such points and curves onto a sufficiently thin rectangular area $R$ surrounding the vertical line segment connecting $t$ and $b$. We note that we may assume w.l.o.g.\ that no part of $\Gamma(G)$ is placed near the point $b$ and that all curves exiting $R$ do so by crossing one of its vertical sides; the latter means that each such curve may be extended in an upward way to reach its position at $y_t$ without crossing any other curve or $P$.
\end{proof}
Given Lemma~\ref{lem:vertical}, it suffices to restrict our attention to solutions $\Gamma(G)$ with the following property: the vertical line segment $[b,t]$ between the lowermost point of $\Gamma(H)$ and the boundary of the outer face intersects $\Gamma(G)$ only in the missing vertices. 
In particular, there are at most $k$ such intersections. Moreover, we may assume w.l.o.g.\ that these intersection points subdivide $[b,t]$ into intervals of equal length. 

\begin{figure}[t]
\centering
\includegraphics[width=\textwidth]{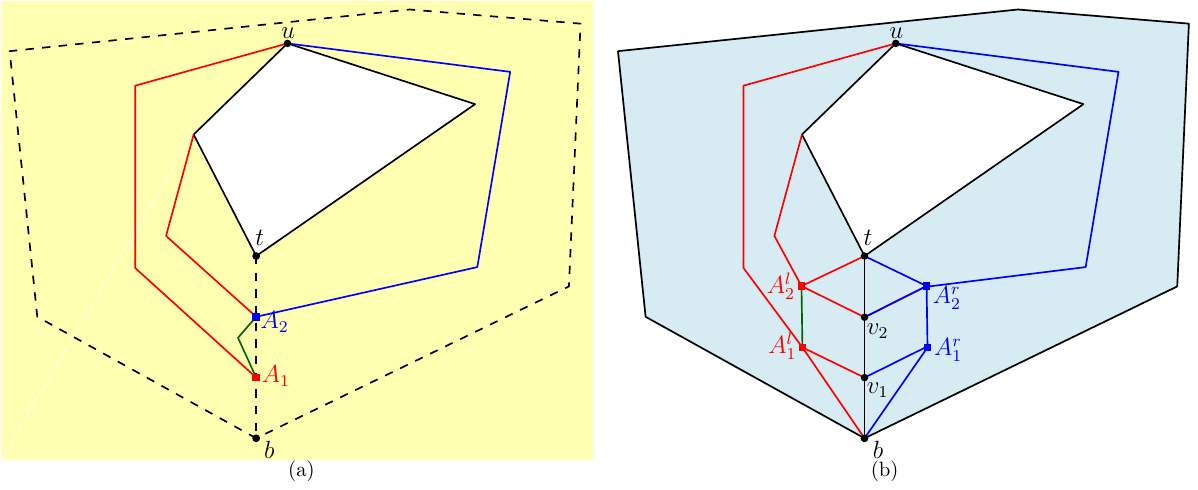}
\caption{Instances of \textsc{UPEF} and \textsc{UPEF-Outer} along with solutions, the pre-drawn part is solid black. (a) Instance $\III$ of \textsc{UPEF-Outer}, the outer face is light-yellow, the new parts $P$ and $[b, t]$ of the outer boundary are dashed. 
(b) An instance
of \textsc{UPEF} corresponding to the branch where the vertices intersecting $[b,t]$ are $A_1$ and $A_2$, and the edges $A_1u$ and $A_2u$ leave the segment $[b,t]$ on the left and right, respectively. The vertex $A_1$ is split into two vertices $A_1^l$ and $A_1^r$, and similarly for $A_2$. The vertices $v_1$ and $v_2$ are new pre-drawn vertices ensuring the correct placement of $A_1^l$, $A_1^r$, $A_2^l$, $A_2^r$.}
\label{fig:outer-face}
\end{figure}

Let $u$ denote the topmost vertex in~$\Gamma(H)$.
We now branch into at most $k!\cdot 2^k$ choices to determine the subset of vertices of $V_M$ that occur on the line segment $[b,t]$ and their ordering on it. 
For every such ordered tuple $\tau=(A_1,\dots, A_q)$ (with pairwise distinct $A_1, \dots, A_q \in V_M$), we additionally branch into at most $2^{k(k+1)}$ binary matrices of size $q \times (k+1)$, with the rows indexed by $[q]$ and the columns indexed by $V_M \cup \{u\}$. Intuitively, the matrix $\mu$ determines for each edge of $G$ between some $A_i$ and some vertex $B \in V_M \cup \{u\}$ 
whether the edge $\{A_i, B\}$ (if it exists) leaves $A_i$ to the left or to the right from the line segment $[b,t]$. 
For every choice of $\tau$ and $\mu$, create a new instance $\III_{\tau}^{\mu}=(G_{\tau}^{\mu},H_{\tau}^{\mu},\Gamma(H_{\tau}^{\mu}))$ obtained by altering $\Gamma(H)$ to incorporate the information from $\tau$ and $\mu$ as 
follows (see \cref{fig:outer-face} for an illustration).
Formally, we proceed as follows.
To construct $G_{\tau}^{\mu}$, we begin by deleting the vertices $\{A_1, \dots, A_q\}$ from~$G$. 
Instead, for each $i\in [q]$, $G_{\tau}^{\mu}$ contains the vertices $v_i$, $A_i^l$, and $A_i^r$---the vertex $v_i$ is pre-drawn and the vertices $A_i^l$ and $A_i^r$ are missing.
The drawing $\Gamma(H)$ is extended so that the vertices $v_1, \dots, v_q$ are placed on the vertical straight-line segment $bt$ in the bottom-up ordering $v_1, \dots, v_q$ so that the straight-line segments $b v_1, v_1 v_2, \dots, v_{q-1} v_q, v_q t$ have the same length.
Furthermore, $G_{\tau}^{\mu}$ contains the pre-drawn edges $b v_1, v_1 v_2, \dots, v_{q-1} v_q, v_q t$ which are drawn as vertical straight-line segments.
For simplicity of notation, let $A_0^l = A_0^r = b$ and~$A_{q+1}^l = A_{q+1}^r = t$.
For every $i \in [q]$ the graph $G_{\tau}^{\mu}$ also contains the missing edges $v_i A_i^l$ and $v_i A_i^r$.
Furthermore, for every $i \in [q]_0$ the graph contains the missing edges~$A_i^l A_{i+1}^l$ and~$A_i^r A_{i+1}^r$.

Then for each edge $A_ia$ of $G$ with $a \neq u \in V(H)$, 
we add to $G_{\tau}^{\mu}$ the missing edge $A_i^l a$ in case $a$ is reachable in $\Gamma(H)$ from $v_i$ via an upward curve that leaves the vertex $v_i$ to the left of the straight-line segment $bt$, and we add the missing edge $A_i^r a$ otherwise. 
Let us remark that if $G$ has an edge from some pre-drawn vertex, say $v$, of $G$ to some $A_i$ ($i \in [q])$, no solution of~$\mathcal{I}$ can place $A$ below $v$ and in particular, no solution can place $A_i$ on $bt$ (this is because~$t$ is the bottommost point of $\Gamma(H)$)---so in this case we just ``discard'' the candidate pair $(\mu, \tau)$ (formally speaking, we let $\mathcal{I^\mu_\tau}$ be a trivial no-instance) and proceed otherwise.
Now for every~$i \in [q]$, if $G$ has an edge from $A_i$ to $u$ and $\mu_{i,u}=0$, we add to $G_{\tau}^{\mu}$ a new edge $A_i^lu$.
And if $G$ has an edge from $A_i$ to $u$ and $\mu_{i,u}=1$, we add to $G_{\tau}^{\mu}$ a new edge $A_i^ru$.
Moreover, for every edge of $G$ of the form $\{A_i,A\}$, where $A \in V_M \setminus \{A_i, \dots, A_q\}$, $i \in [q]$, and $\mu_{i, A}=0$ (resp.~$\mu_{i, A}=1$), we add to $G_{\tau}^{\mu}$ a new edge $\{A_i^l,A\}$ (resp.\ $\{A_i^r,A\}$), in the same direction (i.e., $A$ is the tail of $\{A_i^r,A\}$ if and only if it is the tail of $\{A_i,A\}$). 
For each edge of $G$ that has form $A_iA_j$ for some $i,j\in[q]$, we add to $G_{\tau}^{\mu}$ a new edge $A_i^lA_j^l$ in case $\mu_{i, A_j}=1$ and a new edge $A_i^r A_j^r$ otherwise. 

To solve each of the instances $\III_{\tau}^{\mu}$, we 
can invoke our dynamic programming procedure for the inner face,
as now $f'$ is an inner face 
and the addition of the line segment $[b,t]$ makes its boundary connected. 
Hence, it remains to prove the following two lemmas: 
\begin{lemma}
If $\III$ is a \textup{YES}-instance, then there exists a choice of $\tau$ and $\mu$ such that $\III_{\tau}^{\mu}$ is a \textup{YES}-instance. 
\end{lemma}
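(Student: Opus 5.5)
Starting from a solution of $\III$, we first invoke Lemma~\ref{lem:vertical} to obtain a solution $\Gamma'(G)$ that places all missing vertices inside $f'$ and meets the segment $[b,t]$ only in missing vertices. Since the missing vertices on $[b,t]$ may be assumed to be equally spaced, we read off $\tau=(A_1,\dots,A_q)$ as the missing vertices lying on $[b,t]$ in bottom-up order. After an arbitrarily small perturbation, every missing edge incident with some $A_i$ leaves $A_i$ strictly to the left or strictly to the right of the vertical line. We set $\mu_{i,B}=0$ or $1$ accordingly for every $B\in V_M\cup\{u\}$ adjacent to $A_i$, and fill the remaining entries arbitrarily. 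We also note that no edge $vA_i$ with $v\in V(H)$ can exist, because $t$ is the lowest pre-drawn vertex. Hence the pair $(\tau,\mu)$ is not discarded.

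Next we build a solution of $\III_\tau^\mu$ by splitting each $A_i$. We cut the plane along $[b,t]$, slightly widened to a thin open rectangle $Z$ of width $2\delta$. By the choice of $\Gamma'(G)$, only the vertices $A_i$ and small initial pieces of their incident edges meet $Z$. We place $A_i^l$ at the point of height $y(A_i)$ on the left boundary of $Z$, and $A_i^r$ at the corresponding point on the right boundary. We draw $v_iA_i^l$ and $v_iA_i^r$ as short horizontal-ish segments. To make them upward, we first shift $A_i^l,A_i^r$ up by a tiny $\varepsilon$, or equivalently place $v_i$ slightly below $y(A_i)$; only vertex positions are fixed, and the equal spacing of the $v_i$ matches the spacing of the $A_i$. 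Each edge of $\Gamma'(G)$ that left $A_i$ on the left is rerouted to start at $A_i^l$ by following its original curve outside $Z$, and symmetrically for the right side. Edges $A_iA_j$ that leave on the same side are handled the same way, so the $\mu$-conventions used in the construction must be matched. Wherever the construction chooses side $l$ versus $r$ by reachability rather than by $\mu$, namely for pre-drawn neighbours $a\neq u$, we argue that the side the edge actually uses in $\Gamma'(G)$ is forced by upward reachability from $v_i$ in $\Gamma(H)$. The reason is that the region left of $[b,t]$ and the region right of it can only be connected above $t$ through $f'$. This leaves only $u$ ambiguous, which is exactly why $u$ gets its own column in $\mu$.

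Finally, we draw the chain edges $A_i^lA_{i+1}^l$ and $A_i^rA_{i+1}^r$ (with $A_0=b$ and $A_{q+1}=t$) as vertical segments along the two boundaries of $Z$. These segments are upward, and they cross nothing because the interior of $Z$ is empty apart from the new pre-drawn path $b v_1\cdots v_q t$.

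The main obstacle is the consistency claim in the middle step: that the side chosen in $\Gamma'(G)$ for an edge $A_ia$ with $a\neq u$ pre-drawn coincides with the reachability-based rule. We plan a topological argument. The closed curve formed by $[b,t]$, the edge path of $\Gamma(H)$, and $P$ separates the left and right sides locally at height below $y_t$. An upward curve from $A_i$ to $a$ that leaves on one side can be shortcut into an upward curve from $v_i$ in $\Gamma(H)$'s complement leaving on that same side, and such a curve reaches $a$ only via the region specified by the rule. If it turns out that both sides can reach some $a$, we will instead note that either choice works after a local redrawing near $t$, and fall back to adjusting the solution rather than the rule. The remaining checks, that the rerouted drawing is upward, planar, and extends $\Gamma(H_\tau^\mu)$ with all missing vertices in $f'$, are routine.
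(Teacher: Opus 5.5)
Your construction is the paper's: apply Lemma~\ref{lem:vertical}, read $\tau$ off $[b,t]$ and $\mu$ off the sides on which the edges leave the $A_i$, replace each $A_i$ by $v_i$ with copies $A_i^l,A_i^r$ beside it, add the two chains, and redraw all other edges along their old curves. Your remark that no edge $vA_i$ with $v\in V(H)$ exists, so the branch is not discarded, is a detail the paper leaves implicit.

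The step you single out as the main obstacle is, however, a genuine gap, and the claim you base it on is false: $u$ is not the only pre-drawn vertex reachable from $[b,t]$ on both sides. All of $\Gamma(H)$ lies at height at least $y_t$. Hence a curve leaving $v_i$ and hugging the segment reaches $t$ from the left and equally from the right. The same holds, e.g., for the other endpoint $w$ of a pendant edge $t\to w$ of $H$. For such a neighbour $a$ of $A_i$, $G_\tau^\mu$ always contains $A_i^la$, while your solution may draw $A_ia$ on the right. The paper's proof rests on the same one-line assertion, so the missing argument is not supplied there either.

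Your fallback, that either side works after a local redrawing near $t$, does not hold, because the obstruction is global. Suppose $A_it$ leaves $A_i$ to the right, and some $A_j$ with $j>i$ has an edge to a vertex $x$ of $\Gamma(H)$ that can be reached from $[b,t]$ only from the left. This occurs in valid solutions of $\III$: the curve of $A_it$ simply encloses the part to the right of $[A_i,t]$. Now take any solution of $\III_\tau^\mu$.
\begin{itemize}
\item The upward curve $v_jA_j^lx$ must leave $v_j$ to the left, so the $l$-chain lies on the left. It is an upward $b$--$t$ path through $A_j^l$, so it cannot switch sides.
\item This curve is a chord of $f'$ that separates every left approach to $t$ from the part of the $l$-chain below $A_j^l$, in particular from $A_i^l$.
\item The right approaches to $t$ cannot be reached from the left side by an upward curve without passing above $u$.
\end{itemize}
So $\III_\tau^\mu$ is a no-instance for this $\tau$, however you redraw, and you cannot read $\tau$ off an arbitrary solution from Lemma~\ref{lem:vertical}. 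You need one of the following:
\begin{itemize}
\item a global normalization of the solution, e.g., pushing vertices like $A_j$ off $[b,t]$ and then proving that the reachability rule agrees with the sides used by the remaining segment vertices; or
\item a construction that also branches on the side for doubly reachable neighbours. This needs its own argument, since there can be arbitrarily many of them.
\end{itemize}

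A smaller slip: $Z$ is not empty near $t$, since missing edges entering $t$ from below, and the ends of edges $A_it$, pass through it. Also, $b$ and $t$ are not on $\partial Z$, so the first and last chain edges cannot run along $\partial Z$. Placing $A_i^l,A_i^r$ very close to $v_i$, so that the quadrilaterals and the triangles at $b$ and $t$ are thin, as the paper does, fixes this.
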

\begin{proof}
Let $\Gamma(G)\supseteq \Gamma(H)$ be a solution to the original instance $\III$. By \cref{lem:vertical}, we may assume that $\Gamma(G)$ lies inside the polygon $P$ described above, and the vertical line segment~$[b,t]$ connecting the lowermost point $t$ of~$\Gamma(H)$ to the boundary of $P$ does not contain internal points of edges of $\Gamma(G)$. Let $A_1, \dots, A_q$ be the vertices of $G \setminus H$ that lie on $[b,t]$, from the bottom to the top. We modify $\Gamma(G)$ by compressing some of its parts along $[b,t]$ to achieve that $A_i$, $i\in[q]$, subdivide $[b,t]$ into intervals of equal length. If we now add $[b,t]$ and $P$ to~$\Gamma(H)$ and replace each $A_i$ by $v_i$, $i\in [q]$, we obtain precisely the partial drawing $\Gamma(H_{\tau}^{\mu})$ of the instance $\III_{\tau}^{\mu}$, where $\tau=(A_1, \dots, A_q)$, and the binary matrix $\mu$ is defined as follows. For every edge $\{A_i,A\}$ of $G$, where $A \in V_M \cup \{u\}$, we set $\mu_{i,A}=0$ if there is an edge $\{A_i,A\}$ and it leaves $A_i$ to the left from $[b,t]$, and we set $\mu_{i,A}=1$ otherwise. 

First, we copy from $\Gamma(G)$ the placement of all vertices and edges of the subgraph of $G\setminus H$ induced by the vertices other than $\{A_1, \dots, A_q\}$. Since $\Gamma(G)$ is a solution to $\III$, it contains every edge of $G$, in particular the edges of the form $A_ia$ where $a$ is a non-topmost vertex of $\Gamma(H)$. For every $i\in [q]$, there are two types of such edges in $\Gamma(G)$: the edges leaving $A_i$ to the left of the vertical segment $bt$ and the edges leaving $A_i$ to the right of $bt$ (recall that all such edges intersect $[b,t]$ only in $A_i$). Moreover, it is determined by the drawing $\Gamma(H)$ whether the edge $A_ia$ goes to the left or to the right from $A_i$: only the topmost vertex $u$ of~$\Gamma(H)$ can be reachable from $[b,t]$ via upward curve not intersecting~$\Gamma(H)$ both from the left and from the right. 

We place two vertices $A_i^l$ and $A_i^r$ of $G_{\tau}^{\mu}\setminus H_{\tau}^{\mu}$ in sufficiently small neighborhood of $v_i$, such that both are slightly higher than~$v_i$, $A_i^l$ lies to the left from~$v_i$ and $A_i^r$ lies to the right, and add straight-line segments $v_iA_i^l$ and $v_iA_i^r$. Since $A_i^l$ and $A_i^r$ are placed close to $v_i$, and~$v_i$ has no other neighbors outside $[b,t]$, we may assume without loss of generality that for each~$i\in [q-1]$, $v_iA_i^lA_{i+1}^lv_{i+1}$ and $v_iA_i^rA_{i+1}^rv_{i+1}$ are parallelograms that do not intersect rest of the drawing created so far. Therefore, we can draw the edges $A_i^l A_{i+1}^l$ and $A_i^r A_{i+1}^r$ (in case~$i<q$) as straight-line segments. Similarly, we draw the edges $bA_1^l$, $bA_1^r$, $A_1^lt$ and $A_1^rt$.

Next, we follow every edge of $\Gamma(G)$ of the form $\{A_i,a\}$ that leaves $A_i$ to the left (resp.\ to the right) of $bt$ to draw the edge $\{A_i^l,a\}$ (resp.\ $\{A_i^r,a\}$) of ($G_{\tau}^{\mu}$), where $a$ is either missing or pre-drawn vertex of $\III$. The resulting drawing provides solution to $\III_{\tau}^{\mu}$.  
\end{proof}

\begin{lemma}
If for some choice of $\tau$, $\III_{\tau}^{\mu}$ is a \textup{YES}-instance, then so is the original instance~$\III$. 
\end{lemma}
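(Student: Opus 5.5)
The plan is to start from a solution $\Gamma(G_{\tau}^{\mu})$ of $\III_{\tau}^{\mu}$ and merge each split pair $A_i^l, A_i^r$ back into a single vertex $A_i$ placed at the point $v_i$ on the segment $[b,t]$. Afterwards we delete the auxiliary pre-drawn structure, namely the polygon $P$, the segment $[b,t]$ and the vertices $v_i$. The result should be a drawing of $G$ extending $\Gamma(H)$ with all missing vertices in the outer face $f$.

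\textbf{Step 1: bring each $A_i^l,A_i^r$ close to $v_i$.} First we argue that, without loss of generality, $A_i^l$ and $A_i^r$ lie in an arbitrarily small neighbourhood of $v_i$, on the left and right side of $[b,t]$ respectively. The edges $v_iA_i^l$, $A_{i-1}^lA_i^l$, $A_i^lA_{i+1}^l$ (and their right analogues) force $A_i^l$ to be reachable from $v_i$ by an upward curve leaving $v_i$ on the left. They also chain the copies into a left path $b,A_1^l,\dots,A_q^l,t$ and a right path $b,A_1^r,\dots,A_q^r,t$. These are upward curves hugging $[b,t]$ from either side, since only $v_i$ is drawn on $[b,t]$ and the copies are constrained between consecutive $v_i$'s. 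The region between the left path and $[b,t]$ (and likewise on the right) contains no pre-drawn vertex except the $v_i$. So by a redrawing in the spirit of Lemma~\ref{lem:clean-solutions} and Claim~\ref{claim:spine}, we horizontally shrink everything inside it toward $[b,t]$. We then slide $A_i^l$ along its edge $v_iA_i^l$ until it is $\varepsilon$-close to $v_i$, rerouting its other incident edges by first following $v_iA_i^l$ and then their old drawing.

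\textbf{Step 2: merge.} We remove the pre-drawn edges along $[b,t]$, the auxiliary edges $v_iA_i^{l/r}$ and $A_i^{l/r}A_{i+1}^{l/r}$, and the vertices $v_i$. We place $A_i$ at the former position of $v_i$. Each edge incident with $A_i^l$ (resp.\ $A_i^r$) is extended by a tiny straight segment from $A_i$ into the $\varepsilon$-neighbourhood and then follows its old curve. Left edges lie left of $[b,t]$ and right edges lie right of it, so there are no crossings.

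\textbf{Step 3: check upwardness and correctness.} Upwardness at $A_i$ needs every in-neighbour of $A_i^{l/r}$ to lie strictly below $v_i$ and every out-neighbour strictly above. The key observation is that $v_i$ and $A_i^{l/r}$ are joined by an upward edge, and we may take the copies to lie just above $v_i$ by the $\varepsilon$-argument. Any in-edge $BA_i^l$ then reaches height $y(A_i^l)$, which is slightly above $y(v_i)$. So we must also squeeze the in-edges, which is fine: $v_i$ has no other neighbours, the instance discards pairs with pre-drawn in-neighbours of $A_i$, and in-neighbours in $V_M$ placed inside the $\varepsilon$-strip can be shifted down by a perturbation. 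The edge set is then exactly that of $G$, since edges of type $\{A_i,A_j\}$ and $\{A_i,u\}$ were assigned to a single copy by $\mu$. Finally, we discard $P$ and the fact that the drawing lies inside $R$, which yields a solution of $\III$ in the outer face.

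\textbf{Main obstacle.} The delicate part is Step~1: proving that the copies can be pulled to $v_i$ without crossings. What I would try first is a direct redrawing of the thin region bounded by the two auxiliary paths, contracted horizontally onto $[b,t]$. Heights are preserved, so upwardness is unaffected, while the left and right sides are kept separated by $[b,t]$ itself.
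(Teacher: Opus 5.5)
Your Step~1 has a genuine gap, and Step~3 does not close it.

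\textbf{Why Step~1 fails.} Horizontal shrinking preserves $y$-coordinates, so it cannot bring $A_i^l$ vertically close to $v_i$. The only tool you use for that is sliding $A_i^l$ down along $v_iA_i^l$ and rerouting its other edges via its old position. For an in-edge $BA_i^l$, the rerouted curve rises to the old height of $A_i^l$ and then descends along the pendant edge, so it is not upward.

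\textbf{Why the Step~3 repair does not help.} Your fix only handles in-neighbours lying in an $\varepsilon$-strip above $v_i$, but nothing confines them there. The auxiliary edges only give $y(v_i)<y(A_i^l)<y(A_{i+1}^l)<\dots<y(t)$. So $A_i^l$ may lie far above $v_i$, even above $v_{i+1}$, contrary to your claim that the copies sit between consecutive $v_i$'s. Its in-neighbours, their in-neighbours, and so on can lie anywhere in $(y(v_i),y(A_i^l))$. Moreover, $A_i^l$ and $A_i^r$ generally have different heights. An out-neighbour of the lower copy can lie below an in-neighbour of the higher copy, with both between the two copies. Then no common height for the merged $A_i$ exists without changing the relative heights of the two sides. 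No local perturbation achieves this.

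\textbf{The missing idea.} The paper uses a global, side-wise vertical rescaling.
\begin{enumerate}
\item Clear the regions bounded by $v_i,A_i^l,A_{i+1}^l,v_{i+1}$. Their contents are attached only to $A_i^l$ and $A_{i+1}^l$, so they can be flipped across the chain edge.
\item Delete all auxiliary edges.
\item Reroute every edge at $A_i^l$ by an almost horizontal segment to the horizontal projection $C_i$ of $A_i^l$ onto $[b,t]$.
\item Since $C_1,\dots,C_q$ occur bottom-up, apply a monotone piecewise-linear map in $y$ only to the part of the drawing left of $[b,t]$, between heights $y(b)$ and $y(t)$, fixing these two heights. This moves each $C_i$ to $v_i$ and preserves upwardness and planarity. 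The reason is that the left and right parts can only be connected at heights at least $y(t)$, since the vertical line below $b$ lies outside $P$.
\item Do the same on the right for the projections $D_i$ of $A_i^r$, and identify $C_i=v_i=D_i$ with $A_i$.
\end{enumerate}

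\textbf{A smaller gap.} You should prove, rather than assume, that the two copies lie on opposite sides of $[b,t]$. Each auxiliary chain is a $y$-monotone path from $b$ to $t$, so it stays within the band and on one side of $[b,t]$. Two chains on the same side would be nested, and then the outer chain's pendant edges could not reach $[b,t]$.
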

\begin{proof}
Let $\tau= (A_1, \dots, A_q)$ and $\mu=(\mu_{i,j}:i\in[q], j\in V_M \cup \{u\})$ be such that $\III_{\tau}^{\mu}=(G_{\tau}^{\mu},H_{\tau}^{\mu},\Gamma(H_{\tau}^{\mu}))$ is a YES-instance, and let $\Gamma(G_{\tau}^{\mu}) \supseteq \Gamma(H_{\tau}^{\mu})$ be the solution to $\III_{\tau}^{\mu}$. We modify it to obtain a solution $\Gamma(G)$ to $\III$ as follows. 
Note that if $q = 0$, then no $A_i$ is placed on the vertical segment $bt$ and therefore, $\Gamma(G_{\tau}^{\mu})$ immediately yields a solution for $\mathcal{I}$ and we remain with the case $q \geq 1$.

First, note that 
for each $i \in [q]$, $A_i^l$ and $A_i^r$ must be drawn on different sides from $[b,t]$: otherwise the drawing $\Gamma(G_{\tau}^{\mu})$ would not be planar. Without loss of generality, assume that all $A_i^l$ are attached from the left and all $A_i^r$ are attached from the right. 
If some quadrilateral~$v_iA_i^lA_{i+1}^lv_{i+1}$ contains some missing vertices in the interior, then these vertices belong to~$V_M \setminus \{A_1, \dots, A_q\}$, so they can be attached only to $A_i^l$ and $A_{i+1}^l$. Hence, the part of the drawing that lies inside $v_iA_i^lA_{i+1}^lv_{i+1}$ can be shifted very close to the edge~$A_i^lA_{i+1}^l$ and then sent to the other side of this edge (outside of the quadrilateral). In a similar way, we may achieve that all quadrilaterals $v_iA_i^rA_{i+1}^rv_{i+1}$, $i\in[q-1]$ as well as triangles~$bA_1^lv_1$,~$bA_1^rv_1$,~$A_q^lv_qt$, and~$A_q^rv_qt$, have empty interiors.

Next, we modify only the part of the drawing that lies to the left from $[b,t]$ as follows. For each $i \in [q]$, the horizontal projection of $A_i^l$ into $[b,t]$ is denoted by $C_i$. Note that the presence of the edge $A_i^lA_{i+1}^l$ guarantees that $C_i$ lies below $C_{i+1}$ for each $i \in[q-1]$. We delete all the edges $v_iA_i^l$ and $A_i^lA_{i+1}^l$, $i \in [q-1]$, as well as the edges $v_qA_q^l$, $bA_1^l$ and $A_q^lt$. After this, we modify each remaining edge of the form $\{a,A_i^l\}$ as follows. We keep the edge unchanged outside the small neighborhood of $A_i^l$, then we add an almost horizontal line segment that ends in $C_i$. Once such a redrawing is applied for all edges incident with $A_i^l$, the vertex $A_i^l$ is removed. When all the vertices $A_i^l$, $i \in [q]$, are removed, we subdivide the left part of the obtained drawing into horizontal strips defined by the vertices $C_i$, $i \in [q]$, and one by one compress or stretch these strips along the vertical axis to make them equally long. In particular, each $C_i$ becomes $v_i$. We then perform the same procedure for the right part of the drawing, by defining horizontal projections $D_i$ of $A_i^r$ into $[b,t]$, $i \in [q]$. Finally, we replace each $C_i=v_i=D_i$ by $A_i$. The resulting drawing $\Gamma(G)$ is a solution to $\III$. 
\end{proof} 
Therefore, it suffices to solve at most $2^{k^2+2k} k!$ instances $\III_{\tau}^{\mu}$ of \textsc {UPEF}.
Note that every instance $\III_{\tau}^{\mu}$ has at most twice as many missing vertices as $\mathcal{I}$.
Further, the size of the instance~$\III_{\tau}^{\mu}$ is polynomial in the size of~$\mathcal{I}$.
Therefore we obtain the following result:
\begin{corollary}
\label{cor:outer}
Let $\mathbb{A}$ be a fixed-parameter algorithm solving \upefshort in time at most $\bigoh^*(f(k))$ where $f$ is a non-decreasing function. Then there exists fixed-parameter algorithm $\mathbb{B}$ solving \upefshorto in time at most $\bigoh^*(2^{k^2+2k} k! \cdot f(2k))$.
\end{corollary}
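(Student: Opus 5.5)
The algorithm $\mathbb{B}$ is the Turing reduction just described. On an instance $\III$ of \upefshorto, we first enclose $\Gamma(H)$ by the convex polygon $P$ without horizontal segments and add the vertical segment $[b,t]$ from the lowermost vertex $t$ down to $P$, subdivided at $b$. This is polynomial-time preprocessing. The face $f'$ is then an inner face of a connected pre-drawn graph, and by \cref{lem:vertical} we may restrict attention to solutions in which $[b,t]$ meets the missing part of the drawing only at missing vertices.

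Next we enumerate all pairs $(\tau,\mu)$. Here $\tau=(A_1,\dots,A_q)$ is an ordered tuple of pairwise distinct missing vertices. There are $\sum_{q\le k}\binom{k}{q}q!\le 2^k k!$ such tuples. The matrix $\mu$ is a binary $q\times(k+1)$ matrix, giving at most $2^{k(k+1)}$ choices. In total this is at most $2^{k^2+2k}k!$ branches. For each branch we build $\III_\tau^\mu$ in polynomial time, discarding the branch as a trivial NO-instance when some pre-drawn vertex has an edge into some $A_i$. The remaining instances are instances of \upefshort with inner face $f'$.

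To see they are valid inputs, check three things. The pre-drawn graph, including $P$, $[b,t]$ and the vertices $v_i$, is connected. The drawing is straight-line and upward; after a small perturbation the $v_i$ get unique $y$-coordinates. The pre-drawn graph is induced: the only new missing edges with two pre-drawn endpoints would be of the form $b A$ or $tA$, and these each have a missing endpoint, so no violation arises. The parameter is $k'=(k-q)+2q\le 2k$, since each $A_i$ is replaced by $A_i^l$ and $A_i^r$. The instance size is polynomial in $|\III|$.

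We run $\mathbb{A}$ on every $\III_\tau^\mu$ and answer YES if and only if some call answers YES. Correctness follows directly from the two preceding lemmas: a YES-instance yields some YES branch, and any YES branch yields a solution of $\III$. Since $f$ is non-decreasing, each call costs $\bigoh^*(f(k'))\le\bigoh^*(f(2k))$. Because the instance size stays polynomial, the polynomial factors hidden in $\bigoh^*$ remain polynomial in $|\III|$. Multiplying by the number of branches gives $\bigoh^*(2^{k^2+2k}k!\cdot f(2k))$.

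The only step needing real care is confirming that $\III_\tau^\mu$ satisfies every input requirement of \upefshort so that $\mathbb{A}$ may be invoked. The delicate points are connectivity of the new pre-drawn graph and general position of the $y$-coordinates after inserting the $v_i$ on a vertical segment. Both are settled by construction together with a slight tilt or perturbation of $[b,t]$, as in Step 3 of \cref{sec:setup}.
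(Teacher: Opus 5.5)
Your proposal is correct and follows the paper's argument. You count at most $2^k k!\cdot 2^{k(k+1)} = 2^{k^2+2k}k!$ branches $(\tau,\mu)$, observe that each $\III_\tau^\mu$ has polynomial size and $k+q\le 2k$ missing vertices, and invoke the two preceding lemmas for correctness. Your input-validity checks go beyond what the paper spells out, but the one you omit, acyclicity of $G_\tau^\mu$, can fail for a wrong ordering $\tau$ (e.g.\ if $G$ contains the edge $A_{i+1}A_i$); this is handled trivially by rejecting such branches before calling $\mathbb{A}$.
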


By combining Corollary~\ref{cor:outer} with Theorem~\ref{thm:main_face} and Lemma~\ref{lem:setup}, we finally obtain:

\begin{theorem}
\label{thm:main}
\upeshort can be solved in time at most $\bigoh^*(k^{\mathcal{O}(k^3)})$, where $k$ is the vertex+edge deletion distance.
\end{theorem}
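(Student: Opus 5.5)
The plan is to assemble \cref{thm:main} from the pieces already established. Lemma~\ref{lem:setup} is the top-level reduction, so the task reduces to supplying the two algorithms $\mathbb{A}$ and $\mathbb{B}$ it requires and then bounding the combined running time.

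For $\mathbb{A}$, I would take the algorithm of Theorem~\ref{thm:main_face}. It solves \upefshort\ in time $f_1(k)\cdot n^{\bigoh(1)}$ with $f_1(k)=k^{c k^3}$ for some constant $c$. Two earlier facts make it applicable to the instances generated inside Lemma~\ref{lem:setup}:
\begin{itemize}
\item Lemma~\ref{lem:streamlining} is a polynomial-time, parameter-preserving preprocessing that increases the instance size by at most a factor $k$. It therefore only affects polynomial factors, and this is already accounted for within Theorem~\ref{thm:main_face}.
\item Steps 1--3 of Section~\ref{sec:setup} guarantee that the instances passed to $\mathbb{A}$ are straight-line, have distinct $y$-coordinates, and have $H$ induced. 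Their parameter equals the number of missing vertices, which is at most the original vertex+edge deletion distance $k$.
\end{itemize}

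For $\mathbb{B}$, I would apply Corollary~\ref{cor:outer} to $\mathbb{A}$. First I check that $f_1$ is non-decreasing, which holds for $k^{ck^3}$ with $k\ge 1$. The corollary then gives running time $\bigoh^*(2^{k^2+2k}k!\cdot f_1(2k))$. Since $f_1(2k)=(2k)^{8ck^3}=k^{\bigoh(k^3)}$, $k!\le k^k$, and $2^{k^2+2k}=k^{\bigoh(k^2)}$ for $k\ge2$, we get $f_2(k)=k^{\bigoh(k^3)}$. The case $k\le 1$ needs separate handling, but there all these factors are constants.

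Plugging both algorithms into Lemma~\ref{lem:setup} gives time $\bigoh^*(k^k\cdot\max(f_1(k),f_2(k)))=\bigoh^*(k^{\bigoh(k^3)})$.

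The main point needing care is parameter bookkeeping, not any new combinatorics. Two things must be checked:
\begin{itemize}
\item Each of Steps 1--4 must not increase the parameter. Step 2 converts every missing edge between pre-drawn vertices into exactly one missing vertex.
\item The outer-face reduction splits vertices, so the parameter may double. The bound $f_1(2k)$ absorbs this, and it remains $k^{\bigoh(k^3)}$.
\end{itemize}
Finally, all the intermediate instances have size polynomial in the input, so every polynomial factor stays within $\bigoh^*$.
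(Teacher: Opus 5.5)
Your proposal is correct and matches the paper's proof, which obtains \cref{thm:main} by combining \cref{cor:outer} with \cref{thm:main_face} and \cref{lem:setup}. Your extra bookkeeping spells out arithmetic the paper leaves implicit: monotonicity of $f_1$, the parameter doubling to $2k$ in the outer-face reduction, and absorbing $k^k$, $k!$ and $2^{k^2+2k}$ into $k^{\mathcal{O}(k^3)}$.
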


\section{Concluding Remarks}
We believe that the insights obtained on our way towards solving \textsc{Upward Planar Drawing Extension} not only reveal previously unknown fundamental combinatorial properties of upward planar drawings, but can also pave the way towards a broader understanding of drawing extension problems. In particular, we believe that the bundling framework employed in Section~\ref{sec:properties} can be lifted to different settings, such as 
extending partial level-planar~\cite{KlemzS24}, 1-planar~\cite{EibenGHKN20, mfcs/EibenGHKN20} or straight-line planar drawings~\cite{Patrignani06}.
Other interesting questions for future work include the development (or exclusion) of subexponential or kernelization algorithms for \textsc{Upward Planar Drawing Extension}; in fact, very little is currently known about the existence of such algorithms even for other drawing extension problems.
It is also open whether the fixed-parameter tractability still holds if the pre-drawn graph $H$ is disconnected.

\bibliography{biblio}

\end{document}